\documentclass[9pt,twocolumn,twoside,lineno]{pnas-new}
\makeatletter

\setboolean{shortarticle}{true}

\definecolor{color2}{RGB}{130,0,0} 

\RequirePackage{float}
\floatstyle{plain}
\newfloat{sigstatement}{tp!}{sst}
\newfloat{sigstatements}{bp!}{sst}

\additionalelement{%
\afterpage{\begin{sigstatement}
\sffamily
\mdfdefinestyle{pnassigstyle}{backgroundcolor=pnasblueback,linecolor=pnasblueback,fontcolor=pnasbluetext,innertopmargin=9.3pt,innerrightmargin=11.4pt,innerbottommargin=12pt,innerleftmargin=11.4pt,linewidth=0pt,userdefinedwidth=157pt}
\@ifundefined{@significancestatement}{}{%
	\begin{mdframed}[style=pnassigstyle]%
	\section*{\textcolor{pantone}{Significance}}%
	\raggedright\fontsize{8.5}{12}\selectfont\textcolor{black}\@significancestatement
	\end{mdframed}}
\end{sigstatement}

\begin{sigstatements}
\sffamily
\mdfdefinestyle{pnassigstyle}{linewidth=0pt,innertopmargin=0pt,innerrightmargin=0pt,innerbottommargin=-30pt,innerleftmargin=0pt,userdefinedwidth=157pt}
\begin{mdframed}[style=pnassigstyle]%
\scriptsize
\if!\AB@affillist!\else Author affiliations: \AB@affillist\par\vskip11.7pt
\hrule width 157pt height.5pt\par\vskip13.5pt\fi
\@ifundefined{@authorcontributions}{\vskip-3pt}{\@authorcontributions}
\vskip3pt%
\@ifundefined{@authordeclaration}{\vskip-3pt}{\@authordeclaration}
\vskip3pt%
\@ifundefined{@equalauthors}{\vskip-3pt}{\@equalauthors}
\vskip3pt%
\@ifundefined{@correspondingauthor}{\vskip-3pt}{\@correspondingauthor}
\end{mdframed}\vskip-6pt
\end{sigstatements}\clearpage
}
}

\renewcommand\normalsize{%
   \@setfontsize\normalsize\@xpt\@xipt
   \abovedisplayskip 10\p@ \@plus2\p@ \@minus5\p@
   \abovedisplayshortskip \z@ \@plus3\p@
   \belowdisplayshortskip 6\p@ \@plus3\p@ \@minus3\p@
   \belowdisplayskip \abovedisplayskip
   \let\@listi\@listI}

\makeatother
\articletype{} 

\newcommand{\Kt}{\mathsf{Kt}}
\newcommand{\cw}{\mathrm{cw}}
\newcommand{\twscore}{\mathrm{tw}_{\mathrm{score}}}
\newcommand{\Knt}{{\ensuremath{K_{\mathrm{nt}}}}}

\providecommand{\dropcap}[1]{#1}

\title{The Program Is Still There: A Conservation Law for Program Discovery}

\author[a,1]{Jorge Miguel Silva}
\affil[a]{Institute of Electronics and Informatics Engineering of Aveiro (IEETA) and Department of Electronics, Telecommunications and Informatics (DETI), University of Aveiro, 3810-193 Aveiro, Portugal}

\leadauthor{Silva}

\significancestatement{Can a computer find the hidden rule behind a stream of data? For sixty years this looked hopeless, because a related task, finding the shortest such rule, is provably unsolvable. We show the pessimism rested on a confusion: impossibility belongs to minimality, not discovery. A rule that compresses the data and predicts its continuation can be found, and for every search that learns only from scores we prove what finding it must cost, an effort set by the rule's own length that no reorganization of the search can lower. A system built on this principle recovered certified rules for thousands of sequences, failing only where no compact rule exists or none is reachable. What looked like absence was a price.}

\authorcontributions{J.M.S. designed and performed the research, developed the system, analyzed the data, and wrote the paper.}
\authordeclaration{The author declares no competing interest.}
\correspondingauthor{\textsuperscript{1}To whom correspondence should be addressed. E-mail: jorge.miguel.ferreira.silva@ua.pt}

\keywords{program synthesis $|$ Kolmogorov complexity $|$ algorithmic information theory $|$ constraint satisfaction $|$ inductive inference}

\begin{abstract}
Finding the shortest program that generates a sequence is uncomputable, and for six decades that fact has been mistaken for a wall around finding any generating program. It is not a wall but a price, and this paper measures it. For every algorithm that learns about a candidate program only through its score, a class spanning Levin search, evolutionary methods, simulated annealing, and the cross-entropy method, we define the coupling width of a search problem and prove an unconditional worst-case lower bound, exponential in that width with base one less than the domain size. From it follows a conservation law: structural knowledge injected into a search trades one for one against the search it removes, and their sum can never fall below the length of the program sought. Levin's 1973 upper bound and the lower bound proved here are the two ends of one conserved quantity, closing on each other as the instruction set grows. The only escape is to read a candidate's structure rather than its score, and its price, which we prove for generic targets, is incompleteness. A deterministic engine built on this theory recovers a generating program, certified by compressing its data and predicting an unseen continuation, for 2,383 of 3,914 sequences across four independent populations, including 244 of the 256 elementary cellular automata, with measured discovery cost rising along program length more than an order of magnitude inside the score-oracle worst case.
\end{abstract}

\dates{This manuscript was compiled on \today}
\doi{arXiv preprint}

\usepackage{amsmath,amssymb,amsthm,mathtools}
\usepackage{booktabs}
\usepackage{array}
\usepackage{enumitem}
\usepackage[capitalize,nameinlink]{cleveref}
\newtheorem{theorem}{Theorem}
\newtheorem{lemma}{Lemma}
\newtheorem{corollary}{Corollary}
\newtheorem{proposition}{Proposition}
\theoremstyle{definition}
\newtheorem{definition}{Definition}
\theoremstyle{remark}
\newtheorem*{remark}{Remark}
\crefname{theorem}{Theorem}{Theorems}
\crefname{lemma}{Lemma}{Lemmas}
\crefname{corollary}{Corollary}{Corollaries}
\crefname{proposition}{Proposition}{Propositions}
\crefname{definition}{Definition}{Definitions}
\newcommand{\score}{\operatorname{score}}
\newcommand{\exptr}{\operatorname{exec}}
\newcommand{\Fields}{\operatorname{Fields}}
\newcommand{\vbase}{v_{\mathrm{base}}}
\newcommand{\msig}{m_{\mathrm{signal}}}
\newcommand{\tw}{\operatorname{tw}}
\newcommand{\Gscore}{G_{\mathrm{score}}}
\newcommand{\restr}[2]{#1[S\!\to\!#2]}
\newcommand{\restrS}[3]{#1[#2\!\to\!#3]}
\newcommand{\MKtP}{\mathsf{MKtP}}
\newcommand{\DTIME}{\mathsf{DTIME}}
\begin{document}

\maketitle
\thispagestyle{firststyle}
\ifthenelse{\boolean{shortarticle}}{\ifthenelse{\boolean{singlecolumn}}{\abscontentformatted}{\abscontent}}{}

\dropcap{A} sequence can be thought of as the output of a program. The shortest program that produces it is, by definition, the Kolmogorov complexity of the sequence \cite{kolmogorov, chaitin}, and, by Solomonoff's theory \cite{solomonoff}, also its optimal predictor. The object that most compactly accounts for what we have seen is the same one that best tells us what comes next.

Three questions arise: can a generating program for a sequence be found, can a search know when it has found one, and what does finding it cost? The first looks hopeless, because the Kolmogorov complexity is non-computable, so no procedure can certify that a program is the shortest one. But uncomputability walls off minimality, not discovery. A generating program can be found, and since Levin~\cite{levin} we have known how. The program is still there. What was ever in doubt is not its existence, nor whether we can reach it, but what reaching it costs.

Reaching it does not require minimality. A program that goes on producing the sequence correctly beyond the part we have already seen has captured something real about the process that generates it, because a program too short to encode that continuation cannot reproduce it by accident, and a program long enough to fake it is no longer short. Such a program predicts. When it also compresses, when it is shorter than the data it accounts for, it can be neither a lookup table nor a memorized copy, and it can be certified as a genuine generator rather than a mere description. Prediction is the evidence that the program is there. Compression is what lets us trust it.

The obstruction is not uncomputability. It is structure. For the algorithms that learn only from the score of the candidates they try, the family that includes nearly every general-purpose program search in use, a search problem carries a measurable quantity, which we call coupling width, that fixes how much search is required, and the cost it imposes is, to within a constant factor, the length of the program being sought. We prove this lower bound, unconditionally and against the worst-case score, for exactly that family, the algorithms that see only the score; whether a given target's own score realizes that worst case, the flat basin, is a theorem for the graded-translation class and, beyond it, the regularity the data tests (Section~3.3). We prove the conservation law that fixes its price. And we build a system that pays it (Fig.~\ref{fig:arch}). The contribution is the union of three things that until now existed only separately: a test that certifies a discovered program, a barrier that explains why passing that test is hard, and that test run at the scale of thousands of sequences, together one priced and falsifiable account of when a program has been found, the price set by the theorem and the falsifiability carried by the test. The system recovers a generating program for 2,383 of 3,914 sequences across four independent populations. It fails where structure is genuinely absent, where the shortest program is no shorter than the sequence itself, where a short program exists but lies outside the search's coverage, and at the budget where a real but costly program runs out of time, never because the program it was meant to find was not there.

Levin \cite{levin} proved that universal search recovers a generating program in time $2^{|p|}\,t(p)$, the canonical upper bound, and the time-bounded complexity $\Kt(x) = \min_p\,(|p| + \log t(p))$ (Li and Vit\'anyi \cite{livitanyi}, Chapter 7) makes the upper-bound direction of our conservation law, $|p| + \log t \ge \Kt$, true by construction. The novelty here is the opposite direction, a lower bound. Corollary~4 of SI Appendix~A (the conservation law in coupling-width units) shows $|r|_{\cw} + \log_{d-1}(\mathrm{search}) \ge \kappa$, with the effective width $\kappa - \log_{d-1} m$ in place of $\kappa$ when the generator is not unique, for every algorithm that reads only a candidate's score, not merely for universal search. The lower bounds that do exist concern different objects, the runtime of the single best algorithm rather than a floor on all algorithms (Hutter \cite{hutter}), the inversion of random functions priced in advice bits rather than coupling-width units (Yao \cite{yao}; Corrigan-Gibbs and Kogan \cite{corrigangibbs}), and the cost of computing $\Kt$ itself rather than of program search (Brandt \cite{brandt}). None is a conserved quantity for score-oracle search over structured program spaces.

Search lower bounds for constraint satisfaction are not new, but they have a different character. The trivial $d^n$ enumeration can be improved to $(d-1)^n$ for binary CSP (Razgon \cite{razgon}), and exponential lower bounds hold under the Exponential Time Hypothesis (Traxler \cite{traxler}), with finer bounds parameterized by treewidth and conditional on SETH or ETH (Lokshtanov, Marx, and Saurabh \cite{lokshtanov}; Marx \cite{marx2010, marx2013}). These bounds are conditional, they are parameterized by the structure of the explicit constraints, and none defines a quantity that trades search against reformulation. Our bound differs on each count. It is unconditional for score-oracle algorithms, it is governed by the coupling width of the score function rather than the primal treewidth, and, the part no prior bound has, it carries a conservation law pricing structure in bits against the search it removes (Corollary~4, SI Appendix A). The classical observation that constraint-graph structure controls search difficulty (Freuder \cite{freuder}; Dechter and Pearl \cite{dechterpearl}) is sharpened here into a conserved quantity.

A fourth lineage is closer in mechanism. Black-box (query) complexity restricts an algorithm to the value oracle, the same restriction as our score-oracle class, and proves unconditional query lower bounds without complexity-theoretic assumptions (Droste, Jansen, and Wegener \cite{djw2006}; unbiased refinement, Lehre and Witt \cite{lehrewitt2012}); our $\Omega((d-1)^{\kappa}/m)$ is a query lower bound of this kind, for the needle a $(\delta,\Delta)$-coupled set carves out of a $d$-ary space. The neighbouring ideas are distinct. The no-free-lunch theorems (Wolpert and Macready \cite{wolpertmacready1997}) conserve \emph{average} performance over all functions under a uniform prior, where we price a per-instance budget. Trap and deceptive functions (Goldberg \cite{goldberg1987}; Whitley \cite{whitley1991}; Deb and Goldberg \cite{debgoldberg1993}) are hard for particular mutation- and recombination-based heuristics through a misleading gradient; our flat basin has no gradient and the bound binds every score-oracle algorithm. Closest in intent are the interaction-based difficulty measures, Kauffman's $\mathrm{NK}$ ruggedness \cite{kauffman1993}, epistasis variance, and fitness-distance correlation \cite{jonesforrest1995}, which also read difficulty off interaction structure but are descriptive and admit counterexamples. What coupling width adds, and none of these has, is a \emph{proven} matching bound $(d-1)^{\kappa}$, its tie to the score-graph treewidth $\kappa \le \twscore + 1$, and a conservation law pricing the only escape in bits of $K(s)$; that coupling width is not reducible to such a measure is made precise in SI Appendix~A, which shows it is not a function of single-coordinate sensitivity. SI Appendix~D gives the bound-by-bound comparison across all four lineages.

\begin{figure*}[t]
\centering
\includegraphics[width=\linewidth]{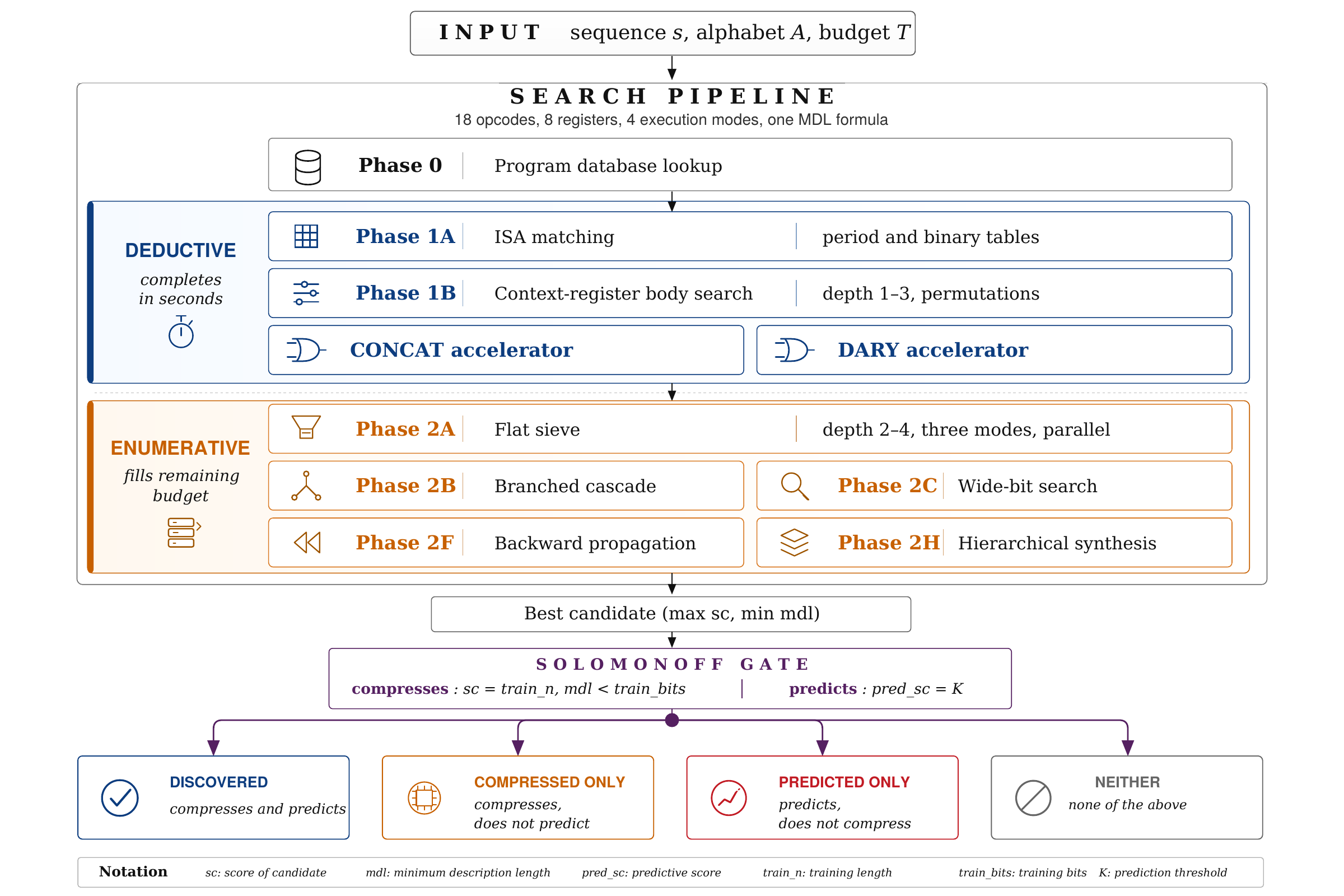}
\caption{The OMNIS search pipeline and the Solomonoff gate. The input is a sequence over an alphabet with a search budget. A deductive stage (ISA matching, context-register search, and the closed-form and digit-concatenation accelerators) completes in seconds; an enumerative stage (flat sieve, branched cascade, wide-integer sieve, backward propagation, and hierarchical synthesis) spends the remaining budget. The best candidate is passed to the gate, which classifies it as discovered (compresses and predicts), compressed only, predicted only, or neither, according to the description length and the held-out prediction score.}
\label{fig:arch}
\end{figure*}

\section*{2. The conservation law}

We formulate the search for a generating program as a constraint satisfaction problem. The variables are the fields of a candidate program, each drawn from a finite domain; the constraints require the program to reproduce the target sequence; and the score of an assignment is the number of target symbols it reproduces, so a perfect score is a program that reproduces all of the observed data. A search algorithm proposes assignments and reads their scores. The question of this section is how many such reads are required, and on what the answer depends.

It depends on a single structural quantity, which we call the coupling width. Consider a search that begins from an assignment scoring poorly. A set of fields is coupled if no change to any proper subset of them alters the score, while some simultaneous change to all of them improves it. The coupling width, written $\kappa$, is the size of the largest such set. When $\kappa$ is small, progress is visible incrementally, a single field at a time moves the score, and search is easy. When $\kappa$ is large, the score is flat across every partial correction and moves only when all $\kappa$ fields are set correctly at once. The search is then crossing a plateau on which nothing it can measure tells it which way to go.

The coupling width is not an arbitrary quantity. It is bounded by the structure of the score itself:
\[
\kappa \le \twscore + 1,
\]
where $\twscore$ is the treewidth of the constraint graph of the score function (SI Appendix~A, the coupling-treewidth bound; unconditional for the rigid binary score and holding more generally when the coupling margins satisfy $\Delta > 2\delta$ together with the local-minimum condition stated there, automatic at a zero-score baseline). This bound is what makes the coupling width meaningful and bounded. It ties a property of search difficulty to a graph parameter the field already understands, and it guarantees that the coordinated change a search must discover is never wider than the treewidth of the problem. The coupling width is in general strictly smaller than this bound, and it must be, since search difficulty is a finer thing than constraint structure alone; there are problems whose treewidth is large and whose coupling width is small, and the reverse (SI Appendix A). Coupling width is therefore not treewidth renamed; it is the part of the structure that the score actually exposes to search, and the inequality above is what grounds it.

From this we obtain the barrier. For any algorithm that accesses the problem only through the score of the assignments it proposes, the score-oracle model that includes evolutionary search, simulated annealing and MCMC over program space, the cross-entropy method, and Levin search, the number of assignments that must be scored before an improving one is found is
\[
\Omega\!\left(\frac{(d-1)^{\kappa}}{m}\right),
\]
where $d$ is the domain size and $m$ is the number of improving assignments in the region (SI Appendix A). The argument is adversarial. Until all $\kappa$ coupled fields differ from the baseline, every score equals the baseline, so the algorithm learns nothing; inside that region an adversary places the improving assignments so that no strategy does better than searching $(d-1)^{\kappa}$ candidates, and randomization buys at most the factor $m$ by Yao's principle. Concretely, for the register-inference problem of our system the domain is the register count, $d=8$, and the barrier grows as $7^{\kappa}$; for type discovery the domain is the operation catalog, $d=53$, and it grows as $52^{\kappa}$. The catalog is the eighteen canonical operations expanded by constant instantiation, the operations carrying a constant parameter (the constant multiply, the modulo and divide by a constant, the load) each contributing one entry per admissible constant, together with the loop length variants; the exponent is $d-1$ rather than $d$ because the witness cube excludes the baseline (identity) type at each of the $\kappa$ coupled slots, leaving $52$ non-identity choices per slot (SI Appendix~A). The exponent is the worst case over the scores consistent with a given coupling width, attained unconditionally for the graded-translation class (SI Appendix~A); for type discovery the width is itself at least the program's own minimal size, its count of non-identity slots ($\kappa_{\mathrm{bin}} \ge \Knt$), whereas for register inference it is the coupling width of the register constraint problem, the quantity our reformulations reduce rather than one bounded below for each instance.

The barrier has a price, and the price is conserved. A search need not pay the full exponential. It can instead spend effort reformulating the problem, re-encoding it so that fewer fields are coupled, and every bit of structure it injects this way buys down a bit of coupled search. No score-preserving reformulation can do better than one for one. Writing $K(\varphi, \psi)$ for the description length of the reformulation and $K(s)$ for the Kolmogorov complexity of the sequence, the cost of solving obeys a conservation law (SI Appendix A):
\[
K(\varphi, \psi) + \log_2(\mathrm{search}) \ge \frac{\log_2(d-1)}{\log_2 d}\,K(s) - O(\log N).
\]
The structural knowledge used to set up a search, plus the search still required after it, can never together cost less than the program being sought, discounted by a single computable constant. That constant, the fraction $\log_2(d-1)/\log_2 d$, is $0.94$ at the register catalog ($d = 8$), $0.99$ at the type catalog ($d = 53$), and approaches $1$ as the catalog grows; equivalently the catalog-relative Levin base $c_d = (d-1)^{1/\log_2 d}$ rises from $1.91$ at $d = 8$ to $1.99$ at $d = 53$ and to $2$ in the limit, recovering Levin's $2^{K(s)}$ exactly. Levin's bound is blind to the instruction set; the coupling bound predicts this dependence on it, a number we compute rather than a constant we absorb. For a target with a unique generator the law holds exactly in this form; for a target with many generators the multiplicative discount becomes an additive correction of $\log_2\!\tfrac{d}{d-1}$ bits per non-identity slot, $0.027$ at the type catalog and $0.19$ at the register catalog, likewise vanishing as the catalog grows (SI Appendix~A). In coupling-width units, and in the special case where the solution is unique, it reads
\[
|r|_{\cw} + \log_{d-1}(\mathrm{search}) \ge \kappa - O\!\left(\frac{\log|X| + \log|X'|}{\log_2(d-1)}\right),
\]
with $|r|_{\cw} = K(\varphi, \psi)/\log_2(d-1)$ the reformulation cost measured in coupling-width units, and $|X|, |X'|$ the domain sizes of the search problem before and after reformulation; each such unit of structure removes one variable's worth of search. When a target has many generating programs the raw coupling width $\kappa$ overstates the barrier, and the effective width $\kappa - \log_{d-1}(m)$ takes its place, and the $K(s)$ floor survives with the per-slot correction above, since a target with more generators has a proportionally smaller $K(s)$ (SI Appendix~A). Levin's universal search sits at one end of this law, injecting no structure and paying the whole price in search, $2^{K(s)}$. A reformulation that fully exposes the program's structure sits at the other, paying in knowledge and leaving only polynomial search. The upper bound that has anchored the field since 1973 and the lower bound proved here are the two ends of one conserved quantity.

This is the sense in which finding a program and the length of that program are the same measurement. It is also why the test of the next section is trustworthy rather than circular, because passing the gate is not cheap. The conservation law makes the cost of finding a program that both compresses and predicts equal, up to the computable factor above, to the length of that program, so a discovery is expensive and cannot be produced by accident, since a pass pays for the structure it certifies, and SI Appendix~B bounds the chance that a shorter program clears both conditions by luck, a bound that decays exponentially in the bits its held-out block carries beyond the program's own length.

Two cautions bound what is claimed here. First, the lower bound is a property of the score-oracle model. An algorithm that reads the structure of a candidate, its dataflow, rather than only its score, is not subject to it, and our system contains exactly such a method (Materials and Methods, and SI Appendix A). The escape is not free. The same interaction density that sets the search cost through the coupling width also sets the cost of structural inference through the treewidth, and the bound $\kappa \le \twscore + 1$ (under the $\Delta > 2\delta$ condition above) ties the two together, so any easing of one tightens the other (SI Appendix A, the Search-Inference Coupling). A method either pays in time, searching the full space at a cost exponential in $\kappa$, or it pays in coverage, inferring structure in polynomial time but solving only the instances its representation can reach. The budget is a property of the problem, not of the algorithm, and the only way through it is to give something up. Our system pays in coverage where it can and in time where it must, which is why it succeeds widely and fails in a characterized way rather than at random. Second, the bound is adversarial, which is the source of its strength. It holds against the worst-case score consistent with a given coupling, so no score function can evade it, and the instance it pins is the flat basin, the information-theoretic floor on which a candidate reveals nothing until all $\kappa$ fields are correct at once. A target whose score carries a usable gradient toward the optimum is strictly easier, and our system exploits exactly those gradients, so the theorem brackets the hardest case while the engine operates inside it. Whether the cost not only respects the bound but clusters at it is a sharp quantitative prediction, which Section~3.3 tests directly by measuring discovery cost against program length.

\section*{3. Results}

\subsection*{3.0 The test is not a definition}

A discovery, in this work, is a program that does two things at once. It compresses the data, its encoding under the fixed description length costing fewer bits than the sequence it accounts for, and it predicts, reproducing a held-out continuation the search never saw. The obvious worry is that this is circular. If discovery is defined by passing a test, then reporting that our discoveries pass the test says nothing at all.

It would say nothing if the two conditions were one condition wearing two names. They are not, and the data is what shows it. Compression and prediction are logically independent, and across the full corpus they come apart. Specifically, of 3,914 candidates, 2,383 both compress and predict. But 27 predict their held-out continuation without compressing, and 17 compress without predicting their continuation. Indeed, those off-diagonal cells are populated, in the largest OEIS population and in the full corpus alike (Fig.~\ref{fig:landscape}). Were the two conditions the same condition, they would be empty.

The cases that fall off the diagonal are where the separation can be seen directly. The divisor-counting sequence $d(n)$ is one of them. The search recovers a program that reproduces every held-out term of it, the structure is found, the prediction is perfect, and yet the program is longer than the data it explains, by less than half a bit. It predicts and it does not compress. A definition could not produce that case, because under a definition the two conditions cannot disagree. The gate produces it because the two conditions are real and separately measured, and here one passed while the other, narrowly, did not. The same sequence, given a few more terms, compresses and is recovered outright; the structure was always there, what changed was whether the data was long enough to pay for the program that captures it.

This is also why one number we report is a floor and not the result. Every one of the 2,383 discoveries satisfies both gate conditions with zero inconsistencies, but that consistency is in part built in, the system only labels a program discovered when both conditions hold, so a violation would be an implementation error, not a finding. The result is not that the discoveries pass. The result is that the two conditions can be pulled apart at all, and that the system lands programs in all four cells of the table rather than only the two a definition would allow. The test certifies because it can fail in two independent ways, and sometimes does.

\begin{figure}[t]
\centering
\includegraphics[width=\linewidth]{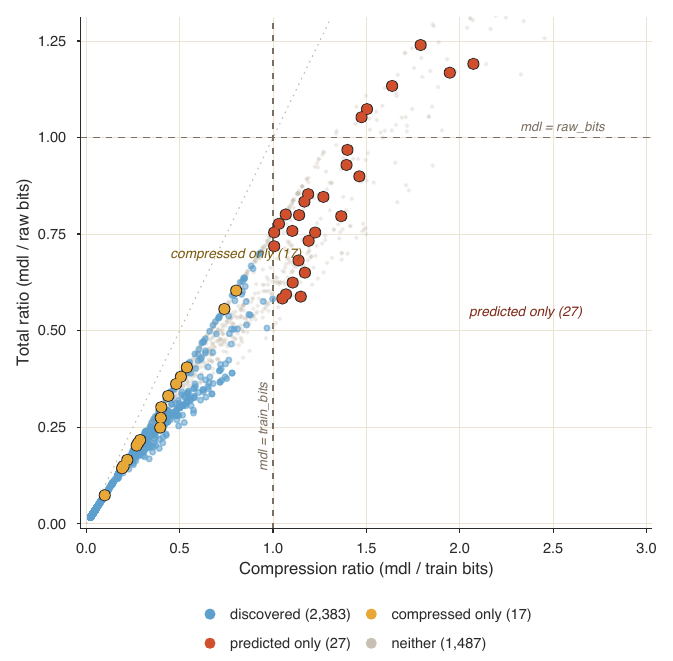}
\caption{Compression and prediction are independent. Each point is one of the 3,914 candidates of the full four-population corpus, placed by its compression ratio (description length over training bits) and its total ratio (description length over raw bits); the two reference lines mark the compression boundary (description length equal to training bits) and the total boundary (description length equal to raw bits). The populated off-diagonal classes, compressed only (17) and predicted only (27), show that the two gate conditions can disagree, so the gate is a test rather than a definition.}
\label{fig:landscape}
\end{figure}

\subsection*{3.1 The system and the corpus}

We ran the system on 3,914 candidate sequences across four populations that share no construction and no overlap. The first is a mixed set of integer sequences from the OEIS together with Collatz orbits. The second is the 256 elementary cellular automata. The third is a larger and harder OEIS set, deduplicated across alphabets. The fourth is the 2,187 three-state totalistic cellular automata. The same engine, with one instruction set, one description length, and one gate, ran against all four, with an empty library and no sequence-specific tuning. It recovered a generating program for 2,383 of the 3,914, sixty-one percent. The rest of this section is what those discoveries, and the failures beside them, reveal about the barrier.

The bar we set is higher than prior program-search systems report, and we state it as a constraint rather than a boast. Every discovery must compress and must predict an isolated holdout; the search uses one universal instruction set with no family-specific operators; candidates are scored in isolation, with no program allowed to call another and no knowledge carried between them; and no closed-form fitter, recurrence solver, or learned policy is imported. Systems such as LODA and the neural-guided searches of recent work relax one or more of these, allowing programs to call previously discovered programs, or learning across sequences, which is what makes their coverage difficult to compare directly with ours (SI Appendix D).

\subsection*{3.2 The cliff}

If finding a program costs search exponential in the coupling width, then a search under a fixed time budget should show a sharp edge. Below some width the program is found quickly; above it the program is never found at all; and almost nothing should land in between. This is what the data shows. Across all four populations, discovery times and failure times separate into two clouds with an empty band between them (Fig.~\ref{fig:cliff}). No discovery anywhere in the corpus takes longer than 378 seconds (population one), 587 (population two), 314 (population three), or 497 (population four), while the failures pile up against the 600-second wall.

In the two cellular-automata populations, 2,443 candidates combined, the band is strictly empty, and not one candidate, discovered or failed, occupies the gap between the last discovery and the wall. In the two OEIS populations, six candidates in total fall in the gap, and every one of them is an off-diagonal near-miss, a divisor-related sequence found by the engine's deepest nested-loop search but failing one gate condition, never a clean discovery. The cliff is the barrier's signature. A search that could trade structure for time gradually would smear discoveries across the whole range; a search facing an exponential wall finds the program fast or not at all, which is what we see. The time axis itself is shaped by the engine's per-phase deadlines (Section~3.3), so the absolute width of the gap is not the claim; what is robust is that no discovery reaches the budget, the narrowest margin being the 13 seconds between the slowest elementary-automaton discovery and the wall, and that the region below it holds almost nothing but the six off-diagonal near-misses.

\begin{figure}[t]
\centering
\includegraphics[width=\linewidth]{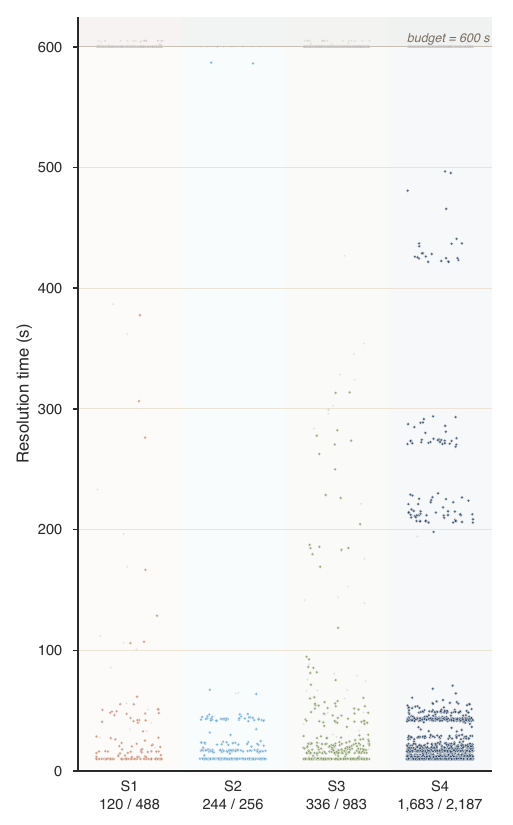}
\caption{The cliff. Resolution time for every candidate in each of the four populations (S1 to S4, the mixed OEIS-and-Collatz set, the elementary cellular automata, the deduplicated OEIS set, and the three-state totalistic cellular automata, in the order introduced in Section 3.1); the count below each label is discoveries over population size. Discovery times and timeout failures separate into a lower cloud and a wall at the 600-second budget, with an empty band between them. A search able to trade structure for time gradually would fill that band; an exponential barrier produces the gap, the cliff that is the barrier's empirical signature.}
\label{fig:cliff}
\end{figure}

\subsection*{3.3 The price, measured}

The conservation law is a lower bound, and it is one-sided in a way that matters here. It says a search reading only the score cannot spend less than the coupling width, counting structure and search together, but it makes no claim about a search that reads structure, which is free to fall below that floor. Our engine is not bound to that floor. Each of its strategies reformulates the problem, lowering the coupling width on the region it covers, and one of them reads a candidate's structure directly rather than its score. So what the law predicts at the level of the data is not the magnitude of the engine's cost but its shape, that longer programs demand more search, and the corpus lets us measure both. Across all 2,383 discoveries the logarithm of discovery time rises with program description length, with an ordinary-least-squares slope of $0.054$ doublings of discovery time per program bit (95\% CI $[0.0525, 0.0554]$), a coefficient of determination $R^2 = 0.69$, and a Spearman correlation of $0.76$, both at $p$ below numerical resolution; a robust Theil--Sen fit gives $0.066$, and the slope reproduces across all four populations independently ($0.048$, $0.060$, $0.045$, $0.058$). The relationship is real and tight.

The pooled $R^2 = 0.69$ understates the regularity, for an identifiable reason. The context-register family runs under an internal $30$-second phase deadline, an engine design choice unrelated to program length, which injects wall-time variance uncorrelated with the bit axis; within that family alone $R^2$ falls to $0.18$, its time essentially decoupled from length. Conditioning on the $1{,}093$ discoveries from families with no internal deadline, the slope is $0.057$ ($95\%$ CI $[0.0552, 0.0580]$) and $R^2$ rises to $0.85$, the largest population alone reaching $0.90$; excising only the affected time band rather than the whole family leaves the slope unchanged, at $R^2 = 0.77$. The slope is therefore stable across the conditioning, $0.054$ pooled and $0.057$ with the deadline-bound family set aside, and the residual scatter in the pooled fit is the signature of an internal time budget rather than a loosening of the relationship. We report $0.054 / 0.69$, computed on all $2{,}383$ discoveries without exclusion, as the conservative pooled figure, and $0.057 / 0.85$ as the cleaner measure of cost against length where no deadline intervenes; because the per-phase deadlines truncate the slowest discoveries within each phase, they bias the pooled slope toward the low end, so we read these slopes as an order-of-magnitude characterization rather than a precise coefficient.

The relationship's scale, however, is the engine's, not the worst case's. The measured slopes, from $0.054$ pooled to $0.057$ with the deadline-bound family set aside and $0.066$ under a robust Theil--Sen fit, all sit more than an order of magnitude below the slope of $1$ a structure-free score-oracle search would pay, the slope at which each added program bit doubles the cost. Under the conservation law this gap is the injected structure made visible, the engine's reformulations absorbing the large majority of the coupling width and leaving only a small residue to brute search. We do not read a precise coverage fraction off the slope, because the wall-time axis is censored from above by the per-phase deadlines noted above, which truncate the slowest discoveries within each phase and bias the pooled slope downward; the robust claim is the order of magnitude and the direction, not a two-figure percentage. A censored-likelihood refit moves the slope by about two percent, so the truncation does not manufacture it; and over the observed range of program lengths the discovered cost is fit better by a low-degree polynomial than by an exponential, which is what a structural method operating inside the barrier should produce, the exponential being the worst case the proof establishes for score-oracle search (Section~2) and not a growth order read from these discovery times (SI Appendix~B). The cost respects the lower bound and sits far inside it, the constructive counterpart of the second caution above, a method that reads structure operating deep within a barrier defined for methods that do not (SI Appendix, Fig.~S3). We prove the lower-bound direction of this law; that the engine's search comes in so far below it is a measured property of the structural method, not a theorem, and it holds across all four populations independently. Nesting depth alone is a weak per-instance predictor (a regression of $\log_2$ time on depth within the nested-loop family gives $R^2 = 0.07$, $n = 131$); program length is the variable that carries the relationship.

\subsection*{3.4 Where the two conditions part, and why}

Why do the off-diagonal cases of Section 3.0 fall where they do? They are not scattered at random; they have a mechanism, and it is the one the conservation law predicts. A program that compresses must be shorter than its data, so the cases that predict without compressing should be exactly the cases where the program, though correct, is not yet shorter than the sequence it explains. That is what we find. All 27 predict-without-compress cases come from the two OEIS populations; the cellular-automata populations, whose sequences are long, produce none, because a long sequence always has enough terms to pay for the program that generates it. Twenty-six of the 27 come from solvers whose program length grows with the body of a loop rather than with a closed form, the regime where program cost can exceed data cost on a short instance; the one exception is the context-register recovery of the Fibonacci sequence (SI Appendix, Table S8). The DARY family (the closed-form accelerator) produces none at all, zero across all 804 of its discoveries.

The divisor-counting sequence is the sharpest illustration, recovered by the wide-integer sieve rather than by the nested-loop search behind the six in-the-gap cases of Section 3.2. At the binary alphabet its recovered program misses compression by less than half a bit, predicting every held-out term and falling 0.45 bits short of fitting under the data. Given more terms, at higher alphabets, the same structure is recovered outright. The program was never absent. What changed across the alphabets was whether the data was long enough to pay for it. The off-diagonal is not noise in the test; it is the conservation law operating at the margin, where program length and data length cross.

\subsection*{3.5 Alphabet and the shape of the barrier}

Because the barrier grows as the domain size raised to the coupling width, a sequence encoded over a larger alphabet tends to become harder to crack. We see this within a single population held fixed, where discovery rates fall from 54 percent at the binary alphabet to 33 percent at ternary to 27 percent at the next larger alphabet, a decay in the direction the exponential form predicts, though a confounded one, since the alphabet and the representation-dependent coupling width it induces move together. The cleanest single instance is a sequence discovered at the binary alphabet and walled at the two larger ones; the sequence itself did not change, only the alphabet over which the search had to work. For the hardest OEIS subset of twenty-six sequences, recovery falls from near thirty percent at the binary alphabet to none at either larger alphabet, all seven binary recoveries lost under both larger bases (exact McNemar test, $p \approx 0.016$).

The effect is real but it is not clean, and we state the limit plainly. The decay is monotone only on average. About eleven percent of the multi-alphabet sequences, thirty of 261, inverts the trend, becoming easier at a larger alphabet rather than harder. In particular, these inversions concentrate in digit-extraction sequences, where a larger alphabet can align with the sequence's own base and expose structure that a smaller one hides. The coupling width is a property of a sequence under a representation, not of the sequence alone, so we do not extract a single exponent from the alphabet data, and we do not claim a strictly monotone law. The trend is the barrier's signature; the inversions are where representation, not difficulty, is doing the work.

\subsection*{3.6 Scale, and the honest negatives}

The largest single test is the elementary cellular automata. The engine recovers a program for 244 of the 256 rules (SI Appendix, Fig. S1), from raw output, with an empty library, inside the time budget, including Rule 30 and Rule 110, and every one of the 244 passes both gate conditions. The twelve it does not recover are not a random remainder, and where they fall is itself a reading of the gate. Ten fail both conditions; the other two compress without predicting. The ten fall into two groups. Six are reflection-and-complement images of rules the engine recovered in canonical orientation, two from the orbit of Rule 30, two from that of Rule 110, and two from a third recovered orbit, that of Rule 61. In each of these orbits the remaining image is itself among the discovered, so what falls outside reach is the pair of images the search does not produce, not the orbit as a whole (SI Appendix, Fig.~S1 lists all ten). Notice that reflection and complement are length-preserving relabelings of the spacetime, each computable by an $O(1)$ program, so they leave the Kolmogorov complexity of the output unchanged up to an additive constant and cannot turn a compressible orbit into an incompressible one. What they change is reach. The transformed output is no longer the specific sequence the enumeration produces or the gate validates against. Because the parent rule is itself among the discovered, each of these six images is compressible; the failure is that a symmetry has carried a compressible target outside the region the search covers, not that the target lacks short structure. The remaining four are a single complete orbit, the class of Rule 45, in which no member is recovered. The canonical rule has no short program at the finite length it is given, and symmetry, preserving incompressibility, leaves the whole class out of reach. The two that compress without predicting, the mirror pair 167 and 181, symmetry images of the discovered Rule 26, are the same effect on a periodic, low-complexity rule. A short program fits the data, but the symmetry carries the held-out continuation off the sequence the gate checks, so they land beside the diagonal rather than on it. None of the failures is a rule the account did not anticipate. The engine fails exactly where it cannot reach a short program for the sequence as presented, whether because a canonical chaotic orbit is genuinely incompressible at the finite length it is given or because a symmetry has carried an otherwise-reachable generator outside the search's coverage.

The same pattern holds at larger scale and is in some ways cleaner there. On the 2,187 three-state totalistic cellular automata the engine recovers 1,683, seventy-seven percent (SI Appendix, Fig. S2), and the failures are almost entirely all-or-nothing; a rule either reduces to a short program or produces output the engine cannot compress at all, with very few partial fits in between. The quality of failure is itself a reading of the target. On structured OEIS sequences the failures often reach thirty to seventy percent of the sequence before the cliff stops them, the search finding real partial structure with no gradient to the rest. On chaotic rules the failures are total. The search does not degrade gracefully toward a wrong answer; it either reaches the program or reports that it found nothing, and the difference between those two outcomes is the difference between a target that has a short generator and one that does not.

A note on what these recoveries are and are not. The engine compresses the output it is given at the length it is given; for a rule like Rule 30 this is the finite output, not a claim about the rule's infinite-time behavior. Across alphabets, the same sequence is often recovered by different solver families, only about a third of the multiply-discovered sequences use the same family at every alphabet, so the engine recovers a shortest description of the projected signal, not a single alphabet-invariant generator. And for sequences tied to deep open questions, Mersenne or Wagstaff primes, the engine compresses the binary projection it sees, which is not the same as capturing the number theory beneath. The claim is exactly the one the gate certifies and no larger, that a program was found which compresses this data and predicts its continuation.

\section*{4. Discussion}

What the result establishes is a single sentence with two halves. For the search methods that learn only from a candidate's score, the cost of finding a program is the program itself, Levin's upper bound of $2^{K(s)}$ met from below by the coupling barrier, which closes to it as the instruction set grows, and the bits of structure one injects to ease the search trade against the search that remains at a fixed one-to-one rate. Description length, prediction, and the cost of score-oracle search are three measurements of one conserved quantity. The other half is where the result stops. The bound is drawn at the score-oracle model, and it is escaped by reading a candidate's structure rather than only its score, which is what the wire-space method does. The contribution is therefore a lower bound matching Levin where none existed, and a line, drawn precisely at the score oracle, between the search that must pay the exponential and the structural inference that need not, with the price of crossing that line being incompleteness rather than time, a property we prove for generic targets and conjecture under the engine's description-length prior (SI Appendix~A).

The system is the proof of this made constructive. Each of its search strategies is a reformulation that lowers coupling width on the region of program space it covers, and the gate certifies that what comes out is a real generator and not a fit. The bimodal cliff, the tight correlation between program length and search time, and the ordering of the solver families by the structure they inject are the structure-search trade measured rather than assumed. The data does not contradict the theorem; it shows the engine operating far inside the barrier, its structural search sub-exponential where the score-oracle worst case is exponential.

It is worth saying plainly what is new here, because no single piece is. Lower bounds for search exist, validity tests for learned models exist, and large program-discovery experiments exist. What did not exist is their union. We tie a validity test for program discovery to a complexity barrier that explains why passing it is hard, we run that test at the scale of thousands of sequences under the constraint of importing no prior knowledge and with no inconsistency in the gate, and we show that the failures fall exactly where the barrier says they must. The test, the theorem, and the data are one object, a priced and certified and falsifiable account of when a program has been found. The empirical results are the witness to that account, not the claim itself.

The limits are real and we state them. Specifically, the score is exact match, and the system works on deterministic integer sequences; extending it to noisy data means replacing exact match with a statistical threshold, which the conservation law permits, since the law holds for any scalar score, but which this paper does not test. The coverage rates are properties of one instruction set and one budget, not universal constants, and a different machine would draw the boundary in a different place. The recoveries are of the data given, at the length given. Where a sequence touches a deep open question in number theory, the engine compresses the projection it sees and nothing more, and across alphabets it recovers a shortest description of the projected signal rather than one generator standing behind all of them. The claim is only ever the one the gate certifies, that a program was found which compresses this data and predicts its continuation.

The title is the conclusion. The coupling barrier opens a flat basin in which the program is invisible to any search reading only the score, and the field mistook that invisibility for absence, and absence for a law of nature. But the program is in the basin the whole time. The barrier sets the cost of reaching it, structure pays that cost down, and when enough structure has been injected the program comes back into view. It did not appear when we found it. It was always there.

\matmethods{The system, which we call OMNIS, is a deterministic search over programs written for a single register machine. It has one instruction set, one interpreter, one description length, and one gate, and every program it considers, whatever strategy proposed it, is written in that instruction set, run by that interpreter, measured by that description length, and judged by that gate. There is no learning, no weighting, and no state carried from one sequence to the next; the same search, run twice on the same input, returns the same program. This is what lets the results stand as a constructive proof rather than a trained outcome. The score-guided strategies pay the conservation law of Section~2 on a concrete machine, and the backward-inference strategy is its priced escape, and the data of Section~3 is that trade measured.

The instruction set is eighteen canonical operations, arithmetic, bitwise, memory, and a single loop, together with one further operation that calls a stored program as a subroutine, which we hold unused in the experiments here so that no program can draw on knowledge of another. A program is a short sequence of these instructions over a handful of registers. The interpreter executes a program in one of four modes, which differ only in how the program is driven, whether it persists state across steps, is re-evaluated at each index, collects its emissions, or reads its own past output as context, and the choice of mode is made by the harness, not by the program, so that all four run the same instruction set through the same interpreter.

The description length is a single self-delimiting code over all program types, built from a universal integer code with a short tag that selects the sub-decoder, and constructed so that the lengths of different program types are directly comparable under one prefix-free code. This is the quantity the gate's compression condition is measured against, and it is the system's stand-in for the program's Kolmogorov complexity, an upper bound on it, computed exactly for each candidate.

Search proceeds through nine enumeration strategies, the four deductive sub-phases of Fig.~\ref{fig:arch} (ISA matching, the context-register search, and the closed-form and digit-concatenation accelerators) and its five enumerative phases (the flat sieve, the branched cascade, the wide-integer sieve, backward propagation, and hierarchical synthesis); the Phase~0 library lookup is idle in these empty-library runs and is not counted among them. The distinction between the nine is the heart of how the system relates to the barrier. Each strategy covers a different region of program space, closed-form arithmetic, context-register programs, digit-concatenation forms, backward register inference, and several smaller families, and each is, in the sense of Section 2, a reformulation, lowering the coupling width on the region it covers and paying in the structural knowledge that defines the region what it saves in search. They differ in how they enumerate, not in what they enumerate. Every strategy emits a program in the same eighteen-instruction set, every program is run by the same interpreter, every candidate is measured by the same description length, and every candidate faces the same gate. The backward-inference strategy is the structural-access method of Section 2, the one that reads a candidate's dataflow rather than its score and so steps outside the score-oracle bound; it recovers the loop-containing programs, the nested loops among them, while the others inject some structure and pay the rest in score-guided time. The conservation law governs the trade for the score-guided strategies; the backward-inference strategy is the structural escape of Section~2, bounded not by the conservation law but by the incompleteness of the structural method (SI Appendix~A). The context-register family additionally caps its per-phase search budget at thirty seconds (half the remaining time, clamped to $[5, 30]$\,s); discoveries that reach this bound have their wall time set by it rather than by program length, which is the band Section~3.3 isolates when it conditions the cost-against-length fit.

A discovery is defined by the gate, and the gate has exactly two conditions. A candidate compresses if it reproduces every term of the observed data and, under the description length above, encodes that data in strictly fewer bits than the data itself takes. It predicts if it reproduces every term of a held-out continuation that the search never saw, the whole block and not merely most of it, a holdout withheld before the search began and isolated from it, scaled to about a quarter of the available terms and, for the context-driven programs, generated autoregressively, the program reading its own output forward so that a single early error propagates and the block fails. A program is a discovery only if it does both. The gate is deliberately one-sided. It will refuse a correct program that happens to run longer than its data, recording it as a near-miss rather than a discovery, because the cost of a false refusal is a missed result while the cost of a false acceptance is a wrong claim, and the gate is built never to make the second kind of error. What this gate certifies, and why its two conditions cannot be collapsed into one, is established in Section 3.0; here it is only the operational definition by which every candidate in the corpus was judged.

A handful of practical points complete the description. The held-out block sizes were fixed in advance per population and never tuned to outcomes. Sequences too short to admit a meaningful train-and-holdout split, those of length below forty, were excluded before the search; the criterion, the per-population counts, and the raw pool from which the analysed 3,914 are drawn are given in SI Appendix~B. A small number of candidates across the corpus exceeded the per-candidate time budget through gaps in the inner-loop deadline checks of a few strategies, fewer than two in every five hundred, and all of them were recorded as failures, never as discoveries; no discovered program ever exceeded its budget. The full instruction set, the description-length code, the four modes, the nine strategies, and the per-population parameters are specified in complete, reproduction-ready form in SI Appendix B.

\subsection*{Data, Materials, and Software Availability} The OMNIS engine source and the complete sweep data for all four populations are deposited and available at \href{https://github.com/jorgeMFS/omnis}{github.com/jorgeMFS/omnis} and permanently archived at Zenodo (DOI: \href{https://doi.org/10.5281/zenodo.20634984}{10.5281/zenodo.20634984}).}

\showmatmethods{}

\acknow{This work was supported by the Foundation for Science and Technology (FCT) through contract \href{https://doi.org/10.54499/UID/00127/2025}{doi.org/10.54499/UID/00127/2025}.}
\showacknow{}

\bibliography{paper1}

\clearpage
\onecolumn
\setcounter{section}{0}\renewcommand{\thesection}{S\arabic{section}}
\setcounter{figure}{0}\renewcommand{\thefigure}{S\arabic{figure}}
\setcounter{table}{0}\renewcommand{\thetable}{S\arabic{table}}
\setcounter{equation}{0}\renewcommand{\theequation}{S\arabic{equation}}
\begin{center}{\Large\bfseries Supporting Information Appendix}\end{center}
\medskip
\section*{SI-A. The Coupling Barrier Theorem and Wire-Space Register Inference}

This appendix develops the coupling barrier, the reformulation and conservation
analysis, and the register-machine instantiation that yields the unconditional
search lower bound on the provable class; it then analyzes the wire-space
backward-propagation algorithm (WSBP) for register inference, establishing its
soundness, polynomial running time, and incompleteness, and proving the
Search-Inference Coupling (\cref{thm:coupling-capstone}).

\subsection*{Notation}

\begin{center}
\small
\begin{tabular}{@{}ll@{}}
\toprule
Symbol & Meaning \\
\midrule
\multicolumn{2}{@{}l}{\emph{General CSP}}\\
$X, D, C$ & variables, domain, constraints of a CSP \\
$n, d, m$ & $|X|$, $|D|$, $|C|$ \\
$\alpha_0,\ \vbase$ & baseline assignment; $\vbase = \score(\alpha_0)$ \\
$S,\ k$ & coupled set of variables; its size \\
$\kappa$ & coupling width (rigid; relaxed $\kappa(\delta,\Delta)$) \\
$\delta, \Delta$ & coupling parameters (partial bound; signal threshold) \\
$Q$ & witness cube $\{\beta \in D^k : \beta_j \ne \alpha_0(x_{i_j})\ \forall j\}$, size $(d-1)^k$ \\
$\msig$ & number of signal witnesses in $Q$ \\
$\tw,\ \twscore$ & treewidth of the primal graph; of the score-interaction graph \\
$\varphi, \psi$ & reformulation maps between CSPs \\
$B$ & coupling budget $\kappa \log_2(d-1) = \log_2 |Q|$ \\
$B_{\mathrm{eff}},\ \kappa_{\mathrm{eff}}$ & effective budget $B - \log_2 \msig$; effective width $\kappa - \log_{d-1}\msig$ \\
$K(\cdot)$ & Kolmogorov complexity on a fixed universal machine \\
\addlinespace
\multicolumn{2}{@{}l}{\emph{Register machine}}\\
$F^*, \tau^*, \rho^*$ & control frame; ground-truth type assignment; register assignment \\
$\exptr(P, N)$ & output sequence of $P$ over $N$ iterations \\
$T,\ W$ & catalog size $|D|=T$; number of physical registers $|R|=W$ \\
$\Knt$ & syntactic minimality (min non-identity slots generating $s$) \\
$N,\ L$ & target length; total program size (frame nodes) \\
$s \in \Sigma^N,\ \Sigma$ & target sequence; output alphabet \\
$\mu(s)$ & maximum single-symbol frequency in $s$ \\
\addlinespace
\multicolumn{2}{@{}l}{\emph{Structural and dynamical coupling}}\\
$D(P_0,\tau)$ & typed dataflow graph \\
$\mathrm{dep}(t)$ & type-conservative dependency set at iteration $t$ \\
$\varepsilon,\ \gamma(N)$ & essential-engagement parameter; adversarial-decorrelation residual \\
$\kappa_{\mathrm{bin}},\ \kappa_{\mathrm{match}}$ & binary; match-score coupling width \\
$c_T$ & ISA-relative Levin base $(T-1)^{1/\log_2 T} < 2$ \\
$M(s)$ & Solomonoff algorithmic probability \\
\bottomrule
\end{tabular}
\end{center}

\section{Coupling Width and the Coupling Barrier}

We begin by defining the objects of the theory, the CSP, the score, and the
coupled set, and proves the coupling barrier, the unconditional lower bound on
score-oracle search that the rest of the appendix prices and instantiates.

\subsection{Definitions}

\begin{definition}[Constraint Satisfaction Problem]\label{def:csp}
A \emph{constraint satisfaction problem} (CSP) is a triple $(X, D, C)$ where:
\begin{enumerate}[label=(\roman*)]
\item $X = \{x_1, \dots, x_n\}$ is a finite set of variables.
\item $D$ is a finite domain with $|D| = d$. Each variable takes values in $D$.
\item $C = \{C_1, \dots, C_m\}$ is a finite set of constraints. Each constraint
$C_j$ has a scope $S_j \subseteq X$ and a relation $R_j \subseteq D^{|S_j|}$.
An assignment $\alpha : X \to D$ satisfies $C_j$ if $\alpha|_{S_j} \in R_j$.
\end{enumerate}
\end{definition}

\begin{definition}[Score Function]\label{def:score}
For an assignment $\alpha : X \to D$, the score is
\[
  \score(\alpha) = \bigl|\{\, j \in \{1, \dots, m\} : \alpha \text{ satisfies } C_j \,\}\bigr|.
\]
An assignment $\alpha^*$ is \emph{satisfying} if $\score(\alpha^*) = m$.
\end{definition}

More generally, an instance may carry an arbitrary score function
$\score : D^X \to \mathbb{N}$ in place of the constraint count; every coupling
notion below is stated relative to a pair $(\score, \alpha_0)$ and applies
verbatim, with the constraint-count score as the canonical case. When the
explicit constraints play no role we write such an instance as a quadruple
$(X, D, \score, \alpha_0)$.

\begin{definition}[Primal Graph]\label{def:primal}
The primal graph $G = (X, E)$ has an edge $(x_i, x_j)$ if and only if there
exists a constraint $C_k$ whose scope $S_k$ contains both $x_i$ and $x_j$.
\end{definition}

\begin{definition}[Coupling Set]\label{def:coupling-set}
Let $\alpha_0$ be a baseline assignment. Write $\restr{\alpha_0}{\beta}$ for the
assignment that agrees with $\alpha_0$ outside $S$ and assigns $\beta$ to the
variables in $S$. Let $\vbase = \score(\alpha_0)$.

A set of variables $S = \{x_{i_1}, \dots, x_{i_k}\}$ is \emph{$k$-coupled} with
respect to $(\score, \alpha_0)$ if both conditions hold:
\begin{enumerate}[label=(\alph*)]
\item \emph{Partial perturbation invariance.} For every proper subset
$S' \subsetneq S$ with $0 < |S'| < k$, and for every assignment
$\beta : S' \to D$,
\[
  \score(\restrS{\alpha_0}{S'}{\beta}) = \vbase.
\]
\item \emph{Full correction signal.} There exists an assignment
$\beta^* : S \to D$ such that
\[
  \score(\restr{\alpha_0}{\beta^*}) > \vbase.
\]
\end{enumerate}
\end{definition}

\begin{definition}[$(\delta, \Delta)$-Coupling Set, relaxed]\label{def:coupling-set-relaxed}
Let $\alpha_0$, $\vbase$, and $\restr{\alpha_0}{\beta}$ be as in
\Cref{def:coupling-set}, and let $\delta, \Delta$ be nonnegative reals with
$\Delta > \delta$. A set $S = \{x_{i_1}, \dots, x_{i_k}\}$ is
\emph{$(\delta, \Delta)$-$k$-coupled} with respect to $(\score, \alpha_0)$ if:
\begin{enumerate}[label=(\alph*$'$)]
\item \emph{Bounded partial perturbation.} For every proper subset
$S' \subsetneq S$ with $0 < |S'| < k$ and every $\beta : S' \to D$,
\[
  \score(\restrS{\alpha_0}{S'}{\beta}) \le \vbase + \delta.
\]
\item \emph{Correction margin.} There exists $\beta^* : S \to D$ such that
\[
  \score(\restr{\alpha_0}{\beta^*}) \ge \vbase + \Delta.
\]
\end{enumerate}
The gap $\Delta - \delta$ is the \emph{correction margin} of the coupled set.
\end{definition}

\begin{proposition}[Relationship between \Cref{def:coupling-set} and \Cref{def:coupling-set-relaxed}]\label{prop:def-equiv}
Let $S$ be a candidate coupled set with baseline $\alpha_0$ and integer-valued
score function.
\begin{enumerate}[label=(\roman*)]
\item \emph{Forward direction (unconditional).} If $S$ is $k$-coupled in the
sense of \Cref{def:coupling-set}, then $S$ is $(0,1)$-$k$-coupled in the sense
of \Cref{def:coupling-set-relaxed}.
\item \emph{Converse (conditional).} If $S$ is $(0,1)$-$k$-coupled \emph{and}
$\alpha_0$ is a local score minimum on $S$-perturbations, i.e.
\begin{equation}
  \score(\restrS{\alpha_0}{S'}{\beta}) \ge \vbase
  \quad\text{for every proper } S' \subsetneq S,\ 0<|S'|<k,\ \text{and every }
  \beta : S' \to D, \tag{LMH}\label{eq:lmh}
\end{equation}
then $S$ is $k$-coupled in the sense of \Cref{def:coupling-set}.
\end{enumerate}
\end{proposition}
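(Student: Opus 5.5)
The plan is to verify each direction by unpacking the two definitions clause by clause. Both are stated with respect to the same baseline $\alpha_0$, the same $\vbase = \score(\alpha_0)$, and the same family of proper subsets $S' \subsetneq S$ with $0<|S'|<k$. So the comparison reduces to two separate questions: how an equality relates to a one-sided inequality on the partial perturbations, and how a strict inequality relates to a margin inequality on the full correction. Integrality of the score is used only for the second question.

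For (i), assume $S$ is $k$-coupled in the sense of \Cref{def:coupling-set} and check the clauses of \Cref{def:coupling-set-relaxed} with $\delta = 0$, $\Delta = 1$; the standing requirement $\Delta > \delta$ then holds trivially.

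\emph{Clause (a$'$).} Condition (a) gives $\score(\restrS{\alpha_0}{S'}{\beta}) = \vbase$ for every admissible $S'$ and every $\beta$. This in particular gives $\le \vbase + 0$.

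\emph{Clause (b$'$).} Condition (b) supplies $\beta^*$ with $\score(\restr{\alpha_0}{\beta^*}) > \vbase$. Since $\score$ is integer-valued and $\vbase$ is the score of an assignment, hence an integer, a strict inequality between integers upgrades to $\score(\restr{\alpha_0}{\beta^*}) \ge \vbase + 1$. This is exactly the $\Delta = 1$ margin.

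For (ii), assume $(0,1)$-$k$-coupling together with \eqref{eq:lmh}.

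\emph{Condition (a).} Clause (a$'$) with $\delta = 0$ gives $\score(\restrS{\alpha_0}{S'}{\beta}) \le \vbase$ for every admissible $S'$ and $\beta$. Condition \eqref{eq:lmh} gives the reverse inequality $\ge \vbase$ on exactly the same index set of pairs $(S',\beta)$. The two together yield equality, which is partial perturbation invariance.

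\emph{Condition (b).} Clause (b$'$) gives $\beta^*$ with score at least $\vbase + 1 > \vbase$, so the full correction signal holds; no integrality is needed in this direction.

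There is no genuine obstacle; the only points needing care are bookkeeping. First, the ranges of $S'$ in (a), (a$'$) and \eqref{eq:lmh} must coincide verbatim, which they do as stated. Second, the integrality step in (i) must be invoked explicitly, since without it a real-valued score could improve by less than $1$. I would also remark after the proof why \eqref{eq:lmh} cannot be dropped in (ii). Take a score in which some partial perturbation strictly lowers the score below $\vbase$ while the full correction raises it. That score is $(0,1)$-coupled, but it violates invariance (a), so it is not $k$-coupled in the sense of \Cref{def:coupling-set}. This explains the labels ``unconditional'' and ``conditional'' in the statement.
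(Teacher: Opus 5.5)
Your proof is correct and follows the paper's argument essentially line for line. In (i), equality gives the $\delta=0$ bound and integrality upgrades the strict inequality to the $\Delta=1$ margin; in (ii), the upper bound from (a$'$) combines with \eqref{eq:lmh} to force equality, and $\vbase+1>\vbase$ gives (b). Your closing remark on why \eqref{eq:lmh} cannot be dropped is sound, though the paper does not include it.
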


\begin{proof}
(i) Condition (a) of \Cref{def:coupling-set} states
$\score(\restrS{\alpha_0}{S'}{\beta}) = \vbase$, which implies
$\score(\restrS{\alpha_0}{S'}{\beta}) \le \vbase + 0$, giving (a$'$) at
$\delta = 0$. Condition (b) gives $\beta^*$ with
$\score(\restr{\alpha_0}{\beta^*}) > \vbase$; since the score is
integer-valued this is equivalent to
$\score(\restr{\alpha_0}{\beta^*}) \ge \vbase + 1$, giving (b$'$) at
$\Delta = 1$.

(ii) Condition (a$'$) at $\delta = 0$ gives $\score \le \vbase$. The hypothesis
\eqref{eq:lmh} gives $\score \ge \vbase$. Together these force
$\score = \vbase$, which is (a). Condition (b$'$) at $\Delta = 1$ gives
$\score \ge \vbase + 1 > \vbase$, which is (b).
\end{proof}

\begin{remark}
Hypothesis \eqref{eq:lmh} holds trivially whenever $\vbase = 0$, since scores
are nonnegative. This is the regime of the binary score in \cref{sec:binary-coupling}
(\cref{thm:binary-coupling}; the all-\textsc{identity} baseline fails to generate any
nontrivial target, so $\vbase^{\mathrm{bin}} = 0$). It also holds whenever the
baseline attains the pointwise infimum of the score over $S$-perturbations, a
property possessed by the graded-translation class of \cref{prop:zp}
(every proper perturbation achieves exactly $\vbase = \lceil N/p \rceil$
matches, so \eqref{eq:lmh} holds with equality).
\end{remark}

\begin{definition}[Coupling Width]\label{def:coupling-width}
The \emph{coupling width} of the CSP instance with respect to
$(\score, \alpha_0)$ is
\[
  \kappa = \max\{\, k : \text{there exists a $k$-coupled set w.r.t.\ }
  (\score, \alpha_0) \,\}.
\]
If no coupled set exists (the baseline is a local optimum of the score), then
$\kappa = 0$ by convention and the search lower bound of Part~III is vacuous.
\end{definition}

\begin{definition}[$(\delta, \Delta)$-Coupling Width]\label{def:coupling-width-relaxed}
The \emph{$(\delta, \Delta)$-coupling width} is
\[
  \kappa(\delta, \Delta)
  = \max\{\, k : \text{there exists a $(\delta, \Delta)$-$k$-coupled set
  w.r.t.\ } (\score, \alpha_0) \,\}.
\]
\end{definition}

\begin{remark}[Robustness and the search-rate tradeoff]
For a $k$-coupled set $S$, let
$\msig := \bigl|\{\, \beta \in D^k : \restr{\alpha_0}{\beta} \text{ achieves
score} > \vbase \,\}\bigr|$ count the full-set witnesses for condition~(b). The
\emph{robustness} of $S$ is $r(S) := \msig / (d-1)^k$, the fraction of full-set
perturbations producing signal. The classical search lower bound (Part~III)
becomes $\Omega\bigl((d-1)^k / \msig\bigr) = \Omega(1/r(S))$; fragile entities
(small $r$) require more queries than robust entities (large $r$). Robustness
ranges from $1/(d-1)^k$ (one witness) to $1$ (every full-set perturbation
produces signal).
\end{remark}

\subsection{The Coupling Barrier Theorem}

The barrier has three clauses. Verification and inference are classical; the
content is clause~(III), the flat-basin reduction, which we prove via two
lemmas. Throughout, for a coupled set $S$ write
\[
  Q = \{\, \beta \in D^k : \beta_j \ne \alpha_0(x_{i_j}) \text{ for all } j \,\},
  \qquad |Q| = (d-1)^k .
\]

\begin{theorem}[Coupling Barrier Theorem]\label{thm:barrier}
Let $(X, D, C)$ be a CSP with $|X| = n$, $|D| = d$, $|C| = m$.
\begin{enumerate}[label=\textup{(\Roman*)}]
\item \emph{Verification.} Checking $\score(\alpha) = m$ for a given $\alpha$
takes $O\bigl(\sum_j |S_j|\bigr)$ time, which is $O(m)$ for bounded arity.
\item \emph{Inference.} If the primal graph has treewidth $\tw$, a satisfying
assignment is found in $O(n\, d^{\tw+1})$ time (Freuder 1982; Dechter--Pearl
1989).
\item \emph{Flat-basin reduction.} If $S$ is $k$-coupled with respect to
$(\score, \alpha_0)$, then $\score(\restr{\alpha_0}{\beta}) = \vbase$ for every
$\beta \in D^k$ having at least one baseline coordinate; consequently every
$\beta$ with $\score(\restr{\alpha_0}{\beta}) > \vbase$ lies in $Q$, and the
score is constant on the flat basin $D^k \setminus Q$ of size
$d^k - (d-1)^k$.
\item[\textup{(III$'$)}] \emph{Relaxed form.} If $S$ is
$(\delta, \Delta)$-$k$-coupled, then $\score(\restr{\alpha_0}{\beta}) \le \vbase
+ \delta$ for every $\beta$ with a baseline coordinate, and every $\beta$ with
$\score(\restr{\alpha_0}{\beta}) \ge \vbase + \Delta$ lies in $Q$; clause~(III)
is the case $\delta = 0$, $\Delta = 1$.
\end{enumerate}
\end{theorem}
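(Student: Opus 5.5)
Clauses (I) and (II) are classical and need only brief justification; the real work is (III) and (III$'$), which follow directly from the definitions.

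For (I), I would evaluate each constraint $C_j$ by restricting $\alpha$ to its scope $S_j$ and testing membership in $R_j$. With a suitable representation (a lookup table, or a constant-time membership test for bounded arity), each test costs $O(|S_j|)$. Summing over $j$ and comparing the resulting count to $m$ gives $O(\sum_j |S_j|)$, which is $O(m)$ when arity is bounded.

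For (II), I would cite Freuder and Dechter--Pearl. Take a tree decomposition of width $\tw$. Every constraint scope is a clique in the primal graph, so it lies inside some bag. Run dynamic programming over the bags, tabulating for each bag the consistent partial assignments; there are at most $d^{\tw+1}$ per bag and $O(n)$ bags. Then recover a satisfying assignment by backtracking down the tree.

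For (III), fix $\beta \in D^k$ with at least one baseline coordinate, i.e.\ $\beta_j = \alpha_0(x_{i_j})$ for some $j$. Let $S' = \{x_{i_j} : \beta_j \ne \alpha_0(x_{i_j})\}$ be the set of non-baseline coordinates. Since some coordinate is baseline, $S' \subsetneq S$. The key observation is
\[
\restr{\alpha_0}{\beta} = \restrS{\alpha_0}{S'}{\beta|_{S'}},
\]
because on $S \setminus S'$ the assignment $\beta$ already agrees with $\alpha_0$. I would then split into two cases.
\begin{itemize}
\item If $S' = \emptyset$, the assignment equals $\alpha_0$ and its score is $\vbase$ trivially.
\item If $0 < |S'| < k$, condition (a) of \Cref{def:coupling-set} gives score $\vbase$.
\end{itemize}
In either case the score equals $\vbase$. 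Taking the contrapositive, any $\beta$ with score exceeding $\vbase$ has no baseline coordinate, so it lies in $Q$. The set $D^k \setminus Q$ is exactly the set of $\beta$ with some baseline coordinate, and $|Q| = (d-1)^k$ by counting $d-1$ choices per coordinate. Hence the score is constant on $D^k \setminus Q$, which has size $d^k - (d-1)^k$.

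For (III$'$), I would repeat the same reduction with condition (a$'$) of \Cref{def:coupling-set-relaxed} in place of (a). This gives score $\le \vbase + \delta$ whenever $\beta$ has a baseline coordinate; the empty case is covered since $\delta \ge 0$. Because $\Delta > \delta$, any $\beta$ with score $\ge \vbase + \Delta$ cannot have a baseline coordinate, so it lies in $Q$. Specialising to $\delta = 0$, $\Delta = 1$ recovers (III) via \cref{prop:def-equiv}(i) for integer scores. That direction only yields the upper bound $\le \vbase$, though, so the equality statement of (III) itself comes from the direct argument above.

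The only step needing care is the reduction $\restr{\alpha_0}{\beta} = \restrS{\alpha_0}{S'}{\beta|_{S'}}$, together with the empty-$S'$ edge case. Nothing else is an obstacle.
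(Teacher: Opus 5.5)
Your proof is correct and follows the paper's route essentially verbatim. The paper likewise cites (I) and (II) as classical and proves (III)/(III$'$) through its Flat-region lemma: take $S'$ to be the non-baseline coordinates, use $\restr{\alpha_0}{\beta} = \restrS{\alpha_0}{S'}{\beta'}$ with the empty-$S'$ case handled separately, then take the contrapositive (using $\Delta > \delta$) to place all signal in $Q$; your remark that the $(0,1)$ specialisation alone yields only $\le \vbase$, so that equality in (III) needs rigid condition (a) directly, is a fair sharpening of the paper's phrase ``clause~(III) is the case $\delta = 0$, $\Delta = 1$''.
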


Clauses (I) and (II) are the cited classical results. Clauses (III) and
(III$'$) follow from the next two lemmas.

\begin{lemma}[Flat region]\label{lem:flat}
Let $S$ be $k$-coupled (resp.\ $(\delta,\Delta)$-$k$-coupled). For any
$\beta \in D^k$ with at least one baseline coordinate,
$\score(\restr{\alpha_0}{\beta}) = \vbase$ (resp.\ $\le \vbase + \delta$).
\end{lemma}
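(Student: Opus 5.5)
The plan is to reduce the statement directly to condition~(a) of \Cref{def:coupling-set} (resp.\ (a$'$) of \Cref{def:coupling-set-relaxed}) by identifying which coordinates of $\beta$ actually change the baseline. Fix $\beta \in D^k$ with at least one baseline coordinate, and let
\[
  S' = \{\, x_{i_j} \in S : \beta_j \ne \alpha_0(x_{i_j}) \,\}
\]
be its \emph{support}, the set of coupled variables that $\beta$ moves off the baseline. Because some coordinate $j$ has $\beta_j = \alpha_0(x_{i_j})$, the variable $x_{i_j}$ is not in $S'$, so $S' \subsetneq S$ and $|S'| \le k-1$.

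The key observation is the identity of assignments
\[
  \restr{\alpha_0}{\beta} = \restrS{\alpha_0}{S'}{\beta|_{S'}} .
\]
Both sides agree with $\alpha_0$ outside $S$. On $S \setminus S'$ the left side assigns $\beta$, which equals $\alpha_0$ there by definition of $S'$. On $S'$ both sides assign $\beta$. The proof then splits into two cases on $|S'|$.

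\emph{Case $S' = \emptyset$.} The assignment is $\alpha_0$ itself, so its score is $\vbase$. In the relaxed form this is $\le \vbase + \delta$ because $\delta \ge 0$. This case also covers $k = 1$, where the only $\beta$ with a baseline coordinate is the baseline itself.

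\emph{Case $0 < |S'| < k$.} Here $S'$ is exactly a proper subset of the kind quantified in condition~(a), and $\beta|_{S'} : S' \to D$ is an admissible partial assignment. Condition~(a) gives $\score = \vbase$, and condition (a$'$) gives $\score \le \vbase + \delta$.

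There is no real obstacle. The only care needed is the bookkeeping in the assignment identity above, and handling the empty-support case separately, since the definitions quantify only over $0 < |S'|$. The consequences recorded in clauses~(III) and (III$'$) of \Cref{thm:barrier} then follow by contraposition. Any $\beta$ with score $> \vbase$ (resp.\ $\ge \vbase + \Delta > \vbase + \delta$) cannot have a baseline coordinate, so it lies in $Q$. The flat basin $D^k \setminus Q$ therefore has size $d^k - (d-1)^k$.
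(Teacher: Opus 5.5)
Your proof is correct and follows the paper's argument: take the support $S'$ of $\beta$, note $|S'|<k$ because some coordinate sits at baseline, treat $S'=\emptyset$ directly, and otherwise rewrite the assignment as a perturbation of the proper subset $S'$ and apply condition (a) or (a$'$). Your explicit check of the assignment identity and your remark that $\delta \ge 0$ covers the empty-support case in the relaxed form are details the paper leaves implicit.
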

\begin{proof}
Let $S' = \{\, x_{i_j} : \beta_j \ne \alpha_0(x_{i_j}) \,\}$. Since at least one
coordinate is at baseline, $|S'| < k$. If $S' = \emptyset$ then
$\restr{\alpha_0}{\beta} = \alpha_0$ and the score is $\vbase$. Otherwise $S'$
is a proper subset of $S$, and writing $\beta'$ for the restriction of $\beta$
to $S'$ we have $\restr{\alpha_0}{\beta} = \restrS{\alpha_0}{S'}{\beta'}$
(the coordinates outside $S'$ are at baseline). Condition~(a) gives
$\score = \vbase$; condition~(a$'$) gives $\score \le \vbase + \delta$.
\end{proof}

\begin{lemma}[Uninformative queries and signal concentration]\label{lem:uninformative}
Under the hypotheses of \Cref{lem:flat}, every $\beta \notin Q$ satisfies
$\score(\restr{\alpha_0}{\beta}) = \vbase$ (rigid case) or $\le \vbase + \delta$
(relaxed case); equivalently, every assignment with score exceeding $\vbase$
(rigid) or reaching $\vbase + \Delta$ (relaxed, using $\Delta > \delta$) lies in
$Q$. In the relaxed case, if in addition $\alpha_0$ is a local score minimum on
$S$-perturbations \textup{(\ref{eq:lmh})}, then $\score(\restr{\alpha_0}{\beta})
\in [\vbase, \vbase + \delta]$ for every $\beta \notin Q$.
\end{lemma}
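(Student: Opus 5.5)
The argument reduces almost entirely to \Cref{lem:flat}, once we note that membership in $Q$ is the negation of ``has at least one baseline coordinate.'' By the definition of $Q$, a vector $\beta \in D^k$ fails to lie in $Q$ exactly when some coordinate satisfies $\beta_j = \alpha_0(x_{i_j})$. So the set $D^k \setminus Q$ coincides with the set of $\beta$ having at least one baseline coordinate. First I make this set identity explicit. Then I apply \Cref{lem:flat} verbatim, which gives $\score(\restr{\alpha_0}{\beta}) = \vbase$ in the rigid case and $\le \vbase + \delta$ in the relaxed case for every $\beta \notin Q$.

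The ``equivalently'' clause is the contrapositive of this first statement. In the rigid case, suppose $\score(\restr{\alpha_0}{\beta}) > \vbase$. Then $\beta$ cannot lie outside $Q$, since there the score equals $\vbase$; hence $\beta \in Q$. In the relaxed case, suppose $\score(\restr{\alpha_0}{\beta}) \ge \vbase + \Delta$. Because $\Delta > \delta$, this gives $\score > \vbase + \delta$, which is incompatible with $\beta \notin Q$. I will state both directions of the equivalence, since the converse is just the first statement again. The only place a hypothesis is genuinely used is the strict inequality $\Delta > \delta$ from \Cref{def:coupling-set-relaxed}.

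For the final sentence, fix $\beta \notin Q$. The upper bound $\vbase + \delta$ has already been established. For the lower bound, let $S' = \{x_{i_j} : \beta_j \neq \alpha_0(x_{i_j})\}$, exactly as in the proof of \Cref{lem:flat}. If $S' = \emptyset$, the score is $\vbase$ exactly. Otherwise $S'$ is a proper nonempty subset of $S$, and $\restr{\alpha_0}{\beta} = \restrS{\alpha_0}{S'}{\beta'}$. Then \eqref{eq:lmh} applies directly and gives $\score \ge \vbase$. Combining the two bounds yields $\score \in [\vbase, \vbase+\delta]$.

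None of these steps is a real obstacle; the proof is bookkeeping. The one point needing care is the case split on whether $S'$ is empty. \eqref{eq:lmh} is stated only for $0 < |S'| < k$, so the empty case must be handled separately via $\restr{\alpha_0}{\beta} = \alpha_0$, as in \Cref{lem:flat}.
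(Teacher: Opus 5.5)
Your proposal is correct and follows the paper's proof. Both identify $D^k\setminus Q$ with the $\beta$ having a baseline coordinate, apply \Cref{lem:flat} for the upper bound, take the contrapositive (using $\Delta>\delta$) for signal concentration, and invoke \eqref{eq:lmh} on the set $S'$ of \Cref{lem:flat} for the lower bound. Your separate treatment of $S'=\emptyset$ fills in a case the paper's one-line argument leaves implicit. The ``equivalently'' is really a one-way consequence: its contrapositive gives only $\score\le\vbase$ (rigid), resp.\ $\score<\vbase+\Delta$ (relaxed), so there is no separate converse to state.
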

\begin{proof}
A $\beta \notin Q$ has a baseline coordinate, so \Cref{lem:flat} gives the upper
bound, and its contrapositive (with $\Delta > \delta$ in the relaxed case)
places all signal in $Q$. The two-sided bracket under \eqref{eq:lmh} is the
upper bound together with the lower bound $\score \ge \vbase$ supplied by
\eqref{eq:lmh} on the proper subset $S'$ defined as in \Cref{lem:flat}.
\end{proof}

These prove clause~(III); the score is $\vbase$ off $Q$ and all signal is
concentrated in $Q$, a set of size $(d-1)^k$. The relaxed clause~(III$'$) is
identical with the one-sided bound $\vbase + \delta$. The lower bound below uses
only the upper-bound (signal-concentration) direction, which holds
unconditionally; \eqref{eq:lmh} is needed only for the two-sided bracket, not
for the search bound.

\begin{corollary}[Classical query lower bound via promise-OR]\label{cor:yao}
Let $\msig = \bigl|\{\, \beta \in Q : \score(\restr{\alpha_0}{\beta}) > \vbase
\,\}\bigr|$ under~(III), or
$\msig = \bigl|\{\, \beta \in Q : \score(\restr{\alpha_0}{\beta}) \ge \vbase +
\Delta \,\}\bigr|$ under~(III$'$). Parts~(III) and~(III$'$) reduce locating a
signal element via the score oracle to promise-OR on $(d-1)^k$ inputs with
$\msig$ ones. By Yao's minimax principle:
\begin{enumerate}[label=(\alph*)]
\item For any deterministic algorithm accessing the score through point
queries, there exists a consistent score function forcing at least
$(d-1)^k - \msig + 1$ evaluations of $Q$.
\item For any randomized algorithm, the expected evaluations on the worst-case
consistent score function are $\Omega\bigl((d-1)^k / \msig\bigr)$.
\end{enumerate}
\end{corollary}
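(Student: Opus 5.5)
The plan is to reduce score-oracle search on the coupled set to an unstructured search problem on $Q$, and then apply the standard adversary and Yao arguments for promise-OR.

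\textbf{Step 1: fix the class of consistent score functions.} This is the step I expect to need the most care. The worry is that a clever algorithm might extract information from queries outside $S$. The cleanest way to rule this out is to restrict the adversary's class so that all dependence on the hidden set lives on $S$-perturbations. Concretely, a consistent score is any score that
\begin{itemize}
\item agrees with a fixed background function everywhere except on $\{\restr{\alpha_0}{\beta} : \beta \in Q\}$, and
\item on those points takes the value $\vbase$, except on a hidden set $H \subseteq Q$ with $|H| = \msig$, where it takes a value above $\vbase$ (resp.\ at least $\vbase + \Delta$).
\end{itemize}
By \Cref{lem:flat} and \Cref{lem:uninformative}, every such score keeps $S$ $k$-coupled (resp.\ $(\delta,\Delta)$-coupled), and all of them agree off $Q$. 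Queries outside $Q$ therefore return answers the algorithm could have computed itself. Each query in $Q$ reveals only whether $\beta \in H$, plus a value that we may fix to be constant on $H$. The search is thus exactly locating a one in an $\msig$-sparse bit string of length $|Q| = (d-1)^k$, given oracle access to its bits. This is the promise-OR instance.

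\textbf{Step 2: deterministic bound, part (a).} Use the standard adversary. The adversary answers ``$\vbase$'' (a zero) to each new point of $Q$ queried, as long as at least $\msig$ unqueried points remain. Every answer so far is consistent with placing $H$ among the unqueried points. So the algorithm cannot have found a signal element while fewer than $(d-1)^k - \msig + 1$ distinct points of $Q$ have been queried. This gives the bound $(d-1)^k - \msig + 1$.

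\textbf{Step 3: randomized bound, part (b).} Apply Yao's minimax principle with $H$ drawn uniformly from the $\msig$-subsets of $Q$. For any deterministic algorithm, its query order up to the first hit is a fixed sequence $q_1, q_2, \dots$ of distinct points, since every answer before the first hit is ``zero''. The expected time to the first hit is therefore the expected minimum position of a uniform random $\msig$-subset, which equals
\[
\frac{(d-1)^k + 1}{\msig + 1} = \Omega\bigl((d-1)^k/\msig\bigr).
\]
Yao transfers this bound to the worst case of any randomized algorithm. Repeated queries can be assumed away without loss of generality.
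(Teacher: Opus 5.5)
Your proposal is correct and follows the same route as the paper. The flat-basin clauses (III)/(III$'$) confine all signal to $Q$, so score-oracle search becomes locating one of $\msig$ marked points among $(d-1)^k$; this is handled by the standard adversary for (a) and by Yao's principle with a uniformly random marked set for (b). You make explicit the consistent score family and the exact expectation $((d-1)^k+1)/(\msig+1)$, which the paper leaves implicit. One tightening is needed: take the background to be the given score itself, not an arbitrary fixed function, so that \Cref{lem:flat} makes it flat off $Q$ and the modified scores inherit conditions (a)/(a$'$), whereas an arbitrary background would not guarantee that $S$ stays coupled.
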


\subsection{Coupling Width and Score-Function Treewidth}

\begin{proposition}[Incomparability with treewidth of explicit constraints]\label{prop:incomp-tw}
Let $\tw(R)$ denote the primal treewidth of $R$'s constraint graph (constraints
only). For every $n \ge 2$, there exist:
\begin{enumerate}[label=(\alph*)]
\item A CSP $R_n$ with $\tw(R_n) = 0$ and coupling width $n$ at some baseline.
\item A CSP $R'_n$ with $\tw(R'_n) = n-1$ and coupling width $1$ at some baseline.
\end{enumerate}
\end{proposition}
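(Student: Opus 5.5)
The plan is to give two explicit families. Part~(a) uses an edgeless primal graph with a conjunctive score. Part~(b) uses a single constraint whose scope is all of $X$, which makes the primal graph a clique, paired with a separable score. Throughout I take the domain $D = \{0,1\}$; any $d \ge 2$ works by embedding. The baseline is $\alpha_0 \equiv 0$. The coupling width is computed directly from \Cref{def:coupling-set} and \Cref{def:coupling-width}, using the convention that $\tw$ of an edgeless graph is $0$ and $\tw(K_n) = n-1$.

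For (a), I will exploit the remark after \Cref{def:score}: an instance may carry an arbitrary score, and $\tw(R)$ is measured on the explicit constraints only. Let $R_n$ have variables $x_1,\dots,x_n$ and no constraints of arity $\ge 2$ (say $C = \emptyset$, or only trivially true unary constraints). Its primal graph is then edgeless and $\tw(R_n) = 0$. Equip it with the score $\score(\alpha) = [\alpha \equiv 1]$, the indicator that every coordinate equals the target value $1 \ne \alpha_0(x_i)$.

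Take $S = X$. For a proper nonempty $S' \subsetneq S$ and any $\beta : S' \to D$, some coordinate outside $S'$ stays at $0$, so the score is $0 = \vbase$. This is condition~(a). The full assignment $\beta^* \equiv 1$ scores $1 > \vbase$, which is condition~(b). Hence $S$ is $n$-coupled. Since $|X| = n$, no larger set exists, so $\kappa = n$.

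A point worth flagging is that (a) genuinely needs the general-score convention. With the constraint-count score and only unary constraints, the score is additive over single variables. A single-variable change that affects any constraint therefore moves the score, which forces $\kappa \le 1$. I will state this so the reader sees why the construction is legitimate rather than a loophole.

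For (b), I stay within the constraint-count score. Let $R'_n$ contain one $n$-ary constraint $C_0$ with scope $X$ and relation $D^n$, which is always satisfied. It also contains $n$ unary constraints $C_i : x_i = 1$. The primal graph is $K_n$, so $\tw(R'_n) = n-1$.

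At $\alpha_0 \equiv 0$ we have $\vbase = 1$, coming only from $C_0$. Every singleton $\{x_i\}$ is $1$-coupled: condition~(a) is vacuous because there is no nonempty proper subset, and setting $x_i = 1$ raises the score to $2$, giving condition~(b). Now suppose a set $S$ with $|S| \ge 2$ were coupled. Pick $x_i \in S$ and the proper subset $S' = \{x_i\}$ with $\beta(x_i) = 1$. This raises the score to $\vbase + 1$ and violates condition~(a). So no coupled set has size $\ge 2$, and $\kappa = 1$.

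The only delicate step is making sure the treewidth in (a) is legitimately measured on the constraints while the coupling width is measured on the score. Once that convention from the statement is invoked explicitly, both verifications are routine.
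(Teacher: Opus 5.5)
Your proof is correct and follows the paper's construction. Part~(a) is exactly the paper's instance (empty constraint set, score $\mathrm{AND}(x_1,\dots,x_n)$, baseline $0^n$), and part~(b) uses the same trivially satisfied $n$-ary constraint to make the primal graph $K_n$. The one difference is in~(b): the paper uses the free-standing score $\score(\alpha)=x_1$, while you stay inside the constraint-count score by adding unary constraints $x_i=1$. That variant shows~(b) does not need the general-score convention, and it rules out every set of size $\ge 2$ uniformly, whereas the paper excludes sets avoiding $x_1$ only implicitly (they fail condition~(b)). Your side remark that~(a) does need the convention is also right, since unary-only constraints give an additively separable count with $\kappa\le 1$.
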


\begin{proof}
(a) Take $R_n = (X_n, D, \emptyset, \score)$ with $X_n = \{x_1, \dots, x_n\}$,
$D = \{0,1\}$, empty constraint set, and
$\score(\alpha) = \mathrm{AND}(x_1, \dots, x_n)$. Baseline
$\alpha_0 = (0, \dots, 0)$. The primal constraint graph is edgeless, so
$\tw(R_n) = 0$. The full set $\{x_1, \dots, x_n\}$ is $n$-coupled at the
baseline and no proper subset is, so the coupling width is $n$.

(b) Take $R'_n = (X_n, D, \{C\}, \score)$ with a single $n$-ary trivial
constraint $C$ holding on all assignments, and $\score(\alpha) = x_1$. Baseline
$\alpha_0 = (0, \dots, 0)$. The primal graph of $C$ is $K_n$, so
$\tw(R'_n) = n-1$. The singleton $S = \{x_1\}$ is $1$-coupled (condition~(a) is
vacuous; setting $x_1 = 1$ gives $\score = 1 > 0 = \vbase$). No larger set can
be coupled, since for any set containing $x_1$ condition~(a) requires
$\score(\alpha_0[\{x_1\}\!\to\!1]) = 0$, but it equals $1 > \vbase$. Hence
$\kappa(R'_n) = 1$.
\end{proof}

\begin{definition}[Score-function interaction graph]\label{def:Gscore}
The \emph{interaction graph} $\Gscore$ of a score function $\score$ on
$(X, D, \alpha_0)$ has vertex set $X$ and an edge $(x_i, x_j)$ whenever $x_i$ and
$x_j$ \emph{interact}, meaning that there exist an assignment $\gamma$ and values
$a_i \ne b_i$ for $x_i$ and $a_j \ne b_j$ for $x_j$ such that, writing
$\gamma[x_i{\mapsto}u, x_j{\mapsto}v]$ for the assignment agreeing with $\gamma$
except that $x_i = u$ and $x_j = v$, the mixed second difference
\[
  \Delta^2_{ij}(\gamma) :=
  \bigl(\score(\gamma[x_i{\mapsto}b_i, x_j{\mapsto}b_j]) - \score(\gamma[x_i{\mapsto}b_i, x_j{\mapsto}a_j])\bigr)
  - \bigl(\score(\gamma[x_i{\mapsto}a_i, x_j{\mapsto}b_j]) - \score(\gamma[x_i{\mapsto}a_i, x_j{\mapsto}a_j])\bigr)
\]
is nonzero; equivalently, changing $x_i$ alters the marginal effect of changing
$x_j$ with all other coordinates held fixed at $\gamma$. (Equivalently still,
$\score$ does not split as $f + g$ with $f$ independent of $x_i$ and $g$
independent of $x_j$, so an additively separable score has no edges.) Its
treewidth is $\twscore$.
\end{definition}

\begin{theorem}[Coupling width is bounded by score-function treewidth]\label{thm:coupling-tw}
Let $R = (X, D, \score, \alpha_0)$ be a CSP whose $(\delta, \Delta)$-coupled sets
satisfy the local-minimum hypothesis~\eqref{eq:lmh} (automatic when
$\vbase = 0$). If $R$ has $(\delta, \Delta)$-coupling width $\kappa$ for some
$\Delta > 2\delta$ and score-interaction graph $\Gscore$ of treewidth
$\twscore$, then
\[
  \kappa \le \twscore + 1.
\]
For the rigid (binary) coupling $\delta = 0$, $\Delta = 1$ at a baseline with
$\vbase = 0$, both hypotheses hold automatically and the bound is
unconditional. The threshold is necessary. An additively separable score has
$\twscore = 0$ yet admits a $(\delta, \Delta)$-coupled set of every size $k$ at
$\delta = k - 1$, $\Delta = k$ (which satisfy $\Delta \le 2\delta$ for
$k \ge 2$), so the bound fails without $\Delta > 2\delta$.
\end{theorem}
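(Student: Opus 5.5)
The plan is to show that every $(\delta,\Delta)$-$k$-coupled set $S$ with $k \ge 2$ is a \emph{clique} in $\Gscore$, and then to invoke the standard fact that a graph containing $K_k$ has treewidth at least $k-1$. The standard fact holds because, by the Helly property for subtrees, any tree decomposition has a bag containing the whole clique. For $k \le 1$ the bound $\kappa \le \twscore + 1$ is immediate, since $\twscore \ge 0$. So fix a $(\delta,\Delta)$-$k$-coupled set $S$ of maximum size $k=\kappa \ge 2$ and a correction witness $\beta^*$ with $\score(\restr{\alpha_0}{\beta^*}) \ge \vbase + \Delta$. By \Cref{lem:uninformative} (using $\Delta > 2\delta \ge \delta$), $\beta^* \in Q$, so every coordinate of $\beta^*$ differs from the baseline.

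Fix two distinct variables $x_i, x_j \in S$. Let $\gamma$ agree with $\restr{\alpha_0}{\beta^*}$ on $S \setminus \{x_i,x_j\}$ and with $\alpha_0$ elsewhere. Take $a_i = \alpha_0(x_i)$, $b_i = \beta^*_i$, and similarly $a_j, b_j$; then $a_i \ne b_i$ and $a_j \ne b_j$. We bound the four terms of $\Delta^2_{ij}(\gamma)$ as follows.
\begin{enumerate}[label=(\roman*)]
\item The $(b_i,b_j)$ term is $\restr{\alpha_0}{\beta^*}$ itself, so its score is $\ge \vbase + \Delta$.
\item The mixed terms $(b_i,a_j)$ and $(a_i,b_j)$ each restrict to an element of $D^k$ with a baseline coordinate. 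By \Cref{lem:flat} each has score $\le \vbase + \delta$.
\item The $(a_i,a_j)$ term also has a baseline coordinate. Under \eqref{eq:lmh}, the two-sided bracket of \Cref{lem:uninformative} gives it score $\ge \vbase$. If $k=2$, this term is $\alpha_0$ itself, with score exactly $\vbase$.
\end{enumerate}
Combining these bounds,
\[
\Delta^2_{ij}(\gamma) \ge (\vbase+\Delta) - (\vbase+\delta) - (\vbase+\delta) + \vbase = \Delta - 2\delta > 0,
\]
so $(x_i,x_j)$ is an edge of $\Gscore$. Since the pair was arbitrary, $S$ is a clique and $\kappa = k \le \twscore + 1$.

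The main point requiring care is term (iii). Without the local-minimum hypothesis, the $(a_i,a_j)$ term could fall below $\vbase$ and cancel the positive contribution, so this is exactly where \eqref{eq:lmh} enters. The $2\delta$ comes from the two mixed terms each contributing up to $\delta$. For the rigid case we have $\delta=0$ and $\Delta=1 > 0$. At $\vbase = 0$, \eqref{eq:lmh} holds by nonnegativity of scores (as in the remark after \Cref{prop:def-equiv}), so the bound is unconditional there.

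For necessity, take $D=\{0,1\}$, $\score(\alpha)=\sum_i \alpha(x_i)$, and $\alpha_0 = 0$. This score is additively separable, so every $\Delta^2_{ij}$ vanishes, $\Gscore$ is edgeless and $\twscore=0$. For any $k$-set $S$, a proper subset perturbation changes at most $k-1$ coordinates, so its score is $\le k-1 = \delta$. The full flip scores $k = \Delta$. \eqref{eq:lmh} holds since $\vbase=0$. Thus $S$ is $(k-1,k)$-$k$-coupled with $k > \twscore + 1$ whenever $k \ge 2$, and indeed $\Delta = k \le 2(k-1) = 2\delta$ exactly when $k\ge2$. This confirms that the threshold $\Delta > 2\delta$ cannot be dropped.
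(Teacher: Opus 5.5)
Your proof is correct and follows the paper's argument step for step: the same choice of $\gamma$, the same four-corner bound on the mixed second difference giving $\Delta - 2\delta > 0$, and the same clique-implies-treewidth conclusion. You also check explicitly the $k \le 1$ case and the necessity example (the sum score on $\{0,1\}$), both of which the paper only asserts.
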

\begin{proof}
Let $S = \{x_{i_1}, \dots, x_{i_k}\}$ be a $(\delta, \Delta)$-coupled set of
maximum size $k = \kappa$, with witness $\beta^*$ satisfying (b$'$). First,
$\beta^* \in Q$; if any coordinate of $\beta^*$ were at baseline, then
$\restr{\alpha_0}{\beta^*}$ would equal a proper-subset perturbation
$\restrS{\alpha_0}{S \setminus \{j\}}{\cdot}$, scoring $\le \vbase + \delta <
\vbase + \Delta$ by (a$'$), contradicting (b$'$). We show every pair in $S$ is
an edge of $\Gscore$. Fix $x_i, x_j \in S$ and let $\gamma$ be the assignment
that sets the coordinates $S \setminus \{x_i, x_j\}$ to
$\beta^*|_{S \setminus \{x_i, x_j\}}$ and leaves all remaining coordinates,
including $x_i$ and $x_j$, at baseline. With $\gamma[x_i{\mapsto}u, x_j{\mapsto}v]$
as in \Cref{def:Gscore}, form the four corner scores
$A := \score(\gamma[x_i{\mapsto}\beta^*_i, x_j{\mapsto}\beta^*_j])$,
$B := \score(\gamma[x_i{\mapsto}\beta^*_i, x_j{\mapsto}\alpha_0(x_j)])$,
$C := \score(\gamma[x_i{\mapsto}\alpha_0(x_i), x_j{\mapsto}\beta^*_j])$, and
$E := \score(\gamma[x_i{\mapsto}\alpha_0(x_i), x_j{\mapsto}\alpha_0(x_j)])$,
which set $x_i$ and $x_j$ either to baseline or to their $\beta^*$ values. Then
$A = \score(\restr{\alpha_0}{\beta^*})$ is the full-set witness, so
$A \ge \vbase + \Delta$ by (b$'$); $B$ and $C$ are the proper-subset
perturbations of $S \setminus \{x_j\}$ and $S \setminus \{x_i\}$ (each with
$k - 1$ non-baseline coordinates), so $B, C \le \vbase + \delta$ by (a$'$); and
$E$ is the proper-subset perturbation of $S \setminus \{x_i, x_j\}$ ($k - 2$
coordinates), so $E \ge \vbase$ by~\eqref{eq:lmh}. The mixed second difference
of \Cref{def:Gscore} at $(x_i, x_j)$, all other coordinates held at $\gamma$, is
therefore
\[
  \Delta^2_{ij}(\gamma) = (A - B) - (C - E) = A - B - C + E
  \ge (\vbase + \Delta) - 2(\vbase + \delta) + \vbase = \Delta - 2\delta > 0,
\]
using $\Delta > 2\delta$. (When $k = 2$ the set $S \setminus \{x_i, x_j\}$ is
empty, $E = \vbase$ exactly, and~\eqref{eq:lmh} is not needed.) Hence changing
$x_i$ alters the marginal effect of changing $x_j$, so $(x_i, x_j)$ is an edge
of $\Gscore$ by \Cref{def:Gscore}. As the pair was arbitrary, $\Gscore$
restricted to $S$ is the complete graph $K_k$, so
$\twscore \ge \tw(K_k) = k - 1$, giving $\kappa = k \le \twscore + 1$.
\end{proof}

\subsection{Search-Inference Coupling (overview)}

The primal graph $G$ relates to two costs. Inference tractability is governed by
its treewidth $\tw$ (clause~II), and the search-barrier structure is governed by
the coupling width $\kappa$, which fixes the size $(d-1)^\kappa$ of the witness
cube $Q$ in the score landscape (clause~III). \Cref{thm:coupling-tw}
($\kappa \le \twscore + 1$, under its local-minimum and $\Delta > 2\delta$
hypotheses) ties these together. High coupling width requires
dense score-function interactions, which force high treewidth. The consequence
is a conservation law, made precise in \Cref{thm:cons-budget} and
\Cref{cor:cw-unit} below. Any structural relaxation that reduces treewidth
(making inference tractable) simultaneously reduces the maximum achievable
coupling width (weakening the search barrier). The price of relaxation is paid
in completeness, not in search cost.

\section{Reformulation and the Conservation Law}

The barrier can be paid down by re-encoding the problem. This section defines
reformulations, prices them in bits through the Kolmogorov chain rule, and
proves the conservation law, under which structure and search trade one for
one against the program sought.

\subsection{Reformulations}

\begin{definition}[Reformulation]\label{def:reformulation}
Let $R_1 = (X_1, D_1, C_1)$ and $R_2 = (X_2, D_2, C_2)$ be CSPs with score
functions $\score_1, \score_2$. A \emph{reformulation} of $R_1$ as $R_2$ is a
pair of maps $(\varphi, \psi)$ with $\varphi : D_1^{X_1} \to D_2^{X_2}$ and
$\psi : D_2^{X_2} \to D_1^{X_1}$ (assignments to assignments) that preserve
score in both directions: $\score_1(\alpha) = \score_2(\varphi(\alpha))$ for
every $\alpha \in D_1^{X_1}$, and $\score_2(\beta) = \score_1(\psi(\beta))$ for
every $\beta \in D_2^{X_2}$. In particular satisfying assignments correspond to
satisfying assignments in both directions; this is the score-preservation
clause invoked below.
\end{definition}

\begin{theorem}[Coupling width is not reformulation-invariant]\label{thm:cw-not-invariant}
For every $k \ge 2$ there exist CSPs $R_1$ and $R_2$ such that $R_2$ is a
reformulation of $R_1$, with coupling width $k$ at a baseline in $R_1$ and
coupling width $1$ at the image baseline in $R_2$.
\end{theorem}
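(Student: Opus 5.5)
The plan is an explicit construction built on the AND example of \Cref{prop:incomp-tw}(a). For $R_1$ take $X_1=\{x_1,\dots,x_k\}$, $D=\{0,1\}$, score $\score_1(\alpha)=\prod_i \alpha(x_i)$ (the AND), and baseline $\alpha_0=(0,\dots,0)$. As in \Cref{prop:incomp-tw}(a), $\vbase=0$. Any proper nonempty subset $S'$ with any $\beta$ leaves at least one coordinate at $0$, so the score stays $0$. The all-ones assignment scores $1$. Hence $S=X_1$ is $k$-coupled, and since $|X_1|=k$ the coupling width is exactly $k$.

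For $R_2$ I would encode the whole assignment in one variable. Take $X_2=\{y\}$ with domain $D_2=\{0,1\}^k$ (size $2^k$), and set $\score_2(y)=\score_1(y)$, reading $y$ as an assignment of $X_1$. Let $\varphi(\alpha)$ be the assignment $y=\alpha$, and let $\psi$ be its inverse. Both are bijections and score-preserving by construction, so $(\varphi,\psi)$ satisfies \Cref{def:reformulation} in both directions. The image baseline is $y_0=(0,\dots,0)$, again with score $0$.

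The singleton $\{y\}$ is $1$-coupled. Condition~(a) is vacuous because there is no proper nonempty subset, and setting $y=(1,\dots,1)$ gives score $1>0$. Since $|X_2|=1$, no larger coupled set exists, so the coupling width is $1$.

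There is one formal wrinkle. \Cref{def:csp} uses a single domain of size $d$ for all variables, and $R_2$ has $d=2^k\neq 2$. This is allowed, because the definition of reformulation permits $D_1\neq D_2$. If one prefers explicit constraints over an arbitrary score, I would let $C_1$ be the single $k$-ary constraint with relation $\{(1,\dots,1)\}$ and $C_2$ the unary constraint $y=(1,\dots,1)$. Both constraint-count scores then coincide with the AND.

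The main obstacle is that the statement could be read as requiring the same variable count or domain. To head that off, I would add a remark giving a variant that keeps $k$ binary-indexed variables in $R_2$. Let $\varphi$ map $\alpha$ to $y$ with $y_1=\prod_i\alpha_i$ and $y_i=\alpha_i$ for $i\ge2$, and define $\score_2(y)=y_1$. Then $\score_2(\varphi(\alpha))=\score_1(\alpha)$ holds for every $\alpha$. For the other direction, $\psi$ must send each $y$ to an $\alpha$ with $\score_1(\alpha)=y_1$: send $y$ to the all-ones assignment if $y_1=1$ and to the all-zeros assignment otherwise. In $R_2$ the singleton $\{y_1\}$ is $1$-coupled, and by the argument of \Cref{prop:incomp-tw}(b) no set containing $y_1$ with more than one element is coupled, while sets avoiding $y_1$ carry no signal. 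This gives coupling width $1$ at the same domain size. I would present the single-variable construction as the main proof and this variant as the strengthening.
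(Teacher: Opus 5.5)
Your argument is correct, but your main construction is not the paper's. The paper also collapses to one variable, but keeps that variable binary. It takes $X_2=\{z\}$, $D_2=\{0,1\}$, score $z$ and baseline $0$, with $\varphi(x)=x_1\wedge\cdots\wedge x_k$ and $\psi(z)=(z,\dots,z)$. These maps are $O(k)$-computable, describable in $O(\log k)$ bits, and preserve score in both directions.

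Your main route uses one variable over $D_2=\{0,1\}^k$ with $\varphi$ a bijection. That is exactly the universal collapse, which the paper states separately as \Cref{prop:universal-collapse}. The paper marks it as not cost-constrained in the sense of \Cref{def:cost-constrained}, because $|D_2|=2^k$ is not bounded by a polynomial in $|D_1|=2$.

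Your route does prove the theorem as literally stated, since \Cref{def:reformulation} places no restriction on $D_2$. It is also more general, since it sends any CSP to width $1$. But because every CSP collapses this way, it says little beyond ``encode the whole assignment in one variable.'' The paper's construction gives a stronger fact: the width falls from $k$ to $1$ under a cheap reformulation with no domain growth. That is what the remark after the theorem (the AND reformulation as a cost-constrained instance) relies on.

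Your variant takes $y_1=\prod_i\alpha_i$, dummy variables $y_2,\dots,y_k$, and $\score_2(y)=y_1$. It is essentially the paper's construction padded with those dummies. Your verification of it is correct, including score preservation in the second direction via $\psi$ sending $y$ to $1^k$ or $0^k$ according to $y_1$. It also keeps $|X_2|=|X_1|$. I would make the variant, or simply the one-variable binary collapse, the main proof, and demote the universal collapse to a remark.
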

\begin{proof}
Take $R_1$ with $X_1 = \{x_1, \dots, x_k\}$, $D_1 = \{0,1\}$,
$\score_1 = x_1 \wedge \cdots \wedge x_k$, baseline $0^k$; the full set is
$k$-coupled (any proper subset leaves a coordinate at $0$, so the AND is $0 =
\vbase$, while $1^k$ scores $1$), and $|X_1| = k$ forces $\kappa(R_1) = k$. Take
$R_2$ with $X_2 = \{z\}$, $D_2 = \{0,1\}$, $\score_2(z) = z$, baseline $0$, so
$\kappa(R_2) = 1$. The maps $\varphi(x_1, \dots, x_k) = x_1 \wedge \cdots \wedge
x_k$ and $\psi(z) = (z, \dots, z)$ preserve score in both directions and are
$O(k)$-computable, so $R_2$ is a reformulation of $R_1$.
\end{proof}

\begin{remark}[The AND reformulation as a cost-constrained instance]
The witness is the AND score on $k$ literals. In $R_1$ the full set
$\{x_1, \dots, x_k\}$ is $k$-coupled, while the reformulation $R_2$ collapses
$S$ to a single variable $z = x_1 \wedge \cdots \wedge x_k$ whose singleton is
$1$-coupled. The collapse costs $O(\log k)$ bits to describe.
\end{remark}

\begin{proposition}[Universal collapse]\label{prop:universal-collapse}
For every CSP $R = (X, D, C)$ with score function $\score$ and baseline, there
is a (domain-inflating) reformulation $R'$ with a single variable $z$ of domain
$D' = D^{|X|}$ and $\score'(z) := \score(\alpha_z)$, for which the coupling width
is $1$. This reformulation is not cost-constrained (\Cref{def:cost-constrained}):
$|D'|$ is exponential in $|X|$.
\end{proposition}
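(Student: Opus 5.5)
The plan is to exhibit the reformulation explicitly and then verify three things: the maps preserve score in both directions, the new instance has coupling width exactly $1$, and the new domain is exponentially large.

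\emph{Construction.} Fix an enumeration of assignments, so that each $z \in D' = D^{|X|}$ names an assignment $\alpha_z : X \to D$. Define $R'$ to have the single variable $z$, domain $D'$, and score $\score'(z) := \score(\alpha_z)$. Take $\varphi(\alpha)$ to be the code $z$ with $\alpha_z = \alpha$, and take $\psi(z) := \alpha_z$. Both maps are bijections that are inverse to each other. Score preservation then holds by definition: $\score'(\varphi(\alpha)) = \score(\alpha)$ and $\score(\psi(z)) = \score'(z)$. So $(\varphi,\psi)$ is a reformulation in the sense of \Cref{def:reformulation}. Choose the image baseline $z_0 = \varphi(\alpha_0)$.

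\emph{Coupling width.} Since $|X'| = 1$, any coupled set has size at most $1$, so $\kappa \le 1$. For the singleton $S=\{z\}$, condition (a) of \Cref{def:coupling-set} is vacuous, because there is no $S'$ with $0<|S'|<1$. Condition (b) requires some $z$ with $\score'(z) > \vbase$, that is, some assignment of $R$ scoring above $\score(\alpha_0)$. This is where a case split is needed.
\begin{itemize}
\item If $\alpha_0$ is not a global maximum of $\score$, the singleton is $1$-coupled and $\kappa = 1$.
\item If $\alpha_0$ is a global maximum, there is no improvement, so $\kappa = 0$ by the convention of \Cref{def:coupling-width}. I will handle this by stating the proposition for baselines that admit an improving assignment, which is the only case in which the original barrier is nonvacuous. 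Equivalently, the claim is that the coupling width is at most $1$, with equality whenever $\kappa(R) \ge 1$.
\end{itemize}
The second case is the main obstacle. It is a degenerate boundary case and needs an explicit remark rather than any substantive argument.

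\emph{Not cost-constrained.} \Cref{def:cost-constrained} is not visible in the text above. I will assume it bounds the reformulated domain size, or the encoding length $\log_2|D'|$, polynomially in $|X|$ and $\log d$. We have $|D'| = d^{|X|}$, which is exponential in $|X|$ for any $d \ge 2$. Equivalently, $\log_2|D'| = |X|\log_2 d$ grows linearly rather than logarithmically in $|X|$. This is exactly the $\log|X'|$ term that the conservation law (\Cref{cor:cw-unit}) charges. So the collapse gains nothing in the conserved quantity: the full $\kappa\log_2(d-1)$ bits move into the domain size, and single-variable search over $D'$ still costs $\Omega(|D'|/\msig)$ by \Cref{cor:yao} applied with $k=1$.
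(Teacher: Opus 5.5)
Your proof is correct and is essentially the paper's own argument: the same bijective encoding $\varphi(\alpha)=z_\alpha$, $\psi(z)=\alpha_z$, score preservation by definition, width at most $1$ because there is a single variable, and failure of condition~(i) of \Cref{def:cost-constrained}. That condition asks for $|D'|\le p(|D|)$ independently of $|X|$, so the argument has to run through $|D'|=d^{|X|}$ and not through the encoding length $|X|\log_2 d$; that length is polynomial in $|X|$ and $\log d$, so your hedged alternative reading would not give the conclusion, and your closing appeal to \Cref{cor:cw-unit} is moot because that law presupposes a cost-constrained reformulation. Your case split is a genuine sharpening that the paper's one-line proof skips, since the paper only shows width at most $1$, whereas equality holds exactly when $\alpha_0$ is not a global maximum of $\score$.
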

\begin{proof}
Identify each assignment $\alpha \in D^{|X|}$ with a value $z = \alpha_z$ of the
single variable, and set $\score'(z) = \score(\alpha_z)$; the maps
$\varphi(\alpha) = z_\alpha$ and $\psi(z) = \alpha_z$ preserve score. A single
variable has coupling width at most $1$, and the inflation $|D'| = |D|^{|X|}$ is
exponential in $|X|$, so condition~(i) of \Cref{def:cost-constrained} fails.
\end{proof}

\subsection{The Kolmogorov chain rule and the counting bound}

Throughout, $K(\cdot)$ is Kolmogorov complexity on a fixed universal prefix
machine, and $W(R)$ denotes the framework data of $R$ (variables, domain,
baseline) excluding the score function.

\begin{theorem}[Reformulation chain rule]\label{thm:chain-rule}
Let $R' = (X', D', \score', \alpha_0')$ be a reformulation of
$R = (X, D, \score, \alpha_0)$ via maps $(\varphi, \psi)$. Then
\[
  K(\score' \mid W(R')) \ge K(\score \mid W(R)) - K(\varphi, \psi) - O(1).
\]
\end{theorem}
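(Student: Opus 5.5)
The plan is to exhibit a description of $\score$ given $W(R)$ that is built from a description of $\score'$ given $W(R')$ together with a description of $(\varphi,\psi)$. Then the subadditivity of prefix complexity gives
\[
K(\score \mid W(R)) \le K(\score' \mid W(R')) + K(\varphi,\psi) + O(1),
\]
and rearranging gives the claim.

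The key identity is score preservation (\Cref{def:reformulation}). For every $\alpha \in D^X$,
\[
\score(\alpha) = \score'(\varphi(\alpha)).
\]
So $\score = \score' \circ \varphi$ as a function on $D^X$.

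The decoding program will work in three steps.
\begin{enumerate}[label=(\roman*)]
\item Read a self-delimiting description of $(\varphi,\psi)$.
\item Read a shortest self-delimiting program for $\score'$ given $W(R')$.
\item Tabulate $\score(\alpha)=\score'(\varphi(\alpha))$ by enumerating $\alpha \in D^X$, which is possible because $X$ and $D$ are supplied by $W(R)$.
\end{enumerate}
Since the descriptions are self-delimiting, the concatenation is decodable with only an $O(1)$ overhead for the fixed decoder. This gives the inequality for $K(\score \mid W(R), W(R'))$.

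The main obstacle is the conditioning. Step (ii) needs $W(R')$ as input, but the target complexity is conditioned only on $W(R)$. I would handle this first by noting that $X'$ and $D'$ are implicit in $(\varphi,\psi)$. The codomain of $\varphi$ is $D'^{X'}$, so a description of the map determines its type. The baseline $\alpha_0'$ can be taken to be $\varphi(\alpha_0)$, the image baseline, as in \Cref{thm:cw-not-invariant}. Hence $W(R')$ is computable from $(\varphi,\psi)$ and $W(R)$ with $O(1)$ extra bits. Once the decoder has the map description and $W(R)$, it can reconstruct $W(R')$ and feed it as the condition to the program for $\score'$.

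If the formal setup does not let $(\varphi,\psi)$ encode their domains, I would fall back in one of two ways. One is to state the bound with $W(R')$ folded into $K(\varphi,\psi)$. The other is to absorb $K(W(R')\mid W(R),\varphi,\psi)$ into the $O(1)$, with a remark that this holds for the image-baseline convention.

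A secondary check concerns the shortest program for $\score'$: it must remain valid when it is run inside the composite decoder. This is handled by the standard universal-machine simulation, again at $O(1)$ cost. Notably, $\psi$ is not needed for this direction; it would give the symmetric inequality with the roles of $R$ and $R'$ swapped.
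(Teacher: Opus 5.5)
Your proposal is correct and follows essentially the same argument as the paper. The paper's decoder also reads $(\varphi,\psi)$, forms $\alpha_0' = \varphi(\alpha_0)$, runs the shortest program for $\score'$ given $W(R')$, and outputs $\alpha \mapsto \score'(\varphi(\alpha))$; it handles the conditioning issue as your fallback does, by stipulating that descriptions of reformulations are self-delimiting and carry $W(R')$, so that overhead is $O(1)$ by convention rather than derived.
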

\begin{proof}
By the score-preservation clause of \Cref{def:reformulation},
$\score(\alpha) = \score'(\varphi(\alpha))$ for all $\alpha \in D^{|X|}$, so
$\score$ is computable from $\score'$ and $\varphi$. Descriptions of
reformulations are self-delimiting and carry the framework data $W(R')$ of the
target instance, so the decoder recovers $W(R')$ from the description of
$(\varphi, \psi)$ at $O(1)$ additional cost. The program that reads
$W(R)$, reads $\varphi$ (part of $(\varphi, \psi)$) and forms
$\alpha_0' = \varphi(\alpha_0)$, reads a description of $\score'$ of length
$K(\score' \mid W(R'))$, and outputs $\alpha \mapsto \score'(\varphi(\alpha))$
has total length $K(\score' \mid W(R')) + K(\varphi, \psi) + O(1)$ and computes
$\score$. Hence $K(\score \mid W(R)) \le K(\score' \mid W(R')) + K(\varphi, \psi)
+ O(1)$, which rearranges to the claim. (The symmetric argument with $\psi$
gives $|K(\score \mid W(R)) - K(\score' \mid W(R'))| \le K(\varphi, \psi) +
O(1)$.)
\end{proof}

\begin{theorem}[Counting bound: coupling-structured scores are Kolmogorov-rich]\label{thm:counting}
Fix $(X, D, \alpha_0)$, a $k$-subset $S \subseteq X$ with $k \ge 1$, and
$\msig \ge 1$. Let $F(S, \msig)$ be the family of score functions having $S$ as a
$k$-coupled set with exactly $\msig$ witnesses on $Q(S)$. Then for all but a
$2^{-c}$ fraction of the canonical indicator scores in $F(S, \msig)$ (those supported on $Q(S)$),
\[
  K(\score \mid W(R), S, \msig) \ge \log_2 \binom{(d-1)^k}{\msig} - c .
\]
\end{theorem}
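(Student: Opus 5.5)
The plan is the standard incompressibility count, carried out with respect to the fixed conditioning data $W(R)$, $S$, $\msig$.

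\textbf{Step 1: identify the family.} The canonical indicator scores in $F(S,\msig)$ are the functions $\score_M = \mathbf{1}[\restr{\alpha}{}\,\text{restricted to } S \in M]$, scored as $\vbase+1$ on the cube points of a set $M \subseteq Q(S)$ with $|M| = \msig$ and as $\vbase$ everywhere else (in particular off $S$-perturbations of $\alpha_0$).

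\begin{itemize}
\item Condition~(a) of \Cref{def:coupling-set} holds for every such $\score_M$. Any proper-subset perturbation has a baseline coordinate, so it lies outside $Q$, and hence outside $M$. This is the content of \Cref{lem:flat}.
\item Condition~(b) holds because $\msig \ge 1$.
\item Distinct $M$ give distinct functions.
\end{itemize}

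So the canonical subfamily is in bijection with the $\msig$-subsets of $Q(S)$. It therefore has exactly $\binom{(d-1)^k}{\msig}$ members, using $|Q| = (d-1)^k$.

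\textbf{Step 2: count short descriptions.} Let $N := \binom{(d-1)^k}{\msig}$. Fix the conditioning string $y = (W(R), S, \msig)$. The number of programs of length $< \log_2 N - c$ is at most $2^{\log_2 N - c} - 1 < 2^{-c} N$. Each program outputs at most one function given $y$. Hence fewer than a $2^{-c}$ fraction of the canonical scores can have $K(\score \mid y) < \log_2 N - c$, which is exactly the claim.

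The prefix-machine setting only helps here. By Kraft's inequality the count of short prefix-free programs is at most the same bound, so no $O(1)$ slack is lost.

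\textbf{Step 3: fix the bookkeeping.} The main point needing care is what ``$\score$'' is as an object of $K$. It is a finite function on $D^X$, encoded, say, as its truth table in a canonical order determined by $W(R)$.

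\begin{itemize}
\item Different $M$ must be distinct strings. Then the map from programs to outputs is injective on the outputs we count, and the pigeonhole argument goes through.
\item The baseline value $\vbase$ must be recoverable from $W(R)$. It is, since $W(R)$ contains $\alpha_0$. If $\vbase$ were instead treated as free, the canonical choice $\vbase = 0$ removes the issue.
\end{itemize}

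I expect no real obstacle. The main step is Step~1: checking that every $\msig$-subset of $Q$ yields a legitimate $k$-coupled score, so that the family is as large as the binomial coefficient. This follows directly from the flat-basin clause~(III) of \Cref{thm:barrier}, read in reverse.
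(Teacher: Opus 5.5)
Your proposal is correct and follows the paper's proof essentially step for step. Like the paper, you restrict to the $\{0,1\}$-indicators of $\msig$-subsets of $Q(S)$ and verify conditions (a) and (b): a proper-subset perturbation always keeps a baseline coordinate and so falls off $Q(S)$, and $\msig\ge 1$ supplies a witness. You then conclude that the canonical family has exactly $\binom{(d-1)^k}{\msig}$ distinct members and apply incompressibility by counting. Your explicit Kraft-inequality count over prefix-free programs spells out what the paper calls ``incompressibility-by-counting'', and it is what makes the bound valid when the threshold $\log_2\binom{(d-1)^k}{\msig}-c$ is not an integer. The paper's extra remark that members are rank-addressable in $\log_2\binom{(d-1)^k}{\msig}+O(1)$ bits is not needed for this lower bound.
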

\begin{proof}
Restrict to the canonical family, in which $\score$ vanishes off the witness cube
$Q(S)$ and is the $\{0,1\}$-indicator of an $\msig$-subset $T \subseteq Q(S)$.
The map $T \mapsto \score_T$ is a bijection onto this family; each such
$\score_T$ has $S$ as a $k$-coupled set (condition~(a) holds because any proper
sub-perturbation leaves a baseline coordinate and so lands off $Q(S)$, scoring
$0 = \vbase$; condition~(b) holds because $T \ne \emptyset$), and distinct $T$
give distinct indicators. Hence the family has size
$\binom{(d-1)^k}{\msig}$, and given $(W(R), S, \msig)$ its members are
enumerable in lexicographic order, so each is rank-addressable in
$\log_2 \binom{(d-1)^k}{\msig} + O(1)$ bits. By incompressibility-by-counting,
at least a $1 - 2^{-c}$ fraction have
$K(\score \mid W(R), S, \msig) \ge \log_2 \binom{(d-1)^k}{\msig} - c$.
\end{proof}

\begin{proposition}[Counting upper bound]\label{thm:counting-upper}
For every $\score \in F(S, \msig)$ whose support is restricted to $Q(S)$ with
indicator-style values,
\[
  K(\score \mid W(R), S, \msig) \le \log_2 \binom{(d-1)^k}{\msig} + O(\log).
\]
\end{proposition}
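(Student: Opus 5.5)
The bound follows from an explicit two-part description, matching the counting argument of \Cref{thm:counting} from above. The plan is to build a program that, given the conditioning data $(W(R), S, \msig)$ for free, reconstructs $\score$ from a short index.

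First, using $W(R)$ (variables, domain, baseline $\alpha_0$) and $S$, the decoder enumerates the witness cube $Q(S)=\{\beta\in D^k:\beta_j\ne\alpha_0(x_{i_j})\ \forall j\}$ in a fixed lexicographic order. This takes $O(1)$ program bits beyond the conditioning, since $k=|S|$ and $d$ are readable from $S$ and $W(R)$. Next, it enumerates all $\msig$-subsets $T\subseteq Q(S)$ in lexicographic order. By hypothesis the score is indicator-style with support restricted to $Q(S)$, so it is determined by its support set $T$ with $|T|=\msig$. More generally, if the nonzero values are some fixed value $v$ rather than $1$, the value $v$ must also be encoded, and this is where the $O(\log)$ term is spent.

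The decoder then reads the rank of $T$ among the $\binom{(d-1)^k}{\msig}$ subsets, written in $\lceil\log_2\binom{(d-1)^k}{\msig}\rceil$ bits. It outputs the score function $\alpha\mapsto v\cdot[\alpha \text{ agrees with } \alpha_0 \text{ off } S \text{ and } \alpha|_S\in T]$, with baseline value $0$ elsewhere. Since the machine is a prefix machine, the rank field must be self-delimiting. Its length is computable from the conditioning data ($k$, $d$, $\msig$), so no length header is needed for it. Any remaining self-delimiting overhead, namely the prefix encoding of $v$ and possibly of $\vbase$ if the convention is a nonzero baseline, costs $O(\log)$ bits via a universal integer code. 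Summing gives $\log_2\binom{(d-1)^k}{\msig}+O(\log)$.

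The only delicate step is pinning down what ``indicator-style values'' and ``$O(\log)$'' mean, since the statement leaves the argument of the $\log$ implicit. I would fix this by taking the values to be $\{\vbase,\vbase+v\}$ and charging $O(\log v+\log\vbase)$, or $O(1)$ in the canonical $\{0,1\}$ case. That yields the claim as stated and, together with \Cref{thm:counting}, shows that the counting bound is tight up to lower-order terms.
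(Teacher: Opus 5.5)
Your proposal is correct and matches the paper's proof: the score is determined by its $\msig$-element witness set $T \subseteq Q(S)$, and given $(W(R), S, \msig)$ the decoder recovers it from $T$'s lexicographic rank among the $\binom{(d-1)^k}{\msig}$ candidate subsets. Your extra bookkeeping about self-delimitation and a general value $v$ is harmless. Note also that support restricted to $Q(S)$ already forces the baseline score to be $0$, since $\alpha_0$ lies off $Q(S)$, so in the canonical $\{0,1\}$ case the overhead is only $O(1)$.
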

\begin{proof}
Such a $\score$ is determined by its witness set $T_{\score} = \{\, \alpha \in
Q(S) : \score(\alpha) = 1 \,\}$, an $\msig$-subset of $Q(S)$. Given
$(W(R), S, \msig)$, specifying $\score$ reduces to specifying $T_{\score}$ by its
rank among the $\binom{(d-1)^k}{\msig}$ such subsets, which costs
$\log_2 \binom{(d-1)^k}{\msig} + O(\log)$ bits.
\end{proof}

\begin{corollary}[The coupling-entropy tradeoff]\label{cor:coupling-entropy}
Let $R$ have $S$ as a $k$-coupled set with $\msig$ witnesses and $\score$
Kolmogorov-typical in $F(S,\msig)$. Then any reformulation $(\varphi,\psi)$ to
$R'$ obeys
\[
  K(\varphi, \psi) + K(\score' \mid W(R'))
  \ge \log_2 \binom{(d-1)^k}{\msig} - c - O(1).
\]
\end{corollary}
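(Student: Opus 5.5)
The plan is to chain the two results just established: the reformulation chain rule (\Cref{thm:chain-rule}) and the counting bound (\Cref{thm:counting}). The target inequality has the shape ``cost of the reformulation plus residual description of the new score $\ge$ description of the old score'', and the chain rule already gives exactly this with $K(\score \mid W(R))$ on the right. The remaining work is to replace $K(\score \mid W(R))$ by the binomial lower bound, which \Cref{thm:counting} supplies, but conditioned on $(W(R), S, \msig)$ rather than on $W(R)$ alone.

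\emph{Step 1 (chain rule).} Apply \Cref{thm:chain-rule} to the given reformulation, rearranged:
\[
K(\varphi,\psi) + K(\score' \mid W(R')) \ge K(\score \mid W(R)) - O(1).
\]

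\emph{Step 2 (lower bound on the original score).} Use the typicality hypothesis. ``Kolmogorov-typical in $F(S,\msig)$'' is read as membership in the $1-2^{-c}$ fraction singled out by \Cref{thm:counting}, so that
\[
K(\score \mid W(R), S, \msig) \ge \log_2\binom{(d-1)^k}{\msig} - c .
\]

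\emph{Step 3 (dropping the extra conditioning).} This is the main obstacle. Since conditioning can only help, $K(\score \mid W(R)) \ge K(\score \mid W(R), S, \msig) - O(1)$. However, this inequality points the wrong way for a naive substitution, so I must argue that \emph{removing} the conditioning does not decrease complexity. That direction is actually the correct one: $K(\score \mid W(R)) + O(1) \ge K(\score \mid W(R), S, \msig)$ holds trivially, because a description of $\score$ given $W(R)$ alone is also a description given more information. Chaining Steps 1--3 yields the claim with constant $-c-O(1)$.

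Two points need checking. First, the $O(1)$ in the chain rule must be independent of the instance; it is, being the fixed composition program. Second, $\score$ and $\score'$ must be encoded as finite tables (finite domains), so that $K$ is well defined; the score-preservation clause of \Cref{def:reformulation} then guarantees that the composition $\score'\circ\varphi$ recovers $\score$ exactly. No information about $S$ or $\msig$ is required in the reformulation description, so no $O(\log)$ penalty for encoding them appears. That is consistent with the stated bound, which has only $-c-O(1)$.
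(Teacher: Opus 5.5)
Your proof is correct and is the paper's argument: rearrange the reformulation chain rule (\Cref{thm:chain-rule}) and substitute the typicality lower bound supplied by \Cref{thm:counting}. Your Step~3 makes explicit the $O(1)$-cost removal of the conditioning on $(S,\msig)$, which the paper's one-line proof leaves tacit; the inequality $K(\score \mid W(R)) \ge K(\score \mid W(R), S, \msig) - O(1)$ that you first wrote already points the right way, so the ``wrong way'' aside can simply be deleted.
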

\begin{proof}
Immediate from \Cref{thm:chain-rule} (which gives $K(\score' \mid W(R')) \ge
K(\score \mid W(R)) - K(\varphi, \psi) - O(1)$) and \Cref{thm:counting} (which
lower-bounds $K(\score \mid W(R)) \ge \log_2 \binom{(d-1)^k}{\msig} - c$ for
typical $\score$).
\end{proof}

\subsection{Cost-constrained reformulations}

\begin{definition}[Cost-constrained reformulation]\label{def:cost-constrained}
A reformulation $R_1 \to R_2$ via $(\varphi, \psi)$ is \emph{cost-constrained}
if (i) there is a polynomial $p$ with $|D_2| \le p(|D_1|)$ independent of
$|X_1|$, (ii) $K(\varphi, \psi) \le \operatorname{poly}(|X_1|, \log |D_1|)$, and
(iii) $|X_2| \le \operatorname{poly}(|X_1|)$.
\end{definition}

\begin{definition}[Coupling-structure-preserving reformulation]\label{def:coupling-preserving}
A cost-constrained reformulation $R \to R'$ via $(\varphi, \psi)$ is
\emph{coupling-structure-preserving} with respect to a coupled set $S$ if there
is a $k'$-subset $S' \subseteq X'$ and integer $m'_{\mathrm{signal}}$ such that
$\score'$ has $S'$ as a $k'$-coupled set at $\alpha_0' = \varphi(\alpha_0)$ with
$m'_{\mathrm{signal}}$ witnesses, and
\[
  K(S', m'_{\mathrm{signal}} \mid S, \msig, W(R), W(R'), \varphi, \psi)
  = O(\log |X| + \log |X'|).
\]
\end{definition}

\begin{corollary}[Cost-constrained coupling width is preserved for Kolmogorov-typical scores]\label{cor:3-2f}
Let $S$ be a $k$-coupled set for $\score$ at $\alpha_0$ with $\msig$ witnesses,
and let $\score$ be Kolmogorov-typical in $F(S, \msig)$, i.e.
\[
  K(\score \mid W(R), S, \msig) \ge \log_2 \binom{(d-1)^k}{\msig} - c .
\]
Let $R'$ be any cost-constrained reformulation, with $|D'| \le p(|D|)$ and
$K(\varphi, \psi) \le q(|X|, \log |D|)$. Then:
\begin{enumerate}[label=(\roman*)]
\item $\displaystyle K(\score' \mid W(R')) \ge \log_2 \binom{(d-1)^k}{\msig}
  - q(|X|, \log |D|) - c - O(1).$
\item If moreover $\score'$ has a $k'$-coupled set $S'$ with $m'_{\mathrm{signal}}$
witnesses and is \emph{canonical} on it (support restricted to $Q(S')$ with
indicator-style values, the hypothesis under which \Cref{thm:counting-upper}
gives $K(\score' \mid W(R'), S', m'_{\mathrm{signal}}) \le \log_2
\binom{(|D'|-1)^{k'}}{m'_{\mathrm{signal}}} + O(\log)$), then in the
unique-witness specialization $\msig = m'_{\mathrm{signal}} = 1$,
\[
  k' \ge \frac{k \log_2(|D|-1)}{\log_2(|D'|-1) + \log_2 |X'|}
  - O\!\left(\frac{q(|X|, \log |D|)}{\log_2(|D'|-1) + \log_2 |X'|}\right).
\]
\end{enumerate}
In the domain-dominated regime $\log_2(|D|-1) \gg \log_2 |X|$ this reads
$k' \ge k - o(k)$, coupling width preserved up to sublinear slack. In the
balanced regime $\log_2(|D|-1) \sim \log_2|X|$ it reads $k' \ge k/2 - O(\cdot)$.

\smallskip
\noindent\textup{(iii)} \emph{Tight bound under a coupling-structure-preserving
reformulation.} If $R \to R'$ is coupling-structure-preserving with respect to
$S$ (\Cref{def:coupling-preserving}, so $(S', m'_{\mathrm{signal}})$ is
recoverable from $(S, \msig, \varphi, \psi)$ at cost $O(\log|X| + \log|X'|)$),
then the $k' \log_2 |X'|$ overhead of part~(ii) is absent; for $\msig =
m'_{\mathrm{signal}} = 1$ with $|D'| = \Theta(|D|)$,
\[
  k' \log_2(|D'|-1) \ge k \log_2(|D|-1) - K(\varphi,\psi) - O(\log|X| + \log|X'|),
  \quad\text{i.e.}\quad
  k' \ge k - O\!\left(\frac{q(|X|,\log|D|)}{\log_2(|D|-1)}\right).
\]
\end{corollary}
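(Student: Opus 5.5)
The plan is to derive all three parts from the chain rule (\Cref{thm:chain-rule}) combined with counting. Each part sandwiches $K(\score' \mid W(R'))$, or a conditional variant of it, between a lower bound inherited from the typicality of $\score$ and an upper bound coming from the canonical structure of $\score'$.

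For part~(i), we apply \Cref{thm:chain-rule} directly:
\[
K(\score' \mid W(R')) \ge K(\score \mid W(R)) - K(\varphi,\psi) - O(1).
\]
Typicality is stated conditionally on $(S,\msig)$, so we must drop that condition. Since $K(\score \mid W(R)) \ge K(\score \mid W(R),S,\msig) - O(\log|X| + \log\msig)$, the hypothesis gives
\[
K(\score \mid W(R)) \ge \log_2\binom{(d-1)^k}{\msig} - c - O(\log).
\]
We then substitute $K(\varphi,\psi)\le q(|X|,\log|D|)$. The logarithmic overhead from specifying $S$ and $\msig$ can be absorbed into $q$, which is polynomial in $|X|$, or else stated as the $O(1)$ after enlarging $c$. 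I will note this absorption explicitly, since the statement writes only $O(1)$.

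For part~(ii), we upper-bound $K(\score' \mid W(R'))$ by describing $S'$ and $m'_{\mathrm{signal}}$ and then invoking \Cref{thm:counting-upper}. Specifying $S'$ as a $k'$-subset of $X'$ costs at most $k'\log_2|X'| + O(\log)$ bits. With $m'_{\mathrm{signal}}=1$, the binomial becomes $(|D'|-1)^{k'}$, contributing $k'\log_2(|D'|-1)$ bits. This gives
\[
K(\score' \mid W(R')) \le k'\bigl(\log_2(|D'|-1)+\log_2|X'|\bigr) + O(\log).
\]
With $\msig=1$, part~(i) gives $K(\score' \mid W(R')) \ge k\log_2(|D|-1) - q - c - O(1)$. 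Combining the two bounds and dividing by $\log_2(|D'|-1)+\log_2|X'|$ yields the stated inequality, with $c$ and the logarithmic terms folded into the $O(q/\cdot)$ term.

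The regime remarks follow from cost-constraint (i), which gives $\log|D'| = O(\log|D|)$. Reading ``$k'\ge k-o(k)$'' in the domain-dominated regime requires $|D'|=\Theta(|D|)$-like behavior, i.e.\ $\log_2(|D'|-1)\approx\log_2(|D|-1)$, together with $q=o(k\log|D|)$. I will present these regime statements as informal readings under those assumptions rather than as unconditional consequences. Of all the steps, this is the one most likely to need hedging.

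For part~(iii), the $k'\log_2|X'|$ term arises only from naming $S'$. Under \Cref{def:coupling-preserving}, $(S',m'_{\mathrm{signal}})$ costs $O(\log|X|+\log|X'|)$ bits given $(S,\msig,W(R),W(R'),\varphi,\psi)$. The main obstacle is that we therefore need a chain rule conditioned on $(S,\msig)$. I would redo the argument of \Cref{thm:chain-rule} with $(S,\msig)$ in the condition:
\[
K(\score \mid W(R),S,\msig) \le K(\score' \mid W(R'),S',m'_{\mathrm{signal}}) + K(\varphi,\psi) + O(\log|X|+\log|X'|).
\]
The decoder reads $(\varphi,\psi)$, recovers $W(R')$, and reconstructs $(S',m')$ from the preservation property. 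It then decodes $\score'$ by rank and outputs $\score'\circ\varphi$. Typicality supplies the left-hand side $\ge k\log_2(|D|-1)-c$, and \Cref{thm:counting-upper} bounds the right-hand side's first term by $k'\log_2(|D'|-1)+O(\log)$. This gives the displayed inequality. Dividing by $\log_2(|D|-1)$, using $|D'|=\Theta(|D|)$ (which gives $\log_2(|D'|-1)=\log_2(|D|-1)+O(1)$) and $K(\varphi,\psi)\le q$, yields $k'\ge k-O(q/\log_2(|D|-1))$. The $O(1)\cdot k'$ discrepancy between the two logarithms is absorbed when $\log|D|$ is large, and I will flag this as part of the implicit constant.
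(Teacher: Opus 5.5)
Your parts (i) and (ii) follow the paper's argument. For (i) that is the chain rule plus typicality. For (ii) it is the decomposition $K(\score'\mid W(R'))\le K(\score'\mid W(R'),S',m'_{\mathrm{signal}})+K(S',m'_{\mathrm{signal}}\mid W(R'))+O(\log)$, with \Cref{thm:counting-upper} bounding the first term and $k'\log_2|X'|$ bits to name $S'$.

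One correction in (i): removing $(S,\msig)$ from the condition costs only $O(1)$, not $O(\log)$. Extra conditioning can only shorten descriptions, so $K(\score\mid W(R))\ge K(\score\mid W(R),S,\msig)-O(1)$. The stated $-O(1)$ therefore follows directly and nothing needs absorbing. Folding a $\log|X|$ term into the constant $c$ would not have been legitimate anyway.

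Your caveats elsewhere are correct, and the paper's proof passes over both:
\begin{itemize}
\item The regime readings need $\log_2(|D'|-1)\approx\log_2(|D|-1)$ and $q=o(k\log_2|D|)$, not merely $|D'|\le p(|D|)$.
\item Writing $\log_2(|D'|-1)=\log_2(|D|-1)+O(1)$ leaves a $k'\cdot O(1)$ slack in the final division.
\end{itemize}

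Part (iii) is where you genuinely depart from the paper, and your route is the better one. The paper reuses the unconditional sandwich of (ii) around $K(\score'\mid W(R'))$. It then asserts that the term $K(S',m'_{\mathrm{signal}}\mid W(R'))$ falls to $O(\log|X|+\log|X'|)$. But \Cref{def:coupling-preserving} bounds this complexity only conditionally on $(S,\msig,W(R),\varphi,\psi)$, which are not available from $W(R')$ alone. Supplying them costs in general on the order of $k\log_2|X|$ bits for $S$ plus $K(\varphi,\psi)$ for the maps. That restores the overhead (iii) claims to remove and counts $K(\varphi,\psi)$ twice.

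Your conditional chain rule avoids this:
\begin{equation*}
K(\score\mid W(R),S,\msig)\le K(\score'\mid W(R'),S',m'_{\mathrm{signal}})+K(\varphi,\psi)+O(\log|X|+\log|X'|).
\end{equation*}
It places $(S,\msig)$ in the condition from the outset, which is exactly where the typicality hypothesis already lives. The decoder pays for the maps once, recovers $W(R')$ and then $(S',m'_{\mathrm{signal}})$ at logarithmic cost, decodes $\score'$ by rank, and outputs $\score'\circ\varphi$. Typicality on the left and \Cref{thm:counting-upper} on the right then give the displayed inequality of (iii), with every term accounted for exactly once. The paper's version is shorter because it recycles (ii); yours is the one that actually justifies (iii).
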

\begin{proof}
(i) Chain \Cref{thm:chain-rule} with the typicality hypothesis: $K(\score' \mid
W(R')) \ge K(\score \mid W(R)) - K(\varphi, \psi) - O(1) \ge \log_2
\binom{(d-1)^k}{\msig} - c - q(|X|, \log |D|) - O(1)$, using $K(\varphi, \psi)
\le q$. (ii) Apply \Cref{thm:counting-upper} to $R'$ and the conditional chain
rule: $K(\score' \mid W(R')) \le K(\score' \mid W(R'), S', m'_{\mathrm{signal}})
+ K(S', m'_{\mathrm{signal}} \mid W(R')) + O(\log)$. The first term is at most
$\log_2 \binom{(|D'|-1)^{k'}}{m'_{\mathrm{signal}}} + O(\log)$; the second is at
most $k' \log_2 |X'| + O(\log m'_{\mathrm{signal}}) + O(\log)$ (encode the
$k'$-subset $S'$ in $\le k' \log_2 |X'|$ bits and the integer
$m'_{\mathrm{signal}}$ in $O(\log)$). Combining with (i) and specializing to
$\msig = m'_{\mathrm{signal}} = 1$ (so the binomials collapse to $(d-1)^k$ and
$(|D'|-1)^{k'}$) gives $k' \log_2(|D'|-1) + k' \log_2 |X'| + O(\log) \ge
k \log_2(|D|-1) - q - O(1)$, which rearranges to the stated bound; the two
regimes follow by comparing $\log_2|X'|$ to $\log_2(|D'|-1)$.

(iii) Under \Cref{def:coupling-preserving}, $K(S', m'_{\mathrm{signal}} \mid S,
\msig, \varphi, \psi, W(R), W(R')) = O(\log|X| + \log|X'|)$, so in the
conditional-chain-rule step of (ii) the term $K(S', m'_{\mathrm{signal}} \mid
W(R'))$ is bounded by $O(\log|X| + \log|X'|)$ rather than $k' \log_2 |X'|$:
$(S', m'_{\mathrm{signal}})$ need not be specified from scratch, only recovered
from the original coupling data and the maps. The sandwich of (ii) then reads,
for $\msig = m'_{\mathrm{signal}} = 1$,
$k \log_2(|D|-1) - c - O(1) \le k' \log_2(|D'|-1) + K(\varphi,\psi) +
O(\log|X|+\log|X'|)$. With $K(\varphi,\psi) \le q$ and $|D'| = \Theta(|D|)$
(so $\log_2(|D'|-1) = \log_2(|D|-1) + O(1)$), this gives $k' \log_2(|D'|-1) \ge
k \log_2(|D|-1) - q - O(\log|X|+\log|X'|)$ and hence $k' \ge k -
O(q/\log_2(|D|-1))$.
\end{proof}

\begin{remark}[The match-count score in program discovery]\label{rem:3-2f-3}
For the register-machine match score
$\score_{\mathrm{match}}(\tau, \rho) := |\{\, t \in [N] :
\exptr(F^*, \tau, \rho, N)[t] = s[t] \,\}|$, Kolmogorov-typicality holds in the
unique-witness regime $\msig = 1$ under catalog expressiveness, where
$K(\score_{\mathrm{match}} \mid F^*, N, \alpha_0) = K(s \mid F^*, N) + O(\log N)$.
\end{remark}

\subsection{The Conservation Law}

\begin{definition}[Coupling budget]\label{def:budget}
For a CSP $R$ with $(\delta, \Delta)$-coupling width $\kappa$ and $|D| = d$, the
\emph{coupling budget} is $B(R) := \kappa \log_2(d-1)$ bits. This equals
$\log_2 |Q|$, the log-volume of the witness cube.
\end{definition}

\begin{theorem}[Coupling Budget Bound]\label{thm:cons-budget}
Let $R$ have $(\delta, \Delta)$-coupling width $\kappa$ with $\Delta > \delta$,
budget $B = \kappa \log_2(d-1)$, and a $\kappa$-coupled set $S$ with $\msig$
witnesses; let $\score$ be Kolmogorov-typical in $F(S, \msig)$. Let $R'$ be any
cost-constrained, coupling-structure-preserving reformulation via
$(\varphi, \psi)$, with $|D'| = \Theta(|D|)$ and the guaranteed $k'$-coupled set
$S'$ carrying $m'_{\mathrm{signal}}$ witnesses. Then
\begin{equation}\label{eq:cons-general}
  K(\varphi, \psi) + \log_2 \operatorname{search}(R')
  \ge B - \log_2 m'_{\mathrm{signal}} - O(\log).
\end{equation}
For $\msig = m'_{\mathrm{signal}} = 1$ this is the unique-witness form
\begin{equation}\label{eq:cons-unique}
  K(\varphi, \psi) + \log_2 \operatorname{search} \ge B - O(\log). \tag{$*_B$}
\end{equation}
No reformulation reduces the search exponent by more bits than it spends on map
description.
\end{theorem}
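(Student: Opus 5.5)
The proof combines two ingredients already in hand. The first is the sandwich of \Cref{cor:3-2f}(iii), which says how much of the budget $B$ survives in the reformulated instance. The second is the query lower bound of \Cref{cor:yao}, applied to $R'$ rather than $R$. The target inequality then follows by adding $K(\varphi,\psi)$ to both sides.

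First, apply \Cref{cor:3-2f}(iii) at the general witness counts $\msig, m'_{\mathrm{signal}}$ rather than the unique-witness specialization. The lower side comes from \Cref{thm:chain-rule} together with the typicality of $\score$ in $F(S,\msig)$, which gives
\[
K(\score'\mid W(R')) \ge \log_2\binom{(d-1)^{\kappa}}{\msig} - K(\varphi,\psi) - c - O(1).
\]
The upper side comes from \Cref{thm:counting-upper} applied to the guaranteed $k'$-coupled set $S'$, together with coupling-structure preservation. The latter makes $(S',m'_{\mathrm{signal}})$ cost only $O(\log|X|+\log|X'|)$ bits. This gives
\[
K(\score'\mid W(R')) \le \log_2\binom{(|D'|-1)^{k'}}{m'_{\mathrm{signal}}} + O(\log).
\]
Using $\log_2\binom{M}{m}\le m\log_2 M$ on the right, and $\binom{M}{m}\ge (M/m)^m$ (or simply $\ge M/m$ when $m\ge1$) on the left, I plan to extract an inequality of the form
\[
k'\log_2(|D'|-1) - \log_2 m'_{\mathrm{signal}} \ge B - \log_2 \msig - K(\varphi,\psi) - O(\log).
\]
With $|D'|=\Theta(|D|)$, the two $\log_2(\cdot-1)$ factors agree up to $O(1)$ per coordinate. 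That slack is absorbed in $O(\log)$ as in part (iii).

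Second, invoke \Cref{cor:yao} on $R'$ at baseline $\alpha_0'=\varphi(\alpha_0)$ with the $k'$-coupled set $S'$. The worst-case consistent score forces
\[
\operatorname{search}(R') \ge \Omega\!\left((|D'|-1)^{k'}/m'_{\mathrm{signal}}\right),
\]
that is,
\[
\log_2\operatorname{search}(R') \ge k'\log_2(|D'|-1) - \log_2 m'_{\mathrm{signal}} - O(1).
\]
Chaining this with the first step and moving $K(\varphi,\psi)$ to the left gives
\[
K(\varphi,\psi) + \log_2\operatorname{search}(R') \ge B - \log_2 m'_{\mathrm{signal}} - O(\log),
\]
once the $\log_2\msig$ term is either dropped or absorbed. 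Setting $\msig=m'_{\mathrm{signal}}=1$ then gives $(*_B)$ directly; there both binomials collapse to powers and no estimate is lost.

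The main obstacle is the general-$m$ bookkeeping. The binomial comparison $\log_2\binom{M}{m}$ versus $m\log_2 M$ loses a factor of $m$ on the upper side, so a naive chain yields $m'_{\mathrm{signal}}\,k'\log_2(|D'|-1)$ instead of $k'\log_2(|D'|-1)$. I would handle this by not passing through $k'$ at all. Instead I compare search cost directly against the entropy of the witness set in $R'$: a score-oracle search that locates a signal element in $Q(S')$ is a description of that element, costing $\log_2\operatorname{search}+O(\log)$ bits given the algorithm. The question then becomes whether $K(\varphi,\psi)$ plus that description recovers enough of $\score$ to beat the typicality bound minus $\log_2\msig$. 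If this cannot be carried through for general $m$, I would state \eqref{eq:cons-general} with the additional $-\log_2\msig$ term absorbed under the assumption $\msig\le \operatorname{poly}$, and keep the clean unique-witness case $(*_B)$ as the rigorous core.
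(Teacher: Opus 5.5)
For the unique-witness form $(*_B)$ your argument is the paper's: budget preservation from \Cref{cor:3-2f}(iii), then clause~(III$'$) with \Cref{cor:yao} on $R'$ at $S'$, then rearrangement. The gap is in the general form \eqref{eq:cons-general}, where you suspected it.

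Your intermediate inequality $k'\log_2(|D'|-1) - \log_2 m'_{\mathrm{signal}} \ge B - \log_2\msig - K(\varphi,\psi) - O(\log)$ does not follow from the binomial sandwich. Write $M' := (|D'|-1)^{k'}$. The term $\log_2\binom{M'}{m'_{\mathrm{signal}}}$ can be as large as $M'$ itself. So already for $\msig = 1$ and $m'_{\mathrm{signal}} \approx M'/2$, the sandwich only forces $M' \gtrsim B - K(\varphi,\psi)$, not $\log_2 M' \gtrsim B - K(\varphi,\psi)$.

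Even granting the intermediate, you would reach $B - \log_2\msig$, not $B - \log_2 m'_{\mathrm{signal}}$. The $-\log_2\msig$ cannot be ``dropped'', because deleting a subtracted term from a lower bound strengthens it. Your fallback $\msig \le \mathrm{poly}$ does not address the loss in $m'_{\mathrm{signal}}$.

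The search-as-description idea proves a different statement. Every element of a Kolmogorov-typical witness set has complexity at least $B - \log_2\msig - O(\log)$. So a deterministic $A$ that hits a witness of $R'$ after $q$ queries gives $K(\varphi,\psi) + \log_2 q \ge B - \log_2\msig - K(A) - O(\log)$. That is a per-algorithm bound in the multiplicity of $R$, not the Yao quantity in $m'_{\mathrm{signal}}$.

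The paper avoids your bookkeeping by asserting, in Step~1, the witness-free inequality $k'\log_2(|D'|-1) \ge B - K(\varphi,\psi) - O(\log)$. Multiplicity then enters only through the Yao divisor in Step~2, which is the right intermediate. But the paper takes it from \Cref{cor:3-2f}(iii), which is proved only for $\msig = m'_{\mathrm{signal}} = 1$. So its general case rests on the same unproved extension you ran into, and it cannot be closed without an extra hypothesis.

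At full multiplicity the statement is false. On $D = \{0,1,2\}$ let $\score(\alpha) = [\alpha_i \ne 0 \text{ for all } i]$ at baseline $0^\kappa$. Then $S = X$, $\msig = 2^\kappa$, and typicality is vacuous since $\log_2\binom{2^\kappa}{2^\kappa} = 0$. Collapse it to one variable by $\varphi(\alpha) = \score(\alpha)$, $\psi(z) = (z,\dots,z)$, $\score'(z) = [z \ne 0]$. Every hypothesis holds, with $K(\varphi,\psi) = O(\log\kappa)$, $k' = 1$, $m'_{\mathrm{signal}} = 2$ and $\operatorname{search}(R') = 1$. Yet the right side of \eqref{eq:cons-general} is $\kappa - 1 - O(\log\kappa)$.

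For $1 \le \msig < (d-1)^\kappa$ the witness-free inequality can be recovered. Count the image $\varphi(\hat T)$ of the witness set $\hat T$ of $\score$, rather than the whole witness set of $\score'$:
\begin{itemize}
\item Score preservation gives $\hat T = \varphi^{-1}(\varphi(\hat T))$.
\item Under the canonicity of $\score'$ on $S'$, which you and the paper both use implicitly, $\varphi(\hat T)$ consists of $t \le \msig$ points of the $M'$-point cube of $S'$.
\item Typicality then yields $\log_2\binom{(d-1)^\kappa}{\msig} \le K(\varphi,\psi) + \log_2\binom{M'}{t} + O(\log)$.
\item For $\msig \le M'$ this rearranges to $\msig\,(B - \log_2 M' - \log_2 e) \le K(\varphi,\psi) + O(\log)$, hence $\log_2 M' \ge B - K(\varphi,\psi) - O(\log)$.
\item The regime $\msig > M'$ follows from a Vandermonde split of $\binom{(d-1)^\kappa}{\msig}$ over the image of the $S$-cube.
\end{itemize}
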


\begin{proof}
\emph{Step 1 (budget preservation).} \Cref{cor:3-2f}(iii), applied to the
coupling-structure-preserving reformulation with coupled sets $S$ (size
$\kappa$) and $S'$ (size $k'$), gives $B'_S := k' \log_2(d'-1) \ge B -
K(\varphi, \psi) - O(\log)$, using $|D'| = \Theta(|D|)$ and $|X'| \le
\operatorname{poly}(|X|)$ (\Cref{def:cost-constrained}(iii)).

\emph{Step 2 (search lower bound on $R'$).} Clause~(III$'$) of the Coupling
Barrier Theorem applied to $R'$ at $S'$ yields
$\operatorname{search}(R') \ge \Omega\bigl(2^{B'_S} / m'_{\mathrm{signal}}\bigr)$.

\emph{Step 3 (assembly).} Taking $\log_2$ of Step~2 and substituting Step~1,
$\log_2 \operatorname{search} \ge B'_S - \log_2 m'_{\mathrm{signal}} - O(1)
\ge B - K(\varphi, \psi) - \log_2 m'_{\mathrm{signal}} - O(\log)$. Rearranging
gives \eqref{eq:cons-general}; the $\msig = 1$ case is immediate.
\end{proof}

\begin{corollary}[Coupling-width-unit form, unique-witness case]\label{cor:cw-unit}
Under the hypotheses of \Cref{thm:cons-budget} with $m'_{\mathrm{signal}} = 1$,
writing $|r|_{\mathrm{cw}} := K(\varphi, \psi) / \log_2(d-1)$,
\[
  |r|_{\mathrm{cw}} + \log_{d-1} \operatorname{search}
  \ge \kappa - O\!\left(\frac{\log |X| + \log |X'|}{\log_2(d-1)}\right).
\]
Each coupling-width unit of reformulation eliminates at most one variable's
worth of search from $Q$. The right-hand side here is the \emph{raw} coupling
width $\kappa$, valid only in the unique-witness regime $\msig = 1$; for
$\msig > 1$ the raw width is replaced by the effective coupling width
$\kappa_{\mathrm{eff}} := \kappa - \log_{d-1} \msig$, and the robust statement of
the law is the corrected $K(s)$ form of \Cref{rem:levin}, justified by the
multiplicity cancellation of \Cref{rem:cancellation}.
\end{corollary}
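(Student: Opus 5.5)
The corollary is a change of units applied to \Cref{thm:cons-budget}, so I would take the unique-witness inequality \eqref{eq:cons-unique} as the starting point and divide through by $\log_2(d-1)$. Concretely, set $m'_{\mathrm{signal}} = 1$ in \eqref{eq:cons-general}. The term $\log_2 m'_{\mathrm{signal}}$ then vanishes, which leaves
\[
  K(\varphi,\psi) + \log_2 \operatorname{search}(R') \ge \kappa \log_2(d-1) - O(\log),
\]
using $B = \kappa\log_2(d-1)$ from \Cref{def:budget}. For $d \ge 3$ the quantity $\log_2(d-1)$ is positive, so dividing by it preserves the inequality. After division, $K(\varphi,\psi)/\log_2(d-1)$ is $|r|_{\mathrm{cw}}$ by definition, and $\log_2\operatorname{search}/\log_2(d-1)$ equals $\log_{d-1}\operatorname{search}$. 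I would note separately that $d = 2$ is degenerate: there $|Q| = 1$ and the barrier carries no content.

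The only step that needs care is checking exactly what the $O(\log)$ slack is. To do this I would trace the error terms back through the proof of \Cref{thm:cons-budget}. Step~1 calls \Cref{cor:3-2f}(iii), whose slack is $O(\log|X| + \log|X'|)$. It comes from recovering $(S', m'_{\mathrm{signal}})$ via \Cref{def:coupling-preserving}, together with the $O(1)$ and typicality constant $c$ from \Cref{thm:chain-rule} and \Cref{thm:counting}. Step~2 adds only an $O(1)$, coming from the $\Omega$ constant of \Cref{cor:yao}. There is also the substitution of $\log_2(d'-1)$ for $\log_2(d-1)$, which the hypothesis $|D'| = \Theta(|D|)$ allows. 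This substitution costs $k' \cdot O(1)$ additively, or equivalently a multiplicative $1+o(1)$ factor when $d$ is large. I would absorb it by working directly with $k'\log_2(d'-1)$ in Step~2, so that $\log_2|Q'|$ appears as the search exponent and no conversion is needed until the final division. Once the whole slack is written as $O(\log|X| + \log|X'|)$, dividing it by $\log_2(d-1)$ gives exactly the error term stated in the corollary.

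The interpretive sentences in the statement need only short justifications. \emph{``Each coupling-width unit eliminates at most one variable's worth of search''} is the observation that the inequality is linear with coefficient one in $|r|_{\mathrm{cw}}$. The caveat for $\msig > 1$ follows from keeping the term $-\log_2 m'_{\mathrm{signal}}$ in \eqref{eq:cons-general}. Dividing that term by $\log_2(d-1)$ turns $\kappa$ into $\kappa - \log_{d-1} m'_{\mathrm{signal}}$, which is $\kappa_{\mathrm{eff}}$ once the witness counts are matched, $m'_{\mathrm{signal}} = \msig$, as in the canonical preserved case. I would state that identification as an assumption rather than prove it here, since the statement defers the robust form to the later remarks.

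I expect the main obstacle to be the error bookkeeping, specifically ensuring that the typicality constant $c$ and the domain-change term are really $O(\log|X|+\log|X'|)$ and not something of order $k'$. My first attempt would be to fix $c$ as a constant, and to assume $|D'| = |D|$ exactly when the cost-constrained map is domain-preserving, falling back on the $\Theta$ argument above otherwise.
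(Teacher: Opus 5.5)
Your proposal is correct and follows the paper's own proof: divide the unique-witness bound \eqref{eq:cons-unique} by $\log_2(d-1)$, so that the $O(\log)$ slack becomes $O\bigl((\log|X|+\log|X'|)/\log_2(d-1)\bigr)$. Your additional bookkeeping agrees with Steps~1--2 of \Cref{thm:cons-budget}, which already keep $k'\log_2(d'-1)=\log_2|Q'|$ as the search exponent so that no domain conversion arises. That bookkeeping covers the $d\ge 3$ requirement for the division and the tracing of the error terms.
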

\begin{proof}
Divide \eqref{eq:cons-unique} by $\log_2(d-1)$; the $O(\log)$ overhead becomes
$O((\log|X| + \log|X'|)/\log_2(d-1))$.
\end{proof}

\begin{remark}[Connection to Levin's bound; the robust form of the law]\label{rem:levin}
Under the binary score with $\kappa_{\mathrm{bin}} \ge \Knt$ (\Cref{thm:binary-coupling})
and the ISA-relative Kolmogorov bound,
$B \ge \Knt \log_2(T-1) \ge K(s) - \Knt \log_2\tfrac{T}{T-1} - O(\log N)$. The
bound \eqref{eq:cons-unique} then gives
\[
  K(\varphi, \psi) + \log_2 \operatorname{search}
  \ge K(s) - \Knt \log_2\tfrac{T}{T-1} - O(\log N) :
\]
the total bit cost of solving the type-discovery CSP is at least $K(s)$ up to a
base correction of $\log_2\tfrac{T}{T-1}$ bits per non-identity slot, at most
$\Knt/((T-1)\ln 2)$ bits in total, regardless of reformulation. When the typing is an essentially optimal code for the target,
$\Knt \log_2 T = K(s) + O(\log N)$, the correction equals the multiplicative
discount and the floor reads $\tfrac{\log_2(T-1)}{\log_2 T}\,K(s) - O(\log N)$,
the form quoted in the main text. Levin's universal search achieves time
$2^{K(s)} t(p^*)$, saturating it asymptotically ($T \to \infty$) at $K(\varphi,
\psi) = O(1)$; for finite $T$ and optimally coded typings the floor is
$c_T^{\,K(s) - O(\log N)}$ with $c_T = (T-1)^{1/\log_2 T} < 2$, which Levin's
$2^{K(s)}$ exceeds by the factor $(2/c_T)^{K(s)}$. This corrected $K(s)$ form
is the \emph{robust} statement of the conservation law. By the multiplicity
cancellation of \Cref{rem:cancellation} it holds verbatim for \emph{every}
target, not only the unique-witness case, because increasing $\msig$ lowers the
budget and lowers $K(s)$ by the same number of bits, leaving the per-slot
correction unchanged. The exponent governing score-oracle search is therefore
$K(s)$, the Kolmogorov complexity of the target, up to the per-slot correction;
it coincides with the raw budget $\kappa \log_2(d-1)$ only for incompressible
targets.
\end{remark}

\begin{remark}[Effective coupling budget; the law is robust to witness multiplicity]\label{rem:cancellation}
The raw budget $B = \kappa \log_2(d-1) = \log_2 |Q|$ bounds search only in the
unique-witness regime. In general the clause-(III$'$) search lower bound carries
the multiplicity divisor,
$\operatorname{search} \ge \Omega(2^B / \msig)$, so
\[
  \log_2 \operatorname{search} \ge B - \log_2 \msig =: B_{\mathrm{eff}}
  = \kappa_{\mathrm{eff}} \log_2(d-1), \qquad
  \kappa_{\mathrm{eff}} := \kappa - \log_{d-1} \msig .
\]
The quantity bounded below by search is the \emph{effective} budget
$B_{\mathrm{eff}}$, and $\kappa_{\mathrm{eff}}$, not $\kappa$, is the effective
coupling width; the two agree exactly when $\msig = 1$.

\emph{Multiplicity cancellation.} The $\msig$ type-assignments generating $s$ in
this frame each have description length $\Knt \log_2 T + O(\log N)$, so
$M(s) \ge \msig \cdot 2^{-(\Knt \log_2 T + O(\log N))}$, and by the coding theorem
$K(s) = -\log_2 M(s) + O(1)$. Hence
\[
  \log_2 \msig \le \Knt \log_2 T - K(s) + O(\log N),
  \qquad\text{equivalently}\qquad
  B_{\mathrm{eff}} \ge K(s) - \Knt \log_2\tfrac{T}{T-1} - O(\log N),
\]
the correction term being the $\Knt\log_2 T$ versus $\kappa \log_2(T-1)$ gap,
at most $\Knt/((T-1)\ln 2)$ bits in total and vanishing relative to $K(s)$ as
the catalog grows. Consequently the corrected $K(s)$ form
$K(\varphi, \psi) + \log_2 \operatorname{search} \ge K(s) - \Knt
\log_2\tfrac{T}{T-1} - O(\log N)$ is invariant under witness multiplicity, since a
target with many generators has a small $B_{\mathrm{eff}}$ but a
correspondingly small $K(s)$, and the right-hand side, carrying only the
per-slot base correction, is unchanged. On catalogs whose $\Knt$-fold composition collapses into a function
family of size sub-exponential in $\Knt$, the typical target has $\msig$
exponentially large (so $\kappa_{\mathrm{eff}} \ll \kappa$ and the raw-$\kappa$
form is vacuous), and the unique-witness regime survives only at extremal
instances such as \cref{prop:zp}. The corrected $K(s)$ form holds throughout; only
its numerical strength varies, tracking the target's complexity exactly.
\end{remark}

\subsection{Irreducibility of Coupling Width to Known Priors}

\begin{proposition}[Incomparability with Kolmogorov complexity of the score]\label{prop:incomp-K}
For every $n \ge 2$ there exist:
\begin{enumerate}[label=(\alph*)]
\item a CSP $R_n$ with $K(\score_n \mid W(R_n)) = O(\log n)$ and coupling width $n$;
\item a CSP $R'_n$ with $K(\score'_n \mid W(R'_n)) = \Omega(2^n)$ and coupling width $1$.
\end{enumerate}
\end{proposition}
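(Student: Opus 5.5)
For part~(a) we reuse the AND construction already used in \Cref{prop:incomp-tw}(a) and \Cref{thm:cw-not-invariant}. Take $X_n=\{x_1,\dots,x_n\}$, $D=\{0,1\}$, $\score_n=x_1\wedge\cdots\wedge x_n$, and baseline $\alpha_0=0^n$. Those results show the full set is $n$-coupled. Since $|X_n|=n$, no larger set exists, so $\kappa=n$.

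For the complexity bound, note that $W(R_n)$ already contains $n$, and even without it a program for ``AND of all variables'' needs only $n$ in $O(\log n)$ bits plus an $O(1)$ interpreter. Hence $K(\score_n\mid W(R_n))=O(\log n)$; in fact it is $O(1)$ given $W(R_n)$, which is stronger.

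For part~(b) we need a score with coupling width $1$ but incompressible description. We use a random ``dictator-plus-noise'' score, with domain $D=\{0,1\}$ and baseline $0^n$. The plan is:
\begin{enumerate}[label=(\roman*)]
\item Fix an incompressible string $r\in\{0,1\}^{2^{n-1}}$ with $K(r\mid n)\ge 2^{n-1}$, which exists by counting.
\item Define $\score'_n(\alpha)=2x_1+r(\alpha_{2..n})\cdot x_1$. That is, the score is $0$ whenever $x_1=0$, and equals $2+r_{\text{index}(x_2,\dots,x_n)}$ whenever $x_1=1$.
\item Check that $\{x_1\}$ is $1$-coupled: condition~(a) is vacuous, and setting $x_1=1$ gives a score of at least $2>0=\vbase$.
\item Show no set $S$ with $|S|\ge2$ is coupled. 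If $x_1\in S$, the singleton perturbation $x_1\mapsto1$ raises the score, violating~(a). If $x_1\notin S$, every perturbation of $S$ keeps $x_1=0$ and hence score $0$, so~(b) fails. Therefore $\kappa=1$.
\item Lower-bound the complexity. From $\score'_n$ and $W(R'_n)$ one recovers $r$ by querying all assignments with $x_1=1$. So $K(r\mid n)\le K(\score'_n\mid W(R'_n))+O(\log n)$, since $W$ is computable from $n$ up to $O(\log n)$. This gives $K(\score'_n\mid W(R'_n))\ge 2^{n-1}-O(\log n)=\Omega(2^n)$.
\end{enumerate}

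The main obstacle is ensuring that the noise does not create couplings, and the gating by $x_1$ handles this: without $x_1=1$ the score is identically baseline, so no set avoiding $x_1$ yields signal. A secondary point of care is conditioning on $W(R'_n)$. We must argue it carries only $O(\log n)$ bits beyond $n$, namely variables, domain, and baseline, so it cannot leak $r$. This is immediate because $W$ excludes the score function by definition.
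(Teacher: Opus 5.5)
Your proof is correct, but part (b) takes a different route from the paper. Part (a) is the paper's AND construction. Your remark that the bound is in fact $O(1)$ given $W(R_n)$, because $n$ can be read from $W$, is right and slightly sharper than the paper's $O(\log n)$.

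\emph{The paper's route for (b).} It starts from an incompressible Boolean function $f$ on all of $\{0,1\}^n$ and overwrites it on the $n+1$ points of Hamming weight at most one, setting $g(0^n)=0$ and $g(e_i)=1$. Then every singleton is $1$-coupled, and any set of size at least $2$ violates condition (a) through one of its singletons. The complexity bound pays the $n+O(\log n)$ bits needed to undo the patch, giving $K(g\mid W)\ge 2^n-n-O(\log n)$.

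\emph{Your route for (b).} You gate the randomness behind the single coordinate $x_1$, so only $\{x_1\}$ is coupled. Sets containing $x_1$ fail (a) through the singleton perturbation. Sets avoiding $x_1$ fail (b) because the score vanishes identically on $x_1=0$. The offset $2$ guarantees the signal whatever $r$ is, so you never need to fix any bit of $r$.

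\emph{What each buys.} Your decoding is exact, since $r$ is read directly off $\score'_n$ on the half-cube $x_1=1$, and there is no patch to account for. The cost is a factor of two in the bound, $2^{n-1}-O(1)$ instead of $2^n-O(n)$, and a three-valued score rather than a $\{0,1\}$-valued one. Neither matters for the $\Omega(2^n)$ claim, because the paper explicitly allows arbitrary $\mathbb{N}$-valued scores. The paper's version, in return, stays inside the Boolean-function family. It also shares the all-singletons-coupled OR mechanism with the sensitivity comparison in the following proposition.
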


\begin{proof}
(a) Take $\score_n = \mathrm{AND}(x_1, \dots, x_n)$ on $n$ binary variables,
baseline $0^n$; given $(n, 2, 0^n)$ the AND circuit is describable in
$O(\log n)$ bits, and the coupling width is $n$.

(b) The family of Boolean functions $\{0,1\}^n \to \{0,1\}$ has cardinality
$2^{2^n}$; by incompressibility counting, all but a $2^{-c}$ fraction satisfy
$K(\score \mid W(R)) \ge 2^n - c - O(1)$. Fix such an incompressible $f$ and
modify it on the points of Hamming weight at most one to force coupling width
$1$: set
\[
  g(\alpha) := f(\alpha) \text{ for } |\alpha| \ge 2, \quad
  g(0^n) := 0, \quad g(e_i) := 1 \text{ for every } i,
\]
where $e_i$ is the $i$-th unit vector. Then $\vbase = 0$ and every singleton
$\{x_i\}$ is $1$-coupled, since condition~(a) is vacuous for singletons and
$g(e_i) = 1 > 0$ gives condition~(b). No set $S$ with $|S| \ge 2$ is coupled:
$S$ contains some singleton $\{x_i\}$ as a proper subset, and the perturbation
setting $x_i$ to $1$ scores $g(e_i) = 1 > \vbase$, violating condition~(a).
Hence $\kappa = 1$, by the same local mechanism that drives the
OR landscape of \Cref{prop:incomp-sens} below. Since $f$ and $g$
differ only on the $n+1$ points of weight at most one, $f$ is recoverable from
$g$ together with the $n+1$ bits $f(0^n), f(e_1), \dots, f(e_n)$ listed in
canonical order, so $K(f \mid g, W) \le n + O(\log n)$ and
$K(g \mid W(R'_n)) \ge K(f \mid W) - n - O(\log n) = \Omega(2^n)$.
\end{proof}

\begin{proposition}[Incomparability with single-coordinate sensitivity]\label{prop:incomp-sens}
Let $s(\score, \alpha_0) := |\{\, i : \score(\alpha_0 \oplus e_i) \ne
\score(\alpha_0) \,\}|$. For every $n \ge 2$ there is a CSP with
$s(\score, \alpha_0) = 0$ and coupling width $n$ (AND at $0^n$), and a CSP with
$s(\score, \alpha_0) = n$ and coupling width $1$ (OR at $0^n$).
\end{proposition}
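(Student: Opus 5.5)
Both halves are direct computations on the two named witnesses over $D=\{0,1\}$ at the baseline $\alpha_0 = 0^n$, where $\alpha_0 \oplus e_i = e_i$. I will compute the sensitivity $s(\score,\alpha_0)$ from the $n$ values $\score(e_i)$, and then determine the coupling width from \Cref{def:coupling-set} and \Cref{def:coupling-width}, exactly as in \Cref{prop:incomp-tw}(a) and \Cref{prop:incomp-K}(b).

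\emph{AND at $0^n$.} Here $\vbase = 0$. Every unit vector $e_i$ with $n \ge 2$ keeps a zero coordinate, so $\mathrm{AND}(e_i) = 0 = \vbase$ and $s = 0$.

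For the coupling width, take $S = X$. Any perturbation on a proper nonempty subset $S'$ leaves some coordinate at $0$, so it scores $0 = \vbase$; this is condition~(a). The full correction $\beta^* = 1^n$ scores $1 > 0$; this is condition~(b). Hence $S$ is $n$-coupled. Since $|X| = n$ bounds the size of any coupled set, $\kappa = n$. This argument is already the one given in \Cref{prop:incomp-tw}(a) and \Cref{thm:cw-not-invariant}, so I would simply cite it.

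\emph{OR at $0^n$.} Again $\vbase = 0$. Now $\mathrm{OR}(e_i) = 1 \ne 0$ for every $i$, so $s = n$.

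Every singleton $\{x_i\}$ is $1$-coupled: condition~(a) is vacuous, and $\beta^* = 1$ gives score $1 > 0$. Hence $\kappa \ge 1$.

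For the upper bound, let $|S| \ge 2$ and pick $x_i \in S$. Then $\{x_i\}$ is a proper nonempty subset of $S$, and the perturbation $x_i \mapsto 1$ scores $1 \ne \vbase$. This violates condition~(a), so no set of size at least $2$ is coupled, and $\kappa = 1$. This is the same local mechanism used for $g$ in \Cref{prop:incomp-K}(b).

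The only point needing care is the upper bound in the OR case. To exclude every set of size at least $2$, one needs a single proper-subset perturbation that moves the score. A singleton always supplies one, so there is no real obstacle; the whole proof is a few lines. I would close by noting the conclusion: $s$ and $\kappa$ move in opposite directions across the two witnesses, so $\kappa$ is not a function of single-coordinate sensitivity.
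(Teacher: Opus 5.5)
Your proof is correct and follows the paper's argument. It uses the same AND and OR witnesses at $0^n$, reads the sensitivity off the values at the unit vectors $e_i$, shows the full set is $n$-coupled for AND, and for OR shows every singleton is $1$-coupled while any larger set fails condition~(a) through a singleton sub-perturbation; you also make explicit two points the paper leaves implicit, namely that $n \ge 2$ is what keeps $\mathrm{AND}(e_i) = 0$ and that $|X| = n$ caps the AND coupling width.
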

\begin{proof}
For AND at $0^n$, every single-coordinate flip keeps a $0$ so the score stays
$0$, giving sensitivity $0$ while the coupling width is $n$. For OR at $0^n$,
every single flip raises the score to $1$, giving sensitivity $n$ while the
singleton $\{x_1\}$ is $1$-coupled (and no larger set is, since condition~(a)
already fails at $\{x_1\}$), so the coupling width is $1$.
\end{proof}

\begin{corollary}[Coupling width is an independent prior]\label{cor:independent-prior}
Coupling width at a baseline is not expressible as a monotone function of
primal treewidth (\Cref{prop:incomp-tw}), Kolmogorov complexity of the score
(\Cref{prop:incomp-K}), or single-coordinate sensitivity
(\Cref{prop:incomp-sens}).
\end{corollary}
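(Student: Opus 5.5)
The corollary follows by assembling the three incomparability results just proved: each supplies a pair of instances on which the proposed prior orders the instances one way and the coupling width orders them the other way. The target claim is that no monotone function $f$ can satisfy $\kappa = f(\text{prior})$. I read ``monotone'' as monotone in either direction, so I will rule out nondecreasing and nonincreasing $f$ separately; in fact the argument below rules out every function, monotone or not.

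\emph{Treewidth.} Suppose $\kappa = f(\tw)$ with $f$ nondecreasing. Take $R_n$ and $R'_n$ from \Cref{prop:incomp-tw} with $n \ge 2$. Since $0 \le n-1$, monotonicity gives $f(0) \le f(n-1)$. Substituting the values from the proposition gives $n \le 1$, a contradiction.

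Now suppose $f$ is nonincreasing. I take two AND instances, with widths $n$ and $n' \neq n$, both of primal treewidth $0$. An edgeless constraint graph allows the score to be any function, so both are valid instances. Their equal treewidth would force $f(0) = n$ and $f(0) = n'$, which is impossible. This pair already excludes \emph{every} function of treewidth, so it settles this case regardless of monotonicity.

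\emph{Kolmogorov complexity of the score.} I argue the same way with \Cref{prop:incomp-K}. The instance $R_n$ has complexity $O(\log n)$ and width $n$. The instance $R'_n$ has complexity $\Omega(2^n)$ and width $1$.

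For $n$ large, $O(\log n) < \Omega(2^n)$, so a nondecreasing $f$ would force $n \le 1$, a contradiction. For nonincreasing $f$, I fix a single complexity level and exhibit two different widths there. Two AND instances of different sizes have distinct $O(\log n)$ complexities, so they do not give a collision directly.

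The cleaner route is to compare AND on $n$ variables with OR on $n$ variables. Both have $K = O(\log n)$ and both sit at baseline $0^n$, yet their widths are $n$ and $1$ by \Cref{prop:incomp-sens}. Their exact complexities may differ by $O(1)$. I would therefore state the non-expressibility up to the additive $O(\log n)$ precision at which $K$ is meaningful. If an exact-value collision is required, I would pad one description so that the two values coincide.

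\emph{Sensitivity.} AND at $0^n$ has $s = 0$ and $\kappa = n$, while OR at $0^n$ has $s = n$ and $\kappa = 1$, by \Cref{prop:incomp-sens}.

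A nondecreasing $f$ would give $n \le 1$, a contradiction. For nonincreasing $f$, I compare AND on $2$ variables with AND on $3$ variables. Both have sensitivity $0$, but their widths are $2$ and $3$. So coupling width is not a function of sensitivity at all.

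The main obstacle is the Kolmogorov case. There, exhibiting two instances with the \emph{same} prior value but different widths requires handling the $O(1)$ machine-dependent slack. I would present that clause modulo additive constants, which is the sense in which $K$ is ever treated as a prior.
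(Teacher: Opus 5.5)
The paper's proof only cites the three propositions. It asserts that each one exhibits two CSPs with \emph{equal} prior value but widths $n$ and $1$. That is not what the propositions literally give: their pairs have treewidth $0$ vs.\ $n-1$, complexity $O(\log n)$ vs.\ $\Omega(2^n)$, and sensitivity $0$ vs.\ $n$. What those pairs actually deliver is your anti-monotone witness (smaller prior, larger $\kappa$), which rules out a nondecreasing $f$. So your core step is the content of the paper's proof, and your nondecreasing arguments are correct in all three clauses. Your equal-prior pairs come from the (a)-families at two values of $n$: AND on $2$ and on $3$ variables has treewidth $0$ and sensitivity $0$, with widths $2$ and $3$. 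These make the paper's sentence true for treewidth and sensitivity, and they exclude every function of those priors, not just monotone ones.

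The step that fails is the Kolmogorov clause for nonincreasing $f$.
\begin{itemize}
\item Padding a description cannot change $K$, which is a minimum over all descriptions, so there is nothing to pad.
\item Falling back to ``modulo additive constants'' proves a weaker statement than the corollary.
\item Your premise is also backwards. AND instances of different sizes need not have distinct complexities; they are forced to collide.
\end{itemize}
By \Cref{prop:incomp-K}(a), $K(\mathrm{AND}_n \mid W(R_n)) \le C\log n$. In fact it is $O(1)$, since $n$ is recoverable from $W(R_n)$. For $n \in \{2,\dots,N\}$ these nonnegative integers therefore take at most $C\log N + 1$ distinct values, which is fewer than $N-1$ once $N$ is large. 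Hence some $n \ne n'$ have equal complexity while $\kappa = n \ne n'$. This pigeonhole argument gives an exact collision and excludes every function of $K$. It makes the AND-versus-OR comparison unnecessary, and together with your treewidth and sensitivity pairs it removes the need to split on the direction of monotonicity in any clause.
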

\begin{proof}
Immediate from \Cref{prop:incomp-tw}, \Cref{prop:incomp-K}, and
\Cref{prop:incomp-sens}; each exhibits two CSPs with equal value of the stated
prior but coupling widths $n$ and $1$, so no monotone function of that prior can
determine the coupling width.
\end{proof}

\section{The Register Machine Model}

The theory so far is stated for abstract CSPs. This section fixes the
concrete machine, the syntax and semantics of register programs and the score
functions over them, on which the structural and dynamical analyses that
follow are carried out.

\subsection{Syntax and Semantics}

\begin{definition}[Register Machine Program]\label{def:program}
A program is a triple $P = (F, \tau, \rho)$ where:
\begin{enumerate}[label=(\roman*)]
\item $F$ is a \emph{control frame}: a rooted tree whose internal nodes are
control structures (\textsc{while}, \textsc{if\_else}, \textsc{repeat},
\textsc{seq}) and whose $L$ leaves are operation slots $l_1, \dots, l_L$.
\item $\tau : \{l_1, \dots, l_L\} \to \{1, \dots, T\}$ assigns each leaf one of
$T$ operation types (e.g.\ \textsc{add}, \textsc{mod}, \textsc{divmod},
\textsc{copy}, \textsc{load}, \textsc{out}).
\item $\rho : \mathrm{Fields}(F, \tau) \to R$ assigns each register field a
physical register in $R = \{R_0, \dots, R_{W-1}\}$.
\end{enumerate}
\end{definition}

\begin{definition}[Execution]\label{def:execution}
Given $P$ and length $N$, $\exptr(P, N) \in \Sigma^N$ is produced by initializing
registers from pre-loop \textsc{load} operations, iterating the outermost loop,
and emitting one output symbol per iteration via the \textsc{out} operation.
\end{definition}

\begin{definition}[Functional Equivalence]\label{def:func-equiv}
Programs $P_1 = (F, \tau, \rho_1)$ and $P_2 = (F, \tau, \rho_2)$ sharing frame
and types are \emph{functionally equivalent}, $P_1 \equiv P_2$, if
$\exptr(P_1, N) = \exptr(P_2, N)$ for all $N \ge 1$.
\end{definition}

\begin{definition}[Match Score]\label{def:match-score}
For program $P$ and target $s \in \Sigma^N$,
\[
  \score(P, s) = \bigl|\{\, i \in \{0, \dots, N-1\} : \exptr(P, N)[i] = s[i]
  \,\}\bigr|.
\]
\end{definition}

\begin{definition}[Coupling Set, register-machine form]\label{def:coupling-rm}
Let $P^* = (F, \tau^*, \rho^*)$ achieve $\score(P^*, s) = N$, let $\rho_0$ be a
baseline register assignment, and write $P_0 = (F, \tau^*, \rho_0)$ with
$P_0[S\!\to\!\alpha]$ the program agreeing with $P_0$ outside $S$ and assigning
$\alpha$ on $S$; let $\vbase = \score(P_0, s)$. A set of register fields
$S = \{f_1, \dots, f_k\}$ is \emph{$k$-coupled} with respect to $(s, \rho_0)$ if
\begin{enumerate}[label=(\alph*)]
\item for every proper $S' \subsetneq S$ with $0 < |S'| < k$ and every
$\alpha' : S' \to R$, $\ \score(P_0[S'\!\to\!\alpha'], s) = \vbase$;
\item there exists $\alpha^* : S \to R$ with
$\score(P_0[S\!\to\!\alpha^*], s) > \vbase$.
\end{enumerate}
\end{definition}

\begin{remark}[On the coupling witness]\label{rem:witness}
Condition~(b) requires only \emph{some} $S$-local assignment exceeding
$\vbase$; it does not require $\alpha^*$ to be the restriction $\rho^*|_S$. The
restriction $\rho^*|_S$ together with $\rho_0$ elsewhere may score below
$\vbase$. The coupling definition certifies a local improvement direction on
$S$, not the extendibility of a correct solution restricted to $S$. Every
witness $\alpha^*$ lies in $Q = \{\, \alpha \in R^k : \alpha_j \ne \rho_0(f_j)
\text{ for all } j \,\}$, and $\msig = |\{\, \alpha \in Q : \score(P_0[S\!\to\!
\alpha]) > \vbase \,\}|$.
\end{remark}

\begin{definition}[$(\delta, \Delta)$-Coupling Set, register-machine form]\label{def:coupling-rm-relaxed}
With $P^*, \rho_0, \vbase$ as above and $\Delta > \delta \ge 0$, a set
$S = \{f_1, \dots, f_k\}$ is \emph{$(\delta, \Delta)$-$k$-coupled} if
$\score(P_0[S'\!\to\!\alpha'], s) \le \vbase + \delta$ for every proper
$S' \subsetneq S$ and every $\alpha'$, and there is $\alpha^*$ with
$\score(P_0[S\!\to\!\alpha^*], s) \ge \vbase + \Delta$. This instantiates
\Cref{def:coupling-set-relaxed} with $X = \mathrm{Fields}(F, \tau^*)$, $D = R$.
\end{definition}

\subsection{Score Functions and Syntactic Minimality}

\begin{definition}[Binary score for type discovery]\label{def:binary-score}
For a type assignment $\tau$ to frame $F^*$ and target $s \in \Sigma^N$,
$\score_{\mathrm{bin}}(\tau, s) = 1$ if there exists $\rho$ with
$\exptr(F^*, \tau, \rho, N) = s$, and $0$ otherwise.
\end{definition}

\begin{definition}[Syntactic minimality]\label{def:syntactic-min}
Fix a distinguished type \textsc{identity} (no read, no write). A type
$\tau(f)$ is \emph{syntactically non-identity} if $\tau(f) \ne \textsc{identity}$.
The \emph{syntactic minimality} $\Knt$ of $F^*$ with respect to $s$ is
\[
  \Knt := \min\bigl\{\, |S| : S \subseteq \mathrm{slots}(F^*),\ \exists\,
  \tau : S \to D \setminus \{\textsc{identity}\},\ \exists\, \rho :
  \exptr(F^*, \tau \cup \textsc{identity}_{\mathrm{rest}}, \rho, N) = s \,\bigr\},
\]
the least number of non-\textsc{identity} slots in $F^*$ able to generate $s$.
\end{definition}

\begin{remark}[Syntactic vs.\ semantic minimality]\label{rem:syn-sem}
Syntactic minimality counts non-\textsc{identity} slots regardless of whether
they affect $\exptr$ on the reachable trajectory; a semantic notion counting only
slots whose removal changes $\exptr$ can be smaller (e.g.\ $\textsc{add}(r_{\mathrm{acc}},
r_0, r_{\mathrm{acc}})$ with $r_0 \equiv 0$). \Cref{thm:binary-coupling} and its
corollaries use the syntactic notion, which is decidable in polynomial time by
inspection of $\tau^*$, and which corresponds to the $\Knt \log_2 T$ description
cost. On the accumulator-driven class the two notions coincide.
\end{remark}

\section{Structural Coupling Analysis}

With the machine fixed, this section bounds its coupling width structurally,
from the typed dataflow of a program rather than from its dynamics, tying the
width to the syntactic minimality $\Knt$ and through it to Kolmogorov
complexity.

\subsection{Binary Coupling Width and Kolmogorov Complexity}\label{sec:binary-coupling}

\begin{theorem}[Binary coupling width: lower bound]\label{thm:binary-coupling}
Let $F^*$ have $K_{\mathrm{slot}}$ searchable slots, catalog $D$ of size
$T \ge 2$, and target $s$, with $\tau^*$ generating $s$ and baseline $\alpha_0$
assigning \textsc{identity} to every slot. Let $\Knt$ be the syntactic minimality
of $F^*$ w.r.t.\ $s$, and suppose $\tau^*$ realizes it, so
$S := \{\, \mathrm{slot}_j : \tau^*(\mathrm{slot}_j) \ne \textsc{identity} \,\}$
has $|S| = \Knt$. Then $S$ is $\Knt$-coupled with respect to
$(\score_{\mathrm{bin}}, \alpha_0)$, and
\[
  \kappa_{\mathrm{bin}} \ge \Knt.
\]
\end{theorem}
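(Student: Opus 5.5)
The plan is to verify directly the two conditions of \Cref{def:coupling-set} for the set $S$ of non-\textsc{identity} slots of $\tau^*$, with respect to $(\score_{\mathrm{bin}}, \alpha_0)$. Once $S$ is shown to be $\Knt$-coupled, \Cref{def:coupling-width} gives $\kappa_{\mathrm{bin}} \ge |S| = \Knt$ immediately. The first step is to fix the baseline value. Under $\alpha_0$ every slot is \textsc{identity}, so $\alpha_0$ has zero non-identity slots. If some $\rho$ made the all-\textsc{identity} program generate $s$, then $\Knt = 0$ and the claim is vacuous. So we may assume $\Knt \ge 1$, which forces $\score_{\mathrm{bin}}(\alpha_0) = 0 = \vbase$. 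This is the same observation that makes \eqref{eq:lmh} automatic, as noted in the remark after \Cref{prop:def-equiv}.

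Condition~(b) is witnessed by $\beta^* := \tau^*|_S$. The assignment $\restr{\alpha_0}{\beta^*}$ agrees with $\tau^*$ on $S$ and is \textsc{identity} off $S$. By the definition of $S$, $\tau^*$ is also \textsc{identity} off $S$, so $\restr{\alpha_0}{\beta^*} = \tau^*$. Since $\tau^*$ generates $s$ for some $\rho$, we get $\score_{\mathrm{bin}} = 1 > 0$.

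Condition~(a) is the heart of the argument and goes by contradiction against minimality. Take a proper $S' \subsetneq S$ with $0<|S'|<\Knt$ and an arbitrary $\beta : S' \to D$. The program $\restrS{\alpha_0}{S'}{\beta}$ has non-\textsc{identity} slots only among the coordinates $x \in S'$ with $\beta(x) \ne \textsc{identity}$, a set $S''$ with $|S''| \le |S'| < \Knt$. Suppose it scored $1$. Then some $\rho$ would make a typing with exactly $|S''|$ non-identity slots, \textsc{identity} elsewhere, generate $s$. That witnesses a feasible set in the minimum defining $\Knt$ (\Cref{def:syntactic-min}) of size $<\Knt$, a contradiction. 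The degenerate case $S''=\emptyset$ also needs care: then the program is $\alpha_0$ itself, which scores $0$ by the first step. Hence every such score is $0 = \vbase$, giving (a).

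The main obstacle I expect is bookkeeping rather than mathematics. First, $\beta$ may assign \textsc{identity} to some coordinates of $S'$, so the count of non-identity slots must be taken on $S''$, not on $S'$. Second, I must check that the existential quantification over $\rho$ in $\score_{\mathrm{bin}}$ matches exactly the existential $\rho$ in the definition of $\Knt$, so that "scores $1$" and "is a feasible configuration in the $\Knt$ minimum" are literally the same predicate. Third, the hypothesis that $\tau^*$ realizes $\Knt$ is used only to equate $|S|$ with $\Knt$; the minimality used in (a) comes from $\Knt$ itself, not from any property of $\tau^*$ beyond generating $s$.
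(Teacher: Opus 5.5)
Your proof is correct and follows the paper's argument: $\vbase = 0$ because the all-\textsc{identity} baseline does not generate the target, condition~(b) is witnessed by $\tau^*$ itself, and condition~(a) follows from the minimality of $\Knt$. Your set $S''$ spells out the non-identity count that the paper states in one line. One small correction concerns the case $\Knt = 0$: there the inequality $\kappa_{\mathrm{bin}} \ge 0$ is trivial, but $S = \emptyset$ is not actually $0$-coupled, since condition~(b) fails at $\alpha_0$. The coupling claim is therefore not vacuous in that case; it genuinely needs the nontriviality of $s$ that the paper's proof assumes.
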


\begin{proof}
The all-\textsc{identity} baseline performs no computation and does not
generate the nontrivial $s$, so $\vbase = \score_{\mathrm{bin}}(\alpha_0) = 0$.
For any proper subset $S' \subsetneq S$, assigning catalog types to only the
slots of $S'$ leaves fewer than $\Knt$ non-\textsc{identity} slots, which by the
minimality of $\Knt$ cannot generate $s$ under any register assignment, so
$\score_{\mathrm{bin}} = 0 = \vbase$ and condition~(a) holds. The full assignment
$\tau^*$ generates $s$, giving $\score_{\mathrm{bin}} = 1 > \vbase$:
condition~(b) holds. Hence $S$ is $\Knt$-coupled and
$\kappa_{\mathrm{bin}} \ge |S| = \Knt$. Since $\vbase = 0$, the local-minimum
hypothesis \eqref{eq:lmh} is automatic and the rigid and $(0,1)$ notions
coincide here.
\end{proof}

\begin{proposition}[Equality for CPI catalogs]\label{prop:cpi}
Under the hypotheses of \Cref{thm:binary-coupling}, if $F^*$ and the catalog
satisfy composition-permutation invariance (the per-iteration function is
invariant under permuting the slot index set, as for a commutative
accumulator catalog), then $\kappa_{\mathrm{bin}} = \Knt$.
\end{proposition}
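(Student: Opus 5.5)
The lower bound $\kappa_{\mathrm{bin}} \ge \Knt$ is already given by \Cref{thm:binary-coupling}, so only the upper bound $\kappa_{\mathrm{bin}} \le \Knt$ remains. I would show that no set of size exceeding $\Knt$ is coupled with respect to $(\score_{\mathrm{bin}}, \alpha_0)$. Fix a candidate set $S''$ with $|S''| = k > \Knt$, and suppose it satisfies condition~(b) of \Cref{def:coupling-set}, so that some $\beta^* : S'' \to D$ generates $s$. The goal is to exhibit a proper subset $S' \subsetneq S''$ and an assignment $\beta'$ on it with $\score_{\mathrm{bin}} = 1$. This violates condition~(a), because $\vbase = 0$.

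The natural first attempt is to take $S'$ to be the non-identity slots of $\beta^*$. If $\beta^*$ assigns \textsc{identity} to some slot of $S''$, this $S'$ is proper and already gives the violation. The only remaining case is $\beta^* \in Q$, where every slot of $S''$ carries a non-identity type and $k > \Knt$ non-identity slots jointly generate $s$.

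In that case I would use composition-permutation invariance. Because the per-iteration function depends only on the multiset of slot types and not on which slot carries which type, the minimal typing $\tau^*$, which has $\Knt$ non-identity slots on $S$, can be transported onto any $\Knt$ slots of the frame. Choose a $\Knt$-subset $S' \subsetneq S''$, which exists since $k > \Knt$. Let $\pi$ be a permutation of the slot index set sending $S$ onto $S'$, and set $\beta' = \tau^* \circ \pi^{-1}$ on $S'$, with every other slot left at \textsc{identity}. By CPI the resulting per-iteration function equals that of $\tau^*$. With the correspondingly permuted register assignment it therefore produces $\exptr = s$, so $\score_{\mathrm{bin}}(\alpha_0[S'\!\to\!\beta']) = 1 > \vbase$. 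This contradicts condition~(a) for $S''$. Hence every coupled set has size at most $\Knt$, and $\kappa_{\mathrm{bin}} = \Knt$.

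The main obstacle is making the CPI transport precise. Slots can sit in different positions of the control frame $F^*$, and register fields of different types are not a priori interchangeable, so I would read the hypothesis as saying that the frame's leaves form a single commutative block, as in the commutative accumulator case. The register assignment $\rho$ must be carried along by the same permutation. I expect the argument to need an explicit statement that permuting the pair $(\tau, \rho)$ by a slot permutation preserves $\exptr$, and I would state that as the working form of CPI.
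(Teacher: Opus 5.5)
Your proof is correct and follows the paper's argument: the lower bound comes from \Cref{thm:binary-coupling}, and the upper bound comes from transporting the minimal generating typing, via composition-permutation invariance, onto a $\Knt$-subset of any candidate set of size exceeding $\Knt$, which violates condition~(a) because $\vbase = 0$. Your preliminary case split on whether $\beta^*$ carries an \textsc{identity} slot is harmless but unnecessary, since the transport works for every such set regardless of condition~(b); your requirement that the register assignment be permuted along with the types makes explicit a step the paper leaves implicit.
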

\begin{proof}
Under composition-permutation invariance the per-iteration function, and hence
the output, is unchanged by permuting the slot index set, so the generating
assignment on the minimal $\Knt$-subset transports, by a permutation of
indices, to a generating assignment on \emph{every} $\Knt$-subset of slots.
Let $S$ be any candidate coupled set with $|S| > \Knt$ and pick a proper subset
$S' \subsetneq S$ with $|S'| = \Knt$; the transported assignment on $S'$
achieves $\score_{\mathrm{bin}} = 1 > 0 = \vbase$, violating condition~(a) for
$S$. Hence no set of size exceeding $\Knt$ is coupled, and with
$\kappa_{\mathrm{bin}}\ge \Knt$ from \Cref{thm:binary-coupling},
$\kappa_{\mathrm{bin}}=\Knt$.
\end{proof}

\begin{corollary}[ISA-relative Kolmogorov upper bound on $\kappa_{\mathrm{bin}}$]\label{cor:kol-upper}
Let $K(s)$ be the Kolmogorov complexity of $s$ on a fixed universal machine.
For a register machine with catalog size $T$, frame $F^*$, and minimal type
assignment of $\Knt$ non-\textsc{identity} slots,
\begin{equation}\label{eq:kol-gen}
  K(s) \le \Knt \log_2 T + K(F^* \mid \Knt, N) + K(\rho^* \mid F^*, \tau^*) + O(\log N),
\end{equation}
where $K(F^* \mid \Knt, N)$ is the description length of the frame given the slot
count and length, and $K(\rho^* \mid F^*, \tau^*)$ that of the register
assignment.
\end{corollary}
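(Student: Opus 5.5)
The plan is to exhibit an explicit two-part description of $s$ on the fixed universal prefix machine and bound its length term by term. The decoder, whose constant size is absorbed into the $O(\log N)$ slack, is a fixed program that embeds the register-machine interpreter $\exptr$ of \Cref{def:execution}. It reads, in order, self-delimiting encodings of $N$ and $\Knt$; the frame $F^*$; the type assignment $\tau^*$; and the register assignment $\rho^*$. It then runs $\exptr(F^*, \tau^*, \rho^*, N)$ and outputs the result, which equals $s$ by hypothesis. Since $K(s)$ is at most the length of any such program, it suffices to bound the length of each encoded piece.

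The steps, in order, are as follows.

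\textbf{Length and slot count.} Encode $N$ and $\Knt$ with a universal (Elias-type) integer code. This costs $O(\log N)$ bits, since $\Knt \le L$ and we may assume the frame size is polynomial in $N$. Otherwise the frame term already dominates, and it is charged separately.

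\textbf{Frame.} Given $(\Knt, N)$, encode $F^*$ by a shortest conditional program, costing $K(F^* \mid \Knt, N)$ plus $O(1)$ for self-delimitation on a prefix machine.

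\textbf{Types.} This is the step producing the $\Knt \log_2 T$ term. Knowing $F^*$, the decoder knows the slot list $l_1,\dots,l_L$. The minimal assignment has exactly $\Knt$ non-\textsc{identity} slots, by the realization hypothesis as in \Cref{thm:binary-coupling}. For each of these we write its type index in $\lceil \log_2 T\rceil$ bits; the rounding can be avoided by block-coding the $\Knt$ indices as one integer below $T^{\Knt}$. This gives $\Knt \log_2 T + O(1)$ bits for the types.

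\textbf{Registers.} Given $(F^*, \tau^*)$, the field set $\Fields(F^*,\tau^*)$ is determined, so $\rho^*$ costs $K(\rho^* \mid F^*, \tau^*)$.

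The main obstacle is the type step. We must also say \emph{which} $\Knt$ slots are non-\textsc{identity}. Doing so costs $\log_2\binom{L}{\Knt}$ bits, which is not covered by $O(\log N)$ in general. I would resolve this by folding the support into the frame term. I read $K(F^*\mid \Knt,N)$ as the description of the frame \emph{with its non-identity slots marked}, which is consistent with the paper's usage, where $F^*$ is the control frame of the minimal program. I would state this convention explicitly. An alternative is to note that identity slots can be pruned from the frame without changing $\exptr$, since \textsc{identity} performs no read and no write. The pruned frame then has exactly $\Knt$ leaves, so no support needs to be specified.

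Finally, the self-delimiting overheads of concatenating the four pieces are each $O(\log)$ of the piece lengths, hence $O(\log N)$ under the same polynomial-size assumption. Summing the pieces gives \eqref{eq:kol-gen}.
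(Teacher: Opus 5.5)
This is the paper's own argument: concatenate self-delimiting descriptions of the lengths, the frame, the $\Knt$ non-\textsc{identity} types at $\log_2 T$ bits each, and $\rho^*$, then run the interpreter. Your observation that naming \emph{which} slots are non-\textsc{identity} costs $\log_2\binom{L}{\Knt}$ bits is a genuine lacuna in the paper's one-line proof, and your convention that $F^*$ carries that marking (or simply has exactly $\Knt$ leaves, as in the loop-sequential specialization that follows) closes it correctly.

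Two small repairs are needed. First, the pruning variant works only if $F^*$ is taken to be the pruned frame from the outset, because the pruned frame's description can exceed $K(F^* \mid \Knt, N)$ by as much as that same binomial term. Second, the $O(\log \Knt)$ cost of encoding $\Knt$ is not justified by the frame term, which can be tiny even when $L$ is huge. Instead, note that if $\Knt \ge N\log_2|\Sigma|$ the inequality already follows from $K(s) \le N\log_2|\Sigma| + O(\log N)$, and otherwise $\log \Knt = O(\log N)$.
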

\begin{proof}
Concatenate descriptions. A minimal generating program is specified by its
frame ($K(F^*\mid \Knt,N)$ bits), its $\Knt$ non-\textsc{identity} slot types ($\Knt\log_2 T$ bits), and its register assignment ($K(\rho^*\mid F^*,\tau^*)$ bits), plus
$O(\log N)$ for lengths; running it outputs $s$, so $K(s)$ is at most this
total.
\end{proof}

\begin{corollary}[Simplified form for loop-sequential frames]\label{cor:kol-simplified}
If $F^*$ is loop-sequential (so $K(F^* \mid \Knt, N) = O(1)$) and $\rho^*$ is
uniquely determined (so $K(\rho^* \mid F^*, \tau^*) = O(\log N)$), then
\begin{equation}\label{eq:kol-simple}
  \kappa_{\mathrm{bin}} \log_2 T \ge K(s) - O(\log N). \tag{$*$}
\end{equation}
\end{corollary}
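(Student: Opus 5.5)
The plan is to specialize \Cref{cor:kol-upper} and then chain it with \Cref{thm:binary-coupling}. Starting from \eqref{eq:kol-gen}, the two structural hypotheses remove the frame and register terms. Because $F^*$ is loop-sequential, the frame is a fixed \textsc{seq}-under-loop shape determined by its slot count. Given $\Knt$ and $N$ it can be described in $K(F^* \mid \Knt, N) = O(1)$ bits. Because $\rho^*$ is uniquely determined by $(F^*, \tau^*)$, a decoder can recover it by enumerating register assignments and keeping the unique one whose execution matches. The only extra information needed is $N$, which is already paid in the $O(\log N)$ term, so $K(\rho^* \mid F^*, \tau^*) = O(\log N)$. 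Substituting both gives $K(s) \le \Knt \log_2 T + O(\log N)$.

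Next, \Cref{thm:binary-coupling} gives $\kappa_{\mathrm{bin}} \ge \Knt$. Its hypotheses are that $\tau^*$ realizes the syntactic minimality and that the baseline is all-\textsc{identity}. These are the standing hypotheses inherited through \Cref{cor:kol-upper}, which refers to the minimal type assignment. Multiplying by $\log_2 T > 0$ gives $\kappa_{\mathrm{bin}} \log_2 T \ge \Knt \log_2 T \ge K(s) - O(\log N)$, which is \eqref{eq:kol-simple}.

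The main obstacle is justifying the $O(\log N)$ bound for $\rho^*$. The decoder must know what it is matching against. If the target $s$ itself were used to pin down $\rho^*$, the argument would be circular. I would therefore read ``uniquely determined'' as meaning that $\rho^*$ is determined by $(F^*, \tau^*)$ up to functional equivalence (\Cref{def:func-equiv}). Since any representative of that class produces the same output, the decoder can pick a canonical representative, such as the lexicographically least register assignment, without consulting $s$. That representative still generates $s$, so the cost is $O(1)$ beyond the $O(\log N)$ already charged. A smaller point: $W$ registers and $L$ slots are framework constants absorbed into $O(\log N)$, assuming $L \le \operatorname{poly}(N)$.
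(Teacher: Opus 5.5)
Your argument is correct and is the paper's own proof: substitute the two stated bounds into \Cref{cor:kol-upper} to get $K(s) \le \Knt \log_2 T + O(\log N)$, then apply $\Knt \le \kappa_{\mathrm{bin}}$ from \Cref{thm:binary-coupling}. The obstacle in your last paragraph does not arise, because the corollary assumes $K(F^* \mid \Knt, N) = O(1)$ and $K(\rho^* \mid F^*, \tau^*) = O(\log N)$ outright (that is what its parenthetical ``so'' clauses say), so you need not build a decoder for $\rho^*$ at all; your first decoder sketch, which matches against $s$, would indeed have been circular.
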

\begin{proof}
Specialize \Cref{cor:kol-upper}. Loop-sequential frames have
$K(F^*\mid \Knt,N)=O(1)$ and a determined $\rho^*$ has $K(\rho^*\mid
F^*,\tau^*)=O(\log N)$, leaving $K(s)\le \kappa_{\mathrm{bin}}\log_2 T+O(\log N)$
since $\Knt\le\kappa_{\mathrm{bin}}$ by \Cref{thm:binary-coupling}.
\end{proof}

\begin{remark}[The relationship is one-directional]\label{rem:one-directional}
The converse $\Knt \log_2 T \le O(K(s))$ does \emph{not} hold; for $s = 0^N$ one has
$\Knt = 1$ and $\kappa_{\mathrm{bin}} \log_2 T = \log_2 T$, while $K(s) = O(\log N)$.
Representation cost $\Knt \log_2 T$ \emph{upper-bounds} $K(s)$ (more structure to
describe means a larger description) and $K(s)$ \emph{lower-bounds} search;
these compose in one direction only. Equality $\kappa_{\mathrm{bin}} \log_2 T
\asymp K(s)$ fails in general precisely because of witness multiplicity
($\msig$), many distinct minimal programs computing the same compressible
$s$.
\end{remark}

\subsection{Typed Dataflow and Structural Sufficiency}

Throughout this section the parameterized program is
$P_0 = (\Knt, N, T, \tau_0)$ with $\Knt$ searchable slots, and $\exptr(\tau, N)[t]$
denotes the $t$-th output symbol under typing $\tau \in T^\Knt$.

\begin{definition}[Typed dataflow graph]\label{def:dataflow}
For $\tau \in T^\Knt$, the \emph{typed dataflow graph} is
$D(P_0, \tau) = (V, E_\tau)$ with $V = \{1, \dots, \Knt\} \cup \{\mathrm{OUT}\}$
and $(f, g) \in E_\tau$ iff some register read by $\tau(g)$ is written by
$\tau(f)$ under some execution trace (a static over-approximation); edges
$(f, \mathrm{OUT})$ are defined analogously with $\mathrm{OUT}$ reading the
output register. Write $D(P^*) := D(P_0, \tau^*)$.
\end{definition}

\begin{definition}[Type-conservative dependency set]\label{def:dep}
Let $D_N(P_0, \tau)$ be $D(P_0, \tau)$ unrolled over $N$ iterations with
loop-carried edges from iteration $i$ to $i+1$ whenever a written register is
read next. For $t \in \{1, \dots, N\}$,
\[
  \mathrm{dep}(t) = \{\, f \in \{1, \dots, \Knt\} : \exists\, \tau \in T^\Knt
  \text{ s.t. some } f^{(i)},\, i \le t, \text{ has a directed path in }
  D_N(P_0, \tau) \text{ to } \mathrm{OUT}^{(t)} \,\}.
\]
The $\tau^*$-specific set $\mathrm{dep}^{\tau^*}(t)$ is contained in
$\mathrm{dep}(t)$.
\end{definition}

\begin{definition}[Catalog fan-in]\label{def:fanin}
The \emph{fan-in} $\varphi(T)$ of a catalog $T$ is the number of equivalence
classes under ``$t \sim t'$ iff $t, t'$ read and write the same register
sets.'' A single-footprint catalog (e.g.\ graded translation
$\{\textsc{add}_0, \dots, \textsc{add}_{T-1}\}$) has $\varphi(T) = 1$.
\end{definition}

\begin{proposition}[Single-slot stability]\label{prop:single-slot}
Let $f \notin \mathrm{dep}(t)$. For every pair $\tau, \tau' \in T^\Knt$ agreeing
off $\{f\}$, $\ \exptr(\tau, N)[t] = \exptr(\tau', N)[t]$. In particular replacing
$\tau^*(f)$ by any catalog type preserves $\exptr(P^*, N)[t] = s[t]$.
\end{proposition}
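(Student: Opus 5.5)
The plan is to argue by induction on the iteration index, using the unrolled graph $D_N(P_0,\tau)$ from \Cref{def:dep} as the bookkeeping device. Fix $t$ and $f \notin \mathrm{dep}(t)$, and let $\tau,\tau'$ agree off $\{f\}$. The negation of membership in $\mathrm{dep}(t)$ is quantified over \emph{all} typings in $T^\Knt$. So for both $\tau$ and $\tau'$ there is no directed path in the respective unrolled graph from any copy $f^{(i)}$, $i \le t$, to $\mathrm{OUT}^{(t)}$. This universality is what lets us compare the two executions, even though changing the type at $f$ changes the edges incident to $f$.

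The key step is a standard dataflow-soundness claim. For each typing $\tau$, the value of any register at any point of the execution up to iteration $t$ is a function only of the initial loads and of the operations at the nodes $g^{(i)}$ that have a directed path to that point in $D_N(P_0,\tau)$. This uses that the dataflow graph is a static over-approximation of which writes can feed which reads. I would prove it by induction over the execution order. Each operation computes its write from the registers it reads, and each read register's current value was last written by some node with an edge into the reader, by the definition of $E_\tau$ and of the loop-carried edges.

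Applying the claim to $\mathrm{OUT}^{(t)}$, the output symbol at $t$ is determined by the operations at the ancestors of $\mathrm{OUT}^{(t)}$. Under $\tau$, no copy of $f$ is among those ancestors. Now run the two executions in parallel and show by induction that every ancestor node of $\mathrm{OUT}^{(t)}$ in $D_N(P_0,\tau)$ computes the same values under $\tau$ and under $\tau'$. Such a node is not a copy of $f$, so its type is the same in both typings. Its inputs come from its own ancestors, which by induction agree, provided those writers are the same in both runs.

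The main obstacle is exactly this proviso: a register read could have been last written by $f$ under $\tau'$ but by a different node under $\tau$, since changing $f$'s type changes which registers $f$ writes. Here I invoke the $\tau'$ half of the hypothesis. If $f^{(i)}$ wrote a register later read along an ancestor chain of $\mathrm{OUT}^{(t)}$ under $\tau'$, that would give a path from $f^{(i)}$ to $\mathrm{OUT}^{(t)}$ in $D_N(P_0,\tau')$, which is impossible. Hence in both runs the last writers on the ancestor cone coincide, and they avoid $f$. The induction closes and gives $\exptr(\tau,N)[t]=\exptr(\tau',N)[t]$.

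The particular case follows by taking $\tau=\tau^*$ and noting that $\exptr(P^*,N)[t]=s[t]$.
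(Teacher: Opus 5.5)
Your proposal is correct and follows the paper's proof. Both arguments rest on two facts: $\mathrm{dep}(t)$ quantifies over all typings, so $f$ has no dataflow path to $\mathrm{OUT}^{(t)}$ under either $\tau$ or $\tau'$; and the static dataflow graph is sound, so the output at $t$ is fixed by the operations in its ancestor cone. Your parallel induction with the last-writer check spells out what the paper compresses into ``hence $\sigma(f)$ is immaterial''. The one step you leave tacit is that this cone is the same in $D_N(P_0,\tau)$ and $D_N(P_0,\tau')$, which holds because edges between nodes other than copies of $f$ do not depend on the type at $f$.
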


\begin{proof}
Since $f \notin \mathrm{dep}(t)$, no catalog typing gives $f$ a path to
$\mathrm{OUT}^{(t)}$. For any $\sigma$ fixing the slots $g \ne f$ to the common
value and choosing $\sigma(f)$ arbitrarily, $\exptr(\sigma, N)[t]$ depends only on
register writes along paths reaching $\mathrm{OUT}^{(t)}$, none through $f$;
hence $\sigma(f)$ is immaterial. Applying this to $\sigma = \tau$ and
$\sigma = \tau'$ gives the claim.
\end{proof}

\begin{lemma}[Structural sufficiency]\label{lem:struct-suff}
For proper $S' \subsetneq \{1, \dots, \Knt\}$ and $\beta : S' \to T$, suppose
(i) $\mathrm{dep}(t) \subseteq S'$ and (ii)
$\beta|_{\mathrm{dep}(t)} = \tau^*|_{\mathrm{dep}(t)}$. Then
$\exptr(P_0[S'\!\to\!\beta], N)[t] = s[t]$.
\end{lemma}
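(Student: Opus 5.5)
The plan is to show that output symbol $t$ of $P_0[S'\!\to\!\beta]$ agrees with output symbol $t$ of $P^*$. We do this by walking from $\tau^*$ to the perturbed typing one slot at a time, and at each step using \Cref{prop:single-slot} to certify that position $t$ does not change. Write $\sigma := \tau_0[S'\!\to\!\beta]$ for the typing of $P_0[S'\!\to\!\beta]$. It agrees with $\beta$ on $S'$ and with the baseline $\tau_0$ on the complement. We compare $\sigma$ with $\tau^*$, which generates $s$, so that $\exptr(\tau^*, N)[t] = s[t]$.

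First we locate the disagreements. By hypothesis (ii), $\sigma$ and $\tau^*$ agree on $\mathrm{dep}(t)$. Hypothesis (i) gives $\mathrm{dep}(t) \subseteq S'$, and on $S'$ the typing $\sigma$ equals $\beta$. Hence the set $U$ of slots where $\sigma \ne \tau^*$ lies in $\{1,\dots,\Knt\} \setminus \mathrm{dep}(t)$. This covers both the slots of $S' \setminus \mathrm{dep}(t)$, where $\beta$ may differ from $\tau^*$, and the slots outside $S'$, where the baseline $\tau_0$ may differ from $\tau^*$.

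Next we enumerate $U = \{f_1, \dots, f_r\}$ and define the interpolating typings $\tau^{(0)} = \tau^*, \tau^{(1)}, \dots, \tau^{(r)} = \sigma$. Each $\tau^{(j)}$ is obtained from $\tau^{(j-1)}$ by resetting slot $f_j$ to its value under $\sigma$. Consecutive typings agree off the single slot $f_j$, and $f_j \notin \mathrm{dep}(t)$. \Cref{prop:single-slot} therefore gives $\exptr(\tau^{(j-1)}, N)[t] = \exptr(\tau^{(j)}, N)[t]$. Chaining these equalities yields $\exptr(\sigma, N)[t] = \exptr(\tau^*, N)[t] = s[t]$, which is the claim.

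The main obstacle is justifying that \Cref{prop:single-slot} applies at every intermediate typing, not only at $\tau^*$. This is where the type-conservative definition of $\mathrm{dep}(t)$ matters. Because $\mathrm{dep}(t)$ quantifies over \emph{all} $\tau \in T^\Knt$, the condition $f \notin \mathrm{dep}(t)$ is a property of the parameterized program $P_0$ alone. It is not tied to the current typing, so it holds uniformly along the chain, and \Cref{prop:single-slot} is stated for arbitrary pairs $\tau, \tau'$ agreeing off $\{f\}$. A second point to check is that the register assignment is held fixed. We treat $\exptr(\cdot, N)$ as in the section's convention, with the register assignment of $P_0$ fixed and shared by all the typings compared. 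If a typing change alters the field set, the static over-approximation in \Cref{def:dataflow} has to be taken over all catalog types, and that is exactly what makes the single-slot step sound.
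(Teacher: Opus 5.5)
Your proof is correct and is essentially the paper's argument: the paper also builds hybrids from $\tau^*$ to the perturbed typing, resetting one slot of $\{1,\dots,\Knt\}\setminus\mathrm{dep}(t)$ at a time and applying single-slot stability at each step. The only difference is cosmetic: you walk over just the disagreement set $U$, while the paper walks over the whole complement of $\mathrm{dep}(t)$, and this changes nothing.
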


\begin{proof}
Let $\tau_\beta$ type $P_0[S'\!\to\!\beta]$ ($\beta$ on $S'$, \textsc{identity}
off $S'$) and let $A := \{1, \dots, \Knt\} \setminus \mathrm{dep}(t)$. By (i),
$\{1, \dots, \Knt\} \setminus S' \subseteq A$; by (ii), $\tau_\beta$ and $\tau^*$
agree off $A$. Enumerate $A = \{g_1, \dots, g_m\}$ and form hybrids
$\tau^{(0)} = \tau^*$, $\tau^{(j)} = \tau^{(j-1)}$ with $g_j$ reset to
$\tau_\beta(g_j)$. Each $g_j \in A$ so $g_j \notin \mathrm{dep}(t)$, and
\Cref{prop:single-slot} gives
$\exptr(\tau^{(j-1)}, N)[t] = \exptr(\tau^{(j)}, N)[t]$. Chaining,
$s[t] = \exptr(\tau^*, N)[t] = \exptr(\tau^{(m)}, N)[t] = \exptr(\tau_\beta, N)[t]$,
since $\tau^{(m)} = \tau_\beta$.
\end{proof}

\begin{definition}[Match decomposition]\label{def:match-decomp}
For $(S', \beta)$, with $m(P^*, \beta, S') = |\{\, t :
\exptr(P_0[S'\!\to\!\beta], N)[t] = s[t] \,\}|$ the match count,
\[
  m_{\mathrm{struct}}(P^*, \beta, S') = \bigl|\{\, t : \mathrm{dep}(t)
  \subseteq S' \text{ and } \beta|_{\mathrm{dep}(t)} =
  \tau^*|_{\mathrm{dep}(t)} \,\}\bigr|,
  \qquad
  c(P^*, S', \beta) = m - m_{\mathrm{struct}}.
\]
\end{definition}

\subsection{Essential Engagement}

\begin{definition}[Essential engagement]\label{def:essential}
$P^*$ is \emph{essentially engaged with parameter} $\varepsilon \in [0, 1)$ if
for every slot $f$, $\ |\{\, t \in [N] : f \notin \mathrm{dep}(t) \,\}| \le
\varepsilon N$.
\end{definition}

\begin{proposition}[Structural corruption bound]\label{prop:struct-corruption}
If $P^*$ is essentially engaged with parameter $\varepsilon$, then for every
proper $S' \subsetneq \{1, \dots, \Knt\}$,
$\ |\{\, t : \mathrm{dep}(t) \subseteq S' \,\}| \le \varepsilon N$.
\end{proposition}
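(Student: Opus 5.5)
The plan is to fix a single slot and a single index, and let the essential-engagement budget for that slot do all the work. Let $S' \subsetneq \{1, \dots, \Knt\}$ be proper. Because the inclusion is strict, some slot $f \in \{1,\dots,\Knt\} \setminus S'$ exists, and I fix one such $f$. The whole argument reduces to a set containment followed by the counting bound of \Cref{def:essential}.

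The containment comes from contraposition. Take any $t$ with $\mathrm{dep}(t) \subseteq S'$. Since $f \notin S'$, we must have $f \notin \mathrm{dep}(t)$. Therefore
\[
  \{\, t : \mathrm{dep}(t) \subseteq S' \,\} \subseteq \{\, t \in [N] : f \notin \mathrm{dep}(t) \,\}.
\]
By \Cref{def:essential}, applied to this particular slot $f$, the right-hand set has cardinality at most $\varepsilon N$. Monotonicity of cardinality under inclusion then gives the claim. The bound is uniform in $S'$ because essential engagement holds for every slot, so whichever $f$ is missing from $S'$, the same $\varepsilon N$ applies.

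There is no real obstacle. The only point needing care is that properness of $S'$ is exactly what guarantees a missing slot $f$; for $S' = \{1,\dots,\Knt\}$ the statement would fail, since then every $t$ qualifies. I would also note the remark that follows from \Cref{def:match-decomp}: this shows $m_{\mathrm{struct}}(P^*, \beta, S') \le \varepsilon N$ for every proper $S'$ and every $\beta$, which is the form presumably used downstream. The reason is that $m_{\mathrm{struct}}$ counts a subset of $\{t : \mathrm{dep}(t) \subseteq S'\}$.
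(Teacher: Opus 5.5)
Your proposal is correct and is essentially the paper's proof. The paper also picks a slot $f^*$ outside the proper subset $S'$, observes that $\mathrm{dep}(t)\subseteq S'$ forces $f^*\notin\mathrm{dep}(t)$, and bounds the count by the essential-engagement budget for $f^*$. Your closing remark that $m_{\mathrm{struct}}\le\varepsilon N$ follows from this is exactly how the Wire Corruption Lemma uses it.
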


\begin{proof}
Pick $f^* \in \{1, \dots, \Knt\} \setminus S'$ (nonempty as $S'$ is proper). If
$\mathrm{dep}(t) \subseteq S'$ then $f^* \notin \mathrm{dep}(t)$, so the count
is at most $|\{\, t : f^* \notin \mathrm{dep}(t) \,\}| \le \varepsilon N$.
\end{proof}

\begin{definition}[Target frequency ceiling]\label{def:freq-ceiling}
$\mu(s) = \max_{\sigma \in \Sigma} |\{\, t \in [N] : s[t] = \sigma \,\}| / N$.
\end{definition}

\begin{definition}[Coincidence-boundedness]\label{def:coincidence}
$P^*$ is \emph{coincidence-bounded against $s$ with residual} $\gamma : \mathbb{N}
\to \mathbb{N}$ if $c(P^*, S', \beta) \le \mu(s) N + \gamma(N)$ for every proper
$S'$ and every $\beta$, with $\gamma(N) = o(N)$.
\end{definition}

\begin{lemma}[Wire Corruption Lemma]\label{lem:wire-corruption}
For every proper $S' \subsetneq \{1, \dots, \Knt\}$ and every $\beta : S' \to T$,
the match count satisfies $m(P^*, \beta, S') \le m_{\mathrm{struct}} + c \le
\varepsilon N + \mu(s) N + \gamma(N)$, where the structural term is bounded by
\Cref{prop:struct-corruption} and the coincidence term by
\Cref{def:coincidence}.
\end{lemma}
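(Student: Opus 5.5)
The plan is to split the match count $m(P^*, \beta, S')$ according to \Cref{def:match-decomp} and bound each piece with the result already in hand. By definition $m = m_{\mathrm{struct}} + c$, so the first inequality $m \le m_{\mathrm{struct}} + c$ is an identity and holds with equality. All the work is in bounding the two summands separately.

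For the structural term, I observe that $m_{\mathrm{struct}}(P^*, \beta, S')$ counts indices $t$ with $\mathrm{dep}(t) \subseteq S'$ \emph{and} $\beta|_{\mathrm{dep}(t)} = \tau^*|_{\mathrm{dep}(t)}$. It is therefore at most $|\{t : \mathrm{dep}(t) \subseteq S'\}|$. Since $S'$ is proper, \Cref{prop:struct-corruption} bounds this by $\varepsilon N$ under essential engagement with parameter $\varepsilon$. \Cref{lem:struct-suff} also confirms that each such $t$ really is a match, so it is consistent that these indices are counted inside $m$. For the coincidence term, \Cref{def:coincidence} gives $c(P^*, S', \beta) \le \mu(s) N + \gamma(N)$ directly, for every proper $S'$ and every $\beta$. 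Adding the two bounds yields $m \le \varepsilon N + \mu(s) N + \gamma(N)$.

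The only step needing care is the hypotheses. The lemma is stated as if unconditional, but the bound uses both essential engagement (parameter $\varepsilon$) and coincidence-boundedness (residual $\gamma$). I would therefore state at the outset that $P^*$ is assumed essentially engaged with parameter $\varepsilon$ and coincidence-bounded against $s$ with residual $\gamma$, as the phrase ``where the structural term is bounded by\ldots'' implicitly assumes.

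I expect no real obstacle beyond this. The lemma is an assembly of \Cref{prop:struct-corruption} and \Cref{def:coincidence}, with the decomposition itself definitional. In particular, the coincidence term carries no sign issue, because $c$ is bounded above by assumption regardless of whether it could be negative.
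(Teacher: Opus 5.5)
Your proposal is correct and follows the paper's proof step for step: the definitional split $m = m_{\mathrm{struct}} + c$ from \Cref{def:match-decomp}, the bound $m_{\mathrm{struct}} \le |\{t : \mathrm{dep}(t) \subseteq S'\}| \le \varepsilon N$ via \Cref{prop:struct-corruption}, and $c \le \mu(s)N + \gamma(N)$ via \Cref{def:coincidence}, summed. You are also right that the lemma tacitly assumes essential engagement and coincidence-boundedness, i.e.\ structural coupling (\Cref{def:struct-coupled}), which is exactly the setting in which the paper invokes it in \Cref{thm:k-growth}.
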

\begin{proof}
$m = m_{\mathrm{struct}} + c$ by \Cref{def:match-decomp}. The structural term
counts iterations with $\mathrm{dep}(t)\subseteq S'$ \emph{and} agreement on
$\mathrm{dep}(t)$, so it is at most $|\{\, t : \mathrm{dep}(t)\subseteq S'
\,\}| \le \varepsilon N$ by \Cref{prop:struct-corruption}; the coincidence term
is at most $\mu(s)N + \gamma(N)$ by \Cref{def:coincidence}. Summing gives the
bound.
\end{proof}

\subsection{The Accumulator-Driven Class}

\begin{definition}[Accumulator-driven program]\label{def:accumulator}
$P^*$ is \emph{accumulator-driven} on register $r_{\mathrm{acc}}$ if:
(i) every slot $f$ implements $r_{\mathrm{acc}} \leftarrow g_f(r_{\mathrm{acc}},
\dots)$ that (i.a) reads and writes $r_{\mathrm{acc}}$ and (i.b) induces a
non-constant single-variable map on $r_{\mathrm{acc}}$ for some reachable
auxiliary context; (ii) \textsc{out} reads $r_{\mathrm{acc}}$; (iii) slots
execute in canonical order $1, \dots, \Knt$ then \textsc{out}, once per iteration.
\end{definition}

\begin{proposition}[Essential engagement of accumulator-driven programs]\label{prop:ee-accumulator}
An accumulator-driven program satisfies \Cref{def:essential} with
$\varepsilon = 0$.
\end{proposition}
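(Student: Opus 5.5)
To establish \Cref{prop:ee-accumulator}, I would prove the stronger claim that every slot $f$ lies in $\mathrm{dep}(t)$ for every $t \in [N]$. The set in \Cref{def:essential} is then empty, and its size $0 \le 0 \cdot N$ gives $\varepsilon = 0$. Since $\mathrm{dep}(t)$ in \Cref{def:dep} is existential over typings, $f \in \mathrm{dep}(t)$ only requires one $\tau \in T^{\Knt}$ and one iteration index $i \le t$ with a directed path from $f^{(i)}$ to $\mathrm{OUT}^{(t)}$ in the unrolled graph $D_N(P_0,\tau)$. The natural choice is $\tau = \tau^*$, the accumulator-driven typing, together with $i = t$.

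\textbf{Step 1: the within-iteration chain.} Under $\tau^*$, condition (i.a) says every slot reads and writes $r_{\mathrm{acc}}$. Condition (iii) says slots execute in the order $1,\dots,\Knt$ and then \textsc{out}. So for consecutive slots $g$ and $g+1$, the register $r_{\mathrm{acc}}$ written by $\tau^*(g)$ is read by $\tau^*(g+1)$, which gives the edge $(g,g+1) \in E_{\tau^*}$. Condition (ii) says \textsc{out} reads $r_{\mathrm{acc}}$, which gives the edge $(\Knt, \mathrm{OUT})$. Chaining these edges inside iteration $t$ yields the path $f^{(t)} \to (f+1)^{(t)} \to \cdots \to \Knt^{(t)} \to \mathrm{OUT}^{(t)}$. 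For $f = \Knt$ the path is the single edge to $\mathrm{OUT}^{(t)}$.

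\textbf{Step 2: conclude.} Step 1 holds for every $t$ and every $f$, so $\mathrm{dep}(t) = \{1,\dots,\Knt\}$ for all $t$. Hence $|\{t : f\notin\mathrm{dep}(t)\}| = 0$ for every slot $f$, and $\varepsilon = 0$. The loop-carried edges are not needed here, although they would also supply paths from earlier iterations $i < t$.

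\textbf{Main obstacle.} The delicate step is making sure the edge relation of \Cref{def:dataflow} is actually witnessed, i.e.\ that the write of $r_{\mathrm{acc}}$ by slot $g$ is read by slot $g+1$ ``under some execution trace.'' Because the definition is a static over-approximation, the register-footprint argument from (i.a) should suffice. I would note that the non-constancy clause (i.b) is not needed for this structural statement. Clause (i.b) matters for the semantic dependency used later, and I would flag it there rather than invoke it here.
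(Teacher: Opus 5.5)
Your proof is correct and follows the paper's argument: both use $\tau^*$ to build the chain $f \to f+1 \to \cdots \to \Knt \to \mathrm{OUT}$ through $r_{\mathrm{acc}}$ from (i.a), (ii) and the canonical order (iii), and neither needs (i.b). The only difference is that the paper builds this chain at iteration $1$ and, for $t \ge 2$, extends it across iterations with the loop-carried edge $\Knt^{(t-1)} \to 1^{(t)}$, whereas you take $i = t$ directly. \Cref{def:dep} permits that choice, so the loop-carried edges are indeed unnecessary.
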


\begin{proof}
Fix a slot $f$; we show $f \in \mathrm{dep}(t)$ for every $t$, via a path in
$D_N(P^*)$ from $f^{(i)}$, $i \le t$, to $\mathrm{OUT}^{(t)}$. For $t = 1$, by
(i.a) every consecutive pair $(g, g+1)$ with $f \le g < \Knt$ gives an edge
$g^{(1)} \to (g+1)^{(1)}$ through $r_{\mathrm{acc}}$, and by (ii)
$\Knt^{(1)} \to \mathrm{OUT}^{(1)}$; concatenating yields
$f^{(1)} \to \cdots \to \Knt^{(1)} \to \mathrm{OUT}^{(1)}$, so
$f \in \mathrm{dep}(1)$. For $t \ge 2$, the loop-carried edge
$\Knt^{(t-1)} \to 1^{(t)}$ (slot $\Knt$ writes $r_{\mathrm{acc}}$ at $t-1$, slot $1$
reads it at $t$) extends the path across iterations to $\mathrm{OUT}^{(t)}$.
Hence $|\{\, t : f \notin \mathrm{dep}(t) \,\}| = 0$, and the uniform bound is
$\varepsilon = 0$.
\end{proof}

\begin{definition}[Register-connected program]\label{def:register-connected}
$P^*$ is \emph{register-connected} if in $D(P^*)$ every slot $f$ has a directed
path to \textsc{out}.
\end{definition}

\begin{remark}\label{rem:reg-conn}
Every accumulator-driven program is register-connected; the converse fails (a
$3$-slot program routing through distinct registers can be connected without a
common accumulator). Every minimal program (no slot replaceable by
\textsc{identity}) is register-connected, since a dead-code slot would
contradict minimality.
\end{remark}

\begin{proposition}[Essential engagement of register-connected programs]\label{prop:ee-connected}
A register-connected program with uniform-footprint catalog satisfies
\Cref{def:essential} with $\varepsilon = 0$.
\end{proposition}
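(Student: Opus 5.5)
The plan mirrors the proof of \Cref{prop:ee-accumulator}. There the shared accumulator supplied the path; here it will come from register-connectedness together with the uniform footprint. Fix a slot $f$ and an iteration $t$. We must exhibit some typing $\tau\in T^{\Knt}$ and some copy $f^{(i)}$ with $i\le t$ that has a directed path in $D_N(P_0,\tau)$ to $\mathrm{OUT}^{(t)}$. Since $\mathrm{dep}(t)$ quantifies existentially over typings, it suffices to take $\tau=\tau^*$. Then $\mathrm{dep}^{\tau^*}(t)\subseteq\mathrm{dep}(t)$, and we only need $f\in\mathrm{dep}^{\tau^*}(t)$ for every $t$, which gives $\varepsilon=0$.

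\emph{Step 1 (iteration $t$ itself).} By \Cref{def:register-connected}, $D(P^*)$ contains a directed path $f=g_0\to g_1\to\cdots\to g_r\to\mathrm{OUT}$. Each edge $(g_j,g_{j+1})$ says that some register written by $\tau^*(g_j)$ is read by $\tau^*(g_{j+1})$. In the unrolled graph $D_N$, such an edge lifts either to an intra-iteration edge $g_j^{(i)}\to g_{j+1}^{(i)}$, when $g_j$ precedes $g_{j+1}$ in execution order, or to a loop-carried edge $g_j^{(i)}\to g_{j+1}^{(i+1)}$ otherwise. Lifting the whole path from a start copy $f^{(i_0)}$ therefore ends at $\mathrm{OUT}^{(i_0+c)}$, where $c\le r\le\Knt$ counts the backward edges. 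Choosing $i_0=t-c$ gives the required path, provided $t>c$.

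\emph{Step 2 (early iterations).} The obstacle is the boundary case $t\le c$, where the lifted path would have to begin before iteration $1$. This is where the uniform-footprint hypothesis enters. Each slot reads and writes the same register set in every iteration, so the static edges of $D(P^*)$ are present in every iteration copy and loop-carried edges exist at every boundary. My first attempt would be to argue that in a uniform-footprint catalog every write that reaches $\mathrm{OUT}$ can be scheduled within the same iteration: reorder, or choose a $\tau$ within the same footprint class, so that the path is forward in canonical order, as in the accumulator case. That gives $c=0$ and the path lies entirely inside iteration $t$. If reordering is not available, the fallback is to read \Cref{def:essential} as bounding only the finitely many initial iterations. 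That would yield $\varepsilon\le\Knt/N$ rather than $0$, so the exact statement needs the forward-path argument.

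I expect Step 2, justifying that register-connectedness yields a forward path at iteration $1$, to be the main difficulty. I would settle it by using the existential over $\tau$ in $\mathrm{dep}(t)$: with a uniform footprint, all types share read/write sets, so the static over-approximation makes any read-after-write relation visible already within one iteration's ordering.
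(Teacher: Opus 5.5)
Your reduction to $\tau=\tau^*$ and your Step~1 together make up the whole of the paper's argument. The paper lifts the path $f\to\cdots\to\textsc{out}$ of $D(P^*)$ into $D_N$. It uses the uniform footprint only to say that this path exists for every typing; as you observe, membership in $\mathrm{dep}(t)$ does not need that, since $\tau^*$ is one admissible typing. It then asserts $f\in\mathrm{dep}(t)$ for all $t$ ``once the path's iteration-lag is absorbed, exactly as in'' \Cref{prop:ee-accumulator}. It never treats your case $t\le c$, and the analogy does not cover it. In the accumulator case the canonical order (iii) makes the path forward within iteration $1$, so there is no lag to absorb. You have located the one nontrivial point correctly, and the paper's proof contains no idea there that you overlooked.

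The gap is that your Step~2 does not close that case. Under the execution-ordered unrolling your Step~1 uses, it cannot be closed.

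\begin{itemize}
\item The uniform footprint works against you. A slot's read and write sets do not depend on its type, so $D_N(P_0,\tau)$ is the same graph for every $\tau$. The existential over $\tau$ in $\mathrm{dep}(t)$ therefore collapses to $\tau=\tau^*$ and cannot turn a loop-carried edge into an intra-iteration one.
\item Nothing in $\mathrm{dep}(t)$ lets you reorder slots; the frame fixes the order.
\end{itemize}

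Concretely, take body order slot~1, slot~2, \textsc{out}, with catalog types $R_{\mathrm{dst}}\leftarrow R_{\mathrm{src}}+j$, which have a type-invariant footprint. Let slot~1 write $R_1$ from $R_0$, slot~2 update $R_0$ in place, and \textsc{out} read $R_1$. Then $D(P^*)$ contains $2\to1\to\mathrm{OUT}$, so the program is register-connected. Yet the first output is the initial value of $R_0$ plus slot~1's constant, whatever slot~2's type. In $D_N$, slot~2 first reaches $\mathrm{OUT}^{(2)}$. So $2\notin\mathrm{dep}(1)$ and $\varepsilon\ge 1/N$.

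Your closing sentence says the static over-approximation makes every read-after-write visible within one iteration. That contradicts your own Step~1. It holds only if $D_N$ copies the order-insensitive edge set of $D(P_0,\tau)$ into each iteration. Under that reading the path lifts directly into iteration $t$, there is no lag, Step~2 is vacuous, and the uniform footprint is never used.

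So commit to one reading of $D_N$:
\begin{itemize}
\item adopt the copied-edge reading and give that one-line proof; or
\item keep the execution-ordered reading and add the hypothesis that every slot has an execution-order-forward path to \textsc{out}, which is what (iii) supplies for accumulators; or
\item keep the execution-ordered reading and prove only your fallback $\varepsilon\le\Knt/N$. This suffices for \Cref{lem:wire-corruption} and \Cref{thm:k-growth}, where $\varepsilon N$ enters only additively.
\end{itemize}
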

\begin{proof}
By register-connectedness every slot $f$ has a directed path to \textsc{out} in
$D(P^*)$; under a uniform-footprint catalog this path is present for every
typing (the footprint, hence the dataflow edges, is type-invariant), so
$f\in\mathrm{dep}(t)$ for every $t$ once the path's iteration-lag is absorbed,
exactly as in \Cref{prop:ee-accumulator}. Hence $\varepsilon=0$.
\end{proof}

\subsection{Structural Coupling and Emergence}

\begin{definition}[Structurally coupled program]\label{def:struct-coupled}
$P^*$ is \emph{structurally coupled against $s$ with parameters
$(\varepsilon, \gamma)$} if it is essentially engaged with parameter
$\varepsilon$ (\Cref{def:essential}) and coincidence-bounded against $s$ with
residual $\gamma$ (\Cref{def:coincidence}).
\end{definition}

\begin{theorem}[$k$-Growth]\label{thm:k-growth}
Let $P^* = (F^*, \tau^*, \rho^*)$ generate $s \in \Sigma^N$ and be structurally
coupled against $s$ with parameters $(\varepsilon, \gamma)$, where
$\varepsilon + \mu(s) < 1$. Let $\vbase = m(P^*, \emptyset, \emptyset) =
|\{\, t \in [N] : s[t] = x_0 \,\}|$ be the match count of the
all-\textsc{identity} baseline ($x_0$ the initial register value); note
$\vbase \le \mu(s) N$. Then the set $S$ of $\Knt$ non-\textsc{identity} slots is
$(\delta_0, \Delta_0)$-$\Knt$-coupled with respect to
$(\score_{\mathrm{match}}, \alpha_0)$, with
\[
  \delta_0 = (\varepsilon + \mu(s)) N + \gamma(N) - \vbase, \qquad
  \Delta_0 = N - \vbase,
\]
and correction margin
\[
  \Delta_0 - \delta_0 = (1 - \varepsilon - \mu(s)) N - \gamma(N) = \Omega(N).
\]
In particular $\kappa_{\mathrm{match}}^{(\delta_0, \Delta_0)} \ge \Knt$.
\end{theorem}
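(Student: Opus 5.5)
The plan is to verify the two conditions of \Cref{def:coupling-set-relaxed}, in its register-machine instantiation \Cref{def:coupling-rm-relaxed}, directly for $S$ equal to the $\Knt$ non-\textsc{identity} slots of $\tau^*$. The baseline is $\alpha_0$, the all-\textsc{identity} typing, and the score is $\score_{\mathrm{match}}$. All the work has already been packaged in the Wire Corruption Lemma (\Cref{lem:wire-corruption}), so the proof should be short.

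\emph{Condition (b$'$).} Take $\beta^* := \tau^*|_S$. By construction $P_0[S\!\to\!\beta^*]$ is exactly $P^*$, which generates $s$, so its match count is $N = \vbase + \Delta_0$. The only point to check is that the register assignment used in the match-score instance is $\rho^*$, so that the types alone suffice.

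\emph{Condition (a$'$).} Fix a proper $S' \subsetneq S$ and any $\beta : S' \to T$. \Cref{lem:wire-corruption} bounds the match count by
\[
m(P^*, \beta, S') \le \varepsilon N + \mu(s) N + \gamma(N) = \vbase + \delta_0 .
\]
It splits the count into a structural part, controlled via \Cref{prop:struct-corruption}, which needs only that some slot lies outside $S'$, and a coincidence part, controlled by the coincidence-boundedness hypothesis of \Cref{def:struct-coupled}. This is precisely (a$'$).

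\emph{Side conditions.} I also need $\delta_0 \ge 0$ and $\Delta_0 > \delta_0$ for the definition to apply.
\begin{itemize}
\item Nonnegativity of $\delta_0$ follows from the stated fact $\vbase \le \mu(s)N$. The baseline outputs the constant $x_0$ at every step, so its match count is the frequency of the symbol $x_0$ in $s$, which is at most $\mu(s)N$.
\item The margin is computed directly:
\[
\Delta_0 - \delta_0 = N - (\varepsilon + \mu(s))N - \gamma(N) = (1-\varepsilon-\mu(s))N - \gamma(N).
\]
Since $\varepsilon + \mu(s) < 1$ is a fixed constant gap and $\gamma(N) = o(N)$, this is $\Omega(N)$, and in particular positive for large $N$.
\end{itemize}
The final claim $\kappa^{(\delta_0,\Delta_0)}_{\mathrm{match}} \ge \Knt$ then follows from \Cref{def:coupling-width-relaxed}.

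\emph{Main obstacle.} The delicate point is reconciling the $\mathrm{dep}$-based structural count with the actual perturbed program $P_0[S'\!\to\!\beta]$. Slots outside $S'$ are set to \textsc{identity}, not to $\tau^*$, so I must check that the decomposition in \Cref{def:match-decomp} is stated for exactly this program. It is: $m$ is defined via $P_0[S'\!\to\!\beta]$, and \Cref{lem:struct-suff} handles the hybrid argument. Consequently $m = m_{\mathrm{struct}} + c$ holds identically, and no further argument beyond citing the lemma is needed.
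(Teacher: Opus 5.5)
Your proposal is correct and follows the paper's proof essentially step for step: condition (a$'$) comes from the Wire Corruption Lemma, whose bound $\varepsilon N + \mu(s)N + \gamma(N)$ equals $\vbase + \delta_0$ by definition. Condition (b$'$) comes from the full typing $\tau^*$ scoring $N = \vbase + \Delta_0$, and the margin, which does not depend on $\vbase$, is computed directly. Your extra checks are points the paper leaves implicit: that $\delta_0 \ge 0$ via $\vbase \le \mu(s)N$, and that the register assignment fixed in the match score must be one under which $\tau^*$ generates $s$. Your reading of $1 - \varepsilon - \mu(s)$ as a gap bounded away from zero in $N$ is the same tacit assumption the paper makes when it asserts $\Omega(N)$.
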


\begin{proof}
For any proper $S' \subsetneq S$ and any $\beta$, the Wire Corruption Lemma
(\Cref{lem:wire-corruption}) bounds the absolute match count by
$m(P^*, \beta, S') \le \varepsilon N + \mu(s) N + \gamma(N) = \vbase + \delta_0$,
which is condition~(a$'$) at threshold $\vbase + \delta_0$. The full assignment
$\tau^*$ achieves $\score_{\mathrm{match}} = N = \vbase + \Delta_0$, which is
condition~(b$'$). Both thresholds are stated relative to $\vbase$, and the
margin $\Delta_0 - \delta_0 = (1 - \varepsilon - \mu(s)) N - \gamma(N)$ is
\emph{$\vbase$-invariant} (the $-\vbase$ cancels), hence $\Omega(N)$ since
$\varepsilon + \mu(s) < 1$ and $\gamma(N) = o(N)$. So $S$ is
$(\delta_0, \Delta_0)$-$\Knt$-coupled. (Two regimes: $\vbase = 0$ for baselines
whose output never agrees with $s$, giving $\delta_0 = (\varepsilon+\mu(s))N +
\gamma(N)$; and $\vbase = \mu(s)N$ for the graded-translation class
(\Cref{prop:zp}, where $\varepsilon = 0$, $\gamma = 0$, $\vbase = \lceil N/p
\rceil = \mu(s)N$), giving $\delta_0 = 0$. The margin is the same in both.)
\end{proof}

\begin{remark}[The treewidth bound needs a stronger margin than the search bound]\label{rem:tw-margin}
Applying \Cref{thm:coupling-tw} ($\kappa \le \twscore + 1$) to the match score requires $\Delta_0 > 2\delta_0$, not merely the $\Omega(N)$ margin above. With $\vbase = 0$ (so $\Delta_0 = N$ and $\delta_0 = (\varepsilon + \mu(s)) N + \gamma(N)$), $\Delta_0 > 2\delta_0 \iff \varepsilon + \mu(s) < \tfrac12$, strictly stronger than the $\varepsilon + \mu(s) < 1$ used by the search lower bound (\Cref{cor:search-struct}). The proven classes have $\delta = 0$, the binary score and graded translation (\Cref{prop:zp}), so $\Delta > 2\delta$ holds and the treewidth bound is unconditional there; only the general match-score statement carries the stronger threshold.
\end{remark}

\begin{corollary}[Exponential search lower bound for structurally coupled programs]\label{cor:search-struct}
Under \Cref{thm:k-growth} with $\varepsilon + \mu(s) < 1$, score-oracle search
over type assignments requires $\Omega\bigl((T-1)^\Knt / \msig\bigr)$ queries
against the worst-case $(\delta_0, \Delta_0)$-consistent score, by clause~(III$'$)
and \Cref{cor:yao}.
\end{corollary}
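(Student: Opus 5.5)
The plan is to compose \Cref{thm:k-growth} with clause~(III$'$) of \Cref{thm:barrier} and then apply \Cref{cor:yao}. The CSP here has the $\Knt$ searchable slots as variables, the catalog as domain ($d = T$), the match score $\score_{\mathrm{match}}$ as score function, and the all-\textsc{identity} typing $\alpha_0$ as baseline.

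First, \Cref{thm:k-growth} gives that the set $S$ of $\Knt$ non-\textsc{identity} slots is $(\delta_0, \Delta_0)$-$\Knt$-coupled. It also gives $\Delta_0 - \delta_0 = (1-\varepsilon-\mu(s))N - \gamma(N)$. Under $\varepsilon + \mu(s) < 1$ and $\gamma(N) = o(N)$, this is positive for all sufficiently large $N$, so $\Delta_0 > \delta_0$, which is exactly the standing hypothesis of \Cref{def:coupling-set-relaxed}. Only this positivity is needed. The stronger $\Delta_0 > 2\delta_0$ of \Cref{rem:tw-margin} is irrelevant, because the search bound uses only the signal-concentration direction. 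For finitely many small $N$ where the margin might fail, the claimed $\Omega$ bound is asymptotic and absorbs them.

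Second, I invoke clause~(III$'$) via \Cref{lem:flat} and \Cref{lem:uninformative}. Every typing $\beta \in T^{\Knt}$ with at least one \textsc{identity} coordinate scores at most $\vbase + \delta_0$. Every typing reaching $\vbase + \Delta_0 = N$, i.e.\ generating $s$ exactly, lies in the witness cube $Q$ of size $(T-1)^{\Knt}$. Set $\msig = |\{\beta \in Q : \score_{\mathrm{match}} \ge \vbase + \Delta_0\}|$. This is at least $1$ because $\tau^*|_S \in Q$, which holds since every slot of $S$ is non-\textsc{identity} by definition.

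Third, I apply \Cref{cor:yao}(b). Locating a signal element reduces to promise-OR on $(T-1)^{\Knt}$ inputs with $\msig$ ones. Against the worst-case score consistent with the $(\delta_0,\Delta_0)$ constraints, any randomized score-oracle algorithm therefore needs $\Omega((T-1)^{\Knt}/\msig)$ expected queries. The deterministic bound $(T-1)^{\Knt} - \msig + 1$ from part~(a) is stronger still.

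The only delicate point is what ``worst-case consistent score'' means. The adversary in \Cref{cor:yao} ranges over scores that satisfy the coupling constraints, i.e.\ at most $\vbase + \delta_0$ off $Q$ and with exactly $\msig$ witnesses in $Q$. It does not range over the actual match score of $s$. I would state the corollary explicitly in that adversarial sense, matching the main-text caveat. The adversary can answer $\vbase$ (or any fixed value in $[0, \vbase+\delta_0]$) on every non-witness query, so no information leaks until a witness is hit. Identifying this sub-threshold information leakage as harmless is the one step needing care.
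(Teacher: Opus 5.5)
Your proposal is correct and is essentially the paper's proof. The paper's proof is the one-line composition of \Cref{thm:k-growth} (a $(\delta_0,\Delta_0)$-$\Knt$-coupled set with margin $\Omega(N)$) with clause~(III$'$) and \Cref{cor:yao} at $d = T$. Your additional checks ($\msig \ge 1$ because $\tau^* \in Q$, and that only $\Delta_0 > \delta_0$ rather than $\Delta_0 > 2\delta_0$ is needed) make explicit what the paper leaves implicit; your small-$N$ aside is unnecessary, since the coupling conclusion of \Cref{thm:k-growth} already includes $\Delta_0 > \delta_0$.
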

\begin{proof}
\Cref{thm:k-growth} gives a $(\delta_0,\Delta_0)$-$\Knt$-coupled set with margin
$\Omega(N)$; apply clause~(III$'$) and \Cref{cor:yao} with $d=T$.
\end{proof}

\begin{corollary}[Coupling emergence for minimal programs]\label{cor:emergence}
Let $P^* = (F^*, \tau^*, \rho^*)$ be any syntactically minimal program
(\Cref{def:syntactic-min}). Then the match-score coupling width satisfies
\[
  \kappa_{\mathrm{match}}^{\rho_0, (\delta_0, \Delta_0)} \ge \Knt,
\]
unconditionally, using only minimality; no hypothesis on the dynamics, the
target, or adversarial decorrelation is required. This is the structural
backbone of the barrier; the quantitative search bound additionally requires
control of $\msig$.
\end{corollary}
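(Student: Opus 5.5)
The plan is to show that the set $S$ of non-\textsc{identity} slots of a syntactically minimal $P^*$ is $(\delta_0,\Delta_0)$-$\Knt$-coupled for the match score at the baseline $\alpha_0$ (all \textsc{identity}, register assignment $\rho_0$). That gives $\kappa_{\mathrm{match}}^{\rho_0,(\delta_0,\Delta_0)}\ge\Knt$ directly from \Cref{def:coupling-width-relaxed}. The route goes through \Cref{thm:k-growth}. Its hypotheses are essential engagement and coincidence-boundedness, so the work is to discharge each one from minimality alone, or to make the thresholds absorb whatever minimality does not supply.

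\textbf{Essential engagement.} By \Cref{rem:reg-conn}, a syntactically minimal program is register-connected: a slot with no dataflow path to \textsc{out} could be reset to \textsc{identity} without changing $\exptr$, which would contradict minimality of $\Knt$. Register-connectedness puts every slot $f$ in $\mathrm{dep}(t)$ for all $t$ past the path's iteration lag, as in \Cref{prop:ee-accumulator}. The point to stress is that $\mathrm{dep}(t)$ in \Cref{def:dep} quantifies over all typings $\tau$, so the path exhibited for $\tau^*$ suffices. We do not need the uniform-footprint hypothesis of \Cref{prop:ee-connected}. The lag is at most $\Knt$ iterations, which gives $\varepsilon N\le\Knt$, and this is absorbed into the $o(N)$ residual.

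\textbf{Coincidence bound.} This is the main obstacle, since coincidence-boundedness is a hypothesis about the target that minimality does not supply. I would avoid needing it by reading the statement's thresholds in their unconditional form. Let
\[
\delta_0:=\max_{S'\subsetneq S,\ \beta}\ \score_{\mathrm{match}}(P_0[S'\!\to\!\beta],s)-\vbase,
\]
that is, the true worst partial-perturbation excess, and set $\Delta_0=N-\vbase$. Condition (a$'$) then holds by definition. Condition (b$'$) holds because $\tau^*$ attains $N$.

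The remaining requirement is $\Delta_0>\delta_0$. Equivalently, no proper $S'$ with any $\beta$ reproduces all $N$ symbols. This is exactly syntactic minimality (\Cref{def:syntactic-min}): a perfect match on fewer than $\Knt$ non-\textsc{identity} slots would generate $s$ with fewer slots. The remark about the \textsc{identity} slots outside $S'$ is essential here, because $P_0[S'\to\beta]$ leaves them at \textsc{identity}.

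Since $\delta_0<\Delta_0$ and $S$ has size $\Knt$, we get $\kappa\ge\Knt$ with no dynamical or adversarial hypothesis, as the corollary claims. When \Cref{thm:k-growth}'s hypotheses do hold, this $\delta_0$ is bounded by its formula. The margin is then $\Omega(N)$, and that is what the subsequent search bound (\Cref{cor:search-struct}) needs, together with control of $\msig$. This matches the caveat in the statement that the quantitative bound additionally requires $\msig$ control.
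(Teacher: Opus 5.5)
Your proof is correct and matches the paper's own argument: you take $\delta_0$ to be the actual worst proper-perturbation excess $m_{\max}-\vbase$, so (a$'$) holds by definition; $\tau^*$ gives (b$'$) with $\Delta_0=N-\vbase$; and syntactic minimality, with the slots off $S'$ left at \textsc{identity}, forces $m_{\max}<N$, a margin of at least $1$. The essential-engagement paragraph and the detour through \Cref{thm:k-growth} are never used in your final argument and can be dropped.
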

\begin{proof}
By syntactic minimality no proper-subset assignment reaches match count $N$, so
the maximal proper-perturbation score $m_{\max}<N$ and the margin $\Delta_0 -
\delta_0 = N - m_{\max}\ge 1>0$; conditions~(a$'$) and~(b$'$) of \Cref{def:coupling-rm-relaxed} hold with witness
$S=\{1,\dots,\Knt\}$ by this margin alone.
\end{proof}

\section{Dynamical Coupling Analysis}

This section establishes ISA-specific dynamical properties: adversarial
decorrelation, the provable class, the search lower bound, and the connection
to algorithmic information theory.

\subsection{Adversarial Decorrelation}

\paragraph{Setup (eventual periodicity).}
Every register machine on bounded registers $[0, \mathrm{SAT}]^W$ operates on a
finite state space of size at most $(\mathrm{SAT}+1)^W$. The deterministic
update makes every trajectory eventually periodic, so $s = \exptr(P^*, N)$ has
some transient $T_s$ and period $P_s$, and likewise each partial-program output
$q$. Both transients are bounded by $(\mathrm{SAT}+1)^W$, a constant
independent of $N$.

\begin{definition}[Adversarial decorrelation, per-pair]\label{def:ad-perpair}
An accumulator-driven minimal program $P^*$ with output $s = \exptr(P^*, N)$ is
\emph{per-pair adversarially decorrelated} if for every proper $S' \subsetneq
\{1, \dots, \Knt\}$ and every $\beta : S' \to T$, writing $q = \exptr(P_0[S'\!\to\!
\beta], N)$,
\[
  \limsup_{N \to \infty} \frac{1}{N} \bigl|\{\, t \in [N] : s[t] = q[t] \,\}\bigr|
  \le \mu(s).
\]
\end{definition}

\begin{definition}[Uniform adversarial decorrelation]\label{def:ad-uniform}
$P^*$ is \emph{uniformly adversarially decorrelated} with residual
$\gamma : \mathbb{N} \to \mathbb{N}$, $\gamma(N) = o(N)$, if for every $N$, every
proper $S'$, and every $\beta$,
\[
  \bigl|\{\, t \in [N] : s[t] = \exptr(P_0[S'\!\to\!\beta], N)[t] \,\}\bigr|
  \le \mu(s) N + \gamma(N),
\]
with $\gamma(N)$ depending only on $N$.
\end{definition}

\begin{remark}[Per-pair does not imply uniform]\label{rem:ad-gap}
Per-pair AD gives one $\limsup$ per pair; uniform AD requires a single
$\gamma(N)$ bounding all $2^\Knt T^\Knt$ pairs simultaneously. Without a structural
bound on the convergence rate the supremum over pairs can grow like $N$,
breaking $\gamma(N) = o(N)$. Uniform AD is the hypothesis
\Cref{thm:coincidence} actually needs, and it is a genuine hypothesis, not
automatic from the per-pair version.
\end{remark}

\begin{theorem}[Coincidence bound via uniform AD]\label{thm:coincidence}
Let $P^*$ be accumulator-driven, minimal, and uniformly adversarially
decorrelated with residual $\gamma(N) = o(N)$. Then for every proper $S'$ and
every $\beta$,
\[
  c(P^*, S', \beta) \le \mu(s) N + \gamma(N),
\]
so $P^*$ is coincidence-bounded (\Cref{def:coincidence}) with the same residual.
\end{theorem}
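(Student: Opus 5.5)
The bound on the coincidence term will follow from the decomposition $m = m_{\mathrm{struct}} + c$ of \Cref{def:match-decomp}, together with the trivial bound $c \le m$ that holds because $m_{\mathrm{struct}} \ge 0$. So I will bound the full match count $m(P^*, \beta, S')$ rather than try to isolate the coincidental matches directly.

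Fix a proper subset $S' \subsetneq \{1, \dots, \Knt\}$ and an assignment $\beta : S' \to T$, and write $q = \exptr(P_0[S'\!\to\!\beta], N)$. By \Cref{def:match-decomp}, $m(P^*, \beta, S')$ is exactly $|\{\, t \in [N] : s[t] = q[t] \,\}|$. This is the quantity controlled by uniform adversarial decorrelation (\Cref{def:ad-uniform}), which gives $m \le \mu(s) N + \gamma(N)$ for every $N$, every proper $S'$ and every $\beta$. The residual $\gamma$ depends only on $N$, so the bound holds uniformly over all pairs, which is the point emphasized in \Cref{rem:ad-gap}.

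Since $m_{\mathrm{struct}}$ counts a set of indices, it is nonnegative, and therefore $c = m - m_{\mathrm{struct}} \le m \le \mu(s) N + \gamma(N)$. The residual already satisfies $\gamma(N) = o(N)$ by hypothesis, so this is precisely the condition of \Cref{def:coincidence}, and $P^*$ is coincidence-bounded with the same residual $\gamma$.

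The accumulator-driven and minimality hypotheses are not used in this inequality. They only make the setting consistent with the definition of adversarial decorrelation (\Cref{def:ad-perpair}), which is stated for accumulator-driven minimal programs. They become relevant downstream, where \Cref{prop:ee-accumulator} supplies $\varepsilon = 0$ so that \Cref{thm:k-growth} applies.

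The only step needing care is that the quantities being compared are the same. The count in \Cref{def:ad-uniform} is taken over the same $[N]$ and uses the same hybrid program $P_0[S'\!\to\!\beta]$ (identity off $S'$) as the match count of \Cref{def:match-decomp}. Once this is confirmed, the theorem is immediate. A sharper version would bound only the non-structural agreements, but the statement does not require it.
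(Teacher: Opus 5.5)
Your proof is correct and is the paper's argument: uniform adversarial decorrelation bounds the full match count $m$, and then $c = m - m_{\mathrm{struct}} \le m$ because $m_{\mathrm{struct}} \ge 0$. You are also right that the accumulator-driven and minimality hypotheses play no role in this inequality; the paper's proof does not use them either.
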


\begin{proof}
Fix $(S', \beta)$ and let $m = m(P^*, \beta, S')$. Uniform AD
(\Cref{def:ad-uniform}) gives $m \le \mu(s) N + \gamma(N)$. Since
$m_{\mathrm{struct}} \ge 0$ and $c = m - m_{\mathrm{struct}} \le m$, the same
bound holds for $c$.
\end{proof}

\subsection{The Provable Class: Graded Translation on $\mathbb{Z}/p$}

\begin{proposition}[Graded translation on $\mathbb{Z}/p$ satisfies AD for every $T \ge 2$]\label{prop:zp}
Let $\Knt \ge 2$, $T \ge 2$, and let $p$ be a prime with $p > \Knt(T-1)$. Fix the
catalog $C = \{\textsc{add}_0, \textsc{add}_1, \dots, \textsc{add}_{T-1}\}$,
where $\textsc{add}_j$ sends $r_{\mathrm{acc}} \mapsto r_{\mathrm{acc}} + j
\bmod p$; the catalog is uniform-footprint with $\varphi(C) = 1$, and
$\textsc{add}_0$ is semantically \textsc{identity} (it computes $v \mapsto v + 0
= v$ while reading and writing $r_{\mathrm{acc}}$ with the same footprint as
every other $\textsc{add}_j$), so it serves as the distinguished baseline type
of \Cref{def:syntactic-min} and the count of non-baseline types is $T-1$. Fix
$\tau^* = (\textsc{add}_{T-1}, \dots, \textsc{add}_{T-1})$ of length $\Knt$ and an
initial value $x_0 \in \mathbb{Z}/p$, so $s[t] = x_0 + t\,\Knt(T-1) \bmod p$. Then
$P^*$ is accumulator-driven, essentially engaged with $\varepsilon = 0$,
minimal, and uniformly adversarially decorrelated with exact residual
$\gamma(N) = 0$:
\[
  \bigl|\{\, t \in [N] : s[t] = q(S', \beta)[t] \,\}\bigr|
  = \lceil N/p \rceil = \mu(s) N
  \quad\text{for every proper } S' \text{ and every } \beta.
\]
Moreover $\msig = 1$, and by \Cref{thm:search} match-score-oracle search
requires $\mathbb{E}[\#\text{queries}] \ge \Omega\bigl((T-1)^\Knt\bigr)$, exponential
in $\Knt$ for $T \ge 3$.
\end{proposition}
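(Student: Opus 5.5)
The proof reduces everything to one arithmetic fact: every partial program in this catalog is again a translation on $\mathbb{Z}/p$, with a strictly smaller step than the target's.

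\textbf{Step 1 (partial outputs are translations).} Fix a proper $S' \subsetneq \{1,\dots,\Knt\}$ and $\beta : S' \to C$, writing $\beta_j = \textsc{add}_{b_j}$. Off $S'$ the baseline type is $\textsc{add}_0$, so one iteration of $P_0[S'\!\to\!\beta]$ adds $c := \sum_{j\in S'} b_j$ to $r_{\mathrm{acc}}$. Hence
\[
q(S',\beta)[t] = x_0 + t\,c \bmod p .
\]
The step satisfies $0 \le c \le |S'|(T-1) \le (\Knt-1)(T-1)$. The target has step $a := \Knt(T-1)$, so
\[
0 < a - c \le \Knt(T-1) < p .
\]
Therefore $a-c$ is a unit mod $p$. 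A coincidence $s[t] = q[t]$ is then equivalent to $t(a-c) \equiv 0$, that is, to $t \equiv 0 \pmod p$. Over the index range $t \in \{0,\dots,N-1\}$ (the convention of \Cref{def:match-score}), exactly $\lceil N/p\rceil$ indices qualify. This count is independent of $(S',\beta)$; the empty $S'$, i.e.\ the baseline itself, gives the same count with $c=0$.

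\textbf{Step 2 (identifying $\mu(s)N$).} Since $a$ is a unit mod $p$, $s$ is periodic with exact period $p$ and visits every residue once per period. Each symbol therefore occurs $\lfloor N/p\rfloor$ or $\lceil N/p\rceil$ times. The symbol $x_0$, at $t=0$, attains the ceiling, so $\mu(s)N = \lceil N/p\rceil$. This yields the displayed identity, and hence uniform AD (\Cref{def:ad-uniform}) with $\gamma \equiv 0$.

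\textbf{Step 3 (the structural properties).} Each $\textsc{add}_j$ reads and writes $r_{\mathrm{acc}}$. With $j = T-1 \ge 1$ it is a nontrivial bijection of $\mathbb{Z}/p$, hence a non-constant single-variable map. \textsc{out} reads $r_{\mathrm{acc}}$ and the slots run in canonical order, so $P^*$ is accumulator-driven (\Cref{def:accumulator}). \Cref{prop:ee-accumulator} then gives $\varepsilon=0$.

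For minimality, any typing with fewer than $\Knt$ non-identity slots has step $c < a$. By Step 1 it already disagrees with $s$ at $t=1$, since $a - c \not\equiv 0$. This needs $N \ge 2$, which I assume throughout; for $N=1$ every typing matches trivially.

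\textbf{Step 4 (uniqueness of the witness and the search bound).} For the full set, a witness in $Q$ must beat $\vbase = \lceil N/p\rceil$. Rerunning Step 1 with the full slot set, any $\beta\in Q$ has step $c = \sum b_j$. If $c \not\equiv a$, the match count is again exactly $\lceil N/p\rceil = \vbase$, so $\beta$ is not a witness. Since $\Knt \le c \le \Knt(T-1) = a < p$, the condition $c \equiv a$ forces $c = a$, and this forces $b_j = T-1$ for every $j$. Thus $\beta = \tau^*$ and $\msig = 1$.

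The bound $\Omega((T-1)^{\Knt})$ then follows from clause (III$'$) with $\delta = 0$ and \Cref{cor:yao}, taking $d=T$ and $\msig=1$. This is the content invoked as \Cref{thm:search}. The bound is exponential once $T-1 \ge 2$.

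\textbf{Main obstacle.} The delicate points are bookkeeping rather than depth. First, the index convention must start at $0$; otherwise the count becomes $\lfloor N/p\rfloor$ and the identification with $\mu(s)N$ needs rechecking. Second, the strict inequality $p > \Knt(T-1)$ must cover both $a-c \ne 0$ in Step 1 and $c = a$ in Step 4. Both uses are valid because all steps lie in $[0,\Knt(T-1)]$, strictly below $p$.
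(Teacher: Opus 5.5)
Your proof is correct and follows the paper's argument essentially step for step. Every partial or full typing is a translation whose step is the sum of its constants; $0 < \Knt(T-1) - c < p$ with $p$ prime forces every coincidence onto $t \equiv 0 \pmod p$, which gives exactly $\lceil N/p \rceil$ indices in $\{0,\dots,N-1\}$; and the sum constraint $c = \Knt(T-1)$ pins the unique witness, so $\msig = 1$. Your extra care — the $N \ge 2$ caveat for minimality, checking that $x_0$ attains the ceiling count, and applying clause (III$'$) with \Cref{cor:yao} directly rather than through \Cref{thm:search} — refines the same route without changing it.
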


\begin{proof}
\emph{Orbit.} The composed single-iteration map is $F(x) = x + \Knt(T-1) \bmod p$
(coefficients sum in the abelian pure-translation setting), so
$s[t] = x_0 + t\,\Knt(T-1) \bmod p$. Since $p$ is prime and $0 < \Knt(T-1) < p$,
$\gcd(\Knt(T-1), p) = 1$, so the orbit visits every residue once per period $p$;
over $[0, N)$ each residue is hit $\lfloor N/p \rfloor$ or $\lceil N/p \rceil$
times, giving $\mu(s) = \lceil N/p \rceil / N$ with asymptotic rate $1/p$.

\emph{Accumulator-driven, $\varepsilon = 0$.} Each slot applies
$\textsc{add}_{T-1}$, which reads and writes $r_{\mathrm{acc}}$
(\Cref{def:accumulator}(i.a)) and induces the bijection
$v \mapsto v + (T-1) \bmod p$, non-constant for $T \ge 2$ (i.b). With
\textsc{out} reading $r_{\mathrm{acc}}$ each iteration and canonical order,
\Cref{prop:ee-accumulator} gives $\varepsilon = 0$.

\emph{Minimality.} A partial program $P_0[S'\!\to\!\beta]$ with $|S'| < \Knt$ has
composed map $F'(x) = x + B' \bmod p$ with $B' = \sum_{f \in S'} c_{\beta(f)}$.
Since each $c_{\beta(f)} \le T-1$,
\[
  B' \le (\Knt-1)(T-1) < \Knt(T-1) = B,
\]
and both $B, B' \in [0, \Knt(T-1)] \subset [0, p)$, so $B \equiv B' \pmod p$ iff
$B = B'$, which fails; hence the partial map differs from
$(A, B) = (1, \Knt(T-1))$.

\emph{Uniform AD.} With $q[t] = x_0 + tB' \bmod p$, the condition $s[t] = q[t]$
is $t\,d \equiv 0 \pmod p$ where $d = \Knt(T-1) - B' \in [T-1, \Knt(T-1)]$. In all
cases $0 < d < p$ and $p$ prime, so $d$ is a unit and $t \equiv 0 \pmod p$. Over
$[0, N)$ the agreement count is $\lfloor (N-1)/p \rfloor + 1 = \lceil N/p
\rceil = \mu(s) N$, independent of $(S', \beta)$. Hence uniform AD holds with
$\gamma(N) = 0$, and the $\limsup$ rate is $1/p$.

\emph{$\msig$.} A full assignment $\alpha \in C^\Knt$ has composed shift
$\sum_{f=1}^\Knt c_{\alpha(f)}$, equal to $B = \Knt(T-1)$ in $[0,p)$ iff every
$c_{\alpha(f)} = T-1$, i.e.\ $\alpha = \tau^*$; so $\msig = 1$.

\emph{Search bound.} \Cref{thm:search} gives
$\mathbb{E}[\#\text{queries}] \ge \Omega((T-1)^\Knt / \msig) = \Omega((T-1)^\Knt)$,
exponential in $\Knt$ for $T \ge 3$.
\end{proof}

\begin{remark}[Rigid coupling on this class]\label{rem:zp-rigid}
Since $\vbase = \lceil N/p \rceil = \mu(s) N$ and every proper perturbation
scores exactly $\vbase$, condition~(a) of \Cref{def:coupling-set} holds; the
coupling on this class is rigid, not merely relaxed at $\delta = 0$. The match
score is two-valued (every non-generating assignment scores $\vbase$, only
$\tau^*$ scores $N$), so it coincides with a member of the adversary family
$F_{\delta_0, \Delta_0}$ and the lower bound of \Cref{cor:yao} transfers
directly.
\end{remark}

\begin{corollary}[Partial-program-oracle lower bound]\label{cor:partial-oracle}
For $P^*$ as in \Cref{prop:zp}, any algorithm identifying $\tau^*$ via an oracle
returning $m(P^*, \beta, S')$ for arbitrary $(S', \beta)$ (including proper $S'$)
requires $\mathbb{E}[\#\text{queries}] \ge \Omega((T-1)^\Knt)$ against the
worst-case consistent oracle. Augmenting from full-assignment to arbitrary
partial-assignment queries does not reduce the cost on this class.
\end{corollary}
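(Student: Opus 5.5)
The plan is a reduction: any algorithm using the richer partial-assignment oracle is simulated at no extra query cost by one using only the full-assignment oracle. Then the full-assignment lower bound of \Cref{prop:zp} (via \Cref{cor:yao} and \Cref{rem:zp-rigid}) transfers. The key fact comes from the uniform AD computation in the proof of \Cref{prop:zp}: for every proper $S'\subsetneq\{1,\dots,\Knt\}$ and every $\beta$, the oracle value $m(P^*,\beta,S')$ equals exactly $\lceil N/p\rceil$. This holds regardless of which typing is the hidden target, since the computation used only that the partial shift $B'$ differs from $B$ modulo $p$. Proper partial queries therefore return a constant known in advance and carry zero information.

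The steps are as follows. First, fix the adversary family. This is the set of two-valued scores indexed by a hidden $\alpha^\dagger\in Q$ (the $(T-1)^\Knt$ all-non-baseline typings). Each scores $N$ at $\alpha^\dagger$ and $\lceil N/p\rceil$ on every other full typing and every proper partial query. The true instance with $\alpha^\dagger=\tau^*$ is a member by \Cref{rem:zp-rigid}.

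Second, check that every member is a consistent oracle. Proper partial queries return $\vbase$ under every member, because \Cref{lem:flat} and clause~(III) force flatness off $Q$. More concretely, the relabeled graded-translation instance gives the same constant by the same $p>\Knt(T-1)$ unit argument.

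Third, transform any partial-query algorithm $\mathcal A$ into a full-query algorithm $\mathcal A'$. $\mathcal A'$ answers each proper-$S'$ query of $\mathcal A$ itself with $\lceil N/p\rceil$. It forwards each full query, where $S'$ is all slots, to the full oracle. Then $\mathcal A'$ makes at most as many oracle queries as $\mathcal A$ and succeeds on exactly the same instances. Consequently, \Cref{cor:yao}(b) with $\msig=1$ gives $\mathbb E[\#\text{queries}(\mathcal A)]\ge\mathbb E[\#\text{queries}(\mathcal A')]\ge\Omega((T-1)^\Knt)$ under the uniform prior on $\alpha^\dagger\in Q$, via Yao's principle.

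The main obstacle is the second step. The family must be genuinely consistent with the partial-program semantics, not merely with full-assignment scores. Treating partial values as constant is legitimate only if, for each hidden target in the family, the partial outputs disagree with the target's orbit except at $t\equiv 0\pmod p$.

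I would first try to realize each $\alpha^\dagger$ concretely. One option is to relabel catalog constants so that $\alpha^\dagger$ plays the role of $\tau^*$; the unit argument $0<d<p$ then goes through verbatim. If that fails, I would fall back to stating the bound against the worst-case consistent oracle abstractly. That is exactly what the corollary claims, so defining the oracle to be flat off $\alpha^\dagger$ suffices.
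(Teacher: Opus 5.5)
Your proposal is correct and follows the paper's own argument: on this class every proper partial query returns the known constant $\lceil N/p\rceil$ (\Cref{rem:zp-rigid}), so the partial-assignment oracle is information-equivalent to the full-assignment one and the $\Omega((T-1)^{\Knt})$ bound transfers, and your explicit simulation of $\mathcal{A}$ by $\mathcal{A}'$ simply spells out that equivalence. One caution: your opening claim that this flatness holds regardless of which typing is the hidden target is false in the concrete catalog, because the computation relies on $\tau^*$ having the maximal shift $\Knt(T-1)$ (e.g.\ for $\Knt=2$, $T=3$ the target $(\textsc{add}_1,\textsc{add}_1)$ is matched exactly by the one-slot query $\textsc{add}_2$), so it is the abstract flat-off-$\alpha^\dagger$ adversary family you fall back on, the paper's ``worst-case consistent oracle,'' that actually carries the lower bound.
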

\begin{proof}
On this class every non-generating query, full or partial, returns exactly
$\lceil N/p \rceil$ and only $\tau^*$ returns $N$ (\Cref{rem:zp-rigid}), so the
partial-assignment oracle is two-valued and information-equivalent to the
full-assignment one; the $\Omega((T-1)^\Knt)$ bound of \Cref{thm:search} therefore
applies unchanged.
\end{proof}

\begin{remark}[Scope and extension]\label{rem:zp-scope}
The discharge covers the additive sub-ISA $\{\textsc{add}_j\}$ on
$\mathbb{Z}/p$. Extension to catalogs with multiplicative, modular, or bitwise
operations is open; in the full affine fragment every target $Ax + B$ factors as
a two-slot composition, so minimality fails for $\Knt \ge 3$ (the
composition-collapse obstruction). The average-case route is also obstructed:
collapsing catalogs have exponentially large typical $\msig$, so the
unconditional bound survives only at extremal instances such as this one.
\end{remark}

\subsection{Search Lower Bound and the Connection to $K(s)$}

\begin{theorem}[Match-score-oracle lower bound]\label{thm:search}
Let $P^*$ be accumulator-driven, minimal, and structurally coupled against $s$
with $\varepsilon + \mu(s) < 1$ and residual $\gamma(N) = o(N)$, with $\msig$
witnesses. Then match-score-oracle search for the type-discovery CSP on
$(F^*, s)$ requires
\[
  \mathbb{E}[\#\text{queries}] \ge \Omega\!\left(\frac{(T-1)^\Knt}{\msig}\right)
  = \Omega\!\left(c_T^{\,\Knt \log_2 T - \log_{c_T} \msig}\right), \qquad
  c_T = (T-1)^{1/\log_2 T},
\]
against the worst-case score in $F_{\delta_0, \Delta_0}$.
\end{theorem}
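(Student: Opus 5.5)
The plan is to chain the structural results of the previous section into the generic barrier, and then handle the base change separately. First, the hypotheses (accumulator-driven, minimal, structurally coupled with $\varepsilon+\mu(s)<1$ and $\gamma(N)=o(N)$) are exactly those of \Cref{thm:k-growth}. Applying it shows that the set $S$ of the $\Knt$ non-\textsc{identity} slots is $(\delta_0,\Delta_0)$-$\Knt$-coupled with respect to $(\score_{\mathrm{match}},\alpha_0)$. The correction margin is $\Delta_0-\delta_0=\Omega(N)>0$, so $\Delta_0>\delta_0$, as \Cref{def:coupling-set-relaxed} requires.

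Second, I instantiate the abstract CSP as $X=$ slots, $D=$ the catalog with $d=T$, and the baseline type at each slot playing the role of $\alpha_0(x_{i_j})$. The witness cube is then $Q=\{\beta\in D^{\Knt}:\beta_j\neq\textsc{identity}\}$ with $|Q|=(T-1)^{\Knt}$.

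Clause~(III$'$) of \Cref{thm:barrier}, via \Cref{lem:uninformative}, shows that every query with a baseline coordinate scores at most $\vbase+\delta_0$. It also shows that all assignments reaching $\vbase+\Delta_0$ lie in $Q$. Such a query therefore cannot distinguish the members of the adversary family $F_{\delta_0,\Delta_0}$ that are consistent with the coupling constraints. This family consists of scores equal to $\vbase$ off the signal set and $\vbase+\Delta_0$ on an $\msig$-subset $T'\subseteq Q$.

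Third, I invoke \Cref{cor:yao}(b). Locating an element of $T'$ is promise-OR on $(T-1)^{\Knt}$ inputs with $\msig$ ones. By Yao's principle, take the uniform distribution over the $\binom{(T-1)^{\Knt}}{\msig}$ placements. Any deterministic algorithm then needs $\Omega((T-1)^{\Knt}/\msig)$ expected queries, since each query in $Q$ hits a signal element with probability at most about $\msig/(|Q|-\text{queried})$. The same bound follows for randomized algorithms on the worst-case member.

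The remaining step is the base conversion. It is pure algebra: $(T-1)^{\Knt}=c_T^{\Knt\log_2 T}$, because $c_T^{\log_2 T}=T-1$. Likewise $\msig=c_T^{\log_{c_T}\msig}$. Dividing gives the stated exponent $\Knt\log_2T-\log_{c_T}\msig$.

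The main obstacle I expect is the interface in the second step. The adversary must be allowed to choose a score in $F_{\delta_0,\Delta_0}$ rather than the actual match score. I must check that the family is nonempty and consistent with the $(\delta_0,\Delta_0)$-coupling data, including that the true $\tau^*$ scores $N=\vbase+\Delta_0$. I must also check that queries outside the coupled slots' product structure reveal nothing more. Queries that change non-$S$ coordinates are not covered by clause~(III$'$), so I would restrict the search space to typings of the frame's $\Knt$ slots. This matches the type-discovery CSP on $(F^*,s)$, where $S$ is the entire variable set. With that restriction, every variable lies in $S$ and the flat-basin reduction covers all queries.
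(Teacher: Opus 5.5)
Your proposal is correct and follows the paper's proof. \Cref{thm:k-growth} supplies the $(\delta_0,\Delta_0)$-$\Knt$-coupling with $\Omega(N)$ margin, clause~(III$'$) together with \Cref{cor:yao} gives $\Omega((T-1)^{\Knt}/\msig)$, and the second form is the same quantity rewritten in base $c_T$. Your explicit two-valued adversary family and your observation that $S$ is the whole variable set of the $\Knt$-slot type-discovery CSP make precise what the paper leaves implicit; the paper's extra appeal to $\kappa_{\mathrm{bin}}\log_2 T \ge K(s) - O(\log N)$ serves only to read the exponent in terms of $K(s)$ and is not needed for the stated identity.
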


\begin{proof}
\Cref{thm:k-growth} gives $(\delta_0, \Delta_0)$-$\Knt$-coupling with margin
$\Omega(N)$; clause~(III$'$) and \Cref{cor:yao} then yield
$\Omega((T-1)^\Knt / \msig)$. Substituting the ISA-relative bound
$\kappa_{\mathrm{bin}} \log_2 T \ge K(s) - O(\log N)$ (so $\Knt \ge (K(s) - O(\log N))
/ \log_2 T$) converts the base to $c_T = (T-1)^{1/\log_2 T}$. The second form is the first
rewritten in base $c_T$, the multiplicity entering the exponent as
$\log_{c_T} \msig$, which is $O(\log \Knt)$ for $\msig$ polynomial in $\Knt$
and zero in the unique-witness regime.
\end{proof}

\begin{remark}[Relation to Levin's $2^{K(s)}$]\label{rem:levin-base}
For fixed $T$, $c_T < 2$ strictly (e.g.\ $c_3 \approx 1.549$, $c_{12} \approx
1.952$), so $\Omega(c_T^{K(s)})$ is weaker than Levin's $\Omega(2^{K(s)})$ by a
factor compounding to $\exp(-\Knt/T)$ in $\Knt$. As $T \to \infty$, $c_T \to 2$ and
the bound approaches Levin's. The gap reflects the catalog's
\textsc{identity}-degeneracy, not a weakness in the coupling framework.
\end{remark}

\begin{remark}[Connection to algorithmic probability]\label{rem:Ms}
Solomonoff's algorithmic probability is $M(x) = \sum_{p : U(p) = x} 2^{-|p|}$.
The $\msig$ type-assignments generating $s$ in this frame each have description
length $\Knt \log_2 T + O(\log N)$, so
\[
  M(s) \ge \msig \cdot 2^{-(\Knt \log_2 T + O(\log N))},
\]
and by the coding theorem $K(s) = -\log_2 M(s) + O(1)$, recovering the
multiplicity cancellation of \Cref{rem:cancellation}; the search lower bound,
read in $K(s)$, is $\Omega\bigl(2^{\,K(s) - \Knt \log_2\frac{T}{T-1} - O(\log N)}\bigr)$,
an ISA-relative Levin bound whose per-slot base correction, at most
$\Knt/((T-1)\ln 2)$ bits, is invariant under witness multiplicity and vanishes
as the catalog grows; for an optimally coded typing, $\Knt \log_2 T = K(s) +
O(\log N)$, it reads $\Omega(c_T^{K(s) - O(\log N)})$.
\end{remark}

\subsection{Match-Count Coupling Width}

The barrier above is stated for the binary score $\score_{\mathrm{bin}}$, where
$\kappa_{\mathrm{bin}} \ge \Knt$ (\Cref{thm:binary-coupling}). The match-count
score behaves differently in the rigid regime, and the two results below
delimit it. Two conventions are used: the \emph{max-$\rho$} score
$\score_{\mathrm{match}}(\tau) = \max_\rho m(F^*, \tau, \rho, N)$ (the
adversarial convention, strongest partial-perturbation scores) and the
\emph{fixed-$\rho_0$} score $\score_{\mathrm{match}}^{\rho_0}(\tau) =
m(F^*, \tau, \rho_0, N)$ (matching the CSP analyzed in
\Cref{thm:k-growth}). Since $\max_\rho \ge$ fixed-$\rho_0$, condition~(a$'$) is
harder under max-$\rho$, so $\kappa_{\mathrm{match}}^{\max_\rho} \le
\kappa_{\mathrm{match}}^{\rho_0}$; on the accumulator-driven class the unique
admissible assignment $\rho \equiv r_{\mathrm{acc}}$ makes the two coincide.

\begin{theorem}[Rigid match-count coupling width can be strictly below $\Knt$]\label{thm:rigid-match}
Let $\score_{\mathrm{match}}(\tau) = \max_\rho |\{\, t : \exptr(F^*, \tau, \rho,
N)[t] = s[t] \,\}|$ with all-\textsc{identity} baseline $\alpha_0$, for a
nontrivial target $s$ (one the baseline does not generate, so $\vbase < N$),
and let $\kappa_{\mathrm{match}}^{(0,1)}$ be the $(0,1)$-relaxed match-count
coupling width. If $F^*$ and the catalog satisfy the composition-permutation
invariance of \Cref{prop:cpi}, then
\[
  \kappa_{\mathrm{match}}^{(0,1)} \le \Knt \le \kappa_{\mathrm{bin}},
\]
and the left inequality can be strict, with $\kappa_{\mathrm{match}}^{(0,1)} < \Knt$
whenever some proper subset of slots, suitably typed, achieves a match count
strictly above the all-\textsc{identity} baseline.
\end{theorem}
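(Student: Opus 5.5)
The plan has three parts: the right inequality, the left inequality under composition-permutation invariance, and the strictness condition. The right inequality $\Knt \le \kappa_{\mathrm{bin}}$ is exactly \Cref{thm:binary-coupling}; its hypotheses (all-\textsc{identity} baseline, nontrivial target, a generating typing realizing $\Knt$) are the ones assumed here. All the work is in the left inequality and the strictness clause.

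For $\kappa_{\mathrm{match}}^{(0,1)} \le \Knt$ I follow the proof of \Cref{prop:cpi}, with the binary score replaced by the max-$\rho$ match score. Let $S$ be any $(0,1)$-coupled set with $|S| = k > \Knt$, and pick a proper subset $S' \subsetneq S$ with $|S'| = \Knt$. Composition-permutation invariance transports the minimal generating typing to a typing $\beta$ supported on $S'$, with \textsc{identity} elsewhere. By \Cref{def:syntactic-min} the target $s$ is generated by some $\rho$. Since $\score_{\mathrm{match}}$ maximizes over $\rho$, we get $\score_{\mathrm{match}}(\alpha_0[S'\!\to\!\beta]) = N$. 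Because $s$ is nontrivial, $\vbase < N$, so
\[
  \score_{\mathrm{match}}(\alpha_0[S'\!\to\!\beta]) = N \ge \vbase + 1 > \vbase + 0 .
\]
This violates condition~(a$'$) at $\delta = 0$. Hence no set larger than $\Knt$ is $(0,1)$-coupled, and $\kappa_{\mathrm{match}}^{(0,1)} \le \Knt$. One point needs checking: the relaxed definition with $\delta = 0$ only requires perturbation scores $\le \vbase$ (no local-minimum hypothesis is needed), so the violation is genuine. The max-$\rho$ convention is what lets the transported typing reach the full score $N$ without fixing a particular register assignment.

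For strictness, suppose some proper subset $U \subsetneq \{1,\dots,\Knt\}$ with a typing $\gamma$ achieves a match count strictly above $\vbase$. I will show that no set of size exactly $\Knt$ is $(0,1)$-coupled; combined with the bound above, this gives $\kappa_{\mathrm{match}}^{(0,1)} < \Knt$. Take any candidate $S$ with $|S| = \Knt$. By composition-permutation invariance, applied now to the partial typing $\gamma$, I transport $(U,\gamma)$ to a typing on some subset $U' \subsetneq S$ of size $|U| < \Knt$. Permutation invariance preserves the per-iteration function, so the output and its match count are unchanged, and the score still exceeds $\vbase$. Since $0 < |U'| < \Knt$, this again contradicts condition~(a$'$) at $\delta = 0$.

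The main obstacle is justifying this transport of partial typings. \Cref{prop:cpi} states invariance for the per-iteration function under permuting the slot index set, and \textsc{identity} slots are part of that set, so the invariance should apply to any typing, partial or full. I will state this explicitly, noting that the all-\textsc{identity} complement of $U$ is itself permuted along with $U$. A second point to confirm is that $|U| \ge 1$: the empty perturbation scores exactly $\vbase$, so any subset beating the baseline must be nonempty, and the proper-subset condition applies to $U'$.
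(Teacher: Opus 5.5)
Your proposal is correct and follows the paper's proof step for step. The right inequality is quoted from the binary-coupling theorem. The left inequality transports the minimal generating typing, via composition-permutation invariance, onto a $\Knt$-subset of any larger candidate set, where the max-$\rho$ score reaches $N > \vbase$ and violates condition (a$'$) at $\delta = 0$. Strictness transports an above-baseline partial typing inside every $\Knt$-set. Your two explicit remarks, that the \textsc{identity} complement is permuted along with the partial typing and that the violating subset must be nonempty, make precise points the paper's proof leaves implicit.
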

\begin{proof}
$\Knt \le \kappa_{\mathrm{bin}}$ is \Cref{thm:binary-coupling}. For the left
inequality, let $S$ be any slot set with $|S| > \Knt$. Under
composition-permutation invariance the generating assignment on the minimal
$\Knt$-subset transports, by a permutation of the slot index set, to a
generating assignment on every $\Knt$-subset (as in \Cref{prop:cpi}); in
particular some proper subset $S' \subsetneq S$ with $|S'| = \Knt$ carries
types $\beta$ with $\exptr(F^*, \beta \cup \textsc{identity}_{\mathrm{rest}},
\rho, N) = s$ for some $\rho$, so
$\score_{\mathrm{match}}(\restrS{\alpha_0}{S'}{\beta}) = N > \vbase$ and
condition~(a$'$) of \Cref{def:coupling-set-relaxed} fails at $\delta = 0$.
Hence no set of size exceeding $\Knt$ is $(0,1)$-coupled and
$\kappa_{\mathrm{match}}^{(0,1)} \le \Knt$. For the strict case, suppose some
proper subset $S' \subsetneq S$ of a $\Knt$-set $S$, with types $\beta$, has
$\score_{\mathrm{match}}(\restrS{\alpha_0}{S'}{\beta}) > \vbase$. Then
condition~(a$'$) fails at $S'$ for $\delta = 0$ and $S$ is not
$(0,1)$-coupled; since $|S'| < \Knt$, the same transport places a violating
proper perturbation inside every $\Knt$-subset, so none is $(0,1)$-coupled and
$\kappa_{\mathrm{match}}^{(0,1)} < \Knt$. All provable instances
(\Cref{prop:zp}) satisfy the invariance.
\end{proof}

\begin{theorem}[Relaxed match-count coupling width for structurally coupled programs]\label{thm:relaxed-match}
Let $P^* = (F^*, \tau^*, \rho^*)$ generate $s$ and be structurally coupled
against $s$ (\Cref{def:struct-coupled}) with parameters $(\varepsilon, \gamma)$.
Under the fixed-$\rho_0$ match score $\score_{\mathrm{match}}^{\rho_0}(\tau) =
m(F^*, \tau, \rho_0, N)$, the type-discovery CSP has
\[
  \kappa_{\mathrm{match}}^{\rho_0, (\delta, \Delta)} \ge \Knt,
  \qquad \delta = (\varepsilon + \mu(s)) N + \gamma(N) - \vbase,\quad
  \Delta = N - \vbase,
\]
with correction margin $\Delta - \delta = (1 - \varepsilon - \mu(s)) N -
\gamma(N) = \Omega(N)$ for $\varepsilon + \mu(s) < 1$.
\end{theorem}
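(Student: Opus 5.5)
The plan is to recognize that this statement is the $k$-Growth theorem (\Cref{thm:k-growth}) restated for the fixed-$\rho_0$ match score. I would therefore reduce it to the Wire Corruption Lemma and verify the two conditions of \Cref{def:coupling-rm-relaxed} directly on the candidate set $S$ of the $\Knt$ non-\textsc{identity} slots of $\tau^*$. The baseline is the all-\textsc{identity} typing $\alpha_0$ with registers fixed at $\rho_0$, so $\vbase = m(P^*,\emptyset,\emptyset)$. As in \Cref{thm:k-growth}, this equals the number of $t$ with $s[t]=x_0$, hence $\vbase \le \mu(s)N$. I would first fix the convention that every partial program $P_0[S'\!\to\!\beta]$ is evaluated at the same $\rho_0$, so the quantity bounded is exactly $m(P^*,\beta,S')$ of \Cref{def:match-decomp}.

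For condition (a$'$), take any proper $S' \subsetneq S$ and any $\beta: S'\to T$. Split the match count as $m = m_{\mathrm{struct}} + c$. Then bound $m_{\mathrm{struct}} \le \varepsilon N$ by \Cref{prop:struct-corruption}, since essential engagement yields a slot $f^*\notin S'$ that lies outside $\mathrm{dep}(t)$ for at most $\varepsilon N$ iterations. Bound $c \le \mu(s)N+\gamma(N)$ by coincidence-boundedness (\Cref{def:coincidence}). This is precisely \Cref{lem:wire-corruption} and gives $m \le (\varepsilon+\mu(s))N+\gamma(N) = \vbase+\delta$ with the stated $\delta$.

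For condition (b$'$), assign $\tau^*$ on $S$. Because slots off $S$ are \textsc{identity} in $\tau^*$ by definition of $S$, the resulting program is $P^*$ itself. Under the standing convention that $\rho_0$ is (or is compatible with) the generating register assignment, it reproduces $s$ and scores $N = \vbase+\Delta$. So $S$ is $(\delta,\Delta)$-$\Knt$-coupled, and $\kappa_{\mathrm{match}}^{\rho_0,(\delta,\Delta)} \ge |S| = \Knt$ follows from \Cref{def:coupling-width-relaxed}. The margin computation is then mechanical: $\Delta-\delta = N - (\varepsilon+\mu(s))N - \gamma(N)$, the $\vbase$ terms cancel, and the result is $\Omega(N)$ because $\varepsilon+\mu(s)<1$ and $\gamma(N)=o(N)$. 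I would also note that $\Delta>\delta$ holds for large $N$, which \Cref{def:coupling-set-relaxed} requires.

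The step I expect to be delicate is (b$'$) together with $\delta \ge 0$. The witness must score $N$ under the \emph{fixed} $\rho_0$ rather than under $\rho^*$, which requires identifying $\rho_0$ with $\rho^*$ on the fields of $S$. I would handle this exactly as in the $k$-Growth proof, where the full assignment is taken to reach $N$. On the accumulator-driven class this is automatic, since $\rho\equiv r_{\mathrm{acc}}$ is the unique admissible assignment. I would also check that $\delta$ is nonnegative, as \Cref{def:coupling-set-relaxed} requires. This follows from $\vbase \le \mu(s)N$, as observed in \Cref{thm:k-growth}.
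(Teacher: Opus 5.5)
Your proposal is correct and follows the paper's own proof, which likewise reads the statement as \Cref{thm:k-growth} under the fixed-$\rho_0$ score, takes condition~(a$'$) from the Wire Corruption Lemma (\Cref{lem:wire-corruption}), and takes (b$'$) from $m(F^*,\tau^*,\rho_0,N)=N$ with witness $S=\{1,\dots,\Knt\}$. Your remarks that (b$'$) needs $\rho_0$ to be compatible with a generating register assignment, and that $\delta\ge 0$ and $\Delta>\delta$ must be checked, make explicit hypotheses that the paper's one-line proof leaves implicit.
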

\begin{proof}
This is \Cref{thm:k-growth} read for the fixed-$\rho_0$ score; condition~(a$'$)
is the Wire Corruption Lemma (\Cref{lem:wire-corruption}) bound, condition~(b$'$)
is $m(F^*, \tau^*, \rho_0, N) = N = \vbase + \Delta$, with witness
$S = \{1, \dots, \Knt\}$.
\end{proof}

\section{Register Inference Algorithms}

We now specify the register-inference algorithms themselves, the
verification step and the two enumeration approaches, whose costs are the
quantities the coupling analysis prices.

\subsection{Verification}
Initialize $W$ registers from pre-loop \textsc{load}s in $O(L)$; execute the
AST for $N$ outer-loop iterations, each traversing at most $L$ nodes with
$O(1)$ work per node; compare against $s$ in $O(N)$. Total: $O(NL)$.

\subsection{Pattern Enumeration (Approach A)}

At iteration $t = 0$ the register state $r(0)$ is determined by the pre-loop
\textsc{load}s, computable from $(F^*, \tau^*)$ alone. Maintain the set of
surviving complete patterns $\mathcal{S} \subseteq R^{|\Fields|}$, initialized
to $\mathcal{S}_0 = R^{|\Fields|}$; at each iteration eliminate patterns whose
simulated output disagrees with $s[t]$:
$\mathcal{S}_t = \{\, \hat\rho \in \mathcal{S}_{t-1} :
\exptr(F^*, \tau^*, \hat\rho, N)[t] = s[t] \,\}$.

\begin{lemma}[Safety of Approach A]\label{lem:A-safe}
$\rho^* \in \mathcal{S}_t$ for all $t$.
\end{lemma}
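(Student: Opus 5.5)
The plan is to argue by induction on $t$, with the induction hypothesis that the ground-truth register assignment $\rho^*$ has never been eliminated. The base case is immediate. $\mathcal{S}_0 = R^{|\Fields|}$ is the set of all complete register patterns, and $\rho^*$ is one of them, since $\rho^* : \Fields(F^*,\tau^*) \to R$ by \Cref{def:program}(iii).

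For the inductive step, assume $\rho^* \in \mathcal{S}_{t-1}$. By the update rule, $\rho^*$ survives into $\mathcal{S}_t$ exactly when $\exptr(F^*, \tau^*, \rho^*, N)[t] = s[t]$. This equality holds because $P^* = (F^*, \tau^*, \rho^*)$ is the ground-truth program with $\score(P^*, s) = N$ (the standing hypothesis of \Cref{def:coupling-rm}). By \Cref{def:match-score}, that score means $\exptr(P^*, N)[i] = s[i]$ for every index $i$, so in particular at $i = t$. Hence $\rho^* \in \mathcal{S}_t$.

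The one point needing care is that the filter compares the $t$-th symbol of a run of length $N$ (or of length $t+1$, if it is implemented incrementally). I will note that \Cref{def:execution} emits one symbol per outer-loop iteration, deterministically, from the state reached after the pre-loop \textsc{load}s. So the $t$-th emitted symbol does not depend on how many further iterations are run: it is a prefix-consistent quantity. Every way of evaluating the test therefore agrees with $\exptr(P^*, N)[t]$.

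The same determinism covers the initial state. The state $r(0)$ is fixed by $(F^*, \tau^*)$ and, for the candidate under test, by $\rho^*$. No hidden dependence on other surviving patterns can enter.

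I expect no real obstacle. The only subtlety is this prefix-consistency of $\exptr$, and it follows directly from determinism of the interpreter.
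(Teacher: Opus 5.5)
Your proof is correct and takes the same approach as the paper, which argues in one line that $\exptr(F^*,\tau^*,\rho^*,N)[t]=s[t]$ for every $t$ because $P^*$ generates $s$, so $\rho^*$ is never eliminated. Your explicit induction on $t$ and the prefix-consistency remark only spell out what the paper leaves implicit; the remark is not strictly needed, since the filter defining $\mathcal{S}_t$ is stated directly on the length-$N$ run.
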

\begin{proof}
$\exptr(F^*, \tau^*, \rho^*, N)[t] = s[t]$ for all $t$ by hypothesis, so
$\rho^*$ is never eliminated.
\end{proof}

\begin{lemma}[Completeness of Approach A]\label{lem:A-complete}
After processing iterations $0$ through $d^*$, every surviving
$\hat\rho \in \mathcal{S}_{d^*}$ satisfies $\exptr(F^*, \tau^*, \hat\rho, N) = s$.
\end{lemma}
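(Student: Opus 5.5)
The plan is to read the depth $d^*$ as a \emph{distinguishing horizon}: the least index such that any two register patterns whose outputs agree on iterations $0,\dots,d^*$ agree on all of $[0,N)$. I would then show that such a $d^*$ exists and is bounded independently of $N$, using the finite-state setup of the adversarial-decorrelation section. Once this is in place, the lemma follows in two lines. A survivor $\hat\rho \in \mathcal{S}_{d^*}$ agrees with $s$ on $[0,d^*]$ by the elimination rule. By \Cref{lem:A-safe}, $s = \exptr(F^*, \tau^*, \rho^*, N)$. The horizon property then transfers the agreement to every $t < N$. If $N-1 \le d^*$, there is nothing to transfer, since the filter has already checked every position.

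To bound $d^*$, I would fix $\hat\rho$ and view the two executions $\exptr(F^*,\tau^*,\rho^*,\cdot)$ and $\exptr(F^*,\tau^*,\hat\rho,\cdot)$ as a single \emph{autonomous} product system. Its state at the start of outer iteration $t$ is the pair of register vectors $(r^*(t), \hat r(t)) \in [0,\mathrm{SAT}]^{2W}$. The steps are as follows.
\begin{enumerate}[label=(\roman*)]
\item With the frame, the types and the pattern fixed, the update $r(t) \mapsto r(t+1)$ is a deterministic function. The iteration-$t$ output is a deterministic function of $r(t)$ alone. This holds because one outer iteration traverses a fixed AST, and the pre-loop \textsc{load}s fix $r(0)$, as in the description of Approach A.
\item Let $M := (\mathrm{SAT}+1)^{2W}$ and set $d^* := M-1$. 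Among the states at times $0,\dots,M$ some state repeats, so the product trajectory is periodic from some $t_0 \le M-1$ with a period $P \le M$.
\item Suppose the two outputs agree on $[0,M-1]$. For every $t \ge M$, write $t = t' + jP$ with $t_0 \le t' < t_0 + P \le M$. Periodicity gives the same product state at $t$ and $t'$. Hence the two outputs at $t$ equal the two outputs at $t'$, which agree.
\end{enumerate}
The bound $d^* \le (\mathrm{SAT}+1)^{2W}-1$ is a constant independent of $N$, matching the transient bound already stated in the setup.

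The main obstacle is step (i): certifying that the output at iteration $t$ depends only on the register state $r(t)$. This can fail if a mode reads the program's own past output as context, or if inner loops have data-dependent trip counts. For the context mode I would enlarge the state to include the bounded window of past emissions, which keeps the state space finite at the cost of a larger $M$. Inner loops are harmless because their execution is still determined by $r(t)$. If $d^*$ is defined differently in the continuation, for instance as a tighter per-frame horizon, the same product-system argument still shows that $d^*$ suffices provided it dominates the product system's transient plus period.
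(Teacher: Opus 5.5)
Your proof is correct, but it takes a different and heavier route than the paper's.

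\textbf{The paper's route.} The paper treats $d^*$ as the \emph{distinguishing depth} by definition: the position by which every pattern $\hat\rho$ with $\exptr(F^*,\tau^*,\hat\rho,N) \ne s$ has already produced a mismatch with $s$. Each such pattern disagrees with $s$ somewhere in $[0,N)$, so this depth exists and satisfies $d^* \le N-1$. Completeness then follows by contraposition: a wrong pattern is eliminated at its first mismatch, so none survives to $\mathcal{S}_{d^*}$. This uses nothing about the machine beyond determinism of $\exptr$. It does not even need $s$ to be generated by some $\rho^*$; if it is not, $\mathcal{S}_{d^*}$ is simply empty.

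\textbf{Your route.} You instead read $d^*$ as an intrinsic horizon for pairs of patterns. You then prove, by pigeonhole and periodicity on the product state $(r^*(t),\hat r(t))$, that a horizon $d^* \le (\mathrm{SAT}+1)^{2W}-1$ exists independently of $N$. That argument is sound under two hypotheses:
\begin{itemize}
\item the bounded-register model of the adversarial-decorrelation setup;
\item the assumption you flag in step (i), that the iteration-$t$ output depends on $r(t)$ alone.
\end{itemize}
The second hypothesis would also fail for a mode that reloads the index $t$ each step, since the map is then not autonomous. It buys a genuinely stronger statement: an $N$-independent point at which Approach A may stop filtering.

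\textbf{What it costs.}
\begin{itemize}
\item Your route needs the extra hypotheses above.
\item It needs $s$ to lie in the frame's image. That is the standing hypothesis that $\rho^*$ generates $s$, which \Cref{lem:A-safe} uses rather than establishes, so citing that lemma for it is slightly off.
\item With realistic saturation levels and $W=8$ registers, $(\mathrm{SAT}+1)^{2W}$ vastly exceeds the target lengths considered. In practice your argument therefore falls into your own case $N-1 \le d^*$, where it coincides with the paper's trivial bound $d^* \le N-1$.
\end{itemize}
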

\begin{proof}
Any $\hat\rho$ with $\exptr(F^*, \tau^*, \hat\rho, N) \ne s$ has a
distinguishing position $t \le d^*$ (the distinguishing depth), at which
Approach~A eliminates it.
\end{proof}

\noindent\emph{Complexity.} $|\mathcal{S}_0| = W^{|\Fields|}$, each iteration
costs $O(|\mathcal{S}_t| L)$, total $O(W^{|\Fields|} NL)$, unconditional but
exponential in $|\Fields|$.

\subsection{Independent Field Testing (Approach B)}

Approach~B is a theoretical analysis, not the algorithm SPTM uses; its role is
to show polynomial inference is structurally achievable despite unbounded
treewidth, and to expose how coupling manifests on the inference side. For each
field $f$ maintain a candidate set $C(f) \subseteq R$, initialized to $R$. At
each iteration a representative $\rho_{\mathrm{rep}}$ (chosen from the current
candidate sets) gives an approximate state by forward simulation; each candidate
$r \in C(f)$ is tested with $r$ at $f$ and $\rho_{\mathrm{rep}}$ elsewhere, and
eliminated if the simulated output disagrees with $s[t]$.

\begin{definition}[Representative accuracy]\label{def:rep-accurate}
$\rho_{\mathrm{rep}}$ is \emph{accurate} if $\rho_{\mathrm{rep}}(f) = \rho^*(f)$
for every field $f$.
\end{definition}

\begin{lemma}[Conditional soundness of Approach B]\label{lem:B-sound}
If $\rho_{\mathrm{rep}}$ is accurate (\Cref{def:rep-accurate}), then $\rho^*(f)$
is never eliminated from $C_t(f)$.
\end{lemma}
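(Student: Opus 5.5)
The plan is a direct argument at a fixed iteration $t$, by induction over the iterations processed by Approach~B. Fix an iteration $t$ and a field $f$, and suppose $\rho^*(f)\in C_{t-1}(f)$; at $t=0$ this holds because $C(f)$ is initialised to all of $R$. I will show that the test applied to the candidate $r=\rho^*(f)$ at iteration $t$ cannot eliminate it.

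\textbf{The tested assignment is the ground truth.} The test runs the assignment $\hat\rho$ that places $r$ at $f$ and agrees with $\rho_{\mathrm{rep}}$ on every other field. Accuracy (\Cref{def:rep-accurate}) gives $\rho_{\mathrm{rep}}(g)=\rho^*(g)$ for every field $g\ne f$, and we have set $r=\rho^*(f)$. Hence $\hat\rho=\rho^*$ as a full register assignment.

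\textbf{The approximate state is the true state.} The representative's forward simulation produces the register state entering iteration $t$. Because $\rho_{\mathrm{rep}}=\rho^*$, this simulation runs $(F^*,\tau^*,\rho^*)$ itself, starting from the pre-loop \textsc{load} state $r(0)$. That starting state is determined by $(F^*,\tau^*)$ alone, as in Approach~A. Simulating one iteration from the true state under the true assignment therefore outputs $\exptr(F^*,\tau^*,\rho^*,N)[t]$.

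\textbf{No elimination, and induction.} This output equals $s[t]$, since $P^*$ generates $s$. So the candidate $\rho^*(f)$ is retained, and $\rho^*(f)\in C_t(f)$. Induction on $t$, applied to every field $f$, gives the lemma. The argument is the same one-line safety argument as \Cref{lem:A-safe}, transported through the accuracy hypothesis.

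\textbf{Main obstacle.} The only delicate point is reading accuracy correctly. The hypothesis must hold at every iteration at which the representative is used; otherwise the simulated state at $t$ could drift from the true trajectory even when the tested field is correct. I will therefore state the induction with accuracy assumed at each step $t$. I will also note explicitly that accuracy gives exact equality of states rather than approximate agreement, and that this exact equality is what makes the comparison with $s[t]$ valid.
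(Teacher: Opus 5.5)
Your proof is correct and follows the paper's own argument. Under accuracy, the tested assignment ($\rho^*(f)$ at $f$, $\rho_{\mathrm{rep}}$ elsewhere) is exactly $\rho^*$, so the simulation reproduces $\exptr(P^*,N)[t]=s[t]$ and the candidate survives. The explicit induction over iterations and the note that accuracy must hold at every step only spell out what the paper's one-line proof leaves implicit.
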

\begin{proof}
If $\rho_{\mathrm{rep}} = \rho^*$ pointwise, the simulation with $\rho^*(f)$ at
$f$ and $\rho_{\mathrm{rep}}$ elsewhere equals $\exptr(P^*, N)[t]$, which is
$s[t]$; so $\rho^*(f)$ survives.
\end{proof}

\begin{lemma}[Monotonicity]\label{lem:B-mono}
$C_{t+1}(f) \subseteq C_t(f)$ for all $f, t$.
\end{lemma}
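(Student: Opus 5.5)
The claim is that the candidate sets of Approach~B only shrink: $C_{t+1}(f) \subseteq C_t(f)$ for every field $f$ and every iteration $t$. I would prove it by examining the update rule as it is specified, with no reference to soundness or to the accuracy of the representative. The key observation is that the per-iteration update in Approach~B is purely subtractive. At iteration $t+1$ the algorithm does not start each field again from $R$. It starts from the current candidate set $C_t(f)$, tests each $r \in C_t(f)$ by simulating with $r$ at $f$ and $\rho_{\mathrm{rep}}$ elsewhere, and removes $r$ if the simulated output differs from $s[t+1]$.

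So I would first formalize the update as
\[
  C_{t+1}(f) = \{\, r \in C_t(f) : \text{the simulation with } r \text{ at } f \text{ and } \rho_{\mathrm{rep}} \text{ elsewhere outputs } s[t+1] \,\},
\]
with $C_0(f) = R$. This set is defined by a filter applied to $C_t(f)$, so every element of $C_{t+1}(f)$ already lies in $C_t(f)$, and the inclusion follows at once.

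There are two points to state explicitly.

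\begin{itemize}
  \item The representative $\rho_{\mathrm{rep}}$ is re-chosen from the current candidate sets and may change between iterations. This affects which elements are removed, but not the fact that the filter only removes. Monotonicity therefore holds for every choice of representative, accurate or not, in contrast with the conditional soundness of \Cref{lem:B-sound}.
  \item No step of Approach~B ever adds a register back to a candidate set. Values are added only at initialization, where $C_0(f) = R$.
\end{itemize}

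Iterating the one-step inclusion then gives $C_{t'}(f) \subseteq C_t(f)$ for all $t' \ge t$, so the cumulative cost is bounded by $|R|$ per field.

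The main obstacle is not mathematical. It is making sure the informal description of Approach~B really means "eliminate from the current set" and not "recompute from $R$ each iteration". I would pin this down by writing the recursion above explicitly as the definition of the procedure. Once that is fixed, the proof is a one-line induction on $t$.
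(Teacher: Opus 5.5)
Your proposal is correct and follows essentially the same approach as the paper. Both arguments rest on the update being purely subtractive: each iteration only removes candidates that fail a test and never adds any. Your explicit filter recursion for $C_{t+1}(f)$, and your remark that this holds for any choice of representative, make that argument a bit more precise.
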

\begin{proof}
Each iteration only removes candidates that fail a test; none are added, so
$C_{t+1}(f)\subseteq C_t(f)$.
\end{proof}

\begin{lemma}[Termination]\label{lem:B-term}
The algorithm terminates after one pass through $t = 0, \dots, N-1$, with
$O(|\Fields|\, W N)$ elimination tests.
\end{lemma}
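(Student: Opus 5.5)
The proof is a direct accounting of the algorithm's single forward pass through the target. Approach~B processes the iterations $t = 0, 1, \dots, N-1$ in order, visiting each exactly once. It never revisits an earlier iteration and never restarts, and \Cref{lem:B-mono} guarantees this: the candidate sets only shrink, so no event can trigger a second pass. Termination after one pass is therefore immediate once we check that each iteration's work is finite.

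The work at iteration $t$ has three parts. First, the representative $\rho_{\mathrm{rep}}$ is chosen from the current candidate sets, for instance by taking the least surviving register in each $C_t(f)$. This is a finite choice of cost $O(|\Fields|\,W)$, and it involves no test. If some $C_t(f)$ is empty, the algorithm halts early, which only shortens the run. Second, for each field $f \in \Fields$ and each surviving candidate $r \in C_t(f) \subseteq R$, one elimination test is performed. Each test simulates the program with $r$ at $f$ and $\rho_{\mathrm{rep}}$ elsewhere, then compares the simulated output with $s[t]$. Third, since $|C_t(f)| \le |R| = W$, iteration $t$ performs at most $\sum_f |C_t(f)| \le |\Fields|\,W$ tests. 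Summing over the $N$ iterations gives at most $|\Fields|\,W N$ elimination tests, which is the claimed $O(|\Fields|\,W N)$.

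The one point needing care, and the likely obstacle, is that the count must not hide a nested re-simulation that would make the number of tests depend on earlier eliminations. I would fix the convention that a \emph{test} is one output comparison at position $t$. I would also note that each test's simulation cost is bounded separately, by $O(L)$ if the approximate state at $t$ is carried forward incrementally from the representative. Tests are the unit the lemma counts, so this cost does not enter the test count. Under this convention the test count is exactly the sum above, and the bound holds for every target, without needing representative accuracy (which \Cref{lem:B-sound} uses only for soundness).
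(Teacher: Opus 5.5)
Your proof is correct and matches the paper's: both count a single pass over $t=0,\dots,N-1$ in which each of the at most $W$ candidates at each of the $|\Fields|$ fields is tested once, giving $O(|\Fields|\,WN)$ tests. One remark: the single pass holds by construction of the algorithm, so your appeal to \Cref{lem:B-mono} to rule out a second pass is unnecessary, though it does no harm.
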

\begin{proof}
The algorithm performs one pass over $t=0,\dots,N-1$, testing each of the
$\le W$ candidates at each of the $|\Fields|$ fields once per iteration, giving
$O(|\Fields|\,WN)$ tests and termination after the pass.
\end{proof}

\begin{theorem}[Convergence of Approach B under representative accuracy]\label{thm:B-converge}
If $\rho_{\mathrm{rep}}$ is accurate at every iteration, Approach~B eliminates
every candidate $r$ at every field $f$ with
$\exptr(P^*[\{f\}\!\to\!r], N) \ne s$, within $d^* + 1$ iterations, at total
cost $O(|\Fields|\, W (d^*+1) L) = O(L^2 W N)$; the ground truth $\rho^*$
survives at every field.
\end{theorem}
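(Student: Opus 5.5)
The proof has four parts: soundness, elimination of wrong candidates, timing, and cost. Soundness comes directly from the earlier lemmas. By \Cref{lem:B-sound}, an accurate representative at every iteration means $\rho^*(f)$ is never removed from $C_t(f)$. By \Cref{lem:B-mono}, the candidate sets only shrink. Together these show the ground truth survives at every field for the whole run.

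For elimination, fix a field $f$ and a candidate $r \in R$ with $\exptr(P^*[\{f\}\!\to\!r], N) \ne s$. Let $t_r$ be the first index where this single-field perturbation disagrees with $s$. The distinguishing depth $d^*$ of \Cref{lem:A-complete} bounds $t_r \le d^*$. I read $d^*$ as the maximum over all programs differing from $\rho^*$ of their first disagreement position with $s$, and single-field perturbations are among those programs.

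At iteration $t_r$, the representative is accurate, so $\rho_{\mathrm{rep}} = \rho^*$ pointwise. The test of $r$ at $f$ therefore simulates exactly $P^*[\{f\}\!\to\!r]$. One detail needs checking here: the test uses the approximate state from forward simulation under $\rho_{\mathrm{rep}}$. This is the true state of $P^*$ up to iteration $t_r$, because $\rho_{\mathrm{rep}} = \rho^*$. Re-running from the start with $r$ substituted at $f$ therefore gives output $q[t_r] \ne s[t_r]$, and $r$ is eliminated by iteration $t_r \le d^*$, i.e.\ within $d^*+1$ iterations counting from $t=0$.

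The main obstacle is exactly this state issue. If the algorithm reuses the representative's state at $t_r$ rather than the perturbed program's own history, the test compares a one-step perturbation from the true state, not the full perturbed trajectory. I would resolve it by defining the test as simulating the perturbed program from the start, or equivalently along its own trajectory, up to step $t$. This is the reading under which the claim holds, and it is consistent with the $L$ factor in the cost.

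For cost, each test is one simulation step costing $O(L)$ per iteration. By \Cref{lem:B-term} there are $O(|\Fields|\,W)$ tests per iteration. Over $d^*+1$ iterations the total is $O(|\Fields|\,W(d^*+1)L)$. Finally, $|\Fields| = O(L)$, since each slot has a bounded number of register fields, and $d^* \le N-1$. This gives the stated $O(L^2 W N)$.
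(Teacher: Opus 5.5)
Your proposal is correct and follows the paper's argument: $\rho^*$ survives by \Cref{lem:B-sound}, and a wrong $r$ is eliminated at its distinguishing position $t \le d^*$ because, with $\rho_{\mathrm{rep}} = \rho^*$, the test simulates exactly $P^*[\{f\}\!\to\!r]$ end to end, the same reading the paper adopts in \Cref{rem:B-gap}. Your cost accounting, which the paper's proof leaves implicit, is sound only if each candidate's perturbed trajectory is advanced incrementally at $O(L)$ per iteration; re-simulating \emph{from the start} at each test, as you also allow, costs $O((t+1)L)$ per test and adds roughly a factor of $d^*$ beyond the stated bound.
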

\begin{proof}
Under the hypothesis $\rho_{\mathrm{rep}} = \rho^*$ pointwise, so by
\Cref{lem:B-sound} $\rho^*$ survives. For a candidate $r$ at $f$ with
$\exptr(P^*[\{f\}\!\to\!r], N) \ne s$, the distinguishing position $t \le d^*$
exists; at iteration $t$ the simulation with $r$ at $f$ and $\rho^* = \rho_{\mathrm{rep}}$
elsewhere is exactly $\exptr(P^*[\{f\}\!\to\!r], N)$, which disagrees with
$s[t]$, eliminating $r$.
\end{proof}

\begin{remark}[The gap is real]\label{rem:B-gap}
The convergence hypothesis (pointwise equality with $\rho^*$) is the
\emph{conclusion}, not available at the start. Approach~B may eliminate
$\rho^*(f)$ at an intermediate iteration when the representative is far from
$\rho^*$, because end-to-end simulation with $\rho^*(f)$ at $f$ and a wrong
representative elsewhere can disagree with $s[t]$. An unconditional polynomial
guarantee for Approach~B is open. This gap motivates WSBP, which trades
conditional soundness for unconditional soundness at the cost of conditional
completeness.
\end{remark}

\section{Wire-Space Backward Propagation}

Given $(F^*, \tau^*)$ and target $s$, WSBP recovers a register assignment by
constructing an abstract wire topology through demand-driven backward
propagation in the AST, then mapping wires to physical registers and verifying
by forward simulation. The types are drawn from a restricted catalog in which
every \textsc{mod} and \textsc{divmod} uses a constant divisor; consequently
every register field of every catalog type appears in the operation's port
signature. Throughout, the frame $F^*$ fixes the program's structural and semantic skeleton:
the loop-body operation sequence, which operation's output the \textsc{out} emits
and whether that operation is loop-carried, and the execution and
output-interpretation model under which the body is run. The engine's search over
this skeleton is an enumeration over frames, so the per-frame statements below
range, for a fixed $F^*$, only over the routing, demand-resolution,
operand-ordering, realization, and pre-loop-initializer choices.

\begin{definition}[Wire]\label{def:wire}
A \emph{wire} is an element of a countably infinite set $\mathcal{W}$, allocated
sequentially (each allocation returns a previously unused wire).
\end{definition}

\begin{definition}[Wire topology]\label{def:topology}
A \emph{wire topology} for a loop body with operations $l_1, \dots, l_M$ is a
pair $T = (\eta, U)$ where
$\eta : \{1, \dots, M\} \times \{\mathrm{in}_0, \mathrm{in}_1, \mathrm{out}_0,
\mathrm{out}_1\} \to \mathcal{W} \cup \{\bot\}$ assigns each port a wire or
$\bot$ (unused), and $U$ is an equivalence relation on allocated wires
(maintained by union-find); ports in the same class must, in any valid
realization, refer to the same physical register.
\end{definition}

\begin{definition}[Register realization]\label{def:realization}
A \emph{register realization} of $T = (\eta, U)$ is a map
$\rho : \Fields(F^*, \tau^*) \to R$ such that for every pair of ports $(p, q)$
with $\eta(p), \eta(q) \in \mathcal{W}$ and $\eta(p) \equiv_U \eta(q)$, the
corresponding fields satisfy $\rho(p) = \rho(q)$.
\end{definition}

\paragraph{Algorithm (WSBP).}
Input: frame $F^*$, type assignment $\tau^*$ from the restricted catalog,
target $s \in \Sigma^N$, initial state $r(0)$; let $l_1, \dots, l_M$ be the
loop-body leaves in execution order and $b$ the number of $\textsc{divmod}_C$
operations. The algorithm proceeds in six phases:
\begin{enumerate}[label=\textbf{Phase \arabic*.},leftmargin=*,wide]
\item \emph{Routing.} Each $\textsc{divmod}_C$ has two outputs; a routing
$\delta$ selects, per $\textsc{divmod}_C$, which output receives the downstream
demand wire. There are $2^b$ routings; Phases~2--6 run independently for each,
retaining the highest-match candidate.
\item \emph{Backward trace.} Initialize $\eta \equiv \bot$. From the
\textsc{out} input, allocate a fresh wire and push a demand; while demands
remain, pop $(w, j)$, select the nearest visible unassigned non-\textsc{out},
non-\textsc{load} producer (searching backward, then wrapping for loop-carried
values, else deferring to Phase~3), and assign wires to the producer's ports per
its type signature (read-modify-write ports share $w$; binary operations
allocate fresh input wires and push their demands; $\textsc{divmod}_C$ routes
$w$ to the selected output).
\item \emph{Structural resolution.} Resolve deferred demands (merging with the
accumulator wire when one exists), encode the inner-loop $\textsc{divmod}$
feedback $\eta(i,\mathrm{in}_0) \equiv_U \eta(i,\mathrm{out}_0)$, detect the
accumulator class, wire in-loop \textsc{load}s, and wire control conditions.
Every register field is assigned a wire by the end of this phase: the backward
trace wires each producer port it visits, structural resolution wires the
deferred demands, the in-loop \textsc{load}s, and the control conditions, and a
field that no port ever reads or writes is dead under $\tau^*$, hence absent by
structural minimality (\Cref{def:struct-min}). Phase~4 therefore enumerates
register assignments only over the wire classes the trace produced, never over
fresh choices for unwired fields, which is the convention under which the
image bound of \Cref{lem:image-size} counts every realization WSBP can
produce.
\item \emph{Realization.} Map each wire equivalence class to a physical register,
enumerating the injective assignments of the $n_c$ classes to the $W$ registers
($P(W, n_c)$ of them).
By \Cref{lem:class-bound} these assignments are register relabelings within one
functional class, so they exhaust that class's behavior without enlarging the set
of functions WSBP can realize.
\item \emph{Pre-loop initialization and scoring.} Pair each realized assignment
with a pre-loop initialization, drawn from a finite set $\mathcal{I}$ of
initializing prefixes (instruction sequences of bounded length that fix the
initial register state), and score the resulting program against $s$ by prefix
match. This is where $s$ enters. The wire topology of Phases~2--3 and the
realization of Phase~4 are determined by $(F^*, \tau^*)$ alone and never read
$s$; the target is consulted only to score the already-realized candidates.
\item \emph{Verification.} Forward-simulate the realized assignment in $O(NL)$
and report its match count against $s$, retaining the best candidate across the
$2^b$ routings.
\end{enumerate}

\begin{theorem}[Always-correct]\label{thm:correct}
WSBP never reports a false perfect match; if it returns a candidate
$\rho_{\mathrm{cand}}$ with match count $N$, then
$\exptr(F^*, \tau^*, \rho_{\mathrm{cand}}, N) = s$.
\end{theorem}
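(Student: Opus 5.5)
The plan is to trace a reported perfect match back to the single step that can produce one, and to show that step checks the actual target. The claim is soundness of the output, not of the search. So the argument never needs the wire topology of Phases~2--3 to be correct. It only needs the reported match count to be the true match count of a genuine register assignment run on the real interpreter.

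The key steps, in order:

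\begin{enumerate}[label=(\roman*)]
\item Identify where a candidate's match count comes from. WSBP returns a candidate only from Phase~6, and across the $2^b$ routings it retains the highest-match one. The reported count is therefore the count Phase~6 computed for that $\rho_{\mathrm{cand}}$, together with its pre-loop initialization.
\item Show that $\rho_{\mathrm{cand}}$ is a legitimate object. Phase~4 produces an injective map from wire classes to $R$. Composing with $\eta$ gives a map $\Fields(F^*,\tau^*)\to R$, because Phase~3 wires every field (the structural-minimality remark). By construction this map respects $U$, so it is a register realization in the sense of \Cref{def:realization}. In particular $(F^*,\tau^*,\rho_{\mathrm{cand}})$ is a well-formed program.
\item Identify the Phase~6 score with the exact match score. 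Phase~6 forward-simulates this program for $N$ iterations with the same interpreter that defines $\exptr$ (\Cref{def:execution}). It compares position by position, so the reported count equals $\score(P_{\mathrm{cand}},s)$ of \Cref{def:match-score}.
\item Conclude. If that count is $N$, then every position $i\in\{0,\dots,N-1\}$ agrees, which means $\exptr(F^*,\tau^*,\rho_{\mathrm{cand}},N)=s$.
\end{enumerate}

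This is the same verification logic as \Cref{lem:A-safe}: all heuristic choices (routing, nearest-producer selection, deferral, merging) affect only which candidates are proposed, never how they are judged.

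The main obstacle is step~(iii), specifically the mismatch between Phase~5's \emph{prefix match} and Phase~6's full simulation. A candidate could have been ranked by a prefix score and returned with that score, and a prefix score of $N$ would not certify the whole sequence. I would handle this by fixing the convention that the reported match count is always the Phase~6 full-length forward simulation, with the prefix score used only to rank candidates. Two further points must be checked against the realized program:
\begin{itemize}
\item The pre-loop initializer chosen from $\mathcal{I}$ has to be included in the simulated program, so that the reported output is that of the complete program.
\item The statement's $\exptr$ has to be read with that initializer, or equivalently $r(0)$, as part of the program.
\end{itemize}
With those conventions fixed, the rest is immediate.
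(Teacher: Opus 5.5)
Your proposal is correct and follows the paper's argument: the reported count is the one Phase~6 computes by forward-simulating $(F^*,\tau^*,\rho_{\mathrm{cand}})$, so a count of $N$ forces agreement at every position, independently of the routing, merges, and backward trace. Your checks on the realization and the pre-loop initializer are harmless bookkeeping that the paper leaves implicit. The prefix-match worry is moot, since a prefix match of length $N$ against a length-$N$ target is already a full match.
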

\begin{proof}
Phase~6 computes $\exptr(F^*, \tau^*, \rho_{\mathrm{cand}}, N)$ by forward
simulation and counts matching symbols; forward simulation is deterministic, so
a match count of $N$ means every symbol matches. The guarantee is independent of
the backward-trace procedure, the routing, the structural merges, and the
\textsc{load} wiring. Verification is the unconditional firewall, and the only
failure mode is missing the correct wiring (a partial or no match), never
accepting a wrong perfect match.
\end{proof}

\begin{theorem}[Polynomial in the target length]\label{thm:poly}
For a fixed frame $(F^*, \tau^*)$, WSBP performs a number of candidate evaluations
$R$ that depends only on the frame, never on the target length $N$; each is a
forward simulation of cost $O(NL)$. Hence WSBP runs in
$O\bigl(2^b M^2 + R \cdot NL\bigr)$ time, where $b$ is the number of
$\textsc{divmod}_C$ operations, $M$ the number of loop-body operations, $N$ the
target length, and $L$ the program size. The count $R$ is the product of the
$2^b$ routings, the demand-resolution configurations, the $P(W, n_c)$ register
assignments, the operand orderings of the non-commutative operations, and the
number of pre-loop initializers, each bounded by
a function of $M$ and the register count $W$ alone. For bounded frame size
($b, M, W = O(1)$), $R = O(1)$ and WSBP runs in $O(NL)$.
\end{theorem}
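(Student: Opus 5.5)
We account for the running time phase by phase, separating the structural work that depends only on the frame $(F^*,\tau^*)$ from the per-candidate simulation that depends on $N$.

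\textbf{Phases~2--3 (structural work).} For each of the $2^b$ routings fixed in Phase~1, we show that the backward trace and structural resolution cost $O(M^2)$. Each demand pop scans backward (and wraps once) over at most $M$ loop-body leaves to find its producer. Each producer is assigned at most once, because the scan selects only unassigned producers. A producer pushes at most two new demands, one per input port. So there are $O(M)$ demands in total, each costing $O(M)$ to resolve. Structural resolution touches each deferred demand, each in-loop \textsc{load}, and each control condition a constant number of times. Its union-find operations cost $O(M\,\alpha(M))$, which is within $O(M^2)$. Summed over routings, this gives the $2^b M^2$ term. None of this work reads $s$, which is the content of the Phase~5 remark.

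\textbf{Counting candidates.} Next we bound the number $R$ of candidates that reach Phase~6, as the product of five factors.
\begin{enumerate}[label=(\roman*)]
\item Routings: $2^b$, with $b \le M$.
\item Demand-resolution configurations: at most the choices of merge target for each deferred demand. There are $O(M)$ deferred demands, each with $O(M)$ possible targets, so this factor is a function of $M$.
\item Register realizations: $P(W,n_c) \le W^{n_c}$. We need $n_c \le 4M$, which holds because each of the at most $4M$ ports receives at most one wire.
\item Operand orderings: at most $2^M$.
\item Pre-loop initializers: $|\mathcal{I}|$.
\end{enumerate}

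\textbf{Main obstacle.} The hardest step is making (v) and (ii) rigorous as functions of $(M,W)$ alone. For (v), I would argue that $\mathcal{I}$ is a finite set of bounded-length prefixes over the fixed ISA acting on $W$ registers. Its size is therefore bounded by the number of distinct instruction sequences up to the bound length, which is independent of $N$; I would treat the prefix length bound as a frame parameter. For (ii), the key point is that no phase before Phase~5 reads $s$, so none of these counts can depend on $N$.

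\textbf{Per-candidate cost.} Each candidate is evaluated in Phase~6 by the forward simulation of the Verification subsection, costing $O(NL)$. Retaining the best match count across candidates adds only $O(R)$. Summing the two parts gives $O(2^bM^2 + R\cdot NL)$. When $b,M,W=O(1)$, every factor of $R$ is $O(1)$, so the total is $O(NL)$.
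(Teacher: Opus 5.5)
Your proposal is correct and follows the paper's accounting. Like the paper, you get an $O(M^2)$ backward trace per routing from at most $2M+1$ demands, $R$ as a product of five $N$-independent factors (routings, demand resolutions, register assignments, operand orderings, initializers), and one $O(NL)$ forward simulation per candidate. The differences are small: the paper bounds the demand-resolution configurations by $(W{+}1)^W$, using a cap of $W$ unresolved wires, and uses $P(W,n_c)\le W!$, whereas your bounds grow with $M$. The paper also treats the initializer-prefix length as a fixed constant of the algorithm, so $|\mathcal{I}|$ depends on $W$ alone; you need that reading, rather than making it a frame parameter, for your last step that $R=O(1)$ whenever $b,M,W=O(1)$.
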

\begin{proof}
The backward trace issues at most $2M+1$ demands, each scanning at most $M$
operations, so $O(M^2)$ per routing and $O(2^b M^2)$ over the $2^b$ routings;
structural resolution touches $O(M)$ ports with near-constant-time union-find.
The realization enumerates register assignments and pre-bodies and, together with
the routings, the demand-resolution configurations, and the input-swap variants,
produces $R$ realized candidates. Each is scored, and a full match confirmed, by
a forward simulation of cost $O(NL)$. The count $R$ is fixed by the frame. The
routings number $2^b$; the demand-resolution configurations at most $(W{+}1)^{W}$,
the unresolved-wire count being capped at $W$; the register assignments
$P(W, n_c) \le W!$; the operand orderings at most $2^{M}$; and the pre-loop
initializers at most a function of $W$ (initializing prefixes of bounded length
over the $W$-register machine). All are independent of $N$, so the total is
$O(2^b M^2 + R \cdot NL)$. For $b, M, W = O(1)$ with $N \ge M$, $R$ and $2^b$ are
constants and $M^2 \le NL$, leaving $O(NL)$.
\end{proof}

\begin{corollary}\label{cor:poly-simple}
For $b, M, W = O(1)$ and $N \ge M$, WSBP runs in $O(NL)$.
\end{corollary}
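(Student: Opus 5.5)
This corollary is the bounded-parameter specialization of \Cref{thm:poly}. I will read it off that theorem's running-time bound $O\bigl(2^b M^2 + R\cdot NL\bigr)$ and then check that, under $b, M, W = O(1)$ and $N \ge M$, every term other than $NL$ collapses to a constant factor.

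First I bound the preprocessing term. With $b = O(1)$ the routing factor $2^b$ is a constant. With $M = O(1)$ the factor $M^2$ is also a constant. Even without treating $M$ as constant, the hypothesis $N \ge M$ gives $M^2 \le M\cdot N \le NL$, because the loop body's $M$ operations are leaves of the frame and so $M \le L$. Either way, $2^b M^2 = O(NL)$.

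Second I bound the candidate count $R$. The proof of \Cref{thm:poly} writes $R$ as a product of five factors:
\begin{itemize}
\item $2^b$ routings;
\item at most $(W+1)^W$ demand-resolution configurations;
\item at most $P(W,n_c) \le W!$ register assignments;
\item at most $2^M$ operand orderings;
\item a number of pre-loop initializers bounded by a function of $W$ alone.
\end{itemize}
Each factor depends only on $b$, $M$ or $W$, never on $N$. With all three parameters $O(1)$, each factor is a constant, so $R = O(1)$ and $R\cdot NL = O(NL)$.

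Adding the two terms gives $O(NL)$. The only step needing care is the initializer count, which I take as bounded by a function of $W$ because initializing prefixes have bounded length over a $W$-register machine, as stated in \Cref{thm:poly}. The other steps are direct substitution.
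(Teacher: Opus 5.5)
Your proposal is correct and follows the paper's proof: both specialize the bound $O(2^b M^2 + R\cdot NL)$ of \Cref{thm:poly}, observing that bounded $b, M, W$ make $2^b$ and $R$ constant, and that $N \ge M$ gives $M^2 \le NL$. You additionally spell out the factor-by-factor check that $R = O(1)$ and the step $M \le L$, both of which the paper leaves implicit.
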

\begin{proof}
By \Cref{thm:poly}, bounded frame size makes the candidate count $R$ and the
routing count $2^b$ constant, and $N \ge M$ gives $M^2 \le NL$, leaving $O(NL)$.
\end{proof}

\section{WSBP Incompleteness}

WSBP is sound and runs in polynomial time. This section establishes what that
speed costs, proving that frames with enough wiring freedom carry programs whose register assignment WSBP cannot recover
and proving that for frames with enough wiring freedom such programs are
generic rather than exceptional.

\begin{theorem}[Incompleteness]\label{thm:incomplete}
Let $(F^*, \tau^*)$ be a frame from the restricted catalog with $j$
independently free ports satisfying $j > b + O + \log_2(N_U N_P)$. Then there
exists a program $P^* = (F^*, \tau^*, \rho^*)$ for which WSBP finds no
assignment achieving match count $N$.
\end{theorem}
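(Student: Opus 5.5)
The argument is a counting (pigeonhole) argument that compares how many distinct register assignments the frame admits with how many distinct realizations WSBP can ever output. The essential fact is the one stressed in Phase~5 of the algorithm: the wire topology of Phases~2--3 and the realization of Phase~4 depend on $(F^*,\tau^*)$ alone and never read $s$. So WSBP's set of candidate programs is fixed before the target is seen, and the target only selects among them.

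\textbf{Step 1: bound the image of WSBP.} For the fixed frame, the candidates come from three sources. The $2^b$ routings contribute $b$ bits. The operand orderings and other output-side choices contribute the term $O$ in the exponent; I read $O$ as the number of free output/ordering bits. The $N_U$ demand-resolution configurations and the $N_P$ pre-loop initializers multiply in as well. Register relabelings within one wire-class realization are functionally equivalent (this is \Cref{lem:class-bound}), so the number of functionally distinct programs WSBP can realize is at most $2^{b+O}N_UN_P$. This is the image bound referred to as \Cref{lem:image-size}, and I would establish it first by multiplying the per-phase counts from the proof of \Cref{thm:poly}, quotienting by relabeling.

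\textbf{Step 2: lower-bound the functionally distinct programs on the frame.} The $j$ independently free ports each carry a binary choice of register that changes the program's function; "independent" is taken to mean that distinct settings of these bits yield functionally inequivalent programs in the sense of \Cref{def:func-equiv}. This gives at least $2^j$ distinct functions $\exptr(F^*,\tau^*,\rho,\cdot)$. I would then pick $N$ at least the distinguishing depth $d^*$ of all of them, so that they also give $2^j$ distinct output sequences of length $N$.

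\textbf{Step 3: pigeonhole.} Since $j>b+O+\log_2(N_UN_P)$, the $2^j$ distinct targets outnumber the at most $2^{b+O}N_UN_P$ candidate outputs. So some $\rho^*$ produces a target $s$ that no WSBP candidate matches exactly, and WSBP reports no match of count $N$ for $P^*=(F^*,\tau^*,\rho^*)$. Note that this argument does not need \Cref{thm:correct} to guarantee a non-false match; it only uses the fact that every reported candidate is among the image. One subtlety to check is that the candidate set may also vary with $r(0)$, but $r(0)$ is fixed by the frame's pre-loop loads, so it does not enlarge the image.

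\textbf{Main obstacle.} The hardest part is Step~2: making rigorous that "independently free ports" give $2^j$ functionally distinct, valid programs on the frame that still lie in the restricted catalog and remain minimal. The precise definition of a free port is not stated before the theorem, so I would take it as exactly this independence property. If that definition is unavailable, the fallback is to prove injectivity directly: a pair of assignments differing at one free port produces a different output at the first iteration where that port's register is read.
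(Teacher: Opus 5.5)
Your proposal is correct and follows the paper's proof: the paper likewise sets the WSBP image bound $|B_\rho(F^*,\tau^*)|\le 2^{b+O}N_UN_P$ of \Cref{lem:image-size} against the free-port lower bound $|\mathcal{F}(F^*,\tau^*)|\ge 2^j$ of \Cref{lem:function-lower}, and uses pigeonhole to pick a realizable output class outside the image. Two small corrections: \Cref{def:free-ports} already requires the $2^j$ assignments $\rho_S$ to give pairwise distinct outputs at the given target length $N$, so you neither need nor may enlarge $N$ past a distinguishing depth; and $O$ is the number of non-commutative binary operations, whose $2^O$ operand orderings WSBP enumerates.
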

\begin{proof}
By \Cref{lem:function-lower}, established below without reference to this
theorem, the frame realizes $|F(F^*,\tau^*)| \ge 2^j$ functional classes,
while by \Cref{lem:image-size} WSBP realizes at most $2^{b+O} N_U N_P < 2^j$
of them; hence some class
$C^* \in F(F^*,\tau^*) \setminus B_\rho(F^*,\tau^*)$. By
\Cref{def:frame-functions}, $C^*$ is realized by some register assignment
$\rho^*$ and pre-loop initializer, giving $P^*$ with
$s := \mathrm{exec}(F^*,\tau^*,\rho^*,N)$ in class $C^*$. Every candidate WSBP
realizes lies in $B_\rho$, hence in a class distinct from $C^*$, and by
\Cref{def:func-equiv-rho} distinct classes differ on the length-$N$ output;
no candidate matches $s$ at all $N$ positions.
\end{proof}

The failure is typical, not exceptional. The wire topology Phases~2--3 produce is
determined by $(F^*, \tau^*)$ together with the routing $\delta$, the
demand-resolution configuration $u$, and the operand ordering $o$ of the
non-commutative operations; the realization then enumerates register
relabelings, which by \Cref{lem:class-bound} stay within one functional class, and
pre-loop initializers from a finite set. The set of functions WSBP can realize is therefore
bounded by $2^{b+O} N_U N_P$, where $O$ is the number of non-commutative binary
operations, $N_U$ the number of demand-resolution configurations, and $N_P$ the
number of pre-loop initializers, the latter two functions of the register
count $W$; while the target space $R(F^*, \tau^*)$ has size $W^{|\Fields|}$.

\begin{definition}[WSBP image]\label{def:wsbp-image}
Fix $(F^*, \tau^*)$ from the restricted catalog, with $b$ the number of
$\textsc{divmod}_C$ operations, $O$ the number of non-commutative binary
operations, $N_U$ the number of demand-resolution configurations, and $N_P$ the
number of pre-loop initializers. Let $\mathcal{U}$ range over demand resolutions,
$\mathcal{O}$ over the $2^O$ operand orderings of the non-commutative operations,
and $\mathcal{I}$ over the pre-loop initializers. Taking realizations up to
register relabeling (which preserves the computed function by
\Cref{lem:class-bound}), define
\[
  B_\rho(F^*, \tau^*) := \{\, [\rho] : \rho = \mathrm{WSBP}(F^*, \tau^*, \delta, u, o, p),\;
  (\delta, u, o, p) \in \{0,1\}^b \times \mathcal{U} \times \mathcal{O} \times \mathcal{I} \,\},
\]
the set of functional classes WSBP can realize. It depends only on
$(F^*, \tau^*)$, not on $s$ or $\rho^*$.
\end{definition}

\begin{lemma}[Image size bound]\label{lem:image-size}
WSBP realizes at most $2^{b+O} N_U N_P$ functional classes: $|B_\rho(F^*, \tau^*)|
\le 2^{b+O} N_U N_P$.
\end{lemma}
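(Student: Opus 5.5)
The bound follows from counting, because $B_\rho(F^*,\tau^*)$ is by \Cref{def:wsbp-image} the image of the finite parameter set $\{0,1\}^b \times \mathcal{U} \times \mathcal{O} \times \mathcal{I}$ under a map into functional classes. The plan is to show that this map is well defined, meaning each parameter tuple $(\delta,u,o,p)$ determines a single functional class. The cardinality of the image is then at most the cardinality of the domain, which is $2^b \cdot N_U \cdot 2^O \cdot N_P = 2^{b+O} N_U N_P$.

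\textbf{Step one: the topology is fixed by the tuple.} Fix $(\delta,u,o)$. The backward trace of Phase~2 is a deterministic procedure on $(F^*,\tau^*)$ once three choices are pinned down: the routing $\delta$ decides which $\textsc{divmod}_C$ output takes each demand, the ordering $o$ decides the operand-port assignment of each non-commutative operation, and the demand-resolution configuration $u$ decides how Phase~3 merges deferred demands. Fresh-wire allocation is sequential (\Cref{def:wire}), so it adds no freedom. Phase~3's remaining steps (the accumulator merge, the $\textsc{divmod}$ feedback, the in-loop \textsc{load} wiring, and the control conditions) are deterministic functions of $(F^*,\tau^*)$ and the chosen configuration. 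Hence the wire topology $T=(\eta,U)$ is a function of $(\delta,u,o)$, and by the remark in Phase~3 every register field receives a wire in $T$. None of this reads $s$.

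\textbf{Step two: the realization collapses to one class.} This is the main obstacle. Phase~4 enumerates the $P(W,n_c)$ injective maps from the wire classes of $T$ to physical registers, which looks like many realizations. Any two such maps differ by a permutation of register names, and by \Cref{lem:class-bound} a register relabeling preserves the computed function. So all of them lie in one functional class under \Cref{def:func-equiv-rho}, provided the pre-loop initializer $p$ is transported along with the relabeling. I would argue this carefully: take the initializer as an element of $\mathcal{I}$ that is applied after relabeling, so $(T,p)$ fixes the initial state up to the same permutation. The computed output sequence is then invariant. This is the step where the bound could fail, since it needs the class relation to be invariant under simultaneous relabeling of the body fields and the initializer's target registers. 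I expect \Cref{lem:class-bound} to provide exactly that.

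\textbf{Step three: count.} Steps one and two give a function $(\delta,u,o,p)\mapsto[\rho]$ whose image is $B_\rho(F^*,\tau^*)$. The domain has size $2^b\cdot N_U\cdot 2^O\cdot N_P$. A set's image under a function is never larger than its domain, so $|B_\rho(F^*,\tau^*)| \le 2^{b+O}N_UN_P$. Phase~6 selects the highest-match candidate among these and Phase~5 only scores them, so neither phase creates new classes and the bound covers every candidate WSBP can report.
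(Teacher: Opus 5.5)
Your proposal is correct and takes essentially the same route as the paper. The wire topology is fixed by $(\delta,u,o)$, and the $P(W,n_c)$ injective realizations are register relabelings absorbed by \Cref{lem:class-bound}. Hence the realized class is a function of the tuple $(\delta,u,o,p)$, and the image has at most $2^{b}\cdot N_U\cdot 2^{O}\cdot N_P$ elements.

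The initializer-transport point you flag is what the paper relies on implicitly: the proof of \Cref{lem:class-bound} relabels the initial state together with the fields ($\mathrm{val}_0^\pi = \mathrm{val}_0^*\circ\pi^{-1}$), and \Cref{def:wsbp-image} takes realizations up to register relabeling. You are right that without this convention, or closure of $\mathcal{I}$ under register permutations, the count for a fixed topology could exceed $N_P$.
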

\begin{proof}
The wire topology is fixed by the routing $\delta$, the demand-resolution
configuration $u$, and the operand ordering $o$ of the non-commutative operations
(commutative operations contribute no distinct ordering); realization then chooses
register relabelings, which by \Cref{lem:class-bound} leave the computed function
unchanged, and a pre-loop initializer from a set of size $N_P$. Hence the realized
class is determined by the tuple $(\delta, u, o, p)$, of which there are at most
$2^b N_U \, 2^O N_P = 2^{b+O} N_U N_P$.
\end{proof}

\begin{definition}[Functional equivalence]\label{def:func-equiv-rho}
For $(F^*, \tau^*)$ fixed, $\rho \sim \rho'$ if $\exptr(F^*, \tau^*, \rho, N) =
\exptr(F^*, \tau^*, \rho', N)$; write $[\rho]$ for the class.
\end{definition}

\begin{definition}[Structural minimality]\label{def:struct-min}
$(F^*, \tau^*, \rho^*)$ is \emph{structurally minimal} if no register field is
dead under $\rho^*$, that is, every written value is read by some subsequent visible
operation before being overwritten, and every output contributes (transitively)
to the \textsc{out} sequence.
\end{definition}

\begin{lemma}[Functional invariance under register relabeling]\label{lem:class-bound}
For any permutation $\pi$ of $R$, the assignment $\rho_\pi = \pi \circ \rho^*$
satisfies $\exptr(F^*, \tau^*, \rho_\pi, N) = \exptr(F^*, \tau^*, \rho^*, N)$.
Hence register relabeling maps each functional class to itself.
\end{lemma}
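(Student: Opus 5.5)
The plan is a simulation argument: running the relabeled program step by step tracks the original program with every register name passed through $\pi$. Set $\rho_\pi = \pi\circ\rho^*$. The invariant to prove by induction on execution steps is
\[
r_\pi(R_{\pi(i)}) = r(R_i) \quad \text{for every } i,
\]
where $r$ is the register state under $\rho^*$ and $r_\pi$ the state under $\rho_\pi$. In words, the relabeled machine holds in register $\pi(R_i)$ exactly the value the original holds in $R_i$.

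The induction runs as follows.

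1. \emph{Base case.} The pre-loop \textsc{load}s write constants to the fields they target. Under $\rho_\pi$ each constant lands in $\pi(\rho^*(f))$ instead of $\rho^*(f)$. Every unloaded register starts at the common initial value $x_0$, which is permutation-invariant, so the invariant holds initially.

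2. \emph{Inductive step.} Each catalog type is a fixed function from the values read at its register fields to the values written at its register fields. The field-to-register map enters only through addressing, and every register field appears in the operation's port signature (the restricted catalog, constant divisors only). So if the invariant holds before an operation, the relabeled operation reads the same values, since it reads $r_\pi(\pi(\rho^*(f))) = r(\rho^*(f))$. It then computes the same results and writes them to the $\pi$-images of the original destinations. Registers not written on either side keep their values and stay matched under $\pi$, so the invariant holds after the operation.

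3. \emph{Control flow.} Control conditions of \textsc{while}, \textsc{if\_else} and \textsc{repeat} also read register fields through $\rho$. By the invariant they evaluate identically, so both executions traverse the frame $F^*$ along the same path. The induction therefore runs in lockstep over the same sequence of leaves.

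4. \emph{Output.} \textsc{out} emits the value of its field's register, which coincides in both runs. Hence $\exptr(F^*,\tau^*,\rho_\pi,N)=\exptr(F^*,\tau^*,\rho^*,N)$ for every $N$, and $[\rho_\pi]=[\rho^*]$ under \Cref{def:func-equiv-rho}. Applying this to an arbitrary representative $\rho$ in place of $\rho^*$ gives the second sentence: relabeling maps each functional class to itself.

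The main obstacle is the base case. It needs the initial state to be symmetric under $\pi$, that is, no register is privileged by hard-coded initialization outside the field-addressed \textsc{load}s. The paper's execution model (\Cref{def:execution}) initializes registers only through pre-loop \textsc{load}s, so I would state that reading explicitly. If some registers had distinct default values, I would restrict to the stated convention that every unloaded register starts at $x_0$. The same care applies if the \textsc{out} target or the accumulator were fixed to a specific physical register rather than addressed through a field. For the restricted catalog, every port is a field, and that addressing is what makes the inductive step go through.
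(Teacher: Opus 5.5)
Your proof is correct and follows the paper's own argument. The paper also inducts on execution steps with the invariant $\mathrm{val}^\pi_k = \mathrm{val}^*_k \circ \pi^{-1}$, which is your $r_\pi(\pi(R_i)) = r(R_i)$, and it matches the initial state because the pre-loop loads are relabeled together with everything else. Your explicit handling of control conditions and of the symmetric default register value spells out what the paper's one-line proof leaves implicit.
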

\begin{proof}
Induct on execution steps. Writing $\mathrm{val}_k$ for the register state after
step $k$, the initial state (set by the program with its constants held fixed)
satisfies $\mathrm{val}_0^\pi = \mathrm{val}_0^* \circ \pi^{-1}$; and if
$\mathrm{val}_k^\pi = \mathrm{val}_k^* \circ \pi^{-1}$, then each operation reads
$\mathrm{val}_k^\pi(\pi(\rho^*(\mathrm{src}))) = \mathrm{val}_k^*(\rho^*(\mathrm{src}))$,
the value it reads under $\rho^*$, writes it (relabelled by $\pi$), and
\textsc{out} reads identical values. The output sequences therefore agree.
\end{proof}

\begin{definition}[Frame function count]\label{def:frame-functions}
$\mathcal{F}(F^*, \tau^*)$ is the set of distinct output sequences the frame
realizes as the register assignment $\rho \in R(F^*, \tau^*)$ and the pre-loop
initializer $p \in \mathcal{I}$ both vary, the initializer fixing the initial
register state via pre-loop \textsc{load}s (\Cref{def:execution});
$|\mathcal{F}(F^*, \tau^*)|$ is its cardinality. Every WSBP realization is one such
configuration, so $B_\rho(F^*, \tau^*) \subseteq \mathcal{F}(F^*, \tau^*)$.
\end{definition}

\begin{theorem}[Measure-robust typical incompleteness of WSBP]\label{thm:typical-incomplete}
Fix $(F^*, \tau^*)$ from the restricted catalog. For \emph{any} probability
measure $\mu$ on $\mathcal{F}(F^*, \tau^*)$, with $\mathrm{Success}(C) := \{\, C
\in B_\rho(F^*, \tau^*) \,\}$ the event that WSBP realizes the function $C$,
\[
  \Pr_{C \sim \mu}[\mathrm{Success}] = \mu(B_\rho)
  \le |B_\rho(F^*, \tau^*)| \cdot \|\mu\|_\infty
  \le 2^{b+O} N_U N_P \cdot \|\mu\|_\infty,
\]
where $\|\mu\|_\infty = \max_{C \in \mathcal{F}} \mu(C)$. Equivalently, in terms of
the min-entropy $H_\infty(\mu) = -\log_2 \|\mu\|_\infty$,
\[
  \Pr[\mathrm{Success}] \le 2^{\,b + O + \log_2(N_U N_P) - H_\infty(\mu)}.
\]
The numerator is independent of the target length $N$, and its dependence on
$|\Fields|$ enters only through the exponent $b + O$: $N_U \le
(W{+}1)^W$ and $N_P$ depend only on $W$ and the fixed initializer space, while
$b + O \le |\Fields|/3$ since each $\textsc{divmod}_C$ and each non-commutative
binary operation carries three register fields (a $\textsc{divmod}_C$ one input
and two outputs, a binary operation two inputs and one output) and the restricted
catalog admits no other two-output operation. The incompleteness is therefore
governed by the measure's min-entropy and vanishes for any family with
$H_\infty(\mu) - b - O - \log_2(N_U N_P) \to \infty$.
\end{theorem}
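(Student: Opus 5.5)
The argument is a union bound over the WSBP image. First I make the event precise. For a fixed $(F^*, \tau^*)$, WSBP succeeds on a target drawn as a functional class $C \in \mathcal{F}(F^*, \tau^*)$ exactly when $C \in B_\rho(F^*, \tau^*)$. By \Cref{def:wsbp-image}, $B_\rho$ depends only on the frame and not on $s$ or $\rho^*$. By \Cref{def:frame-functions}, $B_\rho \subseteq \mathcal{F}$, so $\mu(B_\rho)$ is well defined. By \Cref{thm:correct}, a reported perfect match is genuine. By \Cref{def:func-equiv-rho}, distinct classes differ on the length-$N$ output, so no candidate outside $C$ can match all $N$ symbols. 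Hence $\Pr_{C\sim\mu}[\mathrm{Success}] = \mu(B_\rho)$ exactly.

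Then I bound
\[
\mu(B_\rho) = \sum_{C \in B_\rho} \mu(C) \le |B_\rho|\,\|\mu\|_\infty .
\]
I substitute $|B_\rho| \le 2^{b+O} N_U N_P$ from \Cref{lem:image-size}, which uses \Cref{lem:class-bound} to collapse register relabelings. Writing $\|\mu\|_\infty = 2^{-H_\infty(\mu)}$ gives the exponential form at once. The vanishing statement follows because the exponent tends to $-\infty$ under the stated hypothesis.

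The remaining work is bookkeeping on the numerator.
\begin{itemize}
\item \emph{Independence of $N$.} This follows from \Cref{thm:poly}, whose counts $2^b$, $N_U$, $2^O$ and $N_P$ are all fixed by the frame.
\item \emph{$N_U$ and $N_P$ depend only on $W$.} The bound $N_U \le (W{+}1)^W$ is already recorded in that proof, since the unresolved-wire count is capped at $W$. $N_P$ is fixed by the bounded-length initializer space over $W$ registers.
\item \emph{$b + O \le |\Fields|/3$.} I count register fields by port signature. Each $\textsc{divmod}_C$ has one input and two outputs, and each non-commutative binary operation has two inputs and one output. In the restricted catalog every register field appears in its operation's port signature, so these operations own pairwise disjoint triples of fields. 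Summing gives $3(b+O) \le |\Fields|$.
\end{itemize}

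The main obstacle is the field-counting step. I need to confirm that a $\textsc{divmod}_C$ cannot simultaneously be a non-commutative binary operation that is counted twice. This holds because it is the unique two-output type, with a constant divisor and hence a single register input. I also need read-modify-write ports to be counted as distinct fields for each operation, which I will fix by reading $\Fields(F^*,\tau^*)$ as the multiset of per-operation field slots, as in \Cref{def:program}.
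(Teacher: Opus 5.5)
Your proposal is correct and follows the paper's proof. That proof is exactly the union bound $\mu(B_\rho)=\sum_{C\in B_\rho}\mu(C)\le|B_\rho|\,\|\mu\|_\infty$, using $B_\rho\subseteq\mathcal{F}$, \Cref{lem:image-size}, and $\|\mu\|_\infty=2^{-H_\infty(\mu)}$. The paper leaves the numerator bookkeeping to the inline justification in the statement, and your read-modify-write worry is moot: the catalog's non-commutative binary operations are already three-address (\texttt{SUB} has arity $3$ in Table~\ref{tab:isa}), so no reinterpretation of $\Fields$ is needed.
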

\begin{proof}
$\mathrm{Success}(C)$ holds iff $C \in B_\rho$, and $B_\rho \subseteq
\mathcal{F}(F^*, \tau^*)$ (\Cref{def:frame-functions}), so $\Pr_{C \sim \mu}[C \in
B_\rho] = \mu(B_\rho) = \sum_{C \in B_\rho} \mu(C) \le |B_\rho| \cdot
\|\mu\|_\infty$. \Cref{lem:image-size} bounds $|B_\rho| \le 2^{b+O} N_U N_P$, and
$\|\mu\|_\infty = 2^{-H_\infty(\mu)}$ gives the min-entropy form.
\end{proof}

\begin{definition}[Independently free ports]\label{def:free-ports}
Ports $p_1, \dots, p_j$ are \emph{independently free} (with respect to a base
assignment $\rho_0$ and alternative sources $a_1, \dots, a_j$) if the $2^j$
assignments $\{\rho_S : S \subseteq \{1, \dots, j\}\}$, where $\rho_S$ routes
$p_i$ to $a_i$ for $i \in S$ and to its base source otherwise, produce $2^j$
pairwise distinct output sequences $\exptr(F^*, \tau^*, \rho_S, N)$.
\end{definition}

\begin{lemma}[Function-count lower bound]\label{lem:function-lower}
If the frame has $j$ independently free ports, then $|\mathcal{F}(F^*, \tau^*)|
\ge 2^j$.
\end{lemma}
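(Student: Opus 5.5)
The lemma follows almost immediately from \Cref{def:free-ports} together with \Cref{def:frame-functions}. The plan is to exhibit $2^j$ configurations of the frame that produce pairwise distinct output sequences. Each configuration must be admissible, meaning it is a register assignment in $R(F^*,\tau^*)$ paired with a pre-loop initializer in $\mathcal{I}$.

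Fix the base assignment $\rho_0$ and the alternative sources $a_1,\dots,a_j$ that witness independent freeness. For each $S\subseteq\{1,\dots,j\}$, form $\rho_S$ as in \Cref{def:free-ports}: port $p_i$ is routed to $a_i$ when $i\in S$ and to its base source otherwise, and every other field agrees with $\rho_0$. Fix the initializer $p_0$ used implicitly in $\exptr(F^*,\tau^*,\rho_S,N)$ throughout. Then define the map
\[
  \Phi: 2^{\{1,\dots,j\}} \to \mathcal{F}(F^*,\tau^*),\qquad \Phi(S) := \exptr(F^*,\tau^*,\rho_S,N).
\]
The definition of independently free ports says exactly that $\Phi$ is injective. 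Hence $|\mathcal{F}(F^*,\tau^*)| \ge |\operatorname{im}\Phi| = 2^j$.

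The one point needing care, and the only real obstacle, is that each $\rho_S$ is a legitimate element of $R(F^*,\tau^*)$. Each $\rho_S$ has to be a map $\Fields(F^*,\tau^*)\to R$, so rerouting a port must not create or delete fields, and the fixed initializer must lie in $\mathcal{I}$. I would handle this by reading the ``alternative source'' $a_i$ as a register in $R$ assigned to the field of $p_i$. Under that reading, $\rho_S$ differs from $\rho_0$ only in the values on the fields $p_i$ with $i\in S$, so it remains a total map $\Fields\to R$ and lies in $R(F^*,\tau^*)=R^{\Fields}$. The initializer is left unchanged, so it stays in $\mathcal{I}$.

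If instead the definition intends a wire-level rerouting, I would first note that the frame and types are fixed, so the field set is unchanged. Any rerouting is then still expressed as a register choice per field, and the same argument applies. With admissibility settled, injectivity of $\Phi$ gives the bound.
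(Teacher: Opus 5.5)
Your argument is correct and is the paper's own: the paper simply notes that the $2^j$ assignments $\rho_S$ of the independently-free-ports definition have pairwise distinct outputs and therefore give $2^j$ distinct elements of $\mathcal{F}(F^*,\tau^*)$. Your explicit check that each $\rho_S$, paired with a fixed initializer, is an admissible configuration is a step the paper leaves implicit, and it is handled correctly.
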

\begin{proof}
The $2^j$ assignments $\rho_S$ of \Cref{def:free-ports} have pairwise distinct
outputs, hence lie in $2^j$ pairwise distinct functional classes.
\end{proof}

\begin{corollary}[Quantitative incompleteness]\label{cor:quant-incomplete}
If the frame has $j$ independently free ports and $\mu$ assigns mass at most
$2^{-j}$ to every function (equivalently $H_\infty(\mu) \ge j$), then
\[
  \Pr_{C \sim \mu}[\mathrm{Success}] \le 2^{\,b + O + \log_2(N_U N_P) - j}.
\]
The uniform measure on $\mathcal{F}$ is the canonical such measure; by
\Cref{lem:function-lower} $|\mathcal{F}| \ge 2^j$, so $\|\mu\|_\infty =
1/|\mathcal{F}| \le 2^{-j}$. The bound tends to $0$ along any family with $j - b -
O - \log_2(N_U N_P) \to \infty$; since $b + O \le |\Fields|/3$ and $\log_2(N_U N_P)
= O(W \log W)$, this holds whenever the number of independently free ports exceeds
$|\Fields|/3 + O(W \log W)$, in particular for frames in general position, where a
constant fraction of the $|\Fields|$ ports are independently free and $|\Fields| =
\omega(W \log W)$.
\end{corollary}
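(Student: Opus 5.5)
This is a direct specialization of \Cref{thm:typical-incomplete}, and the plan is to plug the stated hypothesis into that bound and then check the asymptotic side condition.

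\emph{First}, we invoke \Cref{thm:typical-incomplete} for the given $\mu$. It gives
\[
\Pr_{C\sim\mu}[\mathrm{Success}] \le 2^{\,b+O+\log_2(N_UN_P)-H_\infty(\mu)}.
\]
The hypothesis $\|\mu\|_\infty\le 2^{-j}$ is by definition the same as $H_\infty(\mu)\ge j$. Substituting it yields the displayed bound, since the right-hand side is decreasing in $H_\infty(\mu)$. No further structure of $\mu$ is needed: the only input is the image bound $|B_\rho|\le 2^{b+O}N_UN_P$ of \Cref{lem:image-size} times the maximal point mass.

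\emph{Second}, we show the uniform measure on $\mathcal{F}(F^*,\tau^*)$ qualifies. By \Cref{lem:function-lower}, $j$ independently free ports force $|\mathcal{F}|\ge 2^j$. The uniform measure therefore has $\|\mu\|_\infty = 1/|\mathcal{F}|\le 2^{-j}$, so it satisfies the hypothesis and the bound applies to it.

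\emph{Third}, we prove the limiting statement. Along any family with $j-b-O-\log_2(N_UN_P)\to\infty$ the exponent tends to $-\infty$, so the bound tends to $0$. To express this in terms of the frame, we use the estimate already proved inside \Cref{thm:typical-incomplete}: $b+O\le|\Fields|/3$, because each $\textsc{divmod}_C$ and each non-commutative binary operation carries three distinct register fields and these field sets are disjoint. For the initializer term we bound $N_U\le (W{+}1)^W$, so $\log_2 N_U = O(W\log W)$. For $N_P$ we argue that initializing prefixes have bounded length over a $W$-register machine with a fixed constant set, so $\log_2 N_P = O(W\log W)$ as well. Hence $j > |\Fields|/3 + O(W\log W) + \omega(1)$ suffices.

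For frames in general position, write $j\ge c|\Fields|$ with $c>1/3$. The condition then reduces to $(c-\tfrac13)|\Fields| - O(W\log W)\to\infty$, which holds when $|\Fields|=\omega(W\log W)$.

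The main obstacle is this last "general position" clause. The bound needs $c>1/3$, but the statement only says "a constant fraction" of the ports are independently free. If the fraction could fall below $1/3$, the subtraction of $b+O$ might swamp $j$. I would handle this by reading "general position" as supplying a fraction exceeding the $1/3$ ceiling on $(b+O)/|\Fields|$. Alternatively, when $b+O=o(|\Fields|)$, any constant fraction suffices. I would state whichever reading is used explicitly. The bound on $\log_2 N_P$ also rests on the bounded-prefix convention of Phase~5, and I would cite that convention directly.
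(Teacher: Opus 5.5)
Your argument is the paper's: substitute $H_\infty(\mu)\ge j$ into \Cref{thm:typical-incomplete}, then use \Cref{lem:function-lower} to get $\|\mu\|_\infty = 1/|\mathcal{F}|\le 2^{-j}$ for the uniform measure, and those two steps are the whole of the paper's proof. The paper treats the asymptotic clause as immediate, and your caveat there is correct: the clause needs $j-(b+O)\ge c'|\Fields|$ for some constant $c'>0$, for instance a free-port fraction above $1/3$ or $b+O=o(|\Fields|)$, because a bare ``constant fraction'' of free ports need not exceed the $|\Fields|/3$ ceiling on $b+O$.
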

\begin{proof}
$H_\infty(\mu) \ge j$ in \Cref{thm:typical-incomplete} gives $\Pr \le 2^{\,b + O +
\log_2(N_U N_P) - j}$; for the uniform measure \Cref{lem:function-lower} gives
$\|\mu\|_\infty = 1/|\mathcal{F}| \le 2^{-j}$.
\end{proof}

\begin{remark}[The MDL prior and the Solomonoff idealization]\label{rem:solomonoff}
The engine orders a frame's programs by description length, so the prior it places
on $\mathcal{F}(F^*, \tau^*)$ is $\mu_D(C) \propto 2^{-D(C)}$, with $D(C)$ the
shortest wiring-and-initializer description of $C$, the quantity the engine
computes and an upper bound on $K(C)$ (the finite stand-in for the Solomonoff prior
$M$). \Cref{thm:typical-incomplete} applies verbatim and gives incompleteness
whenever $H_\infty(\mu_D) > b + O + \log_2(N_U N_P)$. Two cautions separate this from
the uniform case. First, the free-port count does not transfer; $j$ independently
free ports give $|\mathcal{F}| \ge 2^j$, but $\mu_D$ weights those functions
unequally and concentrates on the few of shortest description, so $H_\infty(\mu_D)$
can fall well below $j$. Second, the min-entropy condition is only sufficient; what
actually controls the length-weighted success probability is $\mu_D(B_\rho)$, the
prior mass of WSBP's reachable image, which is small precisely when that image
misses the bulk of the low-description-length functions. On a frame whose prior is
dominated by one trivial low-complexity wiring, the simplest function carries most
of the mass, a structural method that reaches it is correctly not incomplete, and
the bound is vacuous. Incompleteness under a length-weighted prior is therefore a
strictly stronger and still open property, asking whether the short-description
functions are themselves outside WSBP's structural reach, not merely whether the
frame expresses many functions.
\end{remark}

\begin{remark}[Operating regimes]\label{rem:regimes}
The bound $2^{b+O} N_U N_P / |\mathcal{F}|$ is small only when the frame's function
count outgrows the routing, resolution, operand-ordering, and initializer axes
WSBP already enumerates; the surplus must come from wiring freedom its structured
backward trace cannot reach, which \Cref{lem:function-lower} measures through the
free-port count $j$. Two extremes bracket this. A frame in general position has $j =
\Theta(|\Fields|)$ independently free ports, so $|\mathcal{F}| \ge
2^{\Theta(|\Fields|)}$ while $2^{b+O} N_U N_P$ stays $2^{|\Fields|/3 + O(W \log W)}$,
and $\Pr[\mathrm{Success}] = 2^{-\Theta(|\Fields|)}$. A degenerate frame, for
instance a chain of $\textsc{inc}$ operations on a single register, has $j =
O(1)$; its distinct outputs come only from the initializer and routing axes that
$2^{b+O} N_U N_P$ already counts, WSBP enumerates them, and there is no incompleteness.
The theorem distinguishes the two correctly. Incompleteness is a property of
frames whose wiring freedom exceeds what the backward trace reaches, not of the
machine alone.
\end{remark}

\begin{remark}[Compatibility with benchmark performance]\label{rem:benchmark}
SPTM's benchmarks are curated targets generated by structurally minimal,
compressible programs with pipeline dataflow, a vanishing fraction of the
frame's function count. \Cref{thm:typical-incomplete} bounds WSBP's coverage of
$\mathcal{F}(F^*, \tau^*)$ as a whole and says nothing about the
failure rate \emph{within} the curated subclass, so empirical success on the
benchmark is fully compatible with typical incompleteness.
\end{remark}

\section{The Search-Inference Coupling}

The two costs developed above, score-oracle search and structural inference,
are not independent. This section proves the capstone theorem joining them,
that the same interaction density which sets the search exponent also sets the
cost of reading structure, so easing one tightens the other.

\begin{theorem}[Search-Inference Coupling]\label{thm:coupling-capstone}
Let $P^* = (F^*, \tau^*, \rho^*)$ have coupling width $\kappa \ge 1$ with
respect to baseline $\rho_0$ and the match score.
\begin{enumerate}[label=(\alph*)]
\item \emph{Score-oracle cost} on the worst-case score in the coupling family
$F_{\delta, \Delta}$: $\Omega\bigl((W-1)^\kappa / \msig\bigr)$ expected
evaluations (Coupling Barrier Theorem clauses~III/III$'$, with $d = W$). The
match score is $(\delta, \Delta)$-consistent; on the graded-translation class it
is two-valued and the bound transfers directly.
\item \emph{Wire-space inference cost} on the register-machine match-score CSP:
$O\bigl(2^b M^2 + R \cdot NL\bigr)$ (\Cref{thm:poly}), where $R$ is the
frame-bounded candidate count; the backward trace and structural merges are
score-free, and the target is evaluated only $O(R)$ times, to select among the
realized candidates. For $b, M, W = O(1)$ and $N \ge M$ this is $O(NL)$ with
$O(1)$ score evaluations.
\item For $\kappa \ge 2$ and $W \ge 3$ with $b, M, W = O(1)$ and $N \ge M$, the ratio is
$\Omega\bigl((W-1)^\kappa / (\msig \cdot NL)\bigr)$, exponential in $\kappa$.
\end{enumerate}
\end{theorem}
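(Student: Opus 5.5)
The theorem assembles results already proved, so the plan is to verify each clause by instantiating an earlier statement and then combine them.

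For clause (a), I will instantiate the abstract CSP of \Cref{def:coupling-set-relaxed} with $X = \Fields(F^*,\tau^*)$, $D = R$, $d = W$. This is exactly \Cref{def:coupling-rm-relaxed}. Take a $\kappa$-coupled set $S$ at $\rho_0$ under the match score. \Cref{lem:flat} and \Cref{lem:uninformative} then confine every signal assignment to the witness cube $Q$ of size $(W-1)^\kappa$. Clause~(III$'$) of \Cref{thm:barrier} together with \Cref{cor:yao}(b) gives $\Omega((W-1)^\kappa/\msig)$ expected queries against the worst-case consistent score in $F_{\delta,\Delta}$. Two further points need checking:
\begin{itemize}
\item the actual match score lies in that adversary family, which is immediate because its values satisfy (a$'$) and (b$'$) by the definition of coupling;
\item on the graded-translation class the score is two-valued (\Cref{rem:zp-rigid}), so it coincides with a member of the adversary family and the bound applies to the score itself, not merely to a worst case.
\end{itemize}

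For clause (b), I will cite \Cref{thm:poly} directly for the $O(2^b M^2 + R\cdot NL)$ running time. The extra content is the claim about score access. Phases~2--4 of WSBP depend only on $(F^*,\tau^*)$ and never read $s$; this is stated in the Phase~5 description and in \Cref{def:wsbp-image}, where $B_\rho$ is independent of $s$. The target is therefore consulted once per realized candidate, giving $R$ score evaluations in total. For $b,M,W=O(1)$, \Cref{cor:poly-simple} gives $O(NL)$ time, and $R=O(1)$ gives $O(1)$ score evaluations.

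For clause (c), I will divide the lower bound of (a) by the upper bound of (b). With $b,M,W$ constant and $N\ge M$, the denominator is $\Theta(NL)$, so the ratio is $\Omega((W-1)^\kappa/(\msig NL))$. The hypotheses $W\ge 3$ and $\kappa\ge 2$ ensure the base $W-1\ge 2$, so the ratio is genuinely exponential in $\kappa$.

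The main obstacle is that clauses (a) and (b) concern different problems: (a) is a worst case over the coupling family, while WSBP in (b) only succeeds on instances inside its image $B_\rho$. The ratio must therefore be read with care. I would state that the comparison is between the score-oracle cost on the worst-case consistent score and WSBP's cost whenever it succeeds. Soundness follows from \Cref{thm:correct}, and the gap is the price of incompleteness (\Cref{thm:incomplete}). Making this explicit keeps the ratio in (c) from being misread as a contradiction with the lower bound: WSBP is not a score-oracle algorithm, since it reads dataflow, so it lies outside the model of \Cref{cor:yao}.

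A secondary concern is that $W$ fixed makes $\kappa\le|\Fields|$ bounded. "Exponential in $\kappa$" should then be understood as a function of $\kappa$ across frames, with the constants in (b) depending only on $b,M,W$.
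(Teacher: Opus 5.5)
Your proposal is correct and follows the paper's proof: clause (a) instantiates the Coupling Barrier Theorem and its Yao corollary at $D = R$, $d = W$; clause (b) is \Cref{thm:poly} together with the observation that the backward trace, structural merges, and realization never read the score, so the target is consulted only $O(R)$ times; and clause (c) is the ratio of the two bounds. Your two extra caveats go a little beyond the paper's three-line proof and are consistent with how it interprets the result in its Pareto corollary. The first is that the ratio compares a worst-case score-oracle bound with WSBP's cost on the instances WSBP reaches. The second is that fixing $b, M, W$ bounds $\kappa$, so ``exponential in $\kappa$'' describes a functional form across frames rather than asymptotic growth within a single frame.
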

\begin{proof}
(a) The Coupling Barrier Theorem instantiated with $D = R$, $d = W$. (b)
\Cref{thm:poly}; the backward trace and structural merges run union-find on the
wire topology without executing the program or computing match scores, and the
realization's permutation enumeration is likewise score-free, so the target is
read only when the $O(R)$ realized candidates are scored. (c) The ratio of (a)
to (b).
\end{proof}

\begin{remark}[What the coupling says]\label{rem:coupling-meaning}
\Cref{thm:coupling-capstone} is a conservation law. The interaction density of
the score function simultaneously governs search cost (through coupling width,
clause~III) and inference cost (through treewidth, clause~II). Any algorithm
must pay the full interaction budget, in time (score-oracle search, exponential
in $\kappa$, complete) or in coverage (structural inference, polynomial in
treewidth, incomplete). The coupling-treewidth bound $\kappa \le \twscore + 1$
(\Cref{thm:coupling-tw}) holds unconditionally for the binary score, and for the
match score wherever the theorem's hypotheses, $\Delta > 2\delta$ together with
the local minimum~\eqref{eq:lmh}, are met; where it holds, relaxation easing
inference simultaneously weakens the search
barrier, so the only escape is incompleteness. The bound is a constraint on the
landscape, not on the algorithm strategy; any hybrid of search and structural
inference navigates the same fixed interaction budget. The scalar-fitness
channel cannot extract the structural information the AST channel provides, and
even the stronger partial-program-evaluation oracle is information-free against
the provable class (\Cref{cor:partial-oracle}); what closes the gap is
qualitatively different structural access, whose price is incompleteness.
\end{remark}

\begin{remark}[Scope of the register coupling width]\label{rem:reg-kappa-scope}
The width $\kappa$ in \Cref{thm:coupling-capstone} is the coupling width of the
register match-score CSP (\Cref{def:coupling-rm}); the theorem assumes
$\kappa \ge 1$ and clause~(c) assumes $\kappa \ge 2$. A concrete per-instance
lower bound on this width, the register-field analog of
$\kappa_{\mathrm{bin}} \ge \Knt$ (\Cref{thm:binary-coupling}), is established here
only for type discovery; for register inference $\kappa$ enters as a hypothesis
rather than a separately proven quantity. Clause~(c)'s exponential separation is
therefore conditional on the register CSP having $\kappa \ge 2$, and holds
unconditionally at the two-valued, graded-translation instance of clause~(a).
\end{remark}

\begin{corollary}[Sharpened coupling: a Pareto frontier]\label{cor:pareto}
Let $\nu$ be the uniform measure on $\mathcal{F}(F^*, \tau^*)$ for a frame in
general position (\Cref{cor:quant-incomplete}). Then score-oracle search has
universal coverage at exponential cost $\Omega((W-1)^\kappa / \msig)$, while WSBP
has polynomial cost $O(2^b M^2 + R \cdot NL)$ but realizes at most a
$2^{-\Omega(|\Fields|)}$ fraction of the frame's functions. The two bounds pin a
Pareto frontier on the axes (coverage of
$\mathcal{F}(F^*, \tau^*)$, cost); no score-oracle algorithm occupies the strictly-dominating corner (polynomial cost,
universal coverage), since that would contradict clause~(a), and the strictly-dominated
corner is Pareto-inferior to score-oracle search. WSBP attains the polynomial-cost corner only at restricted coverage, and whether any algorithm, score-oracle or structural, occupies the dominating corner is not settled here, since clause~(a) constrains score-oracle search alone. The two bounds answer to different adversaries, the cost bound (clause~(a)) to the worst-case score and the coverage bound to the uniform measure $\nu$; the frontier is the consequence of that pair of regimes, not a single-adversary statement and not a framing.
\end{corollary}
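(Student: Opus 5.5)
The plan is to assemble Corollary~\ref{cor:pareto} from the two already-proved bounds and then make the Pareto-frontier language precise on the two axes. I would treat coverage and cost as two separate coordinates. For each algorithm class, I would place it at a point determined by one earlier result.

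First, the score-oracle side. A score-oracle search that enumerates the full register space $R^{|\Fields|}$ (or $Q$) is complete. Approach~A is the explicit witness here: Lemma~\ref{lem:A-safe} and Lemma~\ref{lem:A-complete} show that it recovers every $\rho^*$, hence every class in $\mathcal{F}(F^*,\tau^*)$, so its coverage is $1$. Its cost is bounded below by clause~(a) of Theorem~\ref{thm:coupling-capstone}, namely $\Omega((W-1)^\kappa/\msig)$ against the worst-case consistent score. This step is a direct citation.

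Second, the WSBP side. Its cost is $O(2^bM^2+R\cdot NL)$ by Theorem~\ref{thm:poly}. For its coverage, note that $\nu$ is uniform on $\mathcal{F}$, so the fraction reached is $\nu(B_\rho)=|B_\rho|/|\mathcal{F}|$. I would bound the numerator by $2^{b+O}N_UN_P$ using Lemma~\ref{lem:image-size}, and the denominator from below by $2^j$ using Lemma~\ref{lem:function-lower}. Then Corollary~\ref{cor:quant-incomplete}, in its general-position regime, gives a fraction of at most $2^{b+O+\log_2(N_UN_P)-j}$. There $j=\Theta(|\Fields|)$, $b+O\le|\Fields|/3$ and $\log_2(N_UN_P)=O(W\log W)=o(|\Fields|)$, so the fraction is $2^{-\Omega(|\Fields|)}$. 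The constant hidden in $\Omega$ needs care: it requires the free-port fraction $c$ to exceed $1/3$. I expect this to be the main obstacle, because ``general position'' has to be read as supplying $j\ge(1/3+\epsilon)|\Fields|$. I would state this explicitly as the operative meaning of the hypothesis of Corollary~\ref{cor:quant-incomplete}, rather than let the exponent silently go nonnegative.

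Third, the frontier statements. Suppose a score-oracle algorithm had polynomial cost and universal coverage. Universal coverage includes the worst-case $(\delta,\Delta)$-consistent instance, and on that instance clause~(a) forces exponential cost; this is a contradiction whenever $\kappa\ge2$ and $W\ge3$ make $(W-1)^\kappa/\msig$ superpolynomial. The strictly-dominated corner, with exponential cost and restricted coverage, is dominated by Approach~A, which has the same cost order and full coverage. I would then remark that clause~(a) says nothing about structural algorithms, so the existence of a polynomial, fully covering structural method is explicitly left open, exactly as the statement says.

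Finally, I would note that the two coordinates are measured against different adversaries: the worst-case score for cost and $\nu$ for coverage. The corollary is therefore the conjunction of two independent bounds, and no joint adversary needs to be constructed.
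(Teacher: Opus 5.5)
Your proposal is correct and follows the paper's own proof. The ingredients match: clause~(a) of \Cref{thm:coupling-capstone} gives the score-oracle cost; \Cref{thm:poly} with \Cref{cor:quant-incomplete} gives WSBP's cost and coverage; a contradiction with clause~(a) excludes a polynomial-cost, universal-coverage score-oracle algorithm; complete score-oracle search dominates the other corner; and clause~(a) does not constrain WSBP.

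Your additions make explicit what the paper leaves implicit. One is an exhaustive score-oracle witness for universal coverage, which should enumerate initializers as well as register patterns and test for full match $N$. The other is reading ``general position'' as supplying $j \ge (1/3+\epsilon)|\Fields|$ free ports, so that the exponent is genuinely negative.
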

\begin{proof}
Clause~(a) is \Cref{thm:coupling-capstone}(a); clause~(b) combines
\Cref{thm:poly} (time) with \Cref{cor:quant-incomplete} (coverage). A
polynomial-cost, universal-coverage \emph{score-oracle} algorithm would violate
the $\Omega((W-1)^\kappa/\msig)$ lower bound of clause~(a), and an
exponential-cost, restricted-coverage one is dominated in both axes by
score-oracle search on the same coverage subset; the two operating points
therefore lie on a Pareto frontier within the score-oracle class. WSBP reads
dataflow rather than score and is not constrained by clause~(a);
\Cref{thm:typical-incomplete} bounds its coverage directly, placing it at the
polynomial-cost, restricted-coverage point.
\end{proof}

\begin{proposition}[WSBP as chain-decomposition inference]\label{prop:chain}
The register-inference CSP, restricted to within-iteration sequential dataflow,
has a constraint graph of treewidth $O(W)$; WSBP is the corresponding
chain-decomposition inference, which is why it runs in polynomial time on the
chain while score-oracle search on the full register CSP (treewidth $|\Fields| -
1$, coupling width $\kappa$) does not. This is the applied instance of the
conservation law. The full CSP has high interaction density (both barriers
active), the chain relaxation has low interaction density (both barriers
absent), and the gap is paid in incompleteness.
\end{proposition}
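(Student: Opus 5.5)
The plan is to make the phrase ``restricted to within-iteration sequential dataflow'' precise as a relaxation of the register-inference CSP, and then to read off three facts in turn: the treewidth of the relaxation, the treewidth of the full CSP, and the identification of WSBP with dynamic programming over a path decomposition of the relaxation.

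\emph{Modelling.} The variables are the register fields $\Fields(F^*,\tau^*)$, with domain $R$ and $d = W$. The full CSP has a single semantic constraint, $\exptr(F^*,\tau^*,\rho,N) = s$, whose scope is all of $\Fields$. Its primal graph (\Cref{def:primal}) is therefore $K_{|\Fields|}$, with treewidth $|\Fields|-1$. Under the match score every field also interacts in $\Gscore$ whenever the program is register-connected (\Cref{rem:reg-conn}), so the full CSP is dense on both the inference side and the search side. Coupling width $\kappa$ enters here only as the hypothesis of \Cref{thm:coupling-capstone}(a).

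\emph{The relaxation.} I would replace the global constraint by local ones along the loop-body order $l_1,\dots,l_M$. For each cut between $l_i$ and $l_{i+1}$, introduce the live-wire classes crossing that cut, and for each operation a constraint relating its at most three port fields to the live classes on its two sides. I would also drop the cross-iteration constraints, except the single wrap edge that Phase~2 already handles. The key count is that at any cut at most $W$ values can be live, because there are only $W$ physical registers. The bags $\{\text{fields of } l_i\} \cup \{\text{live classes at cuts } i-1, i\}$ therefore form a path decomposition of width at most $2W+3 = O(W)$. I would verify the running-intersection property directly: a wire class occupies a contiguous interval of cuts, from its producer to its last reader. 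This gives $\tw = O(W)$.

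\emph{Identifying WSBP.} I would show that Phases~2--4 are exactly the variable-elimination pass of clause~(II) of \Cref{thm:barrier} on this path.
\begin{enumerate}[label=(\roman*)]
\item The backward trace visits operations in reverse chain order.
\item Each union-find merge is local to a single bag.
\item Phase~4 enumerates the $P(W,n_c)$ assignments of classes to registers, which is the $d^{O(W)}$ bag-domain factor.
\end{enumerate}
This recovers the $O(n\,d^{\tw+1})$ form with $n = M$, and for fixed $W$ it is consistent with the cost bound of \Cref{thm:poly}. The contrast with score-oracle search then follows from \Cref{thm:coupling-capstone}(c).

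\emph{Incompleteness.} ``Paid in incompleteness'' I would discharge by observing that the dropped constraints are exactly the wiring freedom not reached by the chain trace. Solutions of the relaxation therefore realize only the image $B_\rho$, and \Cref{thm:incomplete} and \Cref{thm:typical-incomplete} apply verbatim.

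The main obstacle is the bound of $W$ on the number of live classes per cut in the presence of inner loops, \textsc{divmod} routings and deferred demands. Deferred demands can cross many operations, and an inner loop makes the linear order ambiguous. I would first try to handle inner loops by contracting each one to a single super-node, which adds $O(1)$ per bag for bounded $M$. I would fold the $2^b$ routings into separate decompositions rather than into one bag. If that fails, I would weaken the claim to treewidth $O(W + b)$, which is still polynomial for bounded frames.
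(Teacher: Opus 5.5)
The genuine gap is the incompleteness step, and it exposes a problem with identifying WSBP as \emph{exact} variable elimination.

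Dropping constraints can only enlarge the solution set, so $\rho^*$ stays a solution of your relaxed CSP for every target. Your cut bound does not exclude it either: with only $W$ registers, every assignment has at most $W$ live values at each cut. Hence ``solutions of the relaxation realize only $B_\rho$'' has the inclusion backwards. For the target built in \Cref{thm:incomplete}, the relaxation has $\rho^*$ as a solution, and $\rho^*$ realizes a class outside $B_\rho$.

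By the same token, suppose Phases~2--4 really were the elimination pass over that relaxation, with Phase~4 supplying the bag-domain enumeration. Then they would range over every admissible wiring, Phase~6 would verify $\rho^*$ among the candidates, and WSBP would be complete.

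What actually makes WSBP polynomial and incomplete is that it never enumerates bag assignments. Phase~2 commits deterministically to the nearest visible producer for each demand, so the topology is a function of $(\delta,u,o)$ alone (\Cref{lem:image-size}). Phase~4's $P(W,n_c)$ maps only relabel registers within one functional class (\Cref{lem:class-bound}). A genuine bag enumeration would instead range over which live class each port reads, which is exactly the wiring freedom WSBP gives up.

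The paper's proof asserts no more than that the backward trace plus structural resolution follows the elimination \emph{order} of the chain, without score queries, and it takes the cost from \Cref{thm:poly}. The incompleteness clause rests on the image bound (\Cref{lem:image-size} with \Cref{thm:incomplete} and \Cref{thm:typical-incomplete}). The ``chain relaxation'' must therefore be read as a restriction of the wiring space to the trace's topologies, not as a constraint-dropping relaxation.

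Apart from this, your plan works, and the treewidth half is tighter than the paper's one line, which infers bandwidth $O(W)$ from each operation reading only $O(W)$ live predecessors. Liveness bounds how many values cross each cut, which is a pathwidth bound; it says nothing about how far downstream a value is read. A register written once and read by every later operation gives a high-degree vertex, hence large bandwidth, while your cut bags stay of size $O(W)$.

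A few repairs remain:
\begin{itemize}
\item A class carried across the wrap edge can be live on two disjoint intervals of cuts, which breaks running intersection. Add the at most $W$ wrapped classes to every bag; the width stays $O(W)$.
\item For the contrast with score-oracle search, cite \Cref{thm:coupling-capstone}(a), as the paper does, rather than clause~(c). Under (c)'s bounded-frame hypotheses $\kappa = O(1)$, so $(W-1)^\kappa$ is a constant.
\item Drop the claim that register-connectedness makes $\Gscore$ complete. A dataflow path to \textsc{out} does not give a nonzero mixed second difference for every pair of fields, and the paper simply takes the register coupling width as a hypothesis (\Cref{rem:reg-kappa-scope}).
\end{itemize}
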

\begin{proof}
Restricting the register-inference CSP to within-iteration sequential dataflow,
each operation reads registers written only by the $O(W)$ live predecessors, so
the constraint graph has bandwidth $O(W)$ and hence treewidth $O(W)$. WSBP's
backward trace plus structural resolution is the elimination order of this
chain decomposition, executed without score queries, which is why it runs in
the polynomial time of \Cref{thm:poly} while score-oracle search on the full
register CSP (treewidth $|\Fields|-1$, coupling width $\kappa$) is bounded below
by \Cref{thm:coupling-capstone}(a).
\end{proof}

\begin{remark}[Not a formal reformulation]\label{rem:not-reformulation}
WSBP is not a score-preserving reformulation in the sense of
\Cref{def:reformulation}; the maps between register space and wire space do not
preserve score for arbitrary assignments, the wire-to-register map is
many-to-one, and the round trip is not the identity. Consequently
\Cref{cor:3-2f} does not apply to WSBP; \Cref{thm:typical-incomplete} is a
sibling argument bounding the coverage of this specific bounded-adaptivity
algorithm, complementary to \Cref{cor:3-2f}'s lower bound on the coupling width
of any score-preserving reformulation.
\end{remark}

\section*{SI-B. The OMNIS architecture}

OMNIS is a deterministic program-search engine over a universal
register-machine instruction set. Determinism is load-bearing. With no learned
state and no randomization in acceptance, a rerun on the same input returns the
same program, which is what lets the experiments stand as a constructive
demonstration rather than a trained result.

\emph{One instruction set.} Programs are built from eighteen canonical
operations, increment and decrement, addition, subtraction, and multiplication,
multiply and modulo and divide by a constant or a register, load and copy,
output, the bitwise AND, OR, and XOR, a zero test, and a single loop, together
with one further meta-operation that calls a stored program as a subroutine.
Each instruction carries an opcode, a constant, up to four register arguments,
and an arity. The subroutine operation is the only route by which one program
could draw on another, and it is disabled throughout the experiments reported
here, which is what enforces the no-transfer condition. A single interpreter
executes every program through one dispatch over the opcodes; no strategy has
its own interpreter.

\emph{The eighteen operations, and the catalog size $d = 53$.} Table~\ref{tab:isa}
lists the instruction set as the type-discovery search enumerates it. Each
operation contributes one type-catalog entry for every admissible constant.
Thirteen operations carry no constant and contribute one entry each; the constant
multiply contributes nine ($c \in \{0,\dots,8\}$), the constant modulo eight
($c \in \{1,\dots,8\}$), the constant divide eight ($c \in \{1,\dots,8\}$), and the
load eleven ($c \in \{0,\dots,10\}$); and the loop contributes one entry per body
length $\ell \in \{1,\dots,4\}$, four in all. The catalog the search draws from is
therefore of size
\[
  d \;=\; \underbrace{13}_{\text{constant-free}}
        \;+\; \underbrace{9 + 8 + 8 + 11}_{36}
        \;+\; \underbrace{4}_{\text{loop lengths}}
        \;=\; 53 ,
\]
the domain size $d = 53$ of the type-discovery barrier $52^{\kappa}$ in the main
text. The subroutine meta-operation is disabled in the empty-library runs reported
here and contributes no entry, so it does not enter the count. The
register-inference barrier $7^{\kappa}$ uses the separate domain of the eight
registers $R_0, \dots, R_7$ ($d = 8$), not this catalog.

\begin{table}[ht]
\centering
\caption{The OMNIS instruction set, as the type-discovery search enumerates it.
``Arity'' is the number of register operands; operations with a constant range
also read that constant, and the loop reads a body length $\ell$. ``$c$'' is the
admissible constant range, with ``n/a'' for operations that carry no constant.
``Entries'' is the number of type-catalog entries the operation contributes; the
entries total $53$, the type-discovery domain $d$. The subroutine meta-operation
is disabled throughout (empty library) and contributes no entry.}
\label{tab:isa}
\begin{tabular}{rlccr}
\toprule
\# & Mnemonic & Arity & $c$ & Entries \\
\midrule
0  & \texttt{INC}    & 1 & n/a                      & 1 \\
1  & \texttt{DEC}    & 1 & n/a                      & 1 \\
2  & \texttt{ADD}    & 3 & n/a                      & 1 \\
3  & \texttt{SUB}    & 3 & n/a                      & 1 \\
4  & \texttt{MUL}    & 3 & n/a                      & 1 \\
5  & \texttt{MUL\_C} & 2 & $\{0,\dots,8\}$          & 9 \\
6  & \texttt{MOD\_C} & 2 & $\{1,\dots,8\}$          & 8 \\
7  & \texttt{MOD\_R} & 3 & n/a                      & 1 \\
8  & \texttt{DIVC}   & 3 & $\{1,\dots,8\}$          & 8 \\
9  & \texttt{DIVR}   & 4 & n/a                      & 1 \\
10 & \texttt{LOAD}   & 1 & $\{0,\dots,10\}$         & 11 \\
11 & \texttt{COPY}   & 2 & n/a                      & 1 \\
12 & \texttt{OUT}    & 2 & n/a                      & 1 \\
13 & \texttt{AND}    & 3 & n/a                      & 1 \\
14 & \texttt{OR}     & 3 & n/a                      & 1 \\
15 & \texttt{XOR}    & 3 & n/a                      & 1 \\
16 & \texttt{ISZERO} & 2 & n/a                      & 1 \\
17 & \texttt{LOOP}   & 1 & $\ell \in \{1,\dots,4\}$ & 4 \\
\midrule
\multicolumn{4}{r}{\textbf{total}} & \textbf{53} \\
\bottomrule
\end{tabular}
\end{table}

\emph{One description length.} Every candidate is measured by one
self-delimiting prefix code, built on a universal integer code, in which a
two-bit type tag selects a sub-decoder, much as a universal machine's header
selects a sub-machine. The code satisfies the Kraft inequality across all
program types at once, so the description lengths of programs of different types
are directly comparable, which is what makes the cross-family length comparisons
of the main text meaningful. This length is the system's computable stand-in for
the program's Kolmogorov complexity, an upper bound on it, evaluated exactly for
each candidate.

\emph{One gate.} A candidate is scored by strict prefix match against the
observed data, breaking at the first mismatch. It compresses when it reproduces
all of the training terms and its description length is strictly below the bit
cost of the training data. It predicts when it reproduces all of the held-out
terms, the full block, with the held-out score required to equal the block
length exactly rather than merely to be high. The two booleans define the
four-way classification used throughout. For the context-driven programs the
held-out terms are produced autoregressively, the program reading its own output
back as input, so that a single early error cascades through the rest of the
block and the prediction fails; this makes the prediction condition strict
rather than forgiving.

\emph{Four execution modes.} The same interpreter is driven in one of four ways,
a state-persisting iterative mode, a functional mode that reloads the index each
step, an emission mode that collects outputs, and a context mode that loads
registers from the sequence's own history through a permutation. The choice of
mode is made by the harness, not by the program, and all four run the same
instruction set through the same interpreter.

\emph{Nine enumeration phases, one hypothesis space.} Search proceeds through a
portfolio of enumeration phases, each covering a different region of program
topology and each, in the language of the main text, a reformulation that lowers
coupling width on the region it covers. The largest, by share of the full
corpus, is the context-register phase (about 54 percent of discoveries), which
enumerates short bodies over permutations of the sequence history. A closed-form
accelerator (about 34 percent) enumerates base-initial-operation triples into a
fixed seven-instruction body. The backward-inference phase (about 6 percent of discoveries) recovers the
loop-containing programs, the nested loops among them, by backward
demand-driven register inference; it is the structural-access method of the
main text and of SI-A, the one that reads a candidate's dataflow rather than
only its score and so steps outside the score-oracle model. The remaining
discoveries come from a functional sieve, a wide-integer sieve, a
digit-concatenation accelerator, and an iterative phase, with step-count
emissions and a further iterative variant in a final small group. These, like
the two large deductive accelerators, inject some structure but search it by
score; only the backward-inference phase reads structure directly. A
library-lookup functional phase completes the portfolio, but the
empty-database runs leave it idle. Every phase
emits programs in the same eighteen-operation set, every program runs on the
same interpreter, every candidate is measured by the same description length,
and every candidate faces the same gate. The phases differ in how they
enumerate, never in what they enumerate. For reference, the search stages of Figure~1, the source-code phases, and
the program families of \Cref{fig:price} correspond as follows. In the source code,
ISA matching, the context-register search (family CTX\_X), the closed-form
accelerator (the DARY accelerator, family DARY), and the digit-concatenation
accelerator (family CONCAT) are sub-phases of a single deductive Phase~1,
which together account for 2{,}103 of the 2{,}383 discoveries. The flat sieve
is Phase~2A and produces families FUNC and FUNC\_L together with the smaller
iterative variants ITER and ITER\_L. The branched cascade is Phase~2B and
produces family STEP. The wide-integer sieve is Phase~2C and produces family
WIDE\_BIT. Phase~2F is the
unified wire-space backward-propagation method (WSBP), which recovers
loop-containing programs by backward demand-driven register inference; in this
corpus it produces families NESTED\_LOOP and the smaller ITER\_F, 133
discoveries in all. Hierarchical
synthesis is Phase~2H; because the four sweeps run from an empty database, its
SUB\_CALL library has no entries to compose and it contributes no discoveries
to this corpus. The nine enumeration strategies of the main text are the four
deductive sub-phases (ISA matching, CTX\_X, DARY, CONCAT) and the five
enumerative phases (2A, 2B, 2C, 2F, 2H); the idle Phase~0 library lookup is not
counted among them.

\emph{Holdout sizing.} The held-out block scales with the available length,
about one quarter of the terms, which in practice is twenty terms for most OEIS
sequences, fifty for Collatz orbits, and one hundred twenty-five for the
elementary cellular automata. Each of the 244 recovered elementary rules
reproduced all one hundred twenty-five held-out terms.

\emph{Why the gate cannot be passed by luck.} The prediction condition is a
statistical test whose type-I error is bounded explicitly, which is what makes
the compression-and-prediction gate a certificate rather than a label.

\begin{proposition}[Gate false-acceptance bound]\label{prop:gate-fap}
Fix a sequence over an alphabet $\Sigma$ and a held-out block $y \in \Sigma^m$
withheld from the search. Suppose the search returns a program $p$ that
reproduces every observed training term, has description length $|p| \le L$ bits
under the self-delimiting code above, and reproduces $y$. Under the null
hypothesis $H_0$ that the returned program carries no structure predictive of
the continuation, modelled by $y$ drawn uniformly from $\Sigma^m$ independently
of the training fit, the probability that the search exhibits any
length-$\le\!L$ program clearing both gate conditions is
\[
  \Pr\nolimits_{H_0}\!\bigl[\,\exists\, p,\ |p| \le L,\ p \text{ compresses and predicts}\,\bigr]
  \;\le\; 2^{\,L+1}\,|\Sigma|^{-m}
  \;=\; 2^{-\left(m \log_2 |\Sigma| \,-\, L \,-\, 1\right)} .
\]
The bound decays exponentially in the surplus $m \log_2 |\Sigma| - L$, the number
of bits the held-out block carries beyond the length of the program that must
reproduce it. If the program is committed before $y$ is revealed, as the holdout
protocol requires, the union over candidates is unnecessary and the per-instance
bound is $|\Sigma|^{-m}$.
\end{proposition}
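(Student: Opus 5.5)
The argument is a union bound over the set of candidate programs, using the Kraft inequality of the self-delimiting code to count them. First I fix the meaning of the null. Under $H_0$ the held-out block $y$ is uniform on $\Sigma^m$ and independent of everything the search saw. So any program $p$ fixed before $y$ is drawn determines its predicted continuation $\hat y_p \in \Sigma^m$ independently of $y$. This holds because the program is a function of the training data only, by the holdout isolation described in the gate. It then gives $\Pr_{H_0}[\hat y_p = y] = |\Sigma|^{-m}$ exactly. This already proves the final sentence of the statement: a program committed before $y$ is revealed predicts with probability $|\Sigma|^{-m}$, and no union is needed.

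For the main bound I drop the compression condition, since requiring it only shrinks the event. I then bound the event that some program of length at most $L$ predicts $y$. The key step is counting the admissible programs. Because the description-length code is self-delimiting and satisfies the Kraft inequality across all program types, the number of codewords of length exactly $\ell$ is at most $2^\ell$. Summing over $\ell \le L$ gives at most $\sum_{\ell=0}^{L} 2^{\ell} = 2^{L+1}-1 < 2^{L+1}$ programs. I could instead use Kraft directly, $\sum_{|p|\le L} 2^{-|p|} \le 1$, which gives a count of at most $2^L$. I will take the cruder $2^{L+1}$ bound stated in the proposition, because it needs only injectivity of the code and not prefix-freeness.

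A union bound over this finite set then gives
\[
\Pr\nolimits_{H_0}\bigl[\exists\, p,\ |p|\le L,\ \hat y_p = y\bigr] \le 2^{L+1}\,|\Sigma|^{-m},
\]
and rewriting $|\Sigma|^{-m} = 2^{-m\log_2|\Sigma|}$ yields the exponential form. One point needs care: the union must range over a set of programs fixed independently of $y$, not over the program the search happens to return. I handle this by taking the union over all codewords of length at most $L$. That set is determined by $L$ and the code alone, so the event "the search exhibits such a program" is contained in the union event, whatever the search does.

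I expect the main obstacle to be making $H_0$ precise enough that $\hat y_p$ is really independent of $y$. For context-driven programs the continuation is produced autoregressively, by reading the program's own output. I will argue that this is still a deterministic function of $p$ and the training prefix alone. It never reads $y$, because autoregression feeds back $\hat y$, not $y$. Once that is granted, the rest is just counting.
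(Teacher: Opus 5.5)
Your proposal is correct and follows the paper's proof. Under $H_0$ each fixed program matches $y$ with probability $|\Sigma|^{-m}$; a union bound then runs over at most $2^{L+1}-1$ binary descriptions of length at most $L$, with the Kraft refinement noted as optional, and the committed-program claim is the single-program case. Two of your choices only make explicit what the paper leaves implicit: you union over all codewords of length at most $L$ rather than over training-consistent programs, and you note that the autoregressive continuation never reads $y$.
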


\begin{proof}
Under $H_0$ a fixed program reproduces a uniform $y \in \Sigma^m$ with
probability $|\Sigma|^{-m}$, independently of its behaviour on the training
terms. A union bound over the programs the search can return that are consistent
with the training data gives the first inequality; the number of such programs
is at most the number of binary descriptions of length at most $L$, namely
$2^{L+1} - 1 < 2^{L+1}$ (and is smaller under the self-delimiting code by
Kraft's inequality, since $\sum_p 2^{-|p|} \le 1$). The equality is algebra. The
final claim is the single-test case $|\{p\}| = 1$.
\end{proof}

\begin{remark}[Compression bounds the budget; prediction supplies the test]
The two gate conditions play complementary roles in \Cref{prop:gate-fap}. The
compression condition caps $L$ below the description length of the data, and the
prediction condition contributes $m \log_2 |\Sigma|$ independent test bits, so the
gate is sound exactly where the held-out block is more expensive than the
program, $m \log_2 |\Sigma| > L$, a surplus the system can compute per instance.
For the elementary cellular automaton Rule 30 the recovered program is a short
fixed body and the held-out block is $m = 125$ binary terms, so the surplus
exceeds one hundred bits and the false-acceptance probability is below
$2^{-100}$. The instances where the surplus is small are the short sequences, and
they are precisely where the system records off-diagonal near-misses rather than
discoveries, the mechanism of \Cref{sec:binary-coupling} and of Section~3.4 of
the main text; the bound predicts the location of the failures it does not
certify.
\end{remark}

\begin{remark}[Empirical surplus across the discovered set]
Measured over the 2,383 discoveries, the held-out surplus
$m \log_2 |\Sigma| - |p|$ has median $173.4$ bits, so the median false-acceptance bound, in the union form $2^{-(\mathrm{surplus}-1)}$ over the searched program space, is $2^{-172.4}$; the committed-program bound $|\Sigma|^{-m}$ of \Cref{prop:gate-fap} is smaller still. $96.6\%$ of discoveries have positive
surplus, where the bound is informative. The remaining $3.4\%$ ($81$
discoveries) have negative surplus and the bound is vacuous there; these are
short programs over small alphabets, typically binary elementary cellular
automata whose held-out block is short, for which the held-out bits do not
exceed the program length. For those instances the assurance is not the
analytical bound but the held-out reproduction itself together with direct
inspection of the recovered program. The bound therefore certifies the large
majority of discoveries outright and localizes, rather than conceals, the
minority it cannot.
\end{remark}

\paragraph{Null calibration of the discovery gate.}
\Cref{prop:gate-fap} bounds the false-acceptance probability below $2^{-100}$
per trial, irrespective of the inference path that produced the candidate, and we
complement that analytic floor with a direct empirical calibration. We drew a
stratified sample of $321$ sequences across the four populations (S1 to S4) and
replaced each by a uniformly random permutation of its own terms, holding the
multiset exactly fixed; per-sequence shuffle seeds are derived deterministically,
and multiset invariance is audited at workload-build time. Each shuffled sequence
was submitted to the unchanged engine at the same $600$\,s budget with a frozen
program library, so no state crosses trials, and discoveries are counted by the
same criterion as the main corpus.

A permutation ablation removes structure only where the multiset does not already
encode it, so we read the false-alarm rate off the high-entropy cohort
(normalized multiset entropy $H \ge 0.7$, $n = 201$), the regime in which
shuffling genuinely destroys the generating order. The gate fired on none of
these sequences, zero discoveries in $201$ trials, a one-sided Clopper-Pearson
$95\%$ upper bound of $1.48\%$ on the per-sequence false-alarm rate, comfortably
consistent with the analytic floor. The zero is uniform across populations (S1,
$0$ of $25$; S2, $0$ of $30$; S3, $0$ of $67$; S4, $0$ of $79$), and no sequence
in this cohort was even predicted, let alone compressed.

The only discoveries arose in the low-entropy tail, where a permutation leaves a
near-constant or strongly biased multiset that still carries genuine
short-program structure, and these are correct detections rather than false
alarms. Six of the $321$ trials passed the gate, all at $H < 0.5$, five
near-constant ternary sequences ($H \approx 0.02$), recovered as context-register
programs, one of them by the backward-inference loop strategy, which accounts for
its longer search; and one strongly biased binary sequence ($H \approx 0.14$),
recovered as a short functional loop. Two further biased binary sequences
($H \approx 0.12$) were predicted but not compressed; the engine reproduced their
held-out block exactly, yet the recovered program did not meet the compression
condition, so the conjunctive gate correctly declined them, and the compression
requirement is not vacuous on this workload. \Cref{tab:snull} reports the
tier-stratified counts; the pooled rate over all $321$ trials does not estimate
the false-alarm rate, which the high-entropy cohort isolates.

\begin{table}[t]
\centering
\caption{Null calibration of the discovery gate on shuffled controls, stratified
by normalized multiset entropy $H$. Discoveries are counted by the main-corpus
criterion, both gate conditions met. Predict-only trials reproduce the held-out
block but do not meet the compression condition, so the conjunctive gate
declines them. The high-entropy cohort ($H \ge 0.7$) isolates the false-alarm
rate; the one-sided Clopper-Pearson $95\%$ upper bound there is $1.48\%$,
consistent with the analytic floor of \Cref{prop:gate-fap}.}
\label{tab:snull}
\begin{tabular}{@{}lrrrr@{}}
\toprule
entropy tier & $n$ & discoveries & predict-only & CP $95\%$ upper \\
\midrule
high, $H \ge 0.7$           & $201$ & $0$ & $0$ & $1.48\%$  \\
moderate, $0.5 \le H < 0.7$ & $1$   & $0$ & $0$ & $95.00\%$ \\
low, $0.1 \le H < 0.5$      & $37$  & $1$ & $2$ & $12.19\%$ \\
near-constant, $H < 0.1$    & $82$  & $5$ & $0$ & $12.39\%$ \\
\bottomrule
\end{tabular}
\end{table}

Taken together, this is practical Solomonoff induction made operational, a
universal hypothesis space (the instruction set), a prior decreasing in
description length (the self-delimiting code), selection by the shortest program
that compresses and predicts, and search by structured enumeration over the
hypothesis space.

\begin{figure}[htbp]
\centering
\includegraphics[width=0.85\linewidth]{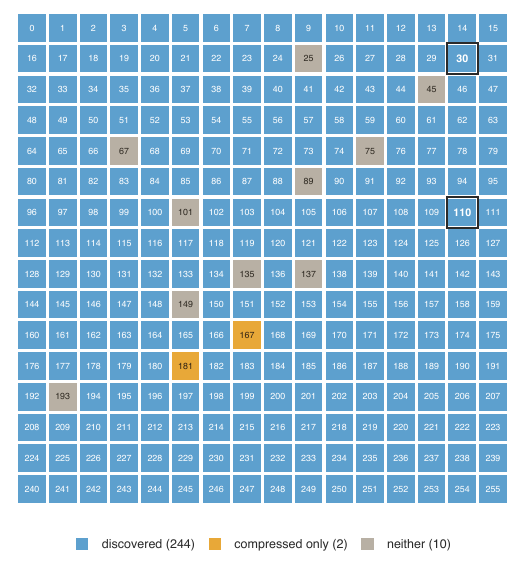}
\caption{Elementary cellular automata, full population. Each cell is one of the 256 elementary (two-state, three-neighbor) rules, shaded by outcome under the gate: a generating program that both compresses and predicts the held-out continuation (discovered, 244 rules), compression without prediction (2 rules), or neither (10 rules). Rule 30 and Rule 110 are among the discovered. Discovery uses the same architecture and time budget as every other population, with no rule-specific tuning.}
\label{fig:eca}
\end{figure}

\begin{figure}[htbp]
\centering
\includegraphics[width=0.85\linewidth]{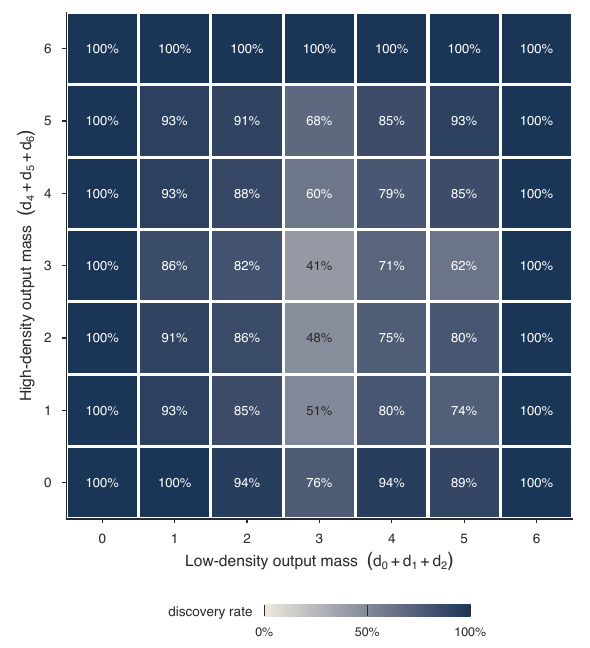}
\caption{Three-state totalistic cellular automata, discovery by output structure. The plane is indexed by two coarse descriptors of a rule's output table, the low-density output mass $(d_0+d_1+d_2)$ and the high-density output mass $(d_4+d_5+d_6)$; each cell gives the fraction of rules with those values for which the engine discovered a generating program. Of the 2,187 rules, 1,683 (seventy-seven percent) are discovered; the rate is near-complete at the structural extremes and falls toward one half in the balanced interior.}
\label{fig:s4struct}
\end{figure}

\begin{figure}[htbp]
\centering
\includegraphics[width=0.85\linewidth]{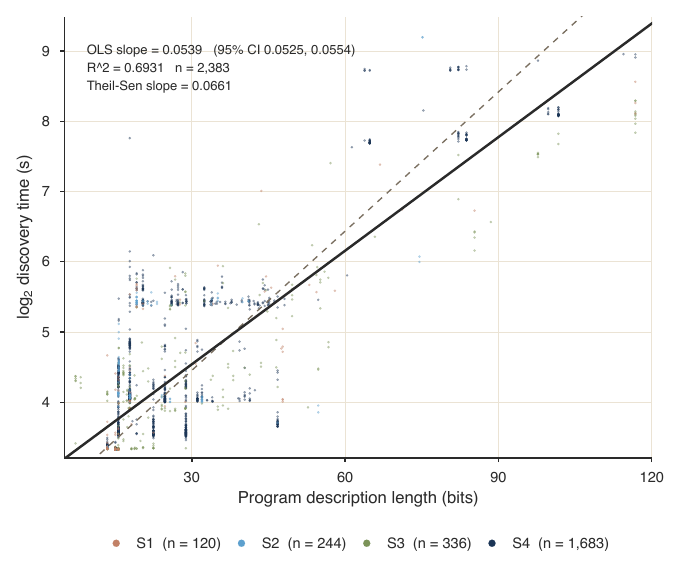}
\caption{The price, measured (diagnostic). Discovery time ($\log_2$ seconds) against program description length across all $2{,}383$ discoveries, with a small horizontal jitter separating the discrete bit-lengths, coloured by the solver family that produced each discovery: context-register programs (CTX\_X, $n=1{,}290$), the closed-form accelerator (DARY, $n=804$), the nested-loop family (NESTED\_LOOP, $n=131$), and the smaller families (FUNC, $n=67$; FUNC\_L, $n=60$; and the remainder, $n=31$). The two ITER\_F discoveries of the backward-inference phase are folded into the remainder, so that phase's $133$ discoveries appear here as the $131$ NESTED\_LOOP points. The solid line is the ordinary-least-squares fit over all discoveries (slope $0.054$, $R^2 = 0.69$); the inset also reports the fit on the $1{,}093$ discoveries from families with no internal deadline (slope $0.057$, $R^2 = 0.85$). The dotted line marks the one deadline the engine fixes, the context-register family's $30$-second phase deadline ($\approx 42$\,s wall); discoveries that reach it are recorded at that wall time rather than at one set by their program length, which censors the time axis from above and biases the pooled slope downward. The nested-loop discoveries instead cluster near $200$\,s of wall time, the median cost of reaching the longer loop bodies rather than a deadline; this is where the long programs land, the upper end of the trend. These effects are why the figure is a diagnostic of the engine's phase structure and the main text reads the slope only to an order of magnitude. The upward trend, the fast families resolving short programs low and the nested-loop family resolving the long programs high, is the shape the conservation law predicts; the precise slope is not recoverable from this censored axis.}
\label{fig:price}
\end{figure}

\emph{The cellular-automaton targets and the center-column projection.} Each
cellular-automaton target is the time evolution of the center column of a
one-dimensional automaton simulated on a width-$1512$ lattice with a fixed zero
boundary, started from a single seeded cell (the center cell set to $1$, all
others zero) and run for $N = 500$ timesteps. At each timestep
$t = 0, 1, \dots, N-1$ the value of the center cell is emitted as the $t$-th term
of the target, so the target has length $N = 500$; the lattice is wide enough
that the automaton's light cone never reaches the boundary within this horizon,
so the boundary condition is immaterial. For the $256$ elementary automata the
alphabet is $\{0, 1\}$ and the transition follows the Wolfram convention, the
output for the neighbourhood $(l, c, r)$ being bit $4l + 2c + r$ of the eight-bit
rule number, over rule numbers $0$ to $255$. For the $2{,}187$ three-state
totalistic automata the alphabet is $\{0, 1, 2\}$ and the transition depends only
on the neighbour sum $s = l + c + r \in \{0, \dots, 6\}$; each rule is a
seven-tuple $(f(0), \dots, f(6)) \in \{0, 1, 2\}^7$, indexed by a code in
$[0, 3^7)$ through $f(i) = \lfloor \mathrm{code} / 3^i \rfloor \bmod 3$.
Recovering a cellular-automaton target therefore means discovering a
register-machine program whose first $500$ outputs match this center-column
sequence, exact reproduction of the $375$ training terms and exact prediction of
the $125$ held-out terms within the per-instance budget; it does not mean
recovering the rule's infinite-time behaviour or its action on other initial
conditions.

\emph{Length exclusion and the corpus size.} A sequence is admitted only if its
length leaves a usable training prefix once the held-out tail is removed. The
per-instance gate sets the held-out length to
$K = \max(K_{\min}, \lfloor N \cdot k_{\mathrm{num}}/k_{\mathrm{den}} \rfloor)$
and requires the training prefix $N - K \ge K_{\min}$; with the values used
throughout ($K_{\min} = 20$, $k_{\mathrm{num}}/k_{\mathrm{den}} = 1/4$) this is
exactly $N \ge 40$. Of the $4{,}538$ raw candidates emitted across the four
populations, $612$ fail this criterion and are excluded before the search; none
are in the two cellular-automaton populations, whose targets are all of length
$N = 500$, so all $612$ come from the OEIS-derived rows ($251$ of the $739$ raw
S1 candidates, $361$ of the $1{,}356$ raw S3). A further $12$ S3 rows are removed
by a cross-population de-duplication step, S3 entries reproducing sequences
already present in S1. The analysed corpus is therefore
$4{,}538 - 612 - 12 = 3{,}914$ (Table~\ref{tab:exclusion}). The hardest OEIS
category loses the largest share, about seventy percent of its raw pool falling
below the length floor, which is why the alphabet analysis of Section~3.5 draws
on only twenty-six of its sequences.

\begin{table}[ht]
\centering
\caption{From the raw generated pool to the analysed corpus. ``Excluded'' counts
sequences of length $N < 40$, removed before the search; ``de-dup'' counts S3
rows removed as duplicates of S1 sequences. The four analysed populations are
$488 + 256 + 983 + 2{,}187 = 3{,}914$.}
\label{tab:exclusion}
\begin{tabular}{llrrrr}
\toprule
Pop. & Categories & Raw & Excl.\ ($N<40$) & De-dup & Analysed \\
\midrule
S1 & Collatz, OEIS morphic/cellular/hard & 739     & 251 & 0  & 488     \\
S2 & elementary CA                       & 256     & 0   & 0  & 256     \\
S3 & OEIS base/core                      & 1{,}356 & 361 & 12 & 983     \\
S4 & three-state totalistic CA           & 2{,}187 & 0   & 0  & 2{,}187 \\
\midrule
\multicolumn{2}{l}{\textbf{total}} & \textbf{4{,}538} & \textbf{612} & \textbf{12} & \textbf{3{,}914} \\
\bottomrule
\end{tabular}
\end{table}

\emph{The measured cost survives censoring and is sub-exponential on the observed
range.} Two checks address the worry that the pooled slope is an artefact of the
censored wall-time axis. First, refitting $\log_2$ discovery time against program
length with a right-censored likelihood, an accelerated-failure-time log-normal
model with the per-family deadline clusters and the budget treated as
right-censoring (a Tobit fit agrees to the reported precision), gives a slope of
$0.055$ ($95\%$ CI $[0.053, 0.056]$), within about two percent of the
ordinary-least-squares $0.054$ and still more than an order of magnitude below
the score-oracle slope of $1$; the censoring does not manufacture the gap.
Second, a model comparison on the $1{,}093$ deadline-free discoveries (program
lengths from about $7$ to $117$ bits) finds that, over this range, the discovered
cost is fit better by a low-degree polynomial in the length than by an
exponential ($\Delta\mathrm{AIC} \approx 121$ favouring a cubic), while a power
law in the length is strongly rejected against the exponential
($\Delta\mathrm{AIC} \approx 253$). This is what a structural method operating
inside the barrier should produce, not evidence against the barrier. The
exponential $52^{\kappa}$ is the worst case the score-oracle structure forces (the
coupling barrier of SI-A), and the engine reads structure and reformulates to
escape it, so the cost on the instances it solves sits far below that worst case.
The time axis measures the size of that gap, not the engine's own growth order;
the exponential bound is established by the proof, not read off these discovery
times.

\section*{SI-C. Sweep data tables}

Seven tables accompany the deposited sweep data and are rendered here as
Tables~\ref{tab:sc1} to~\ref{tab:sc7}: the per-stage contingency table (gate outcomes per
population, the counts behind Fig.~2 of the main text), the solver portfolio
(discoveries per family per population, the counts behind Fig.~S3 and the
per-population slopes of Section~3.3 of the main text), the alphabet-asymmetry
table (the decay of Section~3.5 measured on the multi-alphabet cohort, with the
twenty-six-sequence hard subset and the thirty inversion sequences in the
deposited data), the description-length distribution, the
off-diagonal characterization (the 27 predicted-only and 17 compressed-only
candidates of Section~3.4, including the divisor-counting instance that misses
compression by 0.45 bits), the held-out-surplus table underlying the
false-acceptance bounds of SI-B (median 173.4 bits, 96.6 percent positive), and
the total-ratio distribution (the vertical axis of Fig.~2 of the main text).
The machine-readable versions are provided in the deposited archive (Data,
Materials, and Software Availability of the main text).

\begin{table}[!ht]\centering
\caption{Gate outcomes by population. For each population S1 to S4, the number of candidates classified discovered (compresses and predicts), compressed only, predicted only, and neither; row sums match Table~\ref{tab:exclusion} and the corpus totals of Fig.~2 of the main text (2,383 / 17 / 27 / 1,487).}\label{tab:sc1}
\begin{tabular}{lrrrrr}\toprule
Pop. & discovered & compressed only & predicted only & neither & total \\\midrule
S1 & 120 & 1 & 8 & 359 & 488 \\
S2 & 244 & 2 & 0 & 10 & 256 \\
S3 & 336 & 13 & 19 & 615 & 983 \\
S4 & 1,683 & 1 & 0 & 503 & 2,187 \\
\midrule
total & 2,383 & 17 & 27 & 1,487 & 3,914 \\
\bottomrule\end{tabular}\end{table}

\begin{table}[!ht]\centering
\caption{Solver portfolio. Discoveries per solver family and per population; family totals match Fig.~S3. The final row gives the per-population ordinary-least-squares slope of $\log_2$ discovery time against description length (Section~3.3 of the main text).}\label{tab:sc2}
\begin{tabular}{lrrrrr}\toprule
Family & S1 & S2 & S3 & S4 & total \\\midrule
CONCAT & 0 & 0 & 9 & 0 & 9 \\
CTX\_X & 38 & 80 & 108 & 1,064 & 1,290 \\
DARY & 57 & 147 & 123 & 477 & 804 \\
FUNC & 7 & 8 & 48 & 4 & 67 \\
FUNC\_L & 9 & 3 & 19 & 29 & 60 \\
ITER & 2 & 2 & 4 & 5 & 13 \\
ITER\_F & 0 & 0 & 1 & 1 & 2 \\
ITER\_L & 0 & 0 & 0 & 1 & 1 \\
NESTED\_LOOP & 5 & 0 & 24 & 102 & 131 \\
STEP & 1 & 0 & 0 & 0 & 1 \\
WIDE\_BIT & 1 & 4 & 0 & 0 & 5 \\
\midrule
slope & 0.048 & 0.060 & 0.045 & 0.058 & \\
\bottomrule\end{tabular}\end{table}

\begin{table}[!ht]\centering
\caption{Alphabet asymmetry. Discovery rate by alphabet base for the multi-alphabet OEIS cohort, the sequences encoded at two or more bases (rates are discoveries over candidates at that base). This cohort excludes the binary-only sequences and so runs below the per-population rates of Section~3.5 (54/33/27 percent), while showing the same decay in the same direction. The twenty-six-sequence hard subset and the thirty inversion sequences (of 261) that become easier at a larger base are listed in the deposited data; their paired binary-versus-larger-base outcomes feed the exact McNemar test ($p\approx0.016$).}\label{tab:sc3}
\begin{tabular}{lrrr}\toprule
Cohort & base 2 & base 3 & base 4 \\\midrule
multi-alphabet cohort & 45\% & 24\% & 25\% \\
\bottomrule\end{tabular}
\end{table}

\begin{table}[!ht]\centering
\caption{Description-length distribution of discovered programs, in bits, per population.}\label{tab:sc4}
\begin{tabular}{lrrrr}\toprule
Pop. & min & median & mean & max \\\midrule
S1 & 13 & 15.6 & 24.6 & 117 \\
S2 & 13 & 15.0 & 18.8 & 75 \\
S3 & 7 & 17.5 & 27.5 & 117 \\
S4 & 13 & 17.9 & 24.6 & 117 \\
\bottomrule\end{tabular}\end{table}

\begin{table}[!ht]\centering
\caption{Off-diagonal characterization. The predicted-only (27) and compressed-only (17) candidates, with population, solver family, and the compression deficit $|p|-\text{train\_bits}$ in bits (positive for predicted-only, negative for compressed-only). Includes the divisor-counting instance that misses compression by about 0.45 bits (Section~3.4 of the main text).}\label{tab:sc5}
\begin{tabular}{llllr}\toprule
candidate & pop. & family & class & deficit (bits) \\\midrule
A246318\_a2 & S1 & DARY & compressed only & -20.55 \\
eca\_167 & S2 & WIDE\_BIT & compressed only & -300.47 \\
eca\_181 & S2 & WIDE\_BIT & compressed only & -300.47 \\
A007093\_a4 & S3 & FUNC & compressed only & -52.91 \\
A043096\_a2 & S3 & NESTED\_LOOP & compressed only & -16.19 \\
A043096\_a3 & S3 & NESTED\_LOOP & compressed only & -64.16 \\
A043096\_a4 & S3 & NESTED\_LOOP & compressed only & -98.19 \\
A052382\_a3 & S3 & DARY & compressed only & -89.73 \\
A052382\_a4 & S3 & ITER\_L & compressed only & -72.67 \\
A055641\_a2 & S3 & FUNC & compressed only & -44.91 \\
A055641\_a3 & S3 & FUNC & compressed only & -91.71 \\
A055641\_a4 & S3 & FUNC & compressed only & -124.91 \\
A122840\_a2 & S3 & FUNC & compressed only & -36.47 \\
A122840\_a3 & S3 & FUNC & compressed only & -89.12 \\
A122840\_a4 & S3 & FUNC & compressed only & -115.47 \\
A002654\_a4 & S3 & NESTED\_LOOP & compressed only & -41.22 \\
tot3\_0331 & S4 & STEP & compressed only & -535.88 \\
A079314\_a3 & S1 & NESTED\_LOOP & predicted only & 3.04 \\
A085587\_a2 & S1 & FUNC\_L & predicted only & 15.73 \\
A147562\_a3 & S1 & FUNC\_L & predicted only & 11.43 \\
A151922\_a3 & S1 & FUNC\_L & predicted only & 16.20 \\
A169699\_a2 & S1 & FUNC\_L & predicted only & 14.78 \\
A267700\_a3 & S1 & NESTED\_LOOP & predicted only & 39.85 \\
A001606\_a2 & S1 & FUNC\_L & predicted only & 4.53 \\
A039951\_a2 & S1 & FUNC & predicted only & 2.73 \\
A003754\_a2 & S3 & ITER & predicted only & 10.78 \\
A004488\_a2 & S3 & FUNC\_L & predicted only & 6.54 \\
A007931\_a3 & S3 & STEP & predicted only & 1.97 \\
A010785\_a2 & S3 & ITER & predicted only & 6.09 \\
A051885\_a2 & S3 & FUNC & predicted only & 4.09 \\
A056964\_a2 & S3 & NESTED\_LOOP & predicted only & 35.56 \\
A059893\_a2 & S3 & FUNC\_L & predicted only & 9.58 \\
A075928\_a2 & S3 & FUNC\_L & predicted only & 28.96 \\
A000005\_a2 & S3 & WIDE\_BIT & predicted only & 0.45 \\
A000045\_a2 & S3 & CTX\_X & predicted only & 3.11 \\
A000069\_a3 & S3 & NESTED\_LOOP & predicted only & 45.46 \\
A000203\_a3 & S3 & NESTED\_LOOP & predicted only & 37.53 \\
A000203\_a4 & S3 & NESTED\_LOOP & predicted only & 16.78 \\
A000203\_a5 & S3 & NESTED\_LOOP & predicted only & 0.68 \\
A000292\_a2 & S3 & FUNC & predicted only & 1.73 \\
A000292\_a7 & S3 & NESTED\_LOOP & predicted only & 11.97 \\
A001157\_a4 & S3 & NESTED\_LOOP & predicted only & 56.78 \\
A001969\_a3 & S3 & NESTED\_LOOP & predicted only & 28.40 \\
A002487\_a3 & S3 & NESTED\_LOOP & predicted only & 7.42 \\
\bottomrule\end{tabular}\end{table}

\begin{table}[!ht]\centering
\caption{Held-out surplus $m\log_2|\Sigma|-|p|$ over the 2,383 discoveries, per population. Positive surplus is the regime where the false-acceptance bound of SI-B is informative; the corpus median is 173.4 bits with 96.6 percent of discoveries positive.}\label{tab:sc6}
\begin{tabular}{lrrrr}\toprule
Pop. & median & mean & \% positive & non-positive \\\midrule
S1 & 5.0 & 2.7 & 85.0\% & 18 \\
S2 & 110.0 & 106.2 & 100.0\% & 0 \\
S3 & 6.6 & 7.5 & 81.2\% & 63 \\
S4 & 180.3 & 173.5 & 100.0\% & 0 \\
\midrule
total & 173.4 & 134.6 & 96.6\% & 81 \\
\bottomrule\end{tabular}\end{table}

\begin{table}[!ht]\centering
\caption{Total-ratio distribution. Description length over raw bits across the corpus, the vertical axis of Fig.~2 of the main text, per gate class.}\label{tab:sc7}
\begin{tabular}{lrrrr}\toprule
class & min & median & mean & max \\\midrule
discovered & 0.017 & 0.028 & 0.076 & 0.702 \\
compressed only & 0.074 & 0.249 & 0.281 & 0.604 \\
predicted only & 0.583 & 0.799 & 0.844 & 1.239 \\
neither & 0.043 & 0.324 & 0.392 & 2.212 \\
\bottomrule\end{tabular}\end{table}

\section*{SI-D. Prior bounds and prior systems}

This appendix positions the work against prior literature in two parts: SI-D.1
compares the coupling-barrier lower bound against prior search and complexity
lower bounds, and SI-D.2 compares OMNIS as a discovery system against prior
program-discovery systems.

\subsection*{SI-D.1. Prior lower bounds, and how the coupling barrier differs}

The main text positions the coupling barrier against three lineages of prior
results. This section states each precisely and names the respect in which the
barrier departs from it.

\emph{The upper-bound lineage.} Levin (1973) showed that universal search
recovers any inverter, a generating program included, in time
$2^{|p|}\cdot t(p)$, with $t(p)$ bounding verification; this is the canonical
upper bound on program search. Li and Vit\'anyi (Chapter 7) formalize the
time-bounded (Levin) complexity $\Kt(x) = \min_p\bigl(|p| + \log t(p)\bigr)$, so
that for any particular program $p$ the inequality $|p| + \log t(p) \ge \Kt(x)$
holds by definition, which is exactly the upper-bound direction of our
conservation law. Hutter (2002) constructed an algorithm solving every
well-defined problem within a factor of five of the provably fastest program for
it, plus lower-order problem-independent terms; this is an upper bound on the
cost of the best algorithm, not a lower bound on the cost of all algorithms, and
so it places no floor on score-oracle search.

\emph{Lower bounds for inversion and meta-complexity.} Yao (1990) proved that
inverting a random permutation with $S$ bits of preprocessed advice and $T$
queries requires $S\cdot T = \widetilde{\Omega}(N)$. This is the closest
published advice-query conservation, but it concerns a random function, prices
the tradeoff in advice bits rather than coupling-width units, and is
information-theoretic rather than tied to constraint structure. Corrigan-Gibbs
and Kogan (2019) showed that any improvement on Yao's bound would imply new lower
bounds on depth-two circuits with arbitrary gates, and that strong lower bounds
on non-adaptive function-inversion algorithms would imply breakthrough lower
bounds on linear-size, logarithmic-depth circuits; this explains why the
inversion bound resists strengthening, but again the object is a random function
under preprocessing. Brandt (2024) proved that $\MKtP$, the decision problem for
Levin complexity $\Kt$, is unconditionally not in $\DTIME[O(n)]$ (and not in
$\DTIME[O(n^2)]$ on machine models with linear-time universal simulation). This
is a lower bound on computing $\Kt$ itself, a different problem from searching
for a generating program.

\emph{Lower bounds for CSP and treewidth.} Razgon (2006) showed that binary CSP
with domain size $d$ is solved in $O^*\bigl((d-1)^n\bigr)$ by forward-checking
combined with conflict-directed backjumping and fail-first ordering (FC-CBJ +
FF); the coupling barrier's base $d-1$ matches this and
lifts it to $(d-1)^\kappa$. Traxler (2008) proved that, under the ETH,
$(d,2)$-CSP with bounded variable frequency and $d$-UNIQUE-CSP require time
$\Omega(d^{cn})$ for a constant $c>0$ independent of $d$, equivalently that no
$d^{o(n)}$ algorithm exists; a domain-dependent bound, but conditional on the
ETH. Lokshtanov, Marx, and Saurabh (2011) showed that, under SETH, the standard
treewidth-parameterized algorithms are optimal; Independent Set is not solvable
in $(2-\varepsilon)^{\tw}$, Dominating Set not in $(3-\varepsilon)^{\tw}$, and
$q$-Coloring not in $(q-\varepsilon)^{\tw}$, all conditional on SETH. Marx (2010)
proved the weaker, non-tight form that binary CSP admits no
$|I|^{o(\tw/\log \tw)}$ algorithm under the ETH. Marx (2013) introduced
submodular width and showed it characterizes fixed-parameter tractability for
unbounded-arity CSP under the ETH (bounded submodular width if and only if
$\mathrm{CSP}(H)$ is FPT when parameterized by the number of variables); for
bounded arity the corresponding parameter is treewidth, modulo homomorphic
equivalence and under $\mathrm{W}[1] \ne \mathrm{FPT}$ (Grohe, 2007). Freuder
(1982) and Dechter and Pearl (1989) are the classical structure-versus-search
results: backtrack-free search when the constraint graph's width is small
relative to the level of local consistency enforced (Freuder), and solution
without backtracking once the network is decomposed into a tree of clusters,
which yields the $O(n\,d^{\tw+1})$ algorithm (Dechter and Pearl).

\emph{Black-box complexity, no free lunch, deception, and difficulty measures.}
The lineage closest in mechanism to the coupling barrier is the complexity
theory of black-box (query) optimization, which restricts an algorithm to the
value oracle, exactly the score-oracle class, and proves unconditional query
lower bounds with no complexity-theoretic assumption (Droste, Jansen, and
Wegener, 2006; the more restricted unbiased model is Lehre and Witt, 2012). The
bound $\Omega((T-1)^{\Knt}/\msig)$ of \Cref{thm:search} is of this kind; the
$(\delta,\Delta)$-coupled set is a needle of $\msig$ improving assignments
inside the informative region of size $(d-1)^{\kappa}$, with the baseline
returned everywhere outside it. The no-free-lunch theorems (Wolpert and
Macready, 1997) are themselves a conservation result, but of \emph{average}
performance over the uniform distribution on all objective functions, whereas
the conservation law here is a \emph{per-instance} budget priced in bits of
$K(s)$. In evolutionary computation the canonical hard instances are the
deceptive and trap functions, in which low-order schema information misleads
(Goldberg, 1987; Whitley, 1991; Deb and Goldberg, 1993; a fully deceptive
order-$\ell$ trap requires all order-$(\ell-1)$ sub-patterns to mislead):
deception is a property of the gradient, which points toward the local optimum,
and trap hardness is established for particular heuristics, whereas the flat
basin is gradient-free and the bound holds against every score-oracle algorithm.
A distinct neutral case, the plateau of constant fitness that an algorithm
crosses by accepting equal-fitness moves, is treated by Jansen and Wegener
(2001); there the plateau is traversable, whereas the flat basin has an
exponentially rare exit and no neutral path to it. Closest in intent are the
interaction-based measures of problem difficulty, Kauffman's $\mathrm{NK}$
ruggedness parameter (Kauffman, 1993), epistasis variance, and fitness-distance
correlation (Jones and Forrest, 1995), each estimating hardness from the
interaction structure of the objective; these are descriptive and correlational
and admit counterexamples, and \Cref{prop:incomp-sens} shows that coupling width
is not a function of single-coordinate sensitivity, the closest of them in form.
What coupling width adds is a proven query lower bound, the treewidth
characterization $\kappa \le \twscore + 1$, and the conservation law of
\Cref{cor:cw-unit}.

\emph{How the coupling barrier differs.} The barrier departs from every bound
above in three respects. First, it is unconditional for the score-oracle model,
requiring neither SETH nor ETH, whereas each prior CSP bound is conditional.
Second, it is governed by the coupling width $\kappa$ of the score-interaction
graph, which can be far below the primal treewidth that parameterizes the prior
bounds; the incomparability proposition (\cref{prop:incomp-tw}) exhibits instances
of treewidth $n$ with coupling width $1$, and the reverse. Third, it comes with a
conservation law (\cref{cor:cw-unit}) that prices reformulation in coupling-width
units against the search it eliminates at a one-to-one rate, which no prior bound
provides. Levin's $2^{K(s)}$ sits at the no-reformulation end of this law, and
the upper and lower directions close into a single conserved quantity.
Table~\ref{tab:prior-bounds} summarizes the comparison.

\begin{table}[t]
\centering
\small
\setlength{\tabcolsep}{4pt}
\renewcommand{\arraystretch}{1.18}
\begin{tabular}{@{}>{\raggedright\arraybackslash}p{2.7cm}>{\raggedright\arraybackslash}p{2.85cm}>{\raggedright\arraybackslash}p{1.45cm}>{\raggedright\arraybackslash}p{1.45cm}>{\raggedright\arraybackslash}p{2.5cm}>{\raggedright\arraybackslash}p{2.3cm}@{}}
\toprule
Result & Object bounded & Type & Cond.\ on & Governing parameter & Conservation law? \\
\midrule
Levin 1973 & program search & upper & none & program length $/\ K(s)$ & endpoint (no reformulation) \\
Yao 1990 & random-permutation inversion w/ preproc. & lower & none\textsuperscript{$\dagger$} & advice $S$, queries $T$ & partial ($S\cdot T$, in bits) \\
Brandt 2024 & computing $\Kt$ ($\MKtP$) & lower & none & input length $n$ & no \\
Razgon 2006 & binary CSP & upper & none & domain $d$ (base $d-1$) & no \\
Traxler 2008 & $(d,2)$-CSP & lower & ETH & domain $d$ & no \\
Lokshtanov, Marx, Saurabh 2011 & Ind.\ Set, Dom.\ Set, $q$-Coloring (bdd.\ tw) & lower & SETH & treewidth & no \\
Marx 2010 & binary CSP & lower & ETH & treewidth ($\tw/\log\tw$) & no \\
Marx 2013 & unbounded-arity CSP & FPT char. & ETH & submodular width & no \\
Grohe 2007 & bounded-arity CSP & FPT char. & $\mathrm{W}[1]\!\neq\!\mathrm{FPT}$ & treewidth (cores, mod.\ hom.\ equiv.) & no \\
Freuder 1982; Dechter--Pearl 1989 & CSP structure $\to$ search & qual. & none & graph width / treewidth & no \\
Wolpert \& Macready 1997 & all-functions optimization & lower & none\textsuperscript{$\dagger$} & uniform prior over $f$ & avg.\ performance \\
Droste--Jansen--Wegener 2006; Lehre \& Witt 2012 & black-box (query) opt. & lower & none & queries (unbiased: arity) & no \\
Deb \& Goldberg 1993; Jansen \& Wegener 2001 & trap / plateau hardness & lower & heuristic & deception order / plateau & no \\
Kauffman 1993; Jones \& Forrest 1995 & GA difficulty (descriptive) & correl. & n/a & $\mathrm{NK}$ degree / FDC & no \\
\textbf{This work} & \textbf{score-oracle program search} & \textbf{lower + cons.} & \textbf{none\textsuperscript{$\ddagger$}} & \textbf{coupling width $\kappa$} & \textbf{yes (Cor.\ \ref{cor:cw-unit})} \\
\bottomrule
\end{tabular}
\caption{Prior bounds and structural parameters, against the coupling barrier.
The barrier is the only entry that is at once a lower bound, unconditional,
parameterized by coupling width rather than primal treewidth, and equipped with
a conservation law; reading down the last three columns, no prior row shares
that signature. \textsuperscript{$\dagger$}Information-theoretic.
\textsuperscript{$\ddagger$}Unconditional within the score-oracle model. Hutter
(2002) and Corrigan-Gibbs and Kogan (2019) are discussed in the text but omitted
from the grid; the former bounds the cost of the single best algorithm rather
than all algorithms, and the latter is a barrier to strengthening inversion
bounds rather than a parameterized bound.}
\label{tab:prior-bounds}
\end{table}

\subsection*{SI-D.2. Comparison with prior program-discovery systems}

The contribution of OMNIS as a system is not a higher coverage number; it is
coverage obtained under four conditions that prior systems relax, and which
together make a recovery certify a genuine generator rather than a fit. The four
conditions are: (C1) every discovery must both compress the training data and
predict an isolated holdout, not merely match the terms shown; (C2) one
universal instruction set, with no family-specific operators; (C3) an empty
library, with no program permitted to call another and no knowledge carried
between candidates; and (C4) no imported mathematical knowledge, no closed-form
fitter, recurrence solver, or learned policy. To our knowledge no prior system
reports results under all four at once.

The most directly comparable systems relax exactly these. LODA mines the OEIS at
the scale of a hundred and fifty thousand programs, but its programs may call
previously discovered programs as subroutines, so a single program encodes
knowledge about dozens of others, and its search mutates from existing programs
rather than enumerating from scratch; it validates by matching observed terms,
without a held-out prediction requirement. This relaxes C1, C3, and C4. The
neural-guided sequence searches of Gauthier and Urban, working over tens of
thousands of sequences, use a tree search guided by a policy trained on
previously seen sequences, together with wake-sleep library learning in the
DreamCoder lineage that accumulates reusable abstractions across candidates;
this relaxes C2, C3, and C4, and again validates by match rather than by
prediction. The block decomposition methods in the algorithmic-information
tradition (Zenil and collaborators) estimate complexity from libraries of small
precomputed machines, which is a different task from recovering and certifying a
single generating program, and does not impose a compress-and-predict gate.

Because these systems validate by matching shown terms while OMNIS requires an
isolated prediction, and because they share knowledge across sequences while
OMNIS forbids it, their coverage figures and ours measure different things, and a
head-to-head percentage would mislead rather than inform. The honest comparison
is structural, stated in Table~\ref{tab:c1c4}, and the claim for OMNIS is only
that it attains broad coverage while holding all four conditions, which is the
regime in which a recovery is a certified discovery.

\begin{table}[t]
\centering
\small
\setlength{\tabcolsep}{8pt}
\renewcommand{\arraystretch}{1.25}
\begin{tabular}{@{}>{\raggedright\arraybackslash}p{6.0cm}cccc@{}}
\toprule
Condition & OMNIS & LODA & Gauthier--Urban & Zenil--BDM \\
\midrule
C1: compress \emph{and} predict an isolated holdout & $\checkmark$ & $\times$ & $\times$ & $\times$ \\
C2: one universal instruction set, no family-specific operators & $\checkmark$ & $\checkmark$ & $\times$ & n/a\textsuperscript{$\dagger$} \\
C3: empty library, no transfer between candidates & $\checkmark$ & $\times$ & $\times$ & $\times$ \\
C4: no imported knowledge, fitter, recurrence solver, or learned policy & $\checkmark$ & $\times$ & $\times$ & n/a\textsuperscript{$\dagger$} \\
\bottomrule
\end{tabular}
\caption{The four conditions under which OMNIS operates, against prior
program-discovery systems. $\checkmark$ denotes the condition is held; $\times$
denotes it is relaxed. OMNIS holds all four. \textsuperscript{$\dagger$}Zenil--BDM estimates an upper bound on Kolmogorov
complexity by composing precomputed complexity values for small objects (the
Coding Theorem Method lookup tables), rather than searching for and emitting a
single generating program. The synthesis-instruction-set criterion (C2) and the
imported-solver criterion (C4) therefore have no program-discovery analogue and
are marked task-level; its reliance on a precomputed library is recorded under
C3.}
\label{tab:c1c4}
\end{table}

\section*{SI References}

\noindent S1.\ N. Brandt, ``Lower Bounds for Levin--Kolmogorov Complexity.'' In \emph{Theory of Cryptography (TCC 2024)}, Lecture Notes in Computer Science, vol.\ 15364, Springer (2024). doi:10.1007/978-3-031-78011-0\_7. Also Cryptology ePrint Archive, Paper 2024/687.

\noindent S2.\ H. Corrigan-Gibbs and D. Kogan, ``The Function-Inversion Problem: Barriers and Opportunities.'' In \emph{Theory of Cryptography (TCC 2019)}, Lecture Notes in Computer Science, vol.\ 11891, Springer (2019). doi:10.1007/978-3-030-36030-6\_16.

\noindent S3.\ K. Deb and D. E. Goldberg, ``Analyzing Deception in Trap Functions.'' In \emph{Foundations of Genetic Algorithms 2} (L.\ D.\ Whitley, ed.), pp.\ 93--108, Morgan Kaufmann (1993).

\noindent S4.\ R. Dechter and J. Pearl, ``Tree clustering for constraint networks.'' Artificial Intelligence 38(3), pp.\ 353--366 (1989).

\noindent S5.\ S. Droste, T. Jansen, and I. Wegener, ``Upper and Lower Bounds for Randomized Search Heuristics in Black-Box Optimization.'' Theory of Computing Systems 39(4), pp.\ 525--544 (2006). doi:10.1007/s00224-004-1177-z.

\noindent S6.\ E. C. Freuder, ``A Sufficient Condition for Backtrack-Free Search.'' Journal of the ACM 29(1), pp.\ 24--32 (1982).

\noindent S7.\ D. E. Goldberg, ``Simple Genetic Algorithms and the Minimal, Deceptive Problem.'' In \emph{Genetic Algorithms and Simulated Annealing} (L.\ Davis, ed.), pp.\ 74--88, Morgan Kaufmann (1987).

\noindent S8.\ M. Grohe, ``The complexity of homomorphism and constraint satisfaction problems seen from the other side.'' Journal of the ACM 54(1), Art.\ 1, pp.\ 1:1--1:24 (2007). doi:10.1145/1206035.1206036.

\noindent S9.\ M. Hutter, ``The Fastest and Shortest Algorithm for All Well-Defined Problems.'' International Journal of Foundations of Computer Science 13(3), pp.\ 431--443 (2002). doi:10.1142/S0129054102001199.

\noindent S10.\ T. Jansen and I. Wegener, ``Evolutionary Algorithms: How to Cope with Plateaus of Constant Fitness and When to Reject Strings of the Same Fitness.'' IEEE Transactions on Evolutionary Computation 5(6), pp.\ 589--599 (2001).

\noindent S11.\ T. Jones and S. Forrest, ``Fitness Distance Correlation as a Measure of Problem Difficulty for Genetic Algorithms.'' In \emph{Proc.\ 6th International Conference on Genetic Algorithms} (L.\ J.\ Eshelman, ed.), pp.\ 184--192, Morgan Kaufmann (1995).

\noindent S12.\ S. A. Kauffman, \emph{The Origins of Order: Self-Organization and Selection in Evolution.} Oxford University Press (1993).

\noindent S13.\ P. K. Lehre and C. Witt, ``Black-Box Search by Unbiased Variation.'' Algorithmica 64(4), pp.\ 623--642 (2012). doi:10.1007/s00453-012-9616-8.

\noindent S14.\ L. A. Levin, ``Universal sequential search problems.'' Problems of Information Transmission 9(3), pp.\ 265--266 (1973).

\noindent S15.\ M. Li and P. M. B. Vit\'anyi, \emph{An Introduction to Kolmogorov Complexity and Its Applications.} 4th ed., Springer (2019).

\noindent S16.\ D. Lokshtanov, D. Marx, and S. Saurabh, ``Known algorithms on graphs of bounded treewidth are probably optimal.'' In \emph{Proc.\ 22nd Annual ACM--SIAM Symposium on Discrete Algorithms (SODA)}, pp.\ 777--789 (2011). Journal version: ACM Transactions on Algorithms 14(2), Art.\ 13 (2018).

\noindent S17.\ D. Marx, ``Can you beat treewidth?'' Theory of Computing 6(1), pp.\ 85--112 (2010).

\noindent S18.\ D. Marx, ``Tractable hypergraph properties for constraint satisfaction and conjunctive queries.'' Journal of the ACM 60(6), Art.\ 42 (2013). doi:10.1145/2535926.

\noindent S19.\ I. Razgon, ``Complexity Analysis of Heuristic CSP Search Algorithms.'' In \emph{Recent Advances in Constraints (CSCLP 2005, Revised Selected and Invited Papers)} (B.\ Hnich, M.\ Carlsson, F.\ Fages, F.\ Rossi, eds.), Lecture Notes in Computer Science (LNAI), vol.\ 3978, pp.\ 88--99, Springer (2006). doi:10.1007/11754602\_7.

\noindent S20.\ P. Traxler, ``The Time Complexity of Constraint Satisfaction.'' In \emph{Parameterized and Exact Computation (IWPEC 2008)}, Lecture Notes in Computer Science, vol.\ 5018, pp.\ 190--201, Springer (2008). doi:10.1007/978-3-540-79723-4\_18.

\noindent S21.\ L. D. Whitley, ``Fundamental Principles of Deception in Genetic Search.'' In \emph{Foundations of Genetic Algorithms} (G.\ J.\ E.\ Rawlins, ed.), pp.\ 221--241, Morgan Kaufmann (1991).

\noindent S22.\ D. H. Wolpert and W. G. Macready, ``No Free Lunch Theorems for Optimization.'' IEEE Transactions on Evolutionary Computation 1(1), pp.\ 67--82 (1997). doi:10.1109/4235.585893.

\noindent S23.\ A. C. Yao, ``Coherent functions and program checkers.'' In \emph{Proc.\ 22nd Annual ACM Symposium on Theory of Computing (STOC)}, pp.\ 84--94 (1990).

\end{document}